\newtheorem{theorem}{Theorem}[section]
\newtheorem{lemma}[theorem]{Lemma}
\newtheorem{proposition}[theorem]{Proposition}
\newtheorem{definition}[theorem]{Definition}
\newtheorem*{remark}{Remarks}
\numberwithin{equation}{section}
\numberwithin{equation}{section}
\newcommand{\cH}{\mathcal{H}}
\title[Quantum Electrodynamics of Atomic Resonances]{Quantum Electrodynamics of Atomic Resonances} 
\begin{document}

\author[M. Ballesteros]{ Miguel Ballesteros}
\address[M.Ballesteros]{Department of Mathematical Physics, Applied Mathematics and Systems Research Institute (IIMAS), National Autonomous University of Mexico (UNAM), Campus CU (University City), 01000 Mexico City, Mexico}
\email{miguel.ballesteros@iimas.unam.mx}
\author[J. Faupin]{J{\'e}r{\'e}my Faupin}
\address[J. Faupin]{Institut Elie Cartan de Lorraine \\
Universit{\'e} de Lorraine, 
57045 Metz Cedex 1, France}
\email{jeremy.faupin@univ-lorraine.fr}
\author[J. Fr\"ohlich]{J\"urg Fr\"ohlich}
\address[J. Fr{\"o}hlich]{Institut f{\"u}r Theoretische Physik, ETH H{\"o}nggerberg, CH-8093 Z{\"u}rich, Switzerland}
\email{juerg@phys.ethz.ch}
\author[B. Schubnel]{Baptiste Schubnel}
\address[J. Faupin]{Institut Elie Cartan de Lorraine \\
Universit{\'e} de Lorraine, 
57045 Metz Cedex 1, France}
\email{baptiste.schubnel@univ-lorraine.fr}
 
\date \today

\maketitle

\begin{abstract}
 A simple model of an atom interacting with the quantized electromagnetic field is studied. The atom has a finite mass $m$, finitely many excited states and an electric dipole moment, $\vec{d}_0 = -\lambda_{0} \vec{d}$, where $\| d^{i}\| = 1,$  $ i=1,2,3,$ and $\lambda_0$ is proportional to the elementary electric charge. The interaction of the atom with the radiation field is described with the help of the Ritz Hamiltonian, 
$-\vec{d}_0\cdot \vec{E}$, where $\vec{E}$ is the electric field, cut off at large frequencies. A mathematical study of the Lamb shift, the decay channels and the life times of the excited states of the atom is presented. It is rigorously proven that these quantities are analytic functions of the momentum $\vec{p}$ of the atom and of the coupling constant $\lambda_0$, provided $\vert\vec{p}\vert < mc$ and $\vert \Im\vec{p} \vert$ and 
$\vert \lambda_{0} \vert$ are sufficiently small. The proof relies on a somewhat novel inductive construction involving a sequence of `smooth Feshbach-Schur maps' applied to a complex dilatation of the original Hamiltonian, which yields an algorithm for the calculation of resonance energies that converges super-exponentially fast. 
\end{abstract}

\section{Introduction}
\label{intro}

This paper is devoted to a study of atomic resonances, in particular of the Lamb shift, the decay channels and the life times of excited states of atoms, in quantum electrodynamics. Our analysis is based on a variant of the so-called Pauli-Fierz model of quantum electrodynamics. The atomic degrees of freedom are treated non-relativistically, but photons are massless, and no infrared cutoff is imposed on the interactions between atoms and the quantized radiation field. In order to avoid technical complications that might hide the basic simplicity and elegance of this work, we focus on a somewhat mutilated model of an atom: The mass, $m$, of an atom is positive and finite, its kinematics is non-relativistic, but it has only finitely many excited states and cannot be ionized. The total electric charge of every atom vanishes and its interaction with the quantized radiation field arises by coupling its electric dipole moment to the quantized electric field, (i.e., the interaction Hamiltonian is given by  $-\vec{d}_0 \cdot \vec{E}$, where $\vec{d}_0$ is the dipole moment operator of the atom and $\vec{E}$ is the quantized electric field).  While the masslessness of photons makes a straightforward application of perturbation theory impossible, this model does not exhibit a genuine ``infrared catastrophe''. 

Our main aim, in this paper, is to determine the radiative corrections to the ground-state dispersion law of an atom and to calculate atomic resonance energies, decay channels and life times of excited states. Among our new results are proofs of real analyticity of these quantities as functions of the total momentum, $\vec{p}$, of the dressed atom,  for $\vert\vec{p}\vert < mc$ (where $c$ is the speed of light) and of analyticity in the elementary electric charge near the origin. 

During the past twenty years, there has been very impressive progress in the mathematical analysis of atomic ground-states and resonances in the realm of the Pauli-Fierz model, (and of Rayleigh scattering, ionization of atoms, etc.); see \cite{1995}, \cite{BaFrSi98_02}, \cite{BaFrSi98_01}, \cite{BFS-1999}, \cite{GLL}, \cite{Si09_01}, \cite{BaBaPi}, and references given there. However, the atomic nucleus has usually been treated as static (infinitely heavy). Our goal, in this paper, is to remove this shortcoming.

The main mathematical tools we will employ to prove our main results are based on a combination of dilatation analyticity with a novel method of ``spectral renormalization'' (in the guise of an inductive construction based on a sequence of smooth iso-spectral Feshbach-Schur maps). In the form needed in the analysis of quantum systems with infinitely many degrees of freedom, these tools were first introduced in \cite{1995} and systematically developed in
\cite{BaFrSi98_02}, \cite{BFS-1999} and \cite{BaChFrSi03_01}. Important refinements of these methods have appeared in \cite{GrHa08_01}, \cite{GrHa09_01}, \cite{HaHe10_01}, \cite{Si09_01}, \cite{AmGrGu06_01}, \cite{FFS}; and references given there. Some alternative methods have been introduced in \cite{AbHa12_01} and \cite{BaBaPi}.

In previous work, as quoted above, spectral renormalization is cast in the form of a renormalization group construction involving iteration of a renormalization map (constructed from a Feshbach-Schur map that lowers an energy scale by a fixed factor $\rho<1$), which maps a suitably chosen Banach space of effective Hamiltonians on Fock space into itself. One then attempts to determine the fixed points and the stable and unstable manifolds of the renormalization map -- in accordance with the general philosophy of the renormalization group. While this approach is conceptually transparent and yields very detailed information on the spectral problem under consideration, it leads to certain somewhat artificial technical complications. In this paper, we present an \textit{inductive construction} involving a sequence of smooth iso-spectral Feshbach-Schur maps indexed by a sequence of energy scales that converges to $0$ in a ``super-exponential'' fashion. One of our main aims in this paper is to describe this method and to demonstrate its basic simplicity and efficiency on an example that is of interest to physicists.

\subsection{The Model} 
In this section we describe the physical model studied in this paper in precise mathematical terms. 

\subsubsection{A Simple Model of an ``Atom''} 
Our model of an atom is non-relativistic. For simplicity, the atom is assumed to have only finitely many excited states.
We describe its internal degrees of freedom by an $ N $-level system: The Hilbert space of state vectors of the internal degrees of freedom is given by $\mathbb{C}^N $,
their Hamiltonian by an $N \times N $ matrix  
\begin{equation*}
\label{H1}
H_{is}:= \begin{pmatrix}  E_N & \cdots & 0   \\ \vdots & \ddots  &  \vdots \\ 0 & \cdots & E_1  \end{pmatrix},
\end{equation*}
with $E_N > \cdots > E_1$. The energy scale of transitions between internal states of the atom is measured by the quantity
\begin{equation}\label{delta0}
\delta_0 := \min_{i \ne j }{|E_i - E_j|}.
\end{equation}
By $\vec x \in \mathbb{R}^3$ we denote the position of the (center of mass of the) atom in physical space. The center of mass momentum corresponds to the operator $-i \vec{\nabla}$, the kinetic energy of the free center of mass motion is given by $-\frac{1}{2m}\Delta$. These operators act on the usual Hilbert space 
$L^{2}(\mathbb{R}^3)$ of orbital wave functions. The total Hilbert space of state vectors of the 
atom corresponds to the tensor product 
\begin{equation*}
\mathcal{H}_{at} : = L^{2}(\mathbb{R}^{3}) \otimes \mathbb{C}^N.
\end{equation*}
The total Hamiltonian of the atom is given by the self-adjoint operator 
\begin{equation}\label{hatsn}
H_{at} : =  - \frac{1}{2m}\Delta  +  H_{is},  
\end{equation}
with domain $D(H_{at}) =H^2(\mathbb{R}^3)  \otimes \mathbb{C}^N$, where  $H^2(\mathbb{R}^3)$ denotes the Sobolev space of wave functions with square-integrable derivatives up to order $2$. 
To simplify our notations, we henceforth put the mass of the atom to $1$, which merely amounts to choosing a suitable system of units.  

The electric dipole moment of an atom is represented by the vector operator
\begin{equation}\label{dip}
\vec{d}_0 = (d_{0}^{1}, d_{0}^{2}, d_{0}^{3}),
\end{equation} 
where, for $j=1,2,3, d_{0}^{j} \equiv \mathds{1} \otimes d_{0}^{j}$ is an $N \times N$ hermitian matrix. 

\subsubsection{The Quantized Electromagnetic Field} 
In the following $\vec{k}$ denotes the wave vector of a photon and $\lambda$ its helicity.
To simplify our formulae we define 
\begin{equation}\label{notation_R_souligne}
\underline{\mathbb{R}}^3:= \mathbb{R}^3\times\{1,2\} = \left\{\underline{k} := (\vec{k},\lambda) \: | \: 
\vec k \in\mathbb{R}^3, \lambda  \in \{1,2\} \right\}.
\end{equation}
We set $\underline{\mathbb{R}}^{3n}:=(\underline{\mathbb{R}}^3)^{\times n}$, and, for $B\subset\mathbb{R}^3$,   
\begin{equation}
\label{short2}
\underline{B}:=B\times\{1,2\}, \qquad \int_{\underline{B}} d \underline{k}(\cdot):=
\sum_{\lambda=1,2}\int_{B} d \vec{k}(\cdot).
\end{equation}
The Hilbert space of states of photons is given by
\begin{equation} \label{hsfhotons}
\mathcal{H}_f : = \mathcal{F}_{+}(L^{2}( \underline{\mathbb{R}}^3 )),
\end{equation}
where $ \mathcal{F}_{+}(L^{2}( \underline{\mathbb{R}}^3 ))$ is the symmetric Fock space over 
the space $L^{2}( \underline{\mathbb{R}}^3 )$ of one-photon states. \\
The usual photon creation- and annihilation operators are denoted by
\begin{equation} \label{aastar}
a^*(\underline{k}) \equiv  a^*_\lambda(\vec k), \hspace{2cm} a(\underline{k}) \equiv a_\lambda(\vec k),
\end{equation}
$\underline{k} \equiv (\vec{k},\lambda) \in \underline{\mathbb{R}}^{3}$, which are operator-valued distributions acting on $\mathcal{H}_f$.  The Fock space $\mathcal{H}_f$ contains a unit vector, 
$\Omega$, called ``vacuum (vector)'' and unique up to a phase, with the property that 
\begin{equation*}
a(\underline{k})\Omega = 0, 
\end{equation*}
for all $\underline{k}$.
 Readers  not familiar with 
these objects may wish to consult, e.g., Section X.7 of \cite{RS2}.

The Hamiltonian of the free electromagnetic field is given by
\begin{equation}
\label{H2}
H_{f}= \int_{ \underline{\mathbb{R}}^3 } \vert \vec{k} \vert \text{ } a^{*}( \underline{k} ) a ( \underline{k} ) 
d \underline{k},
\end{equation}
which is a densely defined, positive operator on $ \mathcal{H}_f$. 

\subsubsection{The Physical System} 
 Our goal is to study an atom interacting with the quantized electromagnetic field. The Hilbert space of states of this system (atom$\vee$photons) is the tensor product space
\begin{equation*}
\mathcal{H}=\mathcal{H}_{at} \otimes \mathcal{H}_f.
\end{equation*}
We choose the interaction of the atom with the quantized electromagnetic field to be given by the Ritz Hamiltonian  
\begin{equation}\label{becio}
\lambda_0 \mathbf{H}_{I}:=- \vec{d}_{0} \cdot \vec{E}(\vec{x}),  
\end{equation}
where 
\begin{equation} \label{d00}
\vec{d}_0 = -\lambda_{0}\vec{d}
\end{equation}
 is the atomic dipole moment, $\lambda_0 > 0$ is a coupling constant proportional to the elementary electric charge, and $\|d^{i} \|=1$, $i=1,...,3$.  Furthermore, $\vec{x}$ is the position of the (center of mass of the) atom and $\vec E$ denotes the quantized electric field, cut off at large photon frequencies. It is given by the operator
\begin{equation}\label{electricfield} 
\vec E (\vec{x}):=  i  \int_{\underline{\mathbb{R}}^3} \Lambda(\vec k) \vert \vec{k} \vert^{\frac12} \vec{\epsilon} (\underline{k} ) \left(  e^{i \vec{ k} \cdot \vec{x}}  \otimes  \mathds{1}_{ \mathbb{C}^N }  \otimes  a ( \underline{k} ) - 
  e^{-i \vec{ k} \cdot \vec{x}}  \otimes  \mathds{1}_{ \mathbb{C}^N } \otimes  a^* ( \underline{k} ) \right) d \underline{k},
\end{equation}
acting on $\mathcal{H}$.
In \eqref{electricfield}, 
$\underline k \mapsto \vec \epsilon(\underline k) \in \mathbb{R}^3$
represents the polarization vector. It is a measurable function with the properties
\begin{equation}\label{polariz}
 | \vec \epsilon(\underline k) | = 1, \hspace{1.5cm} \vec \epsilon(\underline k)\cdot \vec k = 0, \hspace{2cm} \vec \epsilon(r\vec k, \lambda) =\vec \epsilon( \vec k, \lambda ),
\: \forall r > 0, \: \forall \underline{k} \in \underline{\mathbb{R}}^3. 
\end{equation}
The function 
$\Lambda : \mathbb{R}^3 \mapsto \mathbb{R} $
is an ultraviolet cut-off. To be concrete, we take it to be the Gaussian
\begin{equation}\label{Lambda}
\Lambda(\vec k) = e^{ - |\vec k|^2 /(2 \sigma_\Lambda^2)}
\end{equation}
for some cut-off constant $ \sigma_\Lambda \geq 1$. (Obviously, one may consider a more general class of cut-off functions.) \\
The total Hamiltonian of the system is the sum of the Hamiltonians of the atom and the electromagnetic field, plus an interaction term. It is given by
\begin{equation} \label{H}
{\bf H } : = H_{at}  +  H_f
+ \lambda_0 \mathbf{H}_I.
\end{equation}    
Using the Kato-Rellich theorem, one shows that the Hamiltonian ${\bf H}$ is defined and self-adjoint on the dense domain $D(H_{at} \otimes \mathds{1}_{\mathcal{H}_f} +  \mathds{1}_{\mathcal{H}_{at}} \otimes H_f )$, 
where $ D(A) $ represents the domain of the linear operator $A$.
 \\      
\subsubsection{The Fibre Hamiltonian} 
 The photon momentum operator is the vector operator defined by 
\begin{equation} \label{pf}
\vec P_f : =   \int_{ \underline{\mathbb{R}}^3 }  \vec{k}  \text{ } a^{*}( \underline{k} ) a ( \underline{k} ) d \underline{k} . 
\end{equation}
Let $\mathcal F$ denote Fourier transformation in the electron position variable $\vec x \in \mathbb{R}^3$. 
We define the unitary operator 
\begin{equation} \label{u}
U : =  \mathcal{F}  e^{i \vec x \cdot  \vec P_f }
\end{equation}
 on $\mathcal{H}$.
We conjugate the Hamiltonian ${\bf H}$ in \eqref{H} by the unitary operator $U$ introduced in \eqref{u} and subtract the trivial term $\frac{\vec p^2}{2}$, to obtain the operator 
\begin{align} \label{bfHtilde}
 H : = U {\bf H}U^* - \frac{\vec p^2}{2} =  \frac{1}{2}
( \vec p - 
 \vec P_f )^2 - \frac{\vec p^2}{2}
 +   H_{is}
  +  H_f  +
  \lambda_0   H_{I}, 
\end{align}
where
\begin{equation}
\label{H31}
H_{I}  :   = i  \int_{\underline{\mathbb{R}}^3} \Lambda(\vec k) \vert \vec{k} \vert^{\frac12} \left( \vec{\epsilon} (\underline{k} ) \cdot   \vec{d} \otimes  a ( \underline{k} ) - 
  \vec{\epsilon} (\underline{k} )  \cdot   \vec{d} \otimes  a^* ( \underline{k} )
 \right) d \underline{k}
\end{equation}
and $\vec p= - i \vec{\nabla} + \vec{P}_f$ denotes the total momentum operator. The operator $H$ introduced in Eq. \eqref{bfHtilde} is the main object of study of this paper. 

We remark that 
\begin{equation}\label{iso}
L^2(\mathbb{R}^3)\otimes \mathbb{C}^N \otimes \mathcal{H}_f \cong L^2(\mathbb{R}^3;
 \mathbb{C}^N \otimes \mathcal{H}_f ).
\end{equation}
Using \eqref{iso} we see that, for an arbitrary 
 $\phi \in L^2(\mathbb{R}_{\vec{p}}^3;\mathbb{C}^N \otimes \mathcal{H}_f ) $, 
\begin{equation} \label{isoH}
( H \phi)(\vec p) = H(\vec p) \phi(\vec p), 
\end{equation}
where the fibre Hamiltonian, $H(\vec p)$, is the operator acting on the fibre space
$$ \mathcal{H}_{\vec p} : =  \mathbb{C}^N \otimes \mathcal{H}_f $$ 
given by
\begin{align}  \label{Hpsn}
 H(\vec p): = 
\frac{1}{2} \vec P_f^2 - \vec p\cdot \vec P_f 
 +    H_f  +   H_{is} + \lambda_0 H_{I}.
\end{align}
Using the fact that $H_{I}$ is relatively bounded with respect to $H_f^{1/2}$ and applying the Kato-Rellich theorem, one sees that, for all $\vec{p} \in \mathbb{R}^3$, $H(\vec{p})$ is a self-adjoint operator on its domain
\begin{equation}
D(  H( \vec{p} ) ) = D( H_f ) \cap D( \vec{P}_f^2 ).
\end{equation}
Eqs \eqref{isoH}-\eqref{Hpsn} can be reformulated in the formalism of direct integrals:     
\begin{equation}
\label{dec}
\mathcal{H} = \int_{\mathbb{R}^{3}}^{\oplus} \mathcal{H}_{\vec{p}}  d \vec{p}, \hspace{2cm}
 H = \int_{\mathbb{R}^{3}}^{\oplus}  H (\vec{p})  d \vec{p}.  
\end{equation}
This paper is devoted to studying properties of the fibre Hamiltonians $H(\vec p), \vec{p} \in \mathbb{R}^3$.

\subsubsection{Complex Dilatations} 
For $\theta \in \mathbb{R}$, we define the (unitary) dilatation operator 
$\gamma(\theta)$ by setting

\begin{equation} \label{Dil}
\gamma(\theta)(\phi)(\vec k, \lambda) := e^{ -3 \theta /2}\phi( e^{-\theta} \vec k, \lambda), 
\text{    } \text{for} \text{    } \phi \in L^2(\underline{\mathbb{R}}^3).
\end{equation}
By $\Gamma(\theta):= \Gamma( \gamma( \theta ) )$ we denote the operator on Fock space 
$\cH_f$ obtained by ``second quantization'' of $\gamma( \theta )$: For an operator $\omega$ acting on the one-photon Hilbert space $L^2( \underline{\mathbb{R}}^3 )$, $\Gamma(\omega)$ denotes the operator defined on $\mathcal{H}_f$ whose restriction to the $n$-photon subspace is given by
\begin{equation}
\Gamma(\omega) |_{ L^2( \underline{ \mathbb{R} }^3 ) ^{ \otimes_s^n } } := \otimes^n \omega .
\end{equation}
A straightforward computation shows that
\begin{align}
 \label{dilH}
 H_{\theta}(\vec p) : =  \Gamma(\theta) H(\vec p)\Gamma(\theta)^*   =  
\frac{1}{2}e^{-2 \theta} \vec P_f^2 - e^{-\theta} \vec p\cdot \vec P_f 
 +   
H_{is}
  + e^{-\theta}  H_f  + \lambda_0 H_{I, \theta},
\end{align}
where 
\begin{equation} \label{inttheta}
H_{I, \theta} : = i  e^{- 2\theta}\int_{\underline{\mathbb{R}}^3} \Lambda( e^{-  \theta} \vec k ) \vert \vec{k} \vert^{\frac12} \left( \vec{\epsilon} (\underline{k} ) \cdot   \vec{d} \otimes  a ( \underline{k} ) - 
\vec{\epsilon} (\underline{k} ) \cdot   \vec{d} \otimes  a^* ( \underline{k} ) \right) d \underline{k}.
\end{equation}
The operator $ H_\theta (\vec p)$ can be analytically extended to the complex domain
\begin{equation}
D(0, \pi/4) : = \{ \theta \in \mathbb{C} \: : \: |\theta| < \pi/4 \} .
\end{equation}
We will verify in Appendix \ref{typA}, below, that, for all $\vec{p} \in \mathbb{R}^3$, the map $\theta \mapsto H_\theta( \vec{p} )$ is an analytic family of type (A) on $D( 0 , \pi / 4 )$, in the sense that 
$H_\theta( \vec{p} )$ is closed on $ D( H_f ) \cap D( \vec{P}_f^2 )$, for all $\theta \in D( 0 , \pi / 4 )$, and the vector function
$\theta \mapsto  H_\theta( \vec{p} ) u$ is analytic in $\theta$ on $D( 0 , \pi / 4 )$, for all 
$u \in  D( H_f ) \cap D( \vec{P}_f^2 )$.   
The study of resonances of the operator $ H(\vec p)$ amounts to studying non-real eigenvalues of  $H_{\theta}(\vec p)$, for  $\theta$ belonging to a suitable open subset of $D(0, \pi/4) \setminus \mathbb{R} $.  

\subsubsection{Analyticity in the Total Momentum} \label{complex} 
We pick a vector $\vec{p}^*$ in $\mathbb{R}^3$  of length smaller than $1$ and a complex number 
$\theta =i \vartheta$ with $0 <\vartheta < \pi/4$. 
We set 
\begin{equation}\label{mu} \mu = \frac{1-\vert \vec{p}^* \vert }{2} \end{equation} 
and define an open set $U_{\theta}[\vec{p}^*]$ in complexified momentum space $\mathbb{C}^{3}$ by
\begin{equation}\label{UP}
U_{\theta}[\vec{p}^*]:= \lbrace \vec{p} \in \mathbb{C}^3  \mid  \vert \vec{p} - \vec{p}^* \vert < \mu  \rbrace \cap \lbrace \vec{p} \in \mathbb{C}^3  \mid  \vert \Im \vec{p}  \vert <\frac{\mu}{2}  \tan(\vartheta) \rbrace .
\end{equation}

Our main interest is to analyze the $\vec{p}$-dependence of the ground-state, the ground-state energy and the resonance energies of the Hamiltonian $H_{\theta}(\vec{p})$ defined in \eqref{dilH}, for $\vec{p} \in U_{\theta}[\vec{p}^*]$, and to verify that these quantities are analytic in $\vec{p} \in U_{\theta}[\vec{p}^*]$.
By $H_{\theta,0}(\vec{p})$ we denote the operator given by
\begin{equation}
H_{\theta,0}(\vec{p}) : =  e^{- 2 \theta} \frac{\vec{P}_f^2}{2 } - e^{-\theta} \vec{p}  \cdot \vec{P_f}+  
 H_{is}  + e^{-\theta} H_{f}
\end{equation}
corresponding to a vanishing coupling constant, $\lambda_{0}=0$.
It is easy to verify that, for $\delta_{0} > 0$, $E_1, \dots , E_N$ are simple eigenvalues of  $H_{\theta,0}(\vec{p})$. Moreover, it is easy to see that, for $|\vec{p}| < 1$ and $\vec{p} \in \mathbb{R}^3$, the spectrum of $H_{\theta,0}( \vec{p} )$ is included in a region of the form depicted in Figure 1.
\begin{figure}[H]
\begin{center}
\begin{tikzpicture}
 
       \draw[->] (0,0)--(15,0.0);

      \fill[ color=black] (2,0) circle (0.05) ;
     
          \fill[ color=black] (4,0) circle (0.05) ;
               \fill[ color=black] (7,0) circle (0.05) ;
                    \fill[ color=black] (11,0) circle (0.05) ;

         \draw[-,dashed] (2,0)--(4,-2.0);
           \draw[-,dashed] (4,0)--(6,-2.0);
              \draw[-,dashed] (7,0)--(9,-2.0);
                \draw[-,dashed] (11,0)--(13,-2.0);

                   \draw[-,dashed] (2,0)--(2.3,-2.0);
           \draw[-,dashed] (4,0)--(4.3,-2.0);
              \draw[-,dashed] (7,0)--(7.3,-2.0);
                \draw[-,dashed] (11,0)--(11.3,-2.0);
         
          \fill[color=black!25]     (4,0)to[bend left=0] (6,-2)  to[bend left=0] (4.6,-2.)   to[bend left=-10] (4,0);              
         \fill[color=black!25]     (7,0)to[bend left=0] (9,-2)  to[bend left=0] (7.6,-2.)   to[bend left=-10] (7,0);    
            \fill[color=black!25]     (11,0)to[bend left=0] (13,-2)  to[bend left=0] (11.6,-2.)   to[bend left=-10] (11,0);    
               \fill[color=black!25]     (2,0)to[bend left=0] (4,-2)  to[bend left=0] (2.6,-2.)   to[bend left=-10] (2,0);    
   
    \draw (2,-.06) node[above] {\small$ E_{1}$};
     \draw (4,-.06) node[above] {\small$ E_{2}$};
         \draw (7,-.06) node[above] {\small$ E_{3}$};
           \draw (9,-.06) node[above] { ... };
             \draw (11,-.06) node[above] {\small$ E_{N}$};

       \draw[->,color=black]  (2.4,0)to[bend left=20] (2.2,-0.2);
       \draw[->,color=black]  (2.8,0)to[bend left=20] (2.1,-0.5);
    \draw (2.4,0.1) node[below] { \footnotesize $\vartheta$};     
     \draw (2.7,-0.2) node[below] { \footnotesize $ 2 \vartheta$};

\end{tikzpicture}
\end{center}
\caption{ \small Shape of the spectrum of the unperturbed operator $H_{\theta,0}(\vec{p})$ for $\vec{p}\in \mathbb{R}^3$, $| \vec{p} | < 1$. $E_1$,...,$E_N$ are eigenvalues of  $H_{\theta,0}(\vec{p})$, the essential spectrum is located inside the cuspidal grey regions.}

\end{figure}
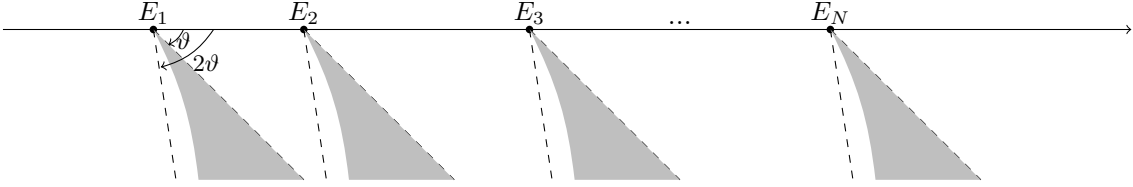

\subsection{Main Results} 
Theorem \ref{t1}, below, claims that, for $|\vec{p}|<1$, a ground-state and resonances exist and that the ground-state,  the ground-state energy and the resonance energies are analytic in $\vec{p} \in U_{\theta}[\vec{p}^{*}]$ (and in
 $\lambda_{0}$, for $|\lambda_{0}|$ small enough). If $|\vec{p}|>1$ one expects that the operator 
 $H(\vec{p})$ does \textit{not} have a stable ground-state, due to emission of Cherenkov radiation; see \cite{Ceren}.
Assuming that the so-called Fermi-Golden-Rule condition holds, the imaginary parts of the resonance energies are strictly negative, i.e., the life times of the excited states of an atom are strictly \textit{finite} due to radiative decay; see Proposition \ref{t2}.

\begin{theorem} \label{t1}
Let $0 < \nu < 1$. There exists $\lambda_c ( \nu )>0$ such that, for all $0 \le \lambda_0 < \lambda_c( \nu )$ and $\vec{p} \in \mathbb{R}^3$, $| \vec{p} | < \nu$, the following properties are satisfied:
\begin{itemize}
\item[a)] $E( \vec{p} ) := \inf \sigma( H( \vec{p} ) )$ is a non-degenerate eigenvalue of $H( \vec{p} )$. \vspace{0,1cm}
\item[b)] For every $i_0 \in \{1, \cdots, N\} $ and $\theta \in \mathbb{C}$ with $0 < \Im \theta < \pi / 4$ large enough, $H_\theta(\vec p)$ has an eigenvalue, $z_{i_0}^{(\infty)}(\vec{p})$, such that 
$z_{i_0}^{(\infty)}(\vec{p}) \to E_{i_0}$, as $\lambda_0 \to 0$. For $i_0=1$, $z_{1}^{(\infty)}( \vec{p} ) = E( \vec{p} )$.
\end{itemize}

\noindent Moreover, for $| \vec{p} | < \nu$, $\vert\lambda_0 \vert$ small enough and $0 < \Im \theta < \pi / 4$ large enough, the ground state energy, $E(\vec{p})$, the resonance energies $z_{i_0}^{(\infty)}(\vec{p})$, $i_0\geq 2$, and their respective eigenvectors (unique up to a phase), are analytic in $\vec{p}$, $\lambda_0$ and $\theta$. In particular, $E(\vec{p})$ and  $z_{i_0}^{(\infty)}(\vec{p})$, $i_0\geq 2$,  are independent of $\theta$.
\end{theorem}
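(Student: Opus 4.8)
The plan is to prove Theorem~\ref{t1} by an inductive ``spectral renormalization'' construction based on a sequence of smooth iso-spectral Feshbach--Schur maps, as announced in the introduction. Fix $i_0 \in \{1,\dots,N\}$ and work with the complex-dilated fibre Hamiltonian $H_\theta(\vec p)$ for $\vec p \in U_\theta[\vec p^*]$, $\lambda_0$ complex and small, and $\theta$ a small complex perturbation of $i\vartheta$ with $\vartheta$ close to $\pi/4$. Introduce a family of smooth cut-off functions $\chi_\rho$ localizing the atomic state onto the $E_{i_0}$-eigenspace of $H_{is}$ and the photon field energy below the scale $\rho$, together with $\bar\chi_\rho=\sqrt{\mathds{1}-\chi_\rho^2}$, and the associated smooth Feshbach--Schur operator
\[
F_{\chi_\rho}\big(H_\theta(\vec p) - z\big) = \chi_\rho\big(H_\theta(\vec p) - z\big)\chi_\rho - \chi_\rho\, \lambda_0 H_{I,\theta}\,\bar\chi_\rho \big(\bar\chi_\rho\big(H_\theta(\vec p)-z\big)\bar\chi_\rho\big)^{-1}\bar\chi_\rho\, \lambda_0 H_{I,\theta}\,\chi_\rho ,
\]
acting on the range of $\chi_\rho$. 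The key structural input is the iso-spectrality of this map: for $z$ in the relevant spectral window, $z$ is an eigenvalue of $H_\theta(\vec p)$ if and only if $0$ lies in the spectrum of $F_{\chi_\rho}(H_\theta(\vec p) - z)$, with eigenvectors in explicit correspondence. The construction tracks, at each scale, a complex number $z_n$ (the running approximation to the resonance energy) and an effective Hamiltonian on the sector $\{H_f \le \rho_n\}$, written as a ``free part'' (a $z_n$-dependent scalar $T_n(z_n)$, which will converge, plus $e^{-\theta}H_f$ restricted to this sector) plus a small ``interaction part'' organized as a generalized normal-ordered (Wick) series whose kernels are controlled in an appropriate Banach-space norm $\varepsilon_n$.

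First I would establish the unperturbed resolvent estimates that start the induction. Using the geometry of the spectrum of $H_{\theta,0}(\vec p)$ depicted in Figure~1 --- the simple eigenvalues $E_1,\dots,E_N$ with cuspidal essential-spectrum regions of opening governed by $\vartheta$ --- one checks that, because $|\vec p| < \nu < 1$, the non-self-adjoint free evolution generator $\tfrac12 e^{-2\theta}\vec P_f^2 - e^{-\theta}\vec p\cdot\vec P_f + e^{-\theta} H_f$ has numerical range in a sector bounded away from $0$ once restricted to the orthogonal complement of the $E_{i_0}$-eigenspace; here one uses the elementary bound $\tfrac12|\vec k|^2 - |\vec p|\,|\vec k| + |\vec k| \ge c(\nu)\,(|\vec k|^2 + |\vec k|)$. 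This yields, for $z$ in a small punctured disc around $E_{i_0}$ inside the resolvent set carved out by the cusps, and $\lambda_0$ small, invertibility of $\bar\chi_{\rho_0}(H_\theta(\vec p) - z)\bar\chi_{\rho_0}$ with norm bounds uniform in all parameters, hence existence of the first effective Hamiltonian $H^{(0)}_\theta(\vec p,z) := F_{\chi_{\rho_0}}(H_\theta(\vec p) - z)$. I would then fix the sequence of scales to decay super-exponentially, e.g.\ $\rho_n = \rho_0^{\,\beta^n}$ for some $\beta>1$; this is the choice that lets the iteration telescope without the fixed-point/Banach-space-of-operators machinery of the traditional renormalization group.

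The heart of the argument is the inductive step: given $H^{(n)}$ at scale $\rho_n$ with interaction part of size $\varepsilon_n$, one applies a rescaled smooth Feshbach--Schur map to pass to $H^{(n+1)}$ at scale $\rho_{n+1}$, and one shows $\varepsilon_{n+1} \le C\,\varepsilon_n^2$ together with contraction of the diagonal part, forcing $|z_{n+1} - z_n| \lesssim \varepsilon_n \rho_n$. The crucial gain comes from the infrared behaviour of the interaction: the form factor $\Lambda(\vec k)|\vec k|^{1/2}$ in $H_{I,\theta}$ carries a half-power of $|\vec k|$, so each decimation from $\rho_n$ to $\rho_{n+1}$ picks up a positive power of $\rho_{n+1}$, which, combined with $\lambda_0^2$ and the resolvent bounds, closes the estimate and drives $\varepsilon_n \to 0$ super-exponentially; this is precisely the mechanism by which the model avoids a genuine infrared catastrophe. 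Consequently $z_n \to z^{(\infty)}_{i_0}(\vec p)$, the effective Hamiltonians converge, and the limit operator has $0$ as a non-degenerate eigenvalue --- a rank-one piece on the vacuum sector survives and everything else is strictly invertible. Pulling this back through the chain of iso-spectral Feshbach--Schur maps shows that $z^{(\infty)}_{i_0}(\vec p)$ is a non-degenerate eigenvalue of $H_\theta(\vec p)$, with $z^{(\infty)}_{i_0}(\vec p) \to E_{i_0}$ as $\lambda_0 \to 0$. For $\theta$ real and $\vec p \in \mathbb{R}^3$, $|\vec p| < \nu$, the operator $\Gamma(\theta)$ is unitary, so $H_\theta(\vec p)$ is unitarily equivalent to $H(\vec p)$; applying the construction at $i_0=1$ identifies $z^{(\infty)}_1(\vec p)$ with $\inf\sigma(H(\vec p)) = E(\vec p)$, a non-degenerate eigenvalue, which gives part~a) and the last assertion of part~b). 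The step I expect to be the main obstacle is precisely obtaining these resolvent bounds for $\big(\bar\chi_{\rho}(H_\theta(\vec p)-z)\bar\chi_{\rho}\big)^{-1}$ \emph{uniformly} over the whole complex domain of $(\vec p,\lambda_0,\theta)$ at every scale --- one cannot invoke the spectral theorem, so the estimate must be extracted from sharp numerical-range control of the non-self-adjoint free part, with constants that do not deteriorate as $\vartheta \to \pi/4$ or as $|\vec p| \to \nu$ --- and then checking that the contraction $\varepsilon_{n+1}\lesssim \varepsilon_n^2$ genuinely closes at each scale uniformly in these parameters.

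Finally, analyticity. Every object in the construction --- the resolvents $\big(\bar\chi_\rho(H_\theta(\vec p)-z)\bar\chi_\rho\big)^{-1}$, the Feshbach--Schur maps, the kernels of the effective interactions, and the running $z_n$ --- depends analytically on $(\vec p,\lambda_0,\theta)$ in the product domain $U_\theta[\vec p^*] \times \{|\lambda_0| < \lambda_c(\nu)\} \times \{\,|\theta - i\vartheta| \text{ small}\,\}$, being built from compositions, products and norm-convergent Neumann series of analytic operator-valued maps (recall $H_\theta(\vec p)$ is an analytic family of type~(A) by Appendix~\ref{typA}). Since the bounds $\varepsilon_n \to 0$ are uniform on compact subsets of this domain, the convergence $z_n \to z^{(\infty)}_{i_0}(\vec p)$ and of the suitably normalized eigenvectors is uniform on compacts, so by Weierstrass's theorem the limits are analytic in $(\vec p,\lambda_0,\theta)$. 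Independence of $\theta$ then follows from the group relation $\Gamma(\theta)\Gamma(\theta') = \Gamma(\theta+\theta')$: for fixed $\vec p$ in the relevant complex domain, if $\theta$ and $\theta'$ differ by a real number then $H_\theta(\vec p)$ and $H_{\theta'}(\vec p)$ are related by the bounded, boundedly invertible operator $\Gamma(\theta-\theta')$, so their discrete eigenvalues near $E_{i_0}$ coincide; hence $\partial_\theta z^{(\infty)}_{i_0}(\vec p) = 0$, and by the identity theorem $z^{(\infty)}_{i_0}(\vec p)$ --- and in particular $E(\vec p) = z^{(\infty)}_1(\vec p)$ --- is independent of $\theta$.
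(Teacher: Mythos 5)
Your proposal follows essentially the same route as the paper: a first Feshbach--Schur decimation onto the $E_{i_0}$-level at photon scale $\rho_0$, an inductive sequence of smooth iso-spectral Feshbach--Schur maps at super-exponentially decaying scales with re-Wick-ordered effective Hamiltonians and a running zero $z_n$ of the vacuum expectation value, reconstruction of the eigenvector by pulling back through the chain of $Q$-operators, analyticity by uniform convergence of analytic data, and $\theta$-independence via unitarity of $\Gamma(\theta)$ for real $\theta$ combined with analyticity and the identity theorem. Two sketch-level imprecisions worth fixing (they do not change the strategy): the smooth Feshbach--Schur operator is $F_{\chi}(H-z,T)=T+\chi W\chi-\chi W\bar\chi\,\bigl(T+\bar\chi W\bar\chi\bigr)^{-1}\bar\chi W\chi$, not $\chi(H-z)\chi-\chi W\bar\chi\,\bigl(\bar\chi(H-z)\bar\chi\bigr)^{-1}\bar\chi W\chi$, and the unperturbed part at step $j$ is the full kernel $w^{(j)}_{0,0}(H_f,\vec P_f)+\mathcal{E}^{(j)}$, which retains the $-e^{-\theta}\vec p\cdot\vec P_f$ dependence rather than being merely a scalar plus $e^{-\theta}H_f$.
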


\begin{remark} $\quad$
\begin{itemize}
\item Existence and analyticity  of a ground state, as well as  analyticity of the map $\vec{p} \mapsto E( \vec{p} )$, are proven in \cite{FFS}.
\item For simplicity of exposition, we only prove, in the present paper, the (existence and) analyticity of the resonance energies $z_{i_0}^{(\infty)}( \vec{p} )$ in $\vec{p}$, for $i_0\geq2$. In the following, we fix $i_0$ and write $z_{i_0}^{(\infty)}( \vec{p})=: z^{(\infty)}(\vec{p})$; (dependence on $i_0$ suppressed). Our proof can be adapted in a straightforward way to establish the statements concerning analyticity in $\lambda_0$ and $\theta$. For different models similar to the model of non-relativistic QED studied in this paper, analyticity in the coupling constant has been proven previously in \cite{GrHa09_01,HaHe11_01,HaHe10_01}.
\item For Pauli-Fierz models with static nuclei, resonances have been studied in \cite{BaFrSi98_02,Si09_01,AFFS,BaBaPi}.
\item The fact that $z^{(\infty)}( \vec{p} )$ is independent of $\theta$ is a direct consequence of the analyticity of $z^{(\infty)}( \vec{p} )$ in $\theta$, together with unitarity of the dilatation operator $\Gamma( \theta )$ for real $\theta$'s and with  the existence of a normalizable and analytic eigenstate of $H_{\theta}(\vec{p})$ associated to $z^{(\infty)}( \vec{p} )$.
\item Theorem 1.1 extends to more realistic models of atoms (with dynamical electrons) as considered for instance in \cite{AmGrGu06_01,FrGrSc07_01,LoMiSp07_01}. Such models are not treated in our paper in order not to hide the basic simplicity of our methods.
\end{itemize}
\end{remark}

\begin{proposition}\label{t2} 
Let $i_0 > 1$ and $\vec p \in \mathbb{R}^3$, $| \vec{p} | < 1$.
Suppose that
\begin{align*}
\sum_{j < i_0} \int_{\underline{\mathbb{R}}^3} d \underline{k}  \text{ } \big |  (\vec{d})_{N-j + 1,N- i_0 +1} \cdot \vec{\epsilon}(\underline{k})  \big |^2 |\vec k| | \Lambda (\vec k)|^2 \delta\big(E_j - E_{i_0} + |\vec k| 
-  \vec p \cdot \vec k + \frac{\vec k^2}{2} \big) > 0,
\end{align*}
\begin{align}\label{et2}
\text{(Fermi-Golden-Rule condition)}
\end{align}
where $ (d^{l})_{i,j} $, $l=1,...,3$, is the matrix element of the operator $d^{l}$ in the eigenbasis of $H_{is}$; see Eq. \eqref{dip} and \eqref{d00}.
Then, under the conditions of Theorem \ref{t1} and for $\lambda_0$ small enough, the imaginary part of $z^{(\infty)}(\vec p)$ is strictly negative. 
\end{proposition}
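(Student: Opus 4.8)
The plan is to obtain a second-order-in-$\lambda_0$ expansion of the resonance energy $z^{(\infty)}(\vec p)$ and to identify the coefficient of $\lambda_0^2$ in its imaginary part with the Fermi-Golden-Rule sum appearing in \eqref{et2}, up to a strictly positive multiplicative constant. Since Theorem \ref{t1} already gives analyticity of $z^{(\infty)}(\vec p)$ in $\lambda_0$ near $0$, with $z^{(\infty)}(\vec p) \to E_{i_0}$ as $\lambda_0 \to 0$, we may write $z^{(\infty)}(\vec p) = E_{i_0} + \lambda_0 a_1(\vec p) + \lambda_0^2 a_2(\vec p) + O(\lambda_0^3)$. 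The first-order term vanishes because $H_I$ is off-diagonal in the photon-number grading (it is linear in creation/annihilation operators), so $\langle \Omega \otimes \varphi_{i_0}, H_{I,\theta}\, \Omega \otimes \varphi_{i_0}\rangle = 0$, where $\varphi_{i_0}$ is the internal eigenvector for $E_{i_0}$; hence $a_1(\vec p) = 0$ and the leading correction is $\lambda_0^2 a_2(\vec p)$.

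Next I would compute $a_2(\vec p)$ by standard Rayleigh--Schr\"odinger perturbation theory applied to the dilated operator $H_\theta(\vec p)$ around the unperturbed eigenvalue $E_{i_0}$ of $H_{\theta,0}(\vec p)$:
\begin{equation*}
a_2(\vec p) = -\big\langle \Omega \otimes \varphi_{i_0},\, H_{I,\theta}\, \big( H_{\theta,0}(\vec p) - E_{i_0} \big)^{-1} H_{I,\theta}\, \Omega \otimes \varphi_{i_0} \big\rangle,
\end{equation*}
where the resolvent is applied on the range of $H_{I,\theta} \Omega \otimes \varphi_{i_0}$, which lies in the one-photon sector. On the one-photon sector $H_{\theta,0}(\vec p)$ acts, after restriction to internal level $E_j$, by multiplication by $E_j + e^{-\theta}|\vec k| - e^{-\theta}\vec p\cdot\vec k + \tfrac12 e^{-2\theta}\vec k^2$ in the photon variable $\vec k$ (reading off from the definition of $H_{\theta,0}$), and $H_{I,\theta}$ creates one photon with the explicit kernel $i e^{-2\theta} \Lambda(e^{-\theta}\vec k)|\vec k|^{1/2}\vec\epsilon(\underline k)\cdot\vec d$. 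Plugging in, $a_2(\vec p)$ becomes a sum over internal levels $j$ of an integral over $\underline{\mathbb{R}}^3$ of $|(\vec d)_{j,i_0}\cdot\vec\epsilon(\underline k)|^2 |\vec k| |\Lambda(e^{-\theta}\vec k)|^2$ divided by $E_j - E_{i_0} + e^{-\theta}(|\vec k| - \vec p\cdot\vec k) + \tfrac12 e^{-2\theta}\vec k^2$, times a nonzero power of $e^{-\theta}$. Taking the imaginary part: for $j \geq i_0$ the denominator stays away from the real axis uniformly (the energy shift $E_j - E_{i_0}\geq 0$ cannot be cancelled by the photon dispersion, which has positive real part for $|\vec p|<1$), so those terms contribute $O(\lambda_0^3)$ corrections only after one checks they are real to leading order — more precisely they contribute a smooth, real-analytic-in-$\theta$ quantity whose imaginary part one controls; for $j < i_0$ one uses the Sokhotski--Plemelj / limiting-absorption computation: as $\vartheta \downarrow 0$ the denominator $E_j - E_{i_0} + |\vec k| - \vec p\cdot\vec k + \tfrac12\vec k^2 - i0^+\times(\text{positive})$ produces, via $\Im (x - i0^+)^{-1} = \pi\delta(x)$, exactly $-\pi$ times the delta-function appearing in \eqref{et2}. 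Since $\Lambda(e^{-\theta}\vec k) \to \Lambda(\vec k)$ and $e^{-\theta}\to 1$ as $\theta\to 0$, one recovers, up to an explicit positive constant, $\Im a_2(\vec p) = -\pi \sum_{j<i_0}\int |(\vec d)_{j,i_0}\cdot\vec\epsilon|^2 |\vec k||\Lambda|^2 \delta(\cdots) < 0$ by the hypothesis \eqref{et2}. Therefore $\Im z^{(\infty)}(\vec p) = \lambda_0^2 \Im a_2(\vec p) + O(\lambda_0^3) < 0$ for $\lambda_0$ small.

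The main obstacle is making the passage from the dilated ($\theta = i\vartheta$ fixed, with positive imaginary part) computation to the Fermi-Golden-Rule delta function rigorous and uniform, i.e., showing that the finite-$\vartheta$ integral genuinely has the claimed negative imaginary part of order $\lambda_0^2$ rather than merely in some $\vartheta \to 0$ limit. There are two ways to handle this: either (i) evaluate $\Im a_2(\vec p)$ at fixed small $\vartheta > 0$ directly — the denominator $E_j - E_{i_0} + e^{-i\vartheta}(|\vec k|-\vec p\cdot\vec k) + \tfrac12 e^{-2i\vartheta}\vec k^2$ has, for $j<i_0$, a negative imaginary part bounded below in modulus by $c\,\vartheta$ on the relevant region, so $\Im a_2(\vec p)$ is a convergent integral of a strictly negative integrand (the imaginary part of $-1/(\text{denominator})$ is strictly negative there) and is itself strictly negative, with a lower bound proportional to the Fermi-Golden-Rule quantity for $\vartheta$ small; or (ii) invoke the standard identity that $z^{(\infty)}(\vec p)$, being $\theta$-independent for $\Im z^{(\infty)}$ computations in the resolvent sense, equals the limiting value obtained from the non-dilated resolvent with a limiting absorption principle, reducing the claim to the classical Fermi-Golden-Rule lemma (as in \cite{BaFrSi98_02}). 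I would pursue route (i), since it avoids any additional limiting-absorption machinery: one needs only that the second-order coefficient $a_2(\vec p)$ is given by the explicit resolvent formula above — which follows from the construction of $z^{(\infty)}(\vec p)$ via the Feshbach--Schur maps, whose leading effective Hamiltonian reproduces exactly this second-order term — and a direct sign inspection of the resulting $\vec k$-integral. A secondary (routine) technical point is controlling the $j\geq i_0$ terms and the $O(\lambda_0^3)$ remainder uniformly in $\vec p$ with $|\vec p|<1$, which follows from the analyticity and bounds already established in Theorem \ref{t1} together with ellipticity of the denominator away from the real axis.
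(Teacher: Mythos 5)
Your central step does not go through as written. Route (i) rests on a pointwise sign inspection of the second-order integrand at fixed $\vartheta>0$, but under complex dilation the numerator is not positive: the dilated interaction carries the factor $e^{-2\theta}\Lambda(e^{-\theta}\vec k)$, so the second-order term contains $e^{-4\theta}\Lambda(e^{-\theta}\vec k)^2=e^{-4i\vartheta}\,e^{-e^{-2i\vartheta}|\vec k|^2/\sigma_\Lambda^2}$, whose phase oscillates with $|\vec k|^2$. Consequently the imaginary part of the integrand is \emph{not} pointwise negative for $j<i_0$, and the $j\ge i_0$ terms are \emph{not} pointwise real at fixed $\vartheta$ either, so no lower bound proportional to the Fermi-Golden-Rule quantity follows from "direct sign inspection", and your route (ii) (limiting absorption) is only mentioned, not carried out. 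The paper resolves precisely this point by a different device: after spherical coordinates, the radial integral defining $z^{od}(\vec p)$ is a contour integral of the fixed meromorphic function $f^{od}_{\hat k}(z)=z^3e^{-z^2/\sigma_\Lambda^2}\langle\psi_{i_0}|\vec\epsilon\cdot\vec d\,\overline P_{i_0}(H_{is}-E_{i_0}+z+\tfrac{z^2}{2}-z\,\vec p\cdot\hat k)^{-1}\overline P_{i_0}\vec\epsilon\cdot\vec d\,\psi_{i_0}\rangle$ along the ray $e^{-\theta}\mathbb{R}_+$; comparing with the ray $e^{-\overline\theta}\mathbb{R}_+$ and applying the residue theorem (the poles occur only for $j<i_0$) gives the exact, $\vartheta$-independent identity $\Im z^{od}(\vec p)=\pi\sum_{j<i_0}\int d\underline k\,|\cdots|^2|\vec k||\Lambda(\vec k)|^2\delta(\cdots)$, and the same deformation argument shows $z^{d}(\vec p)$ is real. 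Without this computation (or a genuine limiting-absorption substitute) the strict negativity you need is unproven.

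There are also structural differences worth noting. Your appeal to Rayleigh--Schr\"odinger theory for $a_2(\vec p)$ is not available as stated, because $E_{i_0}$ is an eigenvalue of $H_{\theta,0}(\vec p)$ sitting at a threshold of its continuous spectrum -- the very reason the paper says standard perturbation theory fails; the second-order formula must instead be extracted from one application of the Feshbach--Schur map (as you suggest in passing), and there the remainder is only $O(\lambda_0^{2+3/5})$ after the choice $\rho_0=\lambda_0^{4/5}$ (Lemma \ref{leading-order} and Remark \ref{r-leading}), not $O(\lambda_0^{3})$. The paper then does not expand $z^{(\infty)}(\vec p)$ in $\lambda_0$ at all: it observes that the leading operator $H_L(\vec p)$ is normal with spectrum whose imaginary part lies below $-\lambda_0^2\Im z^{od}(\vec p)$, absorbs the remainder by a Neumann series, and uses iso-spectrality of the Feshbach map to conclude that $H_\theta(\vec p)-z$ is invertible for all $z$ with $\Im z$ above a strictly negative, $O(\lambda_0^2)$ threshold (Theorem \ref{princ}); since $z^{(\infty)}(\vec p)$ is an eigenvalue, its imaginary part must lie below that threshold. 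This avoids any reliance on analyticity of $z^{(\infty)}(\vec p)$ in $\lambda_0$, which in the present paper is asserted in Theorem \ref{t1} but whose proof is only sketched. If you repair the sign computation by the contour/residue argument (or a limiting-absorption lemma) and replace the Rayleigh--Schr\"odinger step by the one-step Feshbach expansion, your outline becomes essentially the paper's proof.
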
 

The proof of Proposition \ref{t2}  does not rely on the inductive construction used to establish Theorem \ref{t1}. A single application of a suitably chosen Feshbach-Schur map, i.e., a single decimation step, is sufficient to prove this proposition, and our argumentation follows closely the one presented in \cite{BFS-1999,BaFrSi98_01}. To render this paper reasonably self-contained, the proof is given in Section \ref{imazinfty}. 

\subsection{Strategy of Proof and Sketch of Methods}  
Ultimately, our aim is to study spectral properties of the operators $H_\theta(\vec p)$ introduced in \eqref{dilH}. This spectral problem is difficult, because, among other things, it involves the study of eigenvalues imbedded in continuous spectrum and located at thresholds of the continuous spectrum of 
$H_\theta(\vec p)$.  Standard analytic perturbation theory is therefore not applicable. The key tool we will use to prove our results is the isospectral Feshbach-Schur map, which was originally developed to cope with problems of this kind in \cite{1995}. In this paper we will use the smooth Feshbach-Schur map introduced in \cite{BaChFrSi03_01} and further studied in \cite{GrHa08_01} and \cite{GrHa09_01}, which has major technical advantages (and, alas, some conceptual disadvantages), as compared to the original Feshbach-Schur map.

The Feshbach-Schur map is tailor-made for the analysis of small regions in the spectra of closed operators on Hilbert space, in particular regions of their spectra near thresholds. It enables one to construct ``effective operators'' that, on the part of the spectrum of interest, have the \textit{same} spectrum (with the same multiplicity) as the original operator, i.e., are \textit{iso-spectral} to the original operator. By iterating the Feshbach-Schur map one is able to zoom into tiny regions in the spectrum of an operator of interest and extract ever more accurate information on such parts of the spectrum. In particular, by constructing an infinite sequence of Feshbach-Schur maps, we will be able to determine the exact location of the ground-state- and the resonance energies and the corresponding eigenstates of the deformed Hamiltonians, 
$H_{\theta}(\vec{p})$, $\Im\theta > 0$, of atoms coupled to the radiation field. The Feshbach-Schur maps will be adapted to the particular resonance that one wishes to analyze. It is a novel aspect of our construction that it yields an algorithm that converges super-exponentially fast.

\subsubsection{Mathematical Tools}  \label{1.3.1}
\paragraph{\textbf{The Feshbach-Schur Map}} 
The fundamental tool used to prove our main results is the smooth Feshbach-Schur map; see \cite{BaChFrSi03_01,GrHa08_01}. A key property of this map is its iso-spectrality, which we now describe in more precise terms.

\begin{definition}[Feshbach-Schur Pairs] \label{feshbach} 
Let $ P $ be a positive operator on a separable Hilbert space $ \mathcal{H} $ whose norm is bounded by $1$, $0 \leq P \leq 1$. 
Assume that $P$ and $\overline P : = \sqrt{1 - P^2}$ are both non-zero. Let $H$ and $T$ be two closed operators on $\mathcal{H}$
with identical domains $D(H)$ and $D(T)$. Assume that $P$ and $\overline P$ commute with $T$. We set $W : = H - T$ and we define 
\begin{align*}
H_{P} : = & T + PW P, \hspace{2cm}  H_{\overline P} : = T + \overline{P} W \overline{P}.
\end{align*} 
 The pair $(H, T)$ is called a Feshbach-Schur pair associated with $P$ iff
\begin{itemize}
\item[(i)] $ H_{\overline P} $  and $ T $ are bounded invertible on $\overline P[\mathcal{H}] $
\item[(ii)]
$ H_{\overline P}^{-1} \overline P W P  $
can be extended to a bounded operator on $\mathcal{H}$ 
\end{itemize}
For an arbitrary Feshbach-Schur pair $(H,T)$ associated with $P$, we define the smooth Feshbach-Schur map by 
\begin{equation}\label{FB}
F_{P}(\cdot,T): H \mapsto F_P(H, T) : =  T + PWP - P W\overline P H_{\overline P}^{-1}\overline P W P.
\end{equation}
 \end{definition}
 
\begin{theorem} \label{feshbach-theorem}
Let $0 \leq P \leq 1$, and let $(H, T)$ be a Feshbach-Schur pair  associated with $ P $ (i.e., satisfying properties (i) and (ii) in Definition \ref{feshbach}).  Let $V$ be a closed subspace  with $P[\mathcal{H}] \subset V \subset \mathcal{H}$, and  such that
$$T: D(T) \cap V \rightarrow V, \qquad \overline{P}T^{-1} \overline{P} V \subset V.$$
 Define 
\begin{align*}
Q_P(H, T) : = P - \overline P H_{\overline P}^{-1}\overline P W P. 
\end{align*}
Then the following hold true:  
\begin{itemize}
\item[(i)] $H$ is bounded invertible on $\mathcal{H}$ if and only if $ F_P(H, T) $ is bounded invertible on $V$. 
\item[(ii)] $H$ is not injective if and only if $ F_P(H, T) $ is not injective as an operator on $V$: 
$$
H \psi = 0, \:\psi \ne 0 \Longrightarrow  F_P(H, T) P \psi = 0,\: P \psi \ne 0,
$$
$$
F_P(H, T)\phi = 0,\:  \phi \ne 0 \Longrightarrow  H Q_P(H, T)\phi = 0,\: Q_P(H, T)\phi \ne 0.
$$
\end{itemize}
\end{theorem}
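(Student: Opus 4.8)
The plan is to prove the iso-spectrality of the smooth Feshbach-Schur map by exhibiting explicit algebraic identities relating $H$ and $F_P(H,T)$ through the operators $Q_P(H,T)$ and a companion operator. The central observation is a pair of factorization-type identities: first, that $F_P(H,T)$ arises from $H$ by a block-operator Schur complement, and second, that $Q_P(H,T)$ and the operator $Q_P^{\#} := P - P W \overline{P} H_{\overline P}^{-1} \overline{P}$ intertwine $H$ and $F_P(H,T)$. Concretely, I would aim to verify the two key identities
\begin{align*}
Q_P(H,T)\, F_P(H,T) &= H\, Q_P(H,T) \quad \text{(on } V\text{)}, \\
F_P(H,T)\, Q_P^{\#}(H,T) &= Q_P^{\#}(H,T)\, H \quad \text{(on suitable domains)},
\end{align*}
together with the ``reconstruction'' relations $Q_P^{\#}(H,T)\, Q_P(H,T) = P$ on $V$ and $P\, Q_P^{\#}(H,T) = P$. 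These are the smooth analogues of the identities proven for the original Feshbach map in \cite{1995} and for the smooth version in \cite{BaChFrSi03_01,GrHa08_01}, so the structure of the argument is known; the work is to check that hypotheses (i), (ii) in Definition \ref{feshbach} and the conditions on $V$ are exactly what is needed to make every manipulation legitimate.

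First I would set up notation: write $H = T + W$, decompose using $P$ and $\overline P$ (noting $P^2 + \overline P^2 = 1$ but $P,\overline P$ need not be projections), and record that $[T,P]=[T,\overline P]=0$, so $T$ preserves the (non-orthogonal) ``ranges'' we work with. The hypothesis that $H_{\overline P}$ is bounded invertible on $\overline P[\mathcal H]$ and that $H_{\overline P}^{-1}\overline P W P$ extends boundedly is precisely what lets me define $Q_P(H,T)$ and $Q_P^{\#}(H,T)$ as bounded operators on all of $\mathcal H$. Then I would establish the algebraic identities by direct computation, expanding $H\, Q_P(H,T) = (T+W)(P - \overline P H_{\overline P}^{-1}\overline P W P)$ and reorganizing terms so that the ``$\overline P$-block'' cancels by the very definition of $H_{\overline P}$, leaving exactly $Q_P(H,T)(T + PWP - PW\overline P H_{\overline P}^{-1}\overline P W P) = Q_P(H,T) F_P(H,T)$. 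The role of the subspace $V$ with $P[\mathcal H]\subset V\subset \mathcal H$, $T:D(T)\cap V\to V$ and $\overline P T^{-1}\overline P V\subset V$ is to guarantee that $F_P(H,T)$ and $Q_P^{\#}(H,T)$ map $V$ into $V$, so the statements ``$F_P(H,T)$ bounded invertible on $V$'' and ``not injective on $V$'' make sense.

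With the intertwining identities in hand, part (i) follows quickly: if $H^{-1}$ exists on $\mathcal H$, then $F_P(H,T)$ has $Q_P^{\#}(H,T)H^{-1}Q_P(H,T)|_V$ as a two-sided bounded inverse on $V$ (using the reconstruction relations to collapse the composites to $P$, which acts as the identity on $P[\mathcal H]$ and is handled on $V$ via the extra structure); conversely, if $F_P(H,T)$ is invertible on $V$, one writes down a bounded inverse of $H$ built from $Q_P(H,T) F_P(H,T)^{-1} Q_P^{\#}(H,T)$ plus the ``$\overline P$-correction'' $\overline P H_{\overline P}^{-1}\overline P$, and checks it is a genuine two-sided inverse. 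Part (ii) is the kernel correspondence: if $H\psi = 0$ with $\psi\ne 0$, apply $Q_P^{\#}(H,T)$ and use $F_P(H,T) Q_P^{\#}(H,T) = Q_P^{\#}(H,T) H$ to get $F_P(H,T)(Q_P^{\#}(H,T)\psi)=0$; one then argues $P\psi\ne 0$ — and indeed $Q_P^{\#}(H,T)\psi = P\psi$ — since $\psi \ne 0$ would otherwise force $\overline P \psi$ to solve $H_{\overline P}\overline P\psi = 0$, contradicting invertibility of $H_{\overline P}$ on $\overline P[\mathcal H]$. Conversely, if $F_P(H,T)\phi = 0$ with $\phi\ne 0$ in $V$, apply $Q_P(H,T)$ and use the first intertwining identity to get $H Q_P(H,T)\phi = 0$, then show $Q_P(H,T)\phi\ne 0$ because $P Q_P(H,T)\phi = P\phi$ (from the reconstruction relation $P\,\overline P H_{\overline P}^{-1}\overline P = 0$ modulo the $P W$ term, which one tracks carefully) and $P\phi = 0$ would again, via $F_P(H,T)\phi=0$, force $\phi = 0$.

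The main obstacle I anticipate is not the algebra per se but the \emph{domain bookkeeping}: $H$, $T$ are unbounded and only closed, $P$ and $\overline P$ are not projections, and $F_P(H,T)$ is defined on a possibly non-invariant-looking domain, so each time I compose $H$ with $Q_P^{\#}(H,T)$ or $Q_P(H,T)$ I must verify that the ranges land in the appropriate domains and that ``bounded extension'' in hypothesis (ii) is used consistently (e.g.\ that $\overline P H_{\overline P}^{-1}\overline P W$ is defined on $D(T)=D(H)$ and that $W\overline P H_{\overline P}^{-1}\overline P$ extends boundedly — this requires a small symmetry/adjoint argument or an additional implicit hypothesis that I would make explicit). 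A secondary subtlety is the precise reason $P$ restricted to $V$ can be ``inverted'' on $P[\mathcal H]$ in the invertibility argument; this is where the conditions $\overline P T^{-1}\overline P V\subset V$ and $T:D(T)\cap V\to V$ do real work, and I would isolate that as a lemma before assembling the proof of (i).
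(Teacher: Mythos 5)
Your overall architecture (intertwining relations, explicit inverse formulas, kernel correspondence, non-vanishing via invertibility of $H_{\overline P}$) is the standard route of \cite{BaChFrSi03_01,GrHa08_01} — note that the paper itself does not prove this theorem but quotes it from those references — yet the central identities you propose to ``verify by direct computation'' are not the correct ones, and the computation would fail. The true intertwining relations are
$H\, Q_P(H,T) = P\, F_P(H,T)$ and $Q_P^{\#}(H,T)\, H = F_P(H,T)\, P$ (the first of these is exactly the formula $HQ_{\chi}(H,T)=\chi F_{\chi}(H,T)$ the paper invokes in Section 5), not $Q_P F_P = H Q_P$ and $F_P Q_P^{\#} = Q_P^{\#} H$ as you wrote: expanding $H Q_P$ and using $T=H_{\overline P}-\overline P W \overline P$ together with $P^2+\overline P^2=1$ produces the prefactor $P$, not $Q_P$. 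Likewise your ``reconstruction relations'' $Q_P^{\#}Q_P=P$ and $P\,Q_P^{\#}=P$ are false, and the remark that $P$ ``acts as the identity on $P[\mathcal H]$'' is precisely what fails here: $P$ is a smooth cutoff with $0\le P\le 1$, not a projection, so it is not idempotent and has no inverse on its range. As a consequence your proposed inverse of $F_P$, namely $Q_P^{\#}H^{-1}Q_P|_V$, does not check out; the correct formulas are $H^{-1}=Q_P F_P^{-1} Q_P^{\#}+\overline P H_{\overline P}^{-1}\overline P$ (which you do state correctly) and $F_P^{-1}=P H^{-1} P+\overline P\, T^{-1}\overline P$ on $V$ — and it is in verifying the latter, and the identity $Q_P P+\overline P H_{\overline P}^{-1}\overline P H=\mathds{1}$ on $D(T)$, that the hypotheses $T:D(T)\cap V\to V$ and $\overline P T^{-1}\overline P V\subset V$ do their work, not in ``inverting $P$ on $P[\mathcal H]$''.

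The same misstatement propagates into part (ii). With the correct identity the first implication is immediate: $H\psi=0$ gives $0=Q_P^{\#}H\psi=F_P(P\psi)$, which is what the theorem asserts; your detour via ``$Q_P^{\#}\psi=P\psi$ when $H\psi=0$'' is unjustified (from $H\psi=0$ one only gets $H_{\overline P}\overline P\psi=-\overline P W P\psi$, which does not make the correction term in $Q_P^{\#}\psi$ vanish). Your argument that $P\psi\ne 0$ is essentially right once phrased as: if $P\psi=0$ then $\overline P H\psi = H_{\overline P}\overline P\psi + \overline P W P(P\psi)$ forces $H_{\overline P}\overline P\psi=0$, hence $\overline P\psi=0$, hence $\psi=P(P\psi)+\overline P(\overline P\psi)=0$. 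For the converse implication, $HQ_P\phi=PF_P\phi=0$ is again immediate from the correct identity, but your proof that $Q_P\phi\ne 0$ relies on $PQ_P\phi=P\phi$, which is false; one instead uses $F_P=T+PWQ_P$ (so $Q_P\phi=0$ and $F_P\phi=0$ give $T\phi=0$) together with the invertibility of $T$ on $\overline P[\mathcal H]$ and the structure of $V$, or the corresponding argument in \cite{GrHa08_01}. So the gap is not the strategy but the algebra at its core: as written, none of the four ``key identities'' you plan to verify is true, and the downstream steps that consume them do not survive the correction without being rewritten along the lines above.
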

\begin{remark} \label{isonoiso} 
\item
\begin{itemize}
\item  Items (i) and (ii) of Theorem   \ref{feshbach-theorem} describe what we call 
\underline{iso-spectrality}. This notion does not mean that the spectra of $H$ and of $ F_P(H, T) $ are identical. Rather, iso-spectrality is a {\text{local}} property: One uses the Feshbach-Schur map to explore 
spectral properties of an operator within specific, small regions in the complex plane.
\item  As  emphasized in \cite{GrHa08_01}, if $T$ is bounded invertible  in $ \overline{P}[\mathcal{H}]$, if $ T^{-1} \overline{P} W \overline{P}$ and $\overline{P} W T^{-1} \overline{P}$ are bounded operators with norm  strictly less  than one, and if $T^{-1} \overline{P} W P$ is bounded, then items (i) and (ii) of Definition \ref{feshbach} are satisfied. We  will often use these criteria  to show that a pair $(H,T)$ is a Feshbach-Schur pair associated with $P$.
\end{itemize}
\end{remark}

\paragraph{\textbf{Wick Monomials}} 
We now describe the general class of operators to which the methods developed in this paper, based on the smooth Feshbach-Schur map, can be applied.

Setting
$\mathbb{N}_0 : = \mathbb{N}\cup \{ 0\}$,
we denote by  
\begin{equation} \label{uw}
\underline w : = \{ w_{m,n} \}_{m, n \in \mathbb{N}_0}
\end{equation}
a sequence of bounded measurable functions, 
\begin{equation}
\forall m, n \: : \:  \: w_{m,n} : \mathbb{R} \times \mathbb{R}^3  \times \underline{\mathbb{R}}^{3m} \times 
 \underline{\mathbb{R}}^{3 n} \to \mathbb{C}, 
\end{equation}
that are continuously differentiable in the variables, $r \in \sigma( H_f ) \subset \mathbb{R}$, $\vec{l} \in 
\sigma( \vec{P}_f ) = \mathbb{R}^{3}$, respectively, appearing in the first and the second argument, and symmetric in the $m$ variables in
$\underline{\mathbb{R}}^{3m}$ and the $n$ variables in $\underline{\mathbb{R}}^{3n}$. We suppose furthermore that 
\begin{equation} \label{w00eq0}
w_{0,0}(0,\vec{0})=0.
\end{equation}
With a sequence, $\underline w$, of functions, as specified above,  and a positive number $1 \geq\rho>0$, we associate  a sequence of operators
\begin{equation}
\begin{split}
\label{Wmn}
W_{m,n}(\underline w) : = & \mathds{1}_{H_f \leq \rho} \int_{\underline{ \mathbb{R}}^{3 m} \times \underline{  \mathbb R  }^{3n}}
a^*(\underline{k}_1)\cdots a^*(\underline{k}_m)  w_{m,n}( H_f, \vec P_f, \underline{k}_1, \cdots, \underline{k}_m, \underline{\tilde{k}}_1, \cdots, \underline{\tilde{k}}_n)
\\  & a( \underline{\tilde{k}}_1)\cdots a(  \underline{\tilde{k}}_n) \text{ } \prod_{i=1}^{m} d \underline{k}_i   \text{ } \prod_{j=1}^{n} d \underline{\tilde{k}}_j \mathds{1}_{H_f \leq \rho}.
\end{split}
\end{equation}
It is easy to show that $W_{m,n}(\underline w)$ is actually a bounded operator on $\mathcal{H}_f$. The operators   $W_{m,n}(\underline w)$  defined in \eqref{Wmn} are called (generalized) Wick-monomials (at the energy scale $\rho$).  
For every sequence of functions $\underline{w}$ and every $\mathcal{E}\in \mathbb{C}$ we define 
\begin{align}\label{Hw}
H[\underline w,\mathcal{E}] = \sum_{m + n \geq 0} W_{m,n}(\underline w) + \mathcal{E}, \hspace{1cm} W_{\geq 1}(\underline w) : =  \sum_{m + n \geq 1} W_{m,n}(\underline w).
\end{align}
The complex number $\mathcal E$ is the vacuum expectation value of $ H[\underline w, \mathcal E] $: 
\begin{equation} \label{1.39}
\langle  \Omega  |  H[\underline w, \mathcal E] \Omega   \rangle = \mathcal E . 
\end{equation}
\subsubsection{The First Decimation Step of Spectral Renormalization} \label{fds} 
Recall that we wish to analyze the fate of an excited state of an atom after it is coupled to the radiation field. Let us consider the excited state indexed by $i_0 \in \{ 2, \cdots, N \}$, with unperturbed internal energy $E_{i_0}$. We expect that, after coupling the atom to the quantized radiation field, an excited state (corresponding to an index $i_0 > 1$) is \textit{unstable}, i.e., is turned into a resonance. Our goal is to determine its life time and the real part of the resonance energy (Lamb shift). For this purpose, we introduce a sequence of smooth Feshbach-Schur ``decimation'' maps that will be successively applied to the deformed Hamiltonians 
$H_{\theta}(\vec{p})$, with the goal of constructing a sequence of operators, which -- when applied to the vacuum $\Omega$ -- will converge to 
$z^{(\infty)}(\vec{p}) \Omega$, where $z^{(\infty)}(\vec{p})$ is the $i_{0}^{th}$ resonance energy; (as announced, we will omit reference to $i_{0}$ in our notation, since $i_{0}$ will be fixed). In this subsection, we sketch the construction of the first decimation map. 

We define a decreasing function $\chi \in C^{\infty}(\mathbb{R}) $ satisfying 
\begin{equation}\label{chi}
\chi(r) : = \begin{cases} 1, &  \text{if $ r \leq  3/4 $,} \\ 0 & \text{if $ r > 1 $,} \end{cases}
\end{equation}
and strictly decreasing in $(3/4, 1  )$. Furthermore, we choose
a constant $ \rho_0 \in (0, 1)$ and define
\begin{equation}\label{chirho}
\chi_{\rho_0 }(r) := \chi(r/\rho_0), \hspace{2cm} \overline \chi_{\rho_0}(r) : = \sqrt{1 - \chi_{\rho_0}^2(r)}.
\end{equation}
\noindent Let $\psi_{i_0}$ denote the normalized eigenvector (unique up to a phase) of the operator $H_{is}$ corresponding to the eigenvalue $E_{i_0}$. The orthogonal projection onto $\psi_{i_0}$ is denoted by
\begin{equation}\label{psi0}
P_{i_0} := \vert \psi_{i_0} \rangle  \langle \psi_{i_0} \vert. 
\end{equation}

\noindent Next, we define an operator ${\bm{\chi}}_{i_0}$ by
\begin{equation} \label{chi0}
\bm{\chi}_{i_0} : =  P_{i_0} \otimes \chi_{\rho_0}(H_f).
\end{equation}
In Section \ref{premiere} we will prove that, for $ |z - E_{i_0}| \ll \rho_0\mu \sin(\vartheta) $,
$(H_\theta(\vec p) - z, H_{\theta,0}(\vec p) - z) $ is a Feshbach-Schur pair associated to 
$\bm{\chi}_{i_0} $ and that, as a consequence, 
there is a sequence of functions $\underline w^{(0)} (\vec p, z)$ [see \eqref{uw}]  and a complex number 
$\mathcal{E}^{(0)}(\vec p, z)$ such that an application of the Feshbach-Schur map, 
$F_{\bm{\chi}_{i_0}}(\cdot, H_{\theta,0}(\vec{p})-z)$, to the operator $H_{\theta}(\vec{p})-z$ yields an operator of the form specified in Eq. (\ref{Hw}). More precisely,
\begin{equation} \label{Wo}
 F_{\bm{\chi}_{i_0}} (H_\theta(\vec p) -z, H_{\theta,0}(\vec p) - z)_{\mid   P_{i_0} \otimes  \mathds{1}_{H_f \leq \rho_0} [\mathcal{H}_{\vec{p}}]} =  P_{i_0}\otimes  H[\underline w^{(0)}(\vec p, z),\mathcal{E}^{(0)}(\vec{p},z)].
\end{equation}
\noindent We simplify our notation by writing [see \eqref{Hw}]
\begin{align}\label{sim}
 H^{(0)}(\vec p, z) : =  H[\underline w^{(0)}(\vec p, z),\mathcal{E}^{(0)}(\vec{p},z)]  = 
 W^{(0)}_{\geq 1}(\vec{p},z) + w_{0,0}^{(0)}(\vec{p},z,H_f,\vec{P}_f)  + \mathcal{E}^{0}(\vec{p},z),
 \end{align}
 where
\begin{equation*}
W^{(0)}_{\geq 1}(\vec p, z) : = \sum_{m+n \geq 1} W_{m,n}(\underline w^{(0)}(\vec p, z)).  
\end{equation*}
One expects that it is easier to analyze the operator $  H^{(0)}(\vec p, z)  $, rather than the original operator 
$ H_\theta(\vec p) - z $, because the former acts on a subspace, $  P_{i_0} \otimes   \mathds{1}_{H_f \leq \rho_0} [\mathcal{H}_{\vec{p}}] \subset \mathcal{H}_{\vec{p}}$ (with all internal states corresponding to indices $i \not= i_0$ eliminated), and the operator
$  W^{(0)}_{\geq 1 }(\vec p, z) $ is bounded in norm by some power of $\rho_0$.



\subsubsection{Inductive Construction of Effective Hamiltonians}\label{secic} 
The accuracy of the information on the spectrum of the operator $H^{(0)}(\vec{p},z)$ near $0$, and hence on the spectrum of the operator $H_{\theta}(\vec{p})$ near $E_{i_0}$, that can be achieved (after one application of the Feshbach-Schur map) is limited by the circumstance that $\rho_0$  cannot be taken to be very small. Luckily, it turns out that this limitation can be removed by successive applications of Feshbach-Schur maps that lower the energy range of the states in the subspaces on which the Feshbach-Schur operators act further and further towards $0$ and, hence, determine the location of the spectrum of $H_{\theta}(\vec{p})$ near $E_{i_0}$ ever more accurately. Successive applications of Feshbach-Schur maps yield Hamiltonians  
\begin{equation}
 H^{(j)}(\vec p, z) = H[\underline{w}^{(j)}(\vec{p},z), \mathcal{E}^{(j)}(\vec{p},z)], \qquad j \in \mathbb{N}_0,
 \end{equation}
as in Eq. \eqref{Hw}, with the following properties: 
 \begin{equation}\label{iso1}
H_\theta(\vec p) - z \: \text{is bounded invertible}  \Longleftrightarrow  H^{(j)}(\vec p, z)\:  \text{is bounded invertible}.
\end{equation}
\begin{equation}\label{iso2}
H_\theta(\vec p) - z \: \text{is not injective}  \Longleftrightarrow  H^{(j)}(\vec p, z)\:  \text{is not injective}.
\end{equation}
These ``iso-spectrality properties'' permit us to trade the analysis of the spectrum of $ H_\theta(\vec p) $
near the energy $E_{i_0}$ of an excited state of the atom for the analysis of the spectrum of the operators 
$ H^{(j)}(\vec p, z) $ near the origin. This turns out to simplify matters considerably: The study of the operators 
$ H^{(j)}(\vec p, z) $ is much easier than the study of the original Hamiltonian, because $ H^{(j)}(\vec p, z) $ is the sum of a diagonal operator, whose spectrum is known explicitly, and a perturbation term whose operator norm will turn out to decrease to zero  super-exponentially, as $j\rightarrow \infty$.
Below, we describe in somewhat more detail how this idea, which was originally developed in \cite{1995}, \cite{BaFrSi98_01}, can be implemented, technically; (details will be presented in Section \ref{itere}). 

Let $ \vec p^* \in \mathbb{R}^3 $, with $ |p^*| < 1 $, and let $\vec p \in U_{\theta}[\vec p^*]$. 
We define two sequences of numbers $ ( \rho_j)_{j \in \mathbb{N}_0} $,  $ ( r_j)_{j \in \mathbb{N}_0} $ by
\begin{equation} \label{sequence}
\rho_j = \rho_0^{(2 - \varepsilon)^j}, \text{   }\text{with} \text{   }\varepsilon \in (0,1), \hspace{1cm}  r_j : = 
\frac{ \mu \sin(\vartheta)}{32}\rho_j, 
\end{equation}
where $0<\rho_0<1$ is a suitably chosen parameter; (see Section \ref{itere}).  The  rate of convergence of the sequence $\rho_j$ depends on the infrared behavior of  the interacting Hamiltonian $H_{I}$. In general, if  $H_I$   behaves like $\vert \vec{k} \vert^{\alpha-1/2}$ in the infrared and   $\alpha>0$,  the sequence $\rho_j$  can be chosen to be equal to  $\rho_0^{(1 + \epsilon)^j}$ for any $0<\epsilon<\alpha$. A filtration of Hilbert spaces 
$(\mathcal{H}^{(j)})_{j \in \mathbb{N}_0}$ is given by setting
\begin{equation}\label{hilbertj}
\cH^{(j)} = \mathds{1}_{H_f \leq \rho_{j}}[\cH_f].
\end{equation}
We construct inductively a sequence of complex numbers 
$\lbrace z^{(j- 1)}(\vec p)\rbrace_{j \in \mathbb{N}_0}$, $z^{(-1)}(\vec p) : = E_{i_0}$, and, for every $z \in D(z^{(j - 1)}(\vec p), r_{j})$, a sequence of  functions 
$\underline w^{(j)}(\vec p, z) $ and a complex number $\mathcal{E}^{(j)}(\vec p, z)$  [see \eqref{uw}-\eqref{1.39}] , with the following properties: 
\begin{itemize} 
\item[(a)] Let 
\begin{equation}\label{Wmnj}
W^{(j)}_{m,n}(\vec p, z) : = W_{m,n}(\underline w^{(j)}(\vec p, z)), \text{   }  \quad 
H^{(j)}(\vec p, z) : = H[\underline w^{(j)}(\vec p, z), \mathcal{E}^{(j)}(\vec p, z)],
\end{equation}
acting on $\mathcal{H}^{(j)} $, (with $m,n \in \mathbb{N}_0$); see \eqref{Wmn} and \eqref{Hw}.
Then we have that 
\begin{equation}\label{LowerBound}
\| W^{(j)}_{0,0}(\vec{p},z)  \Psi \|= \| w^{(j)}_{0,0}(\vec{p},z,H_f, \vec{P}_f) \Psi\| \geq \epsilon  \| H_f \Psi\|, \qquad \forall \Psi \in \mathcal{H}^{(j)}
\end{equation}
for some constant $\epsilon > 0$ depending on $\vec{p}$, but \textit{independent} of $j$.
The pair of operators $\big(H^{(j)}(\vec p, z), W^{(j)}_{0,0}(\vec{p},z) + \mathcal{E}^{(j)}(\vec{p},z)\big )$
is a Feshbach-Schur pair associated to $\chi_{\rho_{j}}(H_f)$. Thus 
\begin{equation} \label{ind}
H^{(j+1)}(\vec p, z) = F_{\chi_{  \rho_{j+1} }(H_f)}[H^{(j)}(\vec p, z),W^{(j)}_{0,0}(\vec{p},z) + \mathcal{E}^{(j)}(\vec{p},z)]|_{   \mathcal{H}^{(j+1)}}
\end{equation}
is well defined. Note that the vacuum vector $\Omega \in \mathcal{H}_f $ is an eigenvector of $W_{0,0}^{(j)}(\vec{p},z)$ with associated eigenvalue $0$, for all $j \in \mathbb{N}_0$.
\item[(b)]  The complex number $z^{(j)}(\vec p)$ is defined as the only zero of the function 
\begin{equation}
D\Big (z^{(j - 1)}(\vec p), \frac{2}{3}r_{j}  \Big ) \ni z \longrightarrow    \mathcal{E}^{(j)}(\vec p, z) = \langle \Omega |\; H^{(j)}(\vec p, z)   \Omega\rangle, 
\end{equation}
and the following inequalities hold: 
\begin{equation}\label{fore}
|z^{(j )}(\vec p) - z^{(j-1)}(\vec p)|< \frac{r_{j}}{2}, \text{   } \qquad
\big \vert \mathcal{E}^{(j)}(\vec p, z)  \big \vert \leq \frac{\mu}{16} \rho_{j+1}, \text{   } \text{for} \text{   }z \in D\Big (z^{(j)}(\vec p), \frac{2}{3}r_{j+1}  \Big ).
\end{equation}
\end{itemize}
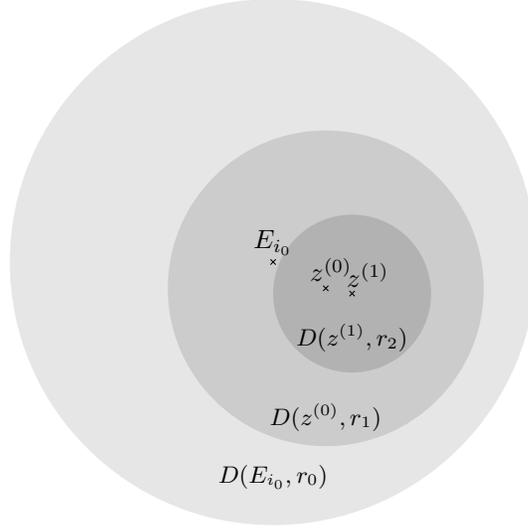
\begin{figure}[H]
\begin{center}
\begin{tikzpicture}[scale=0.7]

      \fill[ color=black!10] (2,0) circle (5) ;
       \fill[ color=black!20] (3,-0.5) circle (3) ;
        \fill[ color=black!30] (3.5,-0.6) circle (1.5) ;

    \draw[-] (1.95,-0.05)--(2.05,0.05);
    \draw[-] (1.95,0.05)--(2.05,0-.05);

    \draw[-] (2.95,-0.55)--(3.05,-0.45);
    \draw[-] (2.95,-0.45)--(3.05,-0.55);

        \draw[-] (3.45,-0.65)--(3.55,-0.55);
    \draw[-] (3.45,-0.55)--(3.55,0-.65);

    \draw (2,-.06) node[above] {\small$ E_{i_0}$};
      \draw (3.1,-.56) node[above] {\small$ z^{(0)}$};
        \draw (3.8,-.66) node[above] {\small$ z^{(1)}$};

      \draw (2,-4.5) node[above] {\footnotesize $ D(E_{i_0},r_0)$};
           \draw (3,-3.4) node[above] {\footnotesize $ D(z^{(0)},r_1)$};
             \draw (3.5,-1.9) node[above] {\footnotesize $ D(z^{(1)},r_2)$};

\end{tikzpicture}

\end{center}
\caption{\small For fixed $\vec{p} \in U_{\theta}[\vec{p}^*]$, the sets $D(z^{(j)}(\vec{p}),r_{j+1})$ are shrinking  super-exponentially fast with $j$ and, for every $j \in \mathbb{N}_0$, $D(z^{(j)}(\vec{p}),r_{j+1}) \subset D(z^{(j-1)}(\vec{p}),r_j)$.} 
\end{figure} 


\noindent By Eqs. \eqref{Hw} and \eqref{Wmnj},
\begin{equation}\label{frej}
H^{(j)}(\vec p, z) =  W^{(j)}_{0,0}(\vec p, z) + \mathcal{E}^{(j)}(\vec{p},z)+ W^{(j)}_{\geq 1}(\vec p, z).
\end{equation}
As a function of $H_f$ and $\vec{P}_f$, the operator 
$ W^{(j)}_{0,0}(\vec p, z) = w^{(j)}_{0,0}(\vec{p},z,H_f, \vec{P}_f)$ 
is defined by functional calculus and satisfies \eqref{LowerBound}. Given $w^{(j)}_{0,0}$, the spectrum of 
\begin{equation}\label{free-part-j}
 W^{(j)}_{0,0}(\vec p, z) + \mathcal{E}^{(j)}(\vec{p},z)
\end{equation}
can be determined explicitly. This operator is therefore considered to be the unperturbed Hamiltonian (the operator $T$ in Definition \ref{feshbach}) in the next application of the Feshbach-Schur map.  Eq. \eqref{frej} shows that the operator $ H^{(j)}(\vec p, z) $ is the sum of the unperturbed Hamiltonian, 
$T = W^{(j)}_{0,0}(\vec p, z) + \mathcal{E}^{(j)}(\vec{p},z) $, and a perturbation given by
$W =  W^{(j)}_{\geq 1}(\vec p, z)$ whose norm tends to zero, as $j$ tends to $\infty$.  We will actually prove that, for every $j \in \mathbb{N}_0$,
\begin{equation}\label{sej}
\|  W^{(j)}_{\geq 1}(\vec p, z)  \| \leq \mathbf{C}^j \rho_j^2,
\end{equation}
for some constant $\mathbf{C}>1$. Recalling formula \eqref{FB} for the Feshbach-Schur map, we find that the bounds \eqref{LowerBound}, \eqref{fore}, and  \eqref{sej},  enable us to construct $F_{\chi_{  \rho_{j+1} }(H_f)}[H^{(j)}(\vec p, z),W^{(j)}_{0,0}(\vec{p},z) + \mathcal{E}^{(j)}(\vec{p},z)]$ with the help of a convergent Neumann expansion in powers of
the perturbation $W^{(j)}_{\geq 1}(\vec{p}, z)$. Thanks to  \eqref{LowerBound}, \eqref{fore} and \eqref{sej}, and using ``iso-spectrality'', the sequence $\lbrace H^{(j)}(\vec{p},z)\rbrace$ of effective Hamiltonians enables us to locate the spectrum of the deformed Hamiltonian $H_{\theta}(\vec{p})$, 
$\Im\theta > 0$, near the energy $E_{i_0}$ with ever higher precision as the resonance energy 
$z^{(\infty)}(\vec{p})$ is approached. The complex number $ z^{(\infty)}(\vec p) $ is defined in the next subsection as the limit of $ z^{(j)}(\vec p)$ as $j$ tends to infinity. There we show that it is an eigenvalue of $H_{\theta}(\vec p)$.

\textit{It is a characteristic feature of multi-scale renormalization, as well as of KAM theory, that a problem of singular perturbation theory involving an infinite range of scales is decomposed into a sequence of infinitely many regular perturbation problems, one for every finite range of scales, solved iteratively, with the splitting of an effective Hamiltonian into an unperturbed part and a perturbation chosen {\bf{anew}}, in every step, $j$, of the iterative perturbative analysis. These are key features enabling one to successfully cope with problems of singular perturbation theory. They will become manifest in the analysis presented in this paper.}

\subsubsection{Construction of Eigenvalues and Analyticity in $\vec{p}$, $\theta$ and $\lambda_{0}$} \label{cea} 
In this section we sketch the main ideas of our construction of the ground-state-($i_0=1$) and resonance-($i_0>1$) 
energy $z^{(\infty)}_{i_0}(\vec p)$ of $H_\theta(\vec p)$ (for some fixed $i_0 \geq 1$) and of the proof that $z^{(\infty)}(\vec{p})=z^{(\infty)}_{i_0}(\vec p)$ is an eigenvalue of $H_\theta(\vec p)$, for 
$ \vec p \in U_{\theta}[ \vec p^*] $, $\Im\theta<\frac{\pi}{4}$ large enough, and $\lambda_0 \ge 0$ small enough. Note that, for the ground-state, i.e., for $i_0 = 1$, we can choose $\theta$ to vanish, and $z^{(\infty)}(\vec{p})$ is shown to be a simple eigenvalue of the self-adjoint operator $H(\vec{p})$, for $\vec{p}\in\mathbb{R}^{3}$, with $\vert\vec{p}\vert < 1$, and $\lambda_0$ positive and small enough. As a function of 
$\vec{p}$ this is the renormalized dispersion law of the atom.

We start our considerations by observing that the sequence of approximate resonance energies 
$ ( z^{(j)}(\vec p))_{ j \in \mathbb{N}_0}$ is Cauchy, as follows from Eq. \eqref{fore}. 
It is not difficult to show (see Section \ref{conv}) that it actually is a Cauchy sequence of analytic functions of the momentum $\vec{p}$, for $\vec p \in U_{\theta} [\vec p^*]$. Analyticity in $\theta$, for $\Im\theta < \frac{\pi}{4}$ large enough, and in $\lambda_0$, for $\vert\lambda_0\vert$ small enough, can be shown by very similar arguments, which we skip here. We then define   
\begin{equation}\label{ero.5}
 z^{(\infty)}(\vec p) : = \lim_{j \to \infty} z^{(j)}(\vec p)  =   \bigcap_{j \in \mathbb{N}_0} D\big ( z^{(j-1)}(\vec p), r_j\big ), 
\end{equation}  
which is analytic in $ \vec p \in U_{\theta}[ \vec p^*] $. The complex number $ z^{(\infty)}(\vec p) $
is an eigenvalue of $H_\theta(\vec p)$; it is the resonance energy that we are looking for. 
It is convenient to extend the operator $H^{(j)}(\vec p, z^{(\infty)}(\vec p))$, for $j \in \mathbb{N}_0$, to an operator defined on the entire Fock space 
$\cH_f$ by defining it to vanish on the orthogonal complement of the subspace $\cH^{(j)}= \mathrm{Ran}(\mathds{1}_{H_f \leq \rho_{j}})$. We continue to use the same symbol, $ H^{(j)}(\vec p, z^{(\infty)}(\vec p)) $, for this extension. Similarly, we extend the other operators in \eqref{frej} to operators acting on the entire Fock space. 

We then show that 
\begin{equation}\label{zero.4} 
\lim_{j \to \infty }  H^{(j)}(\vec p, z^{(\infty)}(\vec p)) = 0 = H^{(\infty)}( \vec p, z^{(\infty)}(\vec p)). 
\end{equation}
In the proof of \eqref{zero.4}, we use \eqref{w00eq0}.

With some further effort, using iso-spectrality, one then shows that 
\begin{equation}\label{iso2infty} 
z^{(\infty)}(\vec p) \text{   } \text{is an eigenvalue of}\text{   } H_{\theta}(\vec{p}), 
\end{equation}
for $\Im\theta < \frac{\pi}{4}$ large enough. 
Analyticity of  $z^{(\infty)}(\vec p)$ in $\theta$ then implies that this quantity is actually \textit{independent} of $\theta$ (and this is the reason why the index $\theta$ for $z^{(\infty)}(\vec{p})$  is  omitted).

Next, we sketch the proofs of \eqref{zero.4} and of \eqref{iso2infty}. Using \eqref{fore}, \eqref{frej} and \eqref{sej}, we see that 
\begin{equation}\label{zero.1}
\lim_{j \to \infty } \|  H^{(j)}(\vec p, z^{(\infty)}(\vec p))  - W^{(j)}_{0, 0}(\vec p, z^{(\infty)}(\vec p))   \| = 0. 
\end{equation}
As explained in Section \ref{secic}, see \eqref{sim} and \eqref{hilbertj}, 
\begin{equation}\label{zero.2}
W^{(j)}_{0, 0}(\vec p, z^{(\infty)}(\vec p)) = w^{(j)}_{0,0}(\vec p, z^{(\infty)}(\vec p), H_f, \vec P_f) \mathds{1}_{H_f \leq \rho_j}.  
\end{equation}
The derivatives of $ w^{(j)}_{0,0}(\vec p, z^{(\infty)}(\vec p), r, \vec l) $ in the variables $ r $ and $\vec l$ are uniformly bounded, for $ r \in [0, \rho_j] $, $|\vec l|\leq r $ (and 
$\vert  w^{(j)}_{0,0}(\vec p, z^{(\infty)}(\vec p), r, \vec l) \vert \geq \epsilon\cdot r$, for some constant $\epsilon>0$ independent of $j$), for all $j \in \mathbb{N}_0$. These properties and the normalization condition \eqref{w00eq0} imply that 
\begin{equation}\label{zero.3}
\lim_{j \to \infty } W^{(j)}_{0, 0}(\vec p, z^{(\infty)}(\vec p)) = 0, 
\end{equation}
which, together with \eqref{zero.1}, implies \eqref{zero.4}. 

By \eqref{hilbertj}, 
$$\underset{j \in \mathbb{N}_0}{\bigcap} \mathcal{H}^{(j)} = \lbrace \mathbb{C}\Omega\rbrace.$$ 
Eq. \eqref{zero.4} then shows that the vacuum $\Omega$ is an eigenvector of 
$H^{(\infty)}( \vec p, z^{(\infty)}(\vec p))$ with eigenvalue $0$.
To prove \eqref{iso2infty} we apply part (ii) of Theorem \ref{feshbach-theorem} iteratively, after each application of a Feshbach map. \\
We define 
\begin{equation} \label{Qchij}
Q_{\chi_{\rho_j }} : = Q_{\chi_{\rho_j }}\Big( H^{(j-1)}(\vec p, z^{(\infty)}(\vec p)),  
W^{(j-1)}_{0, 0}(\vec p, z^{(\infty)}(\vec p)) +  {\mathcal E}^{(j-1)}_{0, 0}(\vec p, z^{(\infty)}(\vec p)) \Big ), 
\end{equation}
$j \in \mathbb{N}$,
where the operator $Q$ has been defined in Theorem \ref{feshbach-theorem}. One can then prove that 
\begin{equation}
\psi : = \lim_{j \to \infty }  Q_{\chi_{\rho_1 }}  \cdots Q_{\chi_{\rho_j }} (\Omega)
\end{equation}
exists. Using that $H^{(\infty)}( \vec p, z^{(\infty)}(\vec p))\Omega = 0$, we are able to show that
$$H^{(0)}(\vec{p},z^{(\infty)}(\vec{p}))\psi = 0.$$  
Then, using Theorem \ref{feshbach-theorem} once more, we conclude that [see \eqref{psi0}-\eqref{chi0}]
\begin{equation}\label{ulti}
 Q_{\bm{\chi}_{i_0}}\Big (H_\theta (\vec p) -  z^{(\infty)}(\vec p), H_{\theta, 0} (\vec p) -  z^{(\infty)}(\vec p) \Big)(\psi_{i_0} \otimes \psi)
\end{equation}  
is an eigenvector of $ H_\theta (\vec p)$ with eigenvalue $z^{(\infty)}(\vec p) $.

\subsubsection*{Acknowledgement} We are grateful to T.Chen and A.Pizzo for stimulating discussions on problems related to those studied in this paper. We are especially indebted to V. Bach and I. M. Sigal for numerous illuminating discussions and collaboration on problems closely related to those analyzed and the mathematical methods used in the present paper. The research of J. Fa. is supported by ANR grant ANR-12-JS01-0008-01. The research of B.S. is supported in part by the Region Lorraine.

\section{Parameters of the Problem, Notations}
In this section, we present a  list  of  all the parameters appearing in the analysis of the spectral problems solved in this paper. In Subsection \ref{Par22},  we  introduce the main symbols and notations used in subsequent sections. 

\label{not}
\subsection{System- and algorithmic parameters}
The quantities $\lambda_0$ (coupling constant), $\delta_0$ (spacing between energies of excited states of the atom),  $N$ (number of internal energy levels of the atom), $\sigma_{\Lambda}$ (ultraviolet cut-off imposed on the quantized electric field),  and  $\mu$ (bound on the momentum of the atom) are parameters characteristic of the physical system under investigation. They are henceforth called {\bf{system parameters}}. All our estimates depend on the choice of these parameters, and our main results only hold if suitable restrictions on the values of these parameters are imposed.
Other parameters appearing in our analysis are related to the mathematical methods applied to establish our main results, in particular to the algorithm (inductive construction) used to derive the main estimates needed in our proofs. We call them \textbf{algorithmic parameters}. Among these parameters are the dilatation parameter, $\vartheta$, appearing in the complex deformation of the basic Hamiltonian used to locate the resonance energies, and the scale parameter $\rho_0$, as well as the parameter $\varepsilon$ appearing in the definition of the Feshbach maps; see Eq. \eqref{sequence}. These (auxiliary) parameters are chosen so as to ensure (and ``optimize'') the convergence of the inductive construction outlined above. Constraints on the choice of the parameters  $\vartheta$ and $\rho_0$ are discussed in Sections \ref{firstde} and \ref{itere}, respectively. In the rest of this text, we call  {\bf{problem parameter}} any system-or algorithmic parameter.

\subsection{Notations relative to creation/annihilation operators and integrals} \label{Par22}
We introduce the notations
\begin{align*}
&\underline{k}^{(m)} := (\underline{k}_1,\dots,\underline{k}_m)\in \underline{\mathbb{R}}^{3m}, \quad \underline{\tilde{k}} \phantom{}^{(n)} := (\underline{\tilde{k}}_1,\dots,\underline{\tilde{k}}_n)\in \underline{\mathbb{R}}^{3n} , \\
&\underline{K}^{(m,n)} := ( \underline{k}^{(m)} , \underline{\tilde{k}} \phantom{}^{(n)} ) , \quad \mathrm{d} \underline{K}^{(m,n)} := \prod_{i=1}^m \mathrm{d} \underline{k}_i  \prod_{j=1}^n \mathrm{d} \underline{\tilde{k}}_j   , \\
& | \underline{K}^{(m,n)} | := | \underline{k}^{(m)} | \, | \underline{\tilde{k}} \phantom{}^{(n)} | , \quad | \underline{k}^{(m)} | := \prod_{i=1}^m | \vec{k}_i | , \quad |Ê\underline{\tilde{k}} \phantom{}^{(n)} | := \prod_{j=1}^n | \vec{\tilde{k}}_j | , \\
&a^*( \underline{k}^{(m)} ) := \prod_{ i = 1 }^m a_{\lambda_i}^*( \vec{k}_i ) , \quad a( \underline{\tilde{k}} \phantom{}^{(n)} ) := \prod_{ j = 1 }^n a_{\lambda_j}( \vec{\tilde{k}}_j ) .
\end{align*}
\vspace{2mm}

\noindent
For $\rho \in \mathbb{C}$, we set 
\begin{align*}
\rho \underline{k}^{(m)} := ( \rho \vec{k}_1 , \lambda_1 , \dots , \rho \vec{k}_m , \lambda_m ), \quad \rho \underline{K}^{(m,n)} := ( \rho \underline{k}^{(m)} , \rho \underline{ \tilde{k} } \phantom{}^{(n)} ).
\end{align*}
\vspace{1mm}

\noindent For $\underline{m}:=(m_1,...,m_L)$, $\underline{n}:=(n_1,...,n_L)$, we set
\begin{align*}
\underline{k}^{(m_i)}_i&:=(\underline{k}_{i1},...,\underline{k}_{im_i}) \in  \underline{\mathbb{R}}^{3 m_i}, \qquad \underline{k}^{(\underline{m})}:=(\underline{k}^{(m_1)}_1,...,\underline{k}^{(m_L)}_L)  \in \underline{\mathbb{R}}^{3[ \sum_i m_i] },\\
\underline{K}^{(\underline{m},\underline{n})}&:=(\underline{k}^{(m_1)}_1,...,\underline{k}^{(m_L)}_L, \underline{\tilde{k}}^{(n_1)}_1,...,\underline{\tilde{k}}^{(n_L)}_L) \in \underline{\mathbb{R}}^{3[\sum_i (m_i +n_i)]}.
\end{align*}
\vspace{1mm}

\noindent For $\rho \in \mathbb{R}_+$, and $m,n \in \mathbb{N}$, we introduce 
\begin{align*}
\underline{B}_{\rho}&:= \{  \underline{k} \in \underline{\mathbb{R}}^3 \mid  \vert \vec{k} \vert \leq \rho \},\\[4pt]
\underline{B}_{\rho}^{(m)}&:= \{  (\underline{k}_1,..., \underline{k}_m) \in \underline{\mathbb{R}}^{3m} \mid \sum_{i=1}^{m} \vert \vec{k}_i \vert \leq \rho \},\\[6pt]
\underline{B}_{\rho}^{(m,n)}&:=\underline{B}_{\rho}^{(m)} \times \underline{B}_{\rho}^{(n)}.
\end{align*}
\vspace{0.5mm}

\subsection{Kernels and their domain of definition}
Let $\rho>0$. We set   
\begin{align}
\label{BB}
\mathcal{B}_{\rho } &:= \{ ( r , \vec{l} ) \in [ 0 ,\rho ] \times \mathbb{R}^3 , | \vec{l} | \le r \}.
\end{align}
Let $w_{m,n}$ be a function  
$$w_{m,n}: \mathcal{B}_{\rho }   \times \underline{B}_{\rho }^{(m,n)}    \to  \mathbb{C}. $$

\noindent  We introduce  
\begin{equation} \label{1/2}
\|w_{m,n} \|_{\frac{1}{2}} := \sup_{(r,\vec{l}) \in \mathcal{B}_{\rho }}  \underset{ \underline{K}^{(m,n)}\in \underline{B}_{\rho }^{(m,n)}}{\text{ ess sup}}  \frac{ \vert w_{m,n}\left(r,\vec{l},  \underline{K}^{(m,n)}\right) \vert }{ \vert  \underline{k}^{(m)} \vert ^{1/2} \vert \underline{\tilde{k}}^{(n)} \vert^{1/2}}.
\end{equation}
\vspace{1mm}

\noindent   The choice of the norm   $\| \cdot \|_{\frac{1}{2} }$ is motivated by  the infrared behavior of the interaction Hamiltonian $H_{I}$, which behaves like $ \vec{k} \mapsto \vert \vec{k} \vert^{1/2}$  for small values of $\vert \vec{k} \vert$.  Lemma \ref{kerop} below establishes the link between the norm of the operator $W_{m,n}$ and the norm $\|w_{m,n} \|_{\frac{1}{2} }$ of its associated kernel. Finally, if $w_{0,0}: \mathcal{B}_{\rho } \rightarrow \mathbb{C}$ is essentially bounded,  we set
\begin{equation}
\|w_{0,0} \|_{\infty} = \underset{ (r,\vec{l}) \in \mathcal{B}_{\rho} }{ \text{ ess sup} }  \, \vert w_{0,0}\left(r,\vec{l}\right) \vert. 
\end{equation}
\vspace{1mm}

\subsection{Notations relative to estimates}
Many numerical constants appear in our estimates. Keeping track of all these constants would be very cumbersome and is not necessary for mathematical rigor.  Let $a,b >0$.  We write
\begin{equation}
a=\mathcal{O}(b)
\end{equation}
if there is a numerical constant $C>0$ independent of the system and algorithmic parameters  such that $a \le Cb$.  

The shorthand
 \begin{equation}
\text{``}\text{for all } a \ll b , \dots \text{''}
 \end{equation}
means that ``there exists a (possibly very small, but) positive numerical constant $C$ independent of the system and algorithmic  parameters such that, for all $a \le C b$, \dots''
\vspace{2mm}

\section{The first decimation step}
\label{firstde}
Here we present details of the results described in Section \ref{1.3.1}. We use the notations introduced there.
 
In Subsection \ref{technical}, we  state two standard lemmas that we use repeatedly in our analysis.  The proofs  are  postponed to Appendix \ref{AppendixB} for the  reader's convenience.

In Subsection \ref{feshp}, we prove under suitable assumptions that the pair $(H_{\theta}(\vec{p})-z,H_{\theta,0}(\vec{p}) -z)$ is a Feshbach-Schur pair  associated to the generalized projection $\bm{\chi}_{i_{0}}= P_{i_0} \otimes \chi_{\rho_0}(H_f)$ defined in  \eqref{chi0}. This result holds for  all $(\vec{p},z)$  in  the open set  $\mathcal{U}_{\rho_0}[E_{i_{0}}]$, where
\label{premiere}
 \begin{equation}
 \label{U-11}
   \mathcal{U}_{\rho_0}[E_{i_{0}}]:= U_{\theta}[\vec{p}^*] \times D(E_{i_{0}},r_0); \end{equation}

\noindent see \eqref{UP} and \eqref{sequence}.
We remind the reader that the operator $H^{(0)}(\vec{p},z)$ is the partial trace over the internal degrees of freedom of the restriction of the  Feshbach operator  $F_{\bm{\chi}_{i_0}}(H_{\theta}(\vec{p})-z,H_{\theta,0}(\vec{p}) -z)$ to $\mathds{1}_{H_f \leq \rho_0}(\mathcal{H}_f)$; see \eqref{Wo}-\eqref{sim}. In Subsection \ref{par2}, we show that    $H^{(0)}(\vec{p},z)$ can be rewritten as a convergent series of Wick monomials and that it is analytic in $(\vec{p},z)$ on the open set $\mathcal{U}_{\rho_0}[E_{i_{0}}]$.   Details of the proofs are postponed to Appendix \ref{AppA} and  \ref{Analyfirst}.  In Lemma  \ref{zeros}, we prove that  there exists  a unique element $z^{(0)}(\vec{p}) \in   D(E_{i_{0}},r_0)$, for each $\vec{p} \in U_{\theta}[\vec{p}]$,  such that $\mathcal{E}^{(0)} (\vec{p},z^{(0)}(\vec{p}))=0$. The properties of the kernels $w_{m,n}^{(0)}$ and the function $\mathcal{E}^{(0)}$   established in  Lemmas \ref{borne0} and \ref{zeros} are the basis of the inductive construction described in  Section \ref{itere}.

\subsection{Feshbach-Schur Pair}
\subsubsection{Two Lemmas}
\label{technical}  
We begin with a lemma showing that the norm of the Wick monomials is controlled by the norm of their associated kernels. The proof is standard and deferred to Appendix \ref{AppendixB}.
  \begin{lemma}
\label{kerop}
Let $\rho  > 0$. Let $w_{m,n}$ be a function  $w_{m,n}: \mathcal{B}_{\rho }   \times \underline{B}_{\rho }^{(m,n)}    \to  \mathbb{C}$ with  $\|w_{m,n} \|_{\frac{1}{2} } < \infty$,  and let $W_{m,n} $ be the  Wick monomial  on $\mathds{1}_{H_f \leq \rho} \mathcal{H}_f$,  defined in the sense of  quadratic forms  by
\begin{equation*}
W_{m,n} :=    \mathds{1}_{H_f \leq \rho}  \left( \int_{\underline{ B}_{\rho }^{(m,n)}} d \underline{K}^{(m,n)} a^{*}(\underline{k}^{(m)}) w_{m,n}\left(H_f,\vec{P_f},  \underline{K}^{(m,n)}\right) a(\underline{\tilde{k}}^{(n)}) 
 \right)    \mathds{1}_{H_f \leq \rho }. 
\end{equation*}
  Then
\begin{equation}\label{eq:a1}
\|W_{m,n} \| \leq  (8 \pi)^{\frac{m+n}{2}} \rho^{2(m+n)} \|w_{m,n} \|_{\frac{1}{2} }.
\end{equation}
\end{lemma}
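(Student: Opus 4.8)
The estimate is a standard Fock-space bound for Wick monomials, and the natural approach is to sandwich the operator between two vectors $\Phi, \Psi$ in the range of $\mathds{1}_{H_f \le \rho}$, bound the resulting multiple integral pointwise by $\|w_{m,n}\|_{\frac12}$ times a product of half-powers of photon momenta, and then use Cauchy--Schwarz in the photon variables together with the elementary integral $\int_{|\vec k| \le \rho} |\vec k|^{-1}\, d\vec k = 2\pi\rho^2$ (the summation over the two helicities contributing the factor $2$, so $\int_{\underline B_\rho} |\vec k|^{-1} d\underline k = 4\pi\rho^2$). First I would write, for $\Phi,\Psi$ in the form domain,
\begin{equation*}
\big| \langle \Phi, W_{m,n} \Psi \rangle \big| \le \int_{\underline B_\rho^{(m,n)}} \! d\underline K^{(m,n)}\ \big\| a(\underline k^{(m)}) \mathds{1}_{H_f \le \rho}\Phi \big\|\ \big| w_{m,n}(H_f,\vec P_f, \underline K^{(m,n)}) \big|\ \big\| a(\underline{\tilde k}^{(n)}) \mathds{1}_{H_f \le \rho}\Psi \big\|,
\end{equation*}
using that the kernel acts by functional calculus in $H_f,\vec P_f$ between the pull-through-adjusted vectors.

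Next I would insert the pointwise bound $|w_{m,n}| \le \|w_{m,n}\|_{\frac12}\, |\underline k^{(m)}|^{1/2}|\underline{\tilde k}^{(n)}|^{1/2}$ and apply the Cauchy--Schwarz inequality to the $\underline K^{(m,n)}$-integral, splitting it as a product of two factors of the form
\begin{equation*}
\Big( \int_{\underline B_\rho^{(m)}} \frac{ \| a(\underline k^{(m)}) \mathds{1}_{H_f \le \rho}\Phi \|^2 }{|\underline k^{(m)}|}\, d\underline k^{(m)} \Big)^{1/2},
\end{equation*}
times the analogous factor for $\Psi$, where I have redistributed the $|\vec k_i|^{1/2}$ weights, one half-power to each side via an auxiliary Cauchy--Schwarz. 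For the remaining $a(\underline k^{(m)})$-norm-squared integral I would use the standard symmetry/counting identity: for $u \in \mathcal F_+$, $\int_{\underline{\mathbb R}^{3m}} \|a(\underline k^{(m)}) u\|^2 \prod |\vec k_i|^{-1}\, d\underline k^{(m)} \le (\sup_{\underline k^{(m)}} \prod |\vec k_i|^{-1})\cdot$ — no; more carefully, restricting to $\underline B_\rho^{(m)}$ and bounding $\prod |\vec k_i|^{-1}$ is not integrable, so instead I bound by iterating the one-variable estimate $\int_{|\vec k|\le\rho} \|a_\lambda(\vec k) v\|^2 |\vec k|^{-1} d\vec k \le$ (by Cauchy--Schwarz on the number operator) $\le 2\pi\rho^2\, \|H_f^{1/2} v\|^2$, or more simply using $\|H_f^{1/2}\mathds{1}_{H_f\le\rho}\| \le \rho^{1/2}$; iterating $m$ times on $\mathds{1}_{H_f \le \rho}\Phi$ yields a factor $(8\pi)^{m/2}\rho^{2m}\|\Phi\|^2$ after also accounting for the helicity sum. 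Collecting the $\Phi$ and $\Psi$ contributions gives $\|W_{m,n}\| \le (8\pi)^{(m+n)/2}\rho^{2(m+n)}\|w_{m,n}\|_{\frac12}$, as claimed.

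The main obstacle, and the only place requiring care, is the bookkeeping in the iterated pull-through estimate: one must track the cutoffs $\mathds{1}_{H_f \le \rho}$ through each annihilation operator (each $a(\underline k)$ shifts $H_f$ by $|\vec k|$, so the vector stays in $\mathds{1}_{H_f \le \rho}\mathcal H_f$), verify that the combinatorial factors from the symmetric Fock space exactly produce the constant $8\pi$ per pair (rather than a worse constant), and confirm that the helicity summation is correctly incorporated into the $d\underline k$ measure at each step. Since Lemma~\ref{kerop} is invoked only to get \emph{a} bound polynomial in $\rho$ with an absolute constant, I would not optimize the constant beyond what the straightforward Cauchy--Schwarz/pull-through argument gives; the stated $(8\pi)^{(m+n)/2}$ is comfortably attainable this way, and the $\rho^{2(m+n)}$ scaling — two powers of $\rho$ per creation or annihilation operator, one from the $|\vec k|^{1/2}$ weight in the norm and one from the volume of $\underline B_\rho$ in that variable — comes out automatically.
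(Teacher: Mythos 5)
Your overall architecture (quadratic-form bound, pointwise kernel estimate by $\|w_{m,n}\|_{1/2}$, Cauchy--Schwarz in the photon variables, then functional calculus of $H_f$ plus the volume of $\underline{B}_\rho$) is the same as the paper's, but the key step as you wrote it fails. The Cauchy--Schwarz factor you display, $\bigl(\int_{\underline{B}_\rho^{(m)}} \|a(\underline{k}^{(m)})\Phi\|^2\,|\underline{k}^{(m)}|^{-1}\,d\underline{k}^{(m)}\bigr)^{1/2}$, need not be finite for $\Phi\in\mathds{1}_{H_f\le\rho}\mathcal{H}_f$: it is a $d\Gamma(1/|\vec k|)$-type expectation, and the energy cutoff gives no infrared control (already for $m=1$, a one-photon state with wavefunction behaving like $|\vec k|^{-1}$ near the origin makes it infinite). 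The inequality you invoke to rescue it, $\int_{|\vec k|\le\rho}\|a_\lambda(\vec k)v\|^2\,|\vec k|^{-1}d\vec k\le 2\pi\rho^2\,\|H_f^{1/2}v\|^2$, is false: it would require the one-photon operator inequality $\mathds{1}_{|\vec k|\le\rho}|\vec k|^{-1}\le 2\pi\rho^2\,|\vec k|$, which fails as $|\vec k|\to 0$ (a one-photon state concentrated near $|\vec k|=\epsilon$ gives left side $\sim\epsilon^{-1}$ and right side $\sim\rho^2\epsilon$). You have conflated the finite number $\int_{|\vec k|\le\rho}|\vec k|^{-1}d\vec k=2\pi\rho^2$, which is legitimately used when $1/|\vec k|$ is paired with a square-integrable test function (as in $\|a(f)\psi\|\le\|f/\sqrt{\omega}\|\,\|H_f^{1/2}\psi\|$), with an operator bound on $d\Gamma(1/\omega)$, which does not exist.

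The fix is to put the weights the other way, which is exactly what the paper does: in the Cauchy--Schwarz split, each factor carries the \emph{positive} weight on its own group of variables and the plain volume of the other group, i.e. $\vert\langle\psi, W_{m,n}\varphi\rangle\vert\le\|w_{m,n}\|_{1/2}\,\mathcal{V}_m^{1/2}\mathcal{V}_n^{1/2}\,D_m(\psi)^{1/2}D_n(\varphi)^{1/2}$ with $\mathcal{V}_m=\int_{\underline{B}_\rho^{(m)}}d\underline{k}^{(m)}\le (m!)^{-1}(8\pi)^m\rho^{3m}$ and $D_m(\psi)=\int_{\underline{B}_\rho^{(m)}}|\underline{k}^{(m)}|\,\|a(\underline{k}^{(m)})\psi\|^2\,d\underline{k}^{(m)}\le\|H_f^{m/2}\psi\|^2\le\rho^m\|\psi\|^2$, the last bound proved by the pull-through induction you allude to (this is where $\|H_f^{1/2}\mathds{1}_{H_f\le\rho}\|\le\rho^{1/2}$ enters legitimately -- with the weight $+|\vec k|$, not $-$). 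Collecting these gives $(8\pi)^{(m+n)/2}\rho^{2(m+n)}$, in fact with an extra factor $(m!\,n!)^{-1/2}$. Alternatively, if you insist on a $1/|\vec k|$ weight, it must be placed against the kernel (whose $\|\cdot\|_{1/2}$ norm supplies the compensating $|\vec k|^{1/2}$ per variable), never under $\|a(\underline{k}^{(m)})\Phi\|^2$. As written, your argument does not close.
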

\vspace{2mm}

The next lemma will be used  in the remainder of this section. Again, its proof is deferred to Appendix \ref{AppendixB}. We remind the reader  that $\sigma_{\Lambda}$ is the ultraviolet cut-off parameter  that appears in  the interacting Hamiltonian $H_{I}$ and that $\delta_0$ denotes the minimal distance between two distinct eigenvalues of $H_{is}$.
  \begin{lemma} \label{tech1}
  \text{ }
\begin{list}{\labelitemi}{\leftmargin=1em}
\item  Let $0 < \rho  < 1$. For all $\theta = i \vartheta$, $0 < \vartheta < \pi / 4$, we have that
  \begin{align}
  \label{HI1}
  \big \| (H_f + \rho )^{-1/2} H_{I,\theta}  (H_f + \rho )^{-1/2} \big \| & =  \mathcal{O} \left(  \frac{\sigma_{\Lambda}^{3/2}}{ \rho ^{1/2}}\right). 
  \end{align}
  \vspace{2mm}
  
 \item Let $0 < \rho_0 < \min ( 1 , \delta_0 )$. For all $\theta = i \vartheta$, $0 < \vartheta < \pi / 4$, and $(\vec{p},z) \in \mathcal{U}_{\rho_0}[E_{i_{0}}]$, the operator $[H_{\theta,0}(\vec{p})-z]_{| \mathrm{Ran}(\overline{\bm{\chi}}_{i_{0}})}$  is bounded invertible and satisfies the estimates
\begin{align}
\label{Htheta0}
\big \| [H_{\theta,0}(\vec{p})-z]^{-1}_{  | \mathrm{Ran}(\overline{\bm{\chi}}_{i_{0}})}  \big \|& = \mathcal{O}\left( \frac{1}{ \mu \rho_0 \sin(\vartheta)} \right),\\[4pt]
\label{Htheta1}
\big \|   \left[ (H_{\theta,0} (\vec{p})-z)^{-\frac{1}{2}}(H_f +  \rho_0)(H_{\theta,0} (\vec{p})-z)^{-\frac{1}{2}} \right]_{| \mathrm{Ran}(\overline{\bm{\chi}}_{i_{0}})} \big \| &= \mathcal{O}\left( \frac{1}{ \mu   \sin(\vartheta)} \right). 
\end{align}
\end{list}
  \end{lemma}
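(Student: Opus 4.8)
\textbf{Proof plan for Lemma~\ref{tech1}.}
The plan is to treat the two items separately, both reducing to elementary estimates on the photon Fock space using the norm $\|\cdot\|_{\frac12}$ and the dilatation structure. For the first item, the idea is to write $H_{I,\theta}$ as a sum of a creation part and an annihilation part, each of which is a Wick monomial of type $(1,0)$ or $(0,1)$ with kernel $w_{\theta}(\vec k) = \mathrm{i} e^{-2\theta}\Lambda(e^{-\theta}\vec k)|\vec k|^{1/2}\,\vec\epsilon(\underline k)\cdot\vec d$ (up to a sign), so that $\|w_\theta\|_{\frac12}$ is controlled by $\sup_{\vec k}|\Lambda(e^{-\theta}\vec k)|$, which is $\mathcal{O}(1)$ uniformly in $\vartheta\in(0,\pi/4)$ because $\Re(e^{-2\theta})=\cos(2\theta)>0$ keeps the Gaussian bounded (with an extra polynomial factor absorbed into the constant). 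I would then use the standard relative bound $\|(H_f+\rho)^{-1/2}a^\#(h)(H_f+\rho)^{-1/2}\|=\mathcal{O}(\|h/|\cdot|^{1/2}\|_{L^2})$ type estimate — more precisely, insert $(H_f+\rho)^{\pm1/2}$ and use that for a one-photon form factor the creation/annihilation operators are bounded relative to $H_f^{1/2}$ with norm $\lesssim(\int|h(\vec k)|^2|\vec k|^{-1}\d\vec k)^{1/2}$; here $h=\Lambda(e^{-\theta}\vec k)|\vec k|^{1/2}\vec\epsilon\cdot\vec d$, so $\int|h|^2|\vec k|^{-1}=\int|\Lambda(e^{-\theta}\vec k)|^2\d\vec k=\mathcal{O}(\sigma_\Lambda^3)$. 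The $\rho^{-1/2}$ in the exponent comes from the fact that only one factor $(H_f+\rho)^{-1/2}$ is available to pair with the two $a^\#$'s — on the vacuum sector one factor $(H_f+\rho)^{-1/2}$ acts as $\rho^{-1/2}$, which produces exactly the stated $\sigma_\Lambda^{3/2}/\rho^{1/2}$.

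For the second item, the plan is: (i) show $[H_{\theta,0}(\vec p)-z]$ is invertible on $\mathrm{Ran}(\overline{\bm\chi}_{i_0})$ by a numerical-range / distance-to-spectrum argument; (ii) extract from that the operator-norm bound \eqref{Htheta0}; (iii) deduce \eqref{Htheta1} by a resolvent comparison. For (i), write $H_{\theta,0}(\vec p)=H_{is}+e^{-\theta}H_f+e^{-2\theta}\vec P_f^2/2-e^{-\theta}\vec p\cdot\vec P_f$. On the complement of $\bm\chi_{i_0}$ one has either the internal state orthogonal to $\psi_{i_0}$ (so $H_{is}-E_{i_0}$ is bounded below by $\delta_0$ in modulus, and the geometry of Figure~1 — the cuspidal sectors of opening $2\vartheta$ based at each $E_i$ — shows the distance from $z\in D(E_{i_0},r_0)$ to that part of the spectrum is $\gtrsim\delta_0$, hence $\gg r_0$), or $\overline\chi_{\rho_0}(H_f)\ne0$, i.e. $H_f\ge 3\rho_0/4$. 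In the latter case the key computation is to bound $|\langle\phi,(H_{\theta,0}(\vec p)-z)\phi\rangle|$ from below for unit $\phi$ with $H_f\ge 3\rho_0/4$ on its support: completing the square in $\vec P_f$, the quadratic form $\frac12 e^{-2\theta}\vec P_f^2-e^{-\theta}\vec p\cdot\vec P_f+e^{-\theta}H_f$ has, for $\vec p\in U_\theta[\vec p^*]$ (so $|\vec p|<1$ roughly, and $|\Im\vec p|$ tiny relative to $\tan\vartheta$), imaginary part comparable to $-\sin(\vartheta)(H_f-\text{const}\cdot|\vec P_f|^2)$ with a favorable sign, and since on $\{|\vec P_f|\le H_f\}$ (the physical region) and for $\rho_0<1$ the $H_f$ term dominates, $|\Im\langle\phi,(H_{\theta,0}(\vec p)-z)\phi\rangle|\gtrsim\mu\sin(\vartheta)\langle\phi,H_f\phi\rangle\gtrsim\mu\rho_0\sin(\vartheta)$; this gives invertibility and \eqref{Htheta0} simultaneously via $\|R\|\le 1/\mathrm{dist}(z,W(\text{op}))$. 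For \eqref{Htheta1} I would write $(H_{\theta,0}-z)^{-1/2}(H_f+\rho_0)(H_{\theta,0}-z)^{-1/2}=\mathds{1}+(H_{\theta,0}-z)^{-1/2}\big((H_f+\rho_0)-(H_{\theta,0}-z)\big)(H_{\theta,0}-z)^{-1/2}$ — but more efficiently, since on $\mathrm{Ran}(\overline{\bm\chi}_{i_0})$ both $H_f+\rho_0$ and $H_{\theta,0}-z$ are, up to bounded multiplicative factors and the essentially-diagonal structure in $H_f,\vec P_f$, comparable, estimate directly $\|(H_f+\rho_0)^{1/2}(H_{\theta,0}-z)^{-1/2}\|^2=\|(H_f+\rho_0)^{1/2}(H_{\theta,0}-z)^{-1}(H_f+\rho_0)^{1/2}\|$ and bound the last quantity by $\sup$ over the joint spectrum of $(H_f,\vec P_f)$ of $(r+\rho_0)/|w_0(r,\vec l)|$ where $w_0(r,\vec l)=e^{-\theta}r+\tfrac12 e^{-2\theta}\vec l^2-e^{-\theta}\vec p\cdot\vec l+E_{is}-z$; the same lower bound on $|\Im w_0|\gtrsim\mu\sin(\vartheta)(r+\text{const})$ used in step (i), now without the $\rho_0$, yields the $\mathcal{O}(1/(\mu\sin\vartheta))$ claimed.

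The main obstacle I expect is step (i)–(ii): getting the \emph{sharp} $\rho_0$-dependence in \eqref{Htheta0} and the \emph{$\rho_0$-independence} in \eqref{Htheta1} simultaneously, uniformly in $\vec p\in U_\theta[\vec p^*]$ and $z\in D(E_{i_0},r_0)$. This requires carefully tracking how the complex dilatation turns the spectrum into the cuspidal sectors of Figure~1 and checking that the tilt angle really is $2\vartheta$ for the $\vec P_f^2$ part and $\vartheta$ for the linear-in-$\vec P_f$ part, that the cross term $-e^{-\theta}\vec p\cdot\vec P_f$ (with $\vec p$ slightly complex) does not spoil the sign of the imaginary part — this is precisely why $U_\theta[\vec p^*]$ is defined with the constraint $|\Im\vec p|<\tfrac\mu2\tan(\vartheta)$ — and that the restriction to $\mathrm{Ran}(\overline{\bm\chi}_{i_0})$, which mixes the $H_f\ge 3\rho_0/4$ condition with projection off $\psi_{i_0}$, is handled by taking the worse of the two resulting lower bounds. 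Once the correct lower bound $\mathrm{dist}(z,\sigma([H_{\theta,0}(\vec p)-z]_{|\mathrm{Ran}\overline{\bm\chi}_{i_0}}))\gtrsim\mu\rho_0\sin(\vartheta)$ is in hand, everything else is a routine application of the functional calculus for the commuting pair $(H_f,\vec P_f)$ together with the spectral theorem, since $H_{\theta,0}(\vec p)$ restricted to a fixed internal eigenspace is a (complex) function of $H_f$ and $\vec P_f$ alone.
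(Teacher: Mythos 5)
Your first item is essentially the paper's own proof: split $H_{I,\theta}$ into its creation and annihilation parts, use the standard bound $\|a(h)\psi\|\le\|h/\sqrt{|\cdot|}\|_{L^2}\|H_f^{1/2}\psi\|$ with $h=\Lambda(e^{-\theta}\vec k)|\vec k|^{1/2}\vec\epsilon\cdot\vec d$ (so $\|h/\sqrt{|\cdot|}\|_{L^2}=\mathcal{O}(\sigma_\Lambda^{3/2})$, the dilated Gaussian staying integrable since $\cos 2\vartheta>0$), and let the leftover factor $(H_f+\rho)^{-1/2}$ produce the $\rho^{-1/2}$; this is exactly Appendix B.2.1. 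For the second item your overall route is also the paper's: on $\mathrm{Ran}(\overline{\bm{\chi}}_{i_{0}})$ the operator is $\sum_j P_j\otimes b_j(\vec p,z,H_f,\vec P_f)$, a multiplication operator in the joint spectral representation of $(H_f,\vec P_f)$ on each internal eigenspace, so everything reduces to pointwise lower bounds on the symbols $b_j$ (and this normality is also what legitimizes your numerical-range shortcut, which for a non-normal operator would only give injectivity). However, two of your intermediate claims are wrong as stated. First, the assertion that for internal states orthogonal to $\psi_{i_0}$ the distance from $z\in D(E_{i_0},r_0)$ to that part of the spectrum is $\gtrsim\delta_0$ is false: the branch emanating from $E_j$ with $j<i_0$ is a tilted ray (angle of order $\vartheta$) whose real part sweeps past $E_{i_0}$, and it passes beneath $E_{i_0}$ at a depth only of order $\mu\,\delta_0\sin\vartheta$. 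The paper handles precisely this by a two-regime split: for $r\le\delta_0/8$ the gap in the real part gives $|b_j|\ge\delta_0/2$, while for $r>\delta_0/8$ the imaginary part gives $|b_j|\ge\delta_0\mu\sin(\vartheta)/32$; this is where the hypothesis $\rho_0<\delta_0$ enters. Your final bound \eqref{Htheta0} survives, since only $\gtrsim\mu\rho_0\sin\vartheta$ is needed, but the same oversight does break your argument for \eqref{Htheta1}: the uniform bound $|\Im b_j|\gtrsim\mu\sin\vartheta\,(r+\mathrm{const})$ you invoke fails on the $j\ne i_0$ sectors at small $r$ (for real $z$ one has $\Im b_j(0,\vec 0)=0$), and there you must again use the gap $\delta_0$ in the small-$r$ regime and the imaginary part only for $r\gtrsim\delta_0$.

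Second, your treatment of the $e^{-2\theta}\vec P_f^2/2$ term is off: you present it as potentially adverse and propose to dominate it by $H_f$ using $|\vec P_f|\le H_f$, but on $\mathrm{Ran}(\overline{\bm{\chi}}_{i_{0}})$ there is no upper cutoff on $H_f$, so $\vec P_f^2$ can be as large as $H_f^2\gg H_f$ and that domination fails. What actually saves the step is that $\Im e^{-2i\vartheta}=-\sin 2\vartheta$ has the same (favorable) sign as $\Im e^{-i\vartheta}$, so the quadratic term only reinforces the lower bound on $|\Im b_{i_0}|$; alternatively, the paper simply bounds $|b_{i_0}|$ from below by its real part on $r\ge 3\rho_0/4$, where the quadratic term is again harmless. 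With these two repairs — the small-$r$/large-$r$ case analysis on the off-$i_0$ sectors and the corrected sign bookkeeping — your proof coincides with the paper's Appendix B.2.2 argument.
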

\vspace{1mm}

\subsubsection{ $(H_{\theta}(\vec{p})-z,H_{\theta,0}(\vec{p})-z)$ is a Feshbach-Schur pair} \label{feshp}
We now show  that the pair $(H_{\theta}(\vec{p})-z,H_{\theta,0}(\vec{p})-z)$ is a Feshbach-Schur pair, provided that the coupling constant $\lambda_0$ is small enough and that the scale parameter $\rho_0$ satisfies $\rho_0 \gg \lambda_0^2 ( \mu \sin \vartheta )^{-2}$.
\label{par1}
\begin{lemma}
\label{fespair}
There exists $\lambda_c > 0$ such that, for all $0 \le \lambda_0 \le \lambda_c$, $\theta = i \vartheta$ satisfying $0 < \vartheta < \pi / 4$ and $\sigma_\Lambda^{-3/2} \mu \sin \vartheta \gg \lambda_0$, $(\vec{p},z) \in \mathcal{U}_{\rho_0}[E_{i_{0}}]$, and $\rho_0$ such that $\lambda_0^2 \sigma_\Lambda^{3} (\mu \sin \vartheta )^{-2} \ll \rho_0 < \min ( 1 , \delta_0 )$, the pair $(H_{\theta}(\vec{p})-z,H_{\theta,0}(\vec{p})-z)$ is a Feshbach-Schur pair associated to $\bm{\chi}_{i_{0}}$.
\end{lemma}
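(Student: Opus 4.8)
The plan is to verify the three defining conditions of a Feshbach--Schur pair (Definition~\ref{feshbach}, in the sufficient form recorded in Remark~\ref{isonoiso}) for $(H,T) := (H_\theta(\vec p)-z,\, H_{\theta,0}(\vec p)-z)$ and $P := \bm{\chi}_{i_0} = P_{i_0}\otimes\chi_{\rho_0}(H_f)$, with $W := H - T = \lambda_0 H_{I,\theta}$. Note first that $\bm{\chi}_{i_0}$ and $\overline{\bm{\chi}}_{i_0}$ do commute with $T = H_{\theta,0}(\vec p)-z$, since $H_{\theta,0}(\vec p)$ leaves the internal eigenspaces invariant, acting as $E_{i_0} + e^{-2\theta}\vec P_f^2/2 - e^{-\theta}\vec p\cdot\vec P_f + e^{-\theta}H_f$ on the range of $P_{i_0}$ and $\chi_{\rho_0}(H_f)$ is a function of $H_f$ alone; this is why $\bm{\chi}_{i_0}$ is the natural ``generalized projection'' here. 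So the abstract framework applies once the three norm bounds below are in place.

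The key step is to establish that $T$ is bounded invertible on $\mathrm{Ran}(\overline{\bm{\chi}}_{i_0})$ and that $T^{-1}\overline{\bm{\chi}}_{i_0}W\overline{\bm{\chi}}_{i_0}$ and $\overline{\bm{\chi}}_{i_0}WT^{-1}\overline{\bm{\chi}}_{i_0}$ have norm strictly less than $1$, while $T^{-1}\overline{\bm{\chi}}_{i_0}W\bm{\chi}_{i_0}$ is bounded. The invertibility of $[H_{\theta,0}(\vec p)-z]_{|\mathrm{Ran}(\overline{\bm{\chi}}_{i_0})}$ together with the bound $\mathcal O((\mu\rho_0\sin\vartheta)^{-1})$ on its inverse is precisely the content of the second bullet of Lemma~\ref{tech1}, valid since $\rho_0 < \min(1,\delta_0)$ and $(\vec p,z)\in\mathcal U_{\rho_0}[E_{i_0}]$. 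For the off-diagonal and diagonal perturbation terms I would insert the operator $(H_f+\rho_0)^{1/2}$ and its inverse to factor, schematically,
\begin{equation*}
T^{-1}\overline{\bm{\chi}}_{i_0}W\overline{\bm{\chi}}_{i_0} = \lambda_0\,\big[(H_f+\rho_0)^{1/2}T^{-1/2}\big]\,\big[T^{-1/2}(H_f+\rho_0)T^{-1/2}\big]^{1/2}\cdots
\end{equation*}
and more carefully bound $\| T^{-1}\overline{\bm{\chi}}_{i_0} H_{I,\theta}\overline{\bm{\chi}}_{i_0}\|$ by
$\|[H_{\theta,0}(\vec p)-z]^{-1}(H_f+\rho_0)\|^{1/2}_{|\mathrm{Ran}(\overline{\bm{\chi}}_{i_0})}\,\|(H_f+\rho_0)^{-1/2}H_{I,\theta}(H_f+\rho_0)^{-1/2}\|\,\|(H_f+\rho_0)^{1/2}[H_{\theta,0}(\vec p)-z]^{-1/2}\|_{|\mathrm{Ran}(\overline{\bm{\chi}}_{i_0})}$; using \eqref{HI1} and \eqref{Htheta1} this is $\mathcal O\big(\lambda_0\sigma_\Lambda^{3/2}(\mu\sin\vartheta)^{-1}\rho_0^{-1/2}\big)$. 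The hypothesis $\lambda_0^2\sigma_\Lambda^3(\mu\sin\vartheta)^{-2}\ll\rho_0$ makes this quantity smaller than any fixed constant, in particular $<1$, so conditions (i)--(ii) hold; the term $T^{-1}\overline{\bm{\chi}}_{i_0}H_{I,\theta}\bm{\chi}_{i_0}$ is estimated the same way and is in particular bounded.

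The main obstacle is the domain/self-adjointness bookkeeping needed to make these factorizations rigorous: $H_{I,\theta}$ is only $H_f^{1/2}$-bounded (not bounded), and $T=H_{\theta,0}(\vec p)-z$ has complex spectrum on $\mathrm{Ran}(\overline{\bm{\chi}}_{i_0})$, so ``$(H_f+\rho_0)^{1/2}T^{-1/2}$'' must be interpreted with care --- one should check that $T^{-1/2}$ (a suitable branch, well defined since the numerical range of $T$ restricted to $\mathrm{Ran}(\overline{\bm{\chi}}_{i_0})$ avoids a neighborhood of the origin by the geometry of Figure~1 and the smallness of $|z-E_{i_0}|<r_0$) maps into $D(H_f^{1/2})$ and that the resulting products extend to bounded operators, which is exactly where the uniform bound \eqref{Htheta1} on $T^{-1/2}(H_f+\rho_0)T^{-1/2}$ does the work. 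Once these bounded-operator identities are justified, the conclusion that $(H_\theta(\vec p)-z, H_{\theta,0}(\vec p)-z)$ is a Feshbach--Schur pair associated to $\bm{\chi}_{i_0}$ follows immediately from the criterion in Remark~\ref{isonoiso}, and one reads off $\lambda_c$ from the implicit constants in \eqref{HI1}, \eqref{Htheta0}, \eqref{Htheta1}.
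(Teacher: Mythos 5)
Your overall architecture mirrors the paper's proof: $\bm{\chi}_{i_0}$ and $\overline{\bm{\chi}}_{i_0}$ commute with $T=H_{\theta,0}(\vec p)-z$ because $H_{\theta,0}(\vec p)$ is diagonal, $\sum_j P_j\otimes b_j(\vec p,z,H_f,\vec P_f)$; invertibility of $T$ on $\mathrm{Ran}(\overline{\bm{\chi}}_{i_0})$ is exactly the second part of Lemma \ref{tech1}; the cross term is harmless because $H_{I,\theta}\bm{\chi}_{i_0}$ and $\bm{\chi}_{i_0}H_{I,\theta}$ are bounded (the cutoff $\chi_{\rho_0}(H_f)$ sits on the correct side); and your worry about fractional powers of the non-self-adjoint $T$ is not a real obstruction, since everything is a multiplication operator in the joint spectral representation of $(H_f,\vec P_f)$ in each internal sector, which is precisely how \eqref{Htheta1} is to be read.

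The genuine gap is in the central estimate. The product of the three norms you display controls the \emph{symmetrized} operator $T^{-1/2}\overline{\bm{\chi}}_{i_0}H_{I,\theta}\overline{\bm{\chi}}_{i_0}T^{-1/2}$, not $T^{-1}\overline{\bm{\chi}}_{i_0}H_{I,\theta}\overline{\bm{\chi}}_{i_0}$ (nor $\overline{\bm{\chi}}_{i_0}H_{I,\theta}T^{-1}\overline{\bm{\chi}}_{i_0}$), which are the operators required by the sufficient criteria recalled after Theorem \ref{feshbach-theorem}: in $T^{-1}\overline{\bm{\chi}}_{i_0}H_{I,\theta}\overline{\bm{\chi}}_{i_0}$ there is no regularizing factor to the right of $H_{I,\theta}$, because $\overline{\bm{\chi}}_{i_0}$ does not cut off the photon energy, so the factorization would require a one-sided bound on $(H_f+\rho_0)^{-1/2}H_{I,\theta}$, which costs an extra factor of order $\sigma_\Lambda^{1/2}\rho_0^{-1/2}$ compared with \eqref{HI1}. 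This loss is real, not an artifact: testing $T^{-1}\overline{\bm{\chi}}_{i_0}\,\lambda_0 H_{I,\theta}\,\overline{\bm{\chi}}_{i_0}$ on $\psi_j\otimes(\text{one photon with wave function proportional to the form factor})$, $j\neq i_0$, the annihilation term lands in the vacuum sector, where $T^{-1}$ is only $\mathcal O(\delta_0^{-1})$ and no photon energy is left to help, giving a contribution of size $\lambda_0\,\mathcal O(\sigma_\Lambda^{2})\,\delta_0^{-1}$, which need not be $<1$ under the stated hypotheses $\lambda_0\ll\sigma_\Lambda^{-3/2}\rho_0^{1/2}\mu\sin\vartheta$, $\rho_0<\min(1,\delta_0)$ (take $\delta_0$ small). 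The paper avoids this by verifying Definition \ref{feshbach} directly rather than the unsymmetrized criteria: it expands $[H_{\overline{\bm{\chi}}_{i_0}}(\vec p,z)]^{-1}$ in a Neumann series organized as $T^{-1/2}\big(1+\lambda_0 T^{-1/2}\overline{\bm{\chi}}_{i_0}H_{I,\theta}\overline{\bm{\chi}}_{i_0}T^{-1/2}\big)^{-1}T^{-1/2}$, whose terms are exactly the symmetric sandwiches that \eqref{HI1} and \eqref{Htheta1} control, yielding the bound \eqref{neumanb}; condition (ii) of Definition \ref{feshbach} then follows from boundedness of $H_{I,\theta}\bm{\chi}_{i_0}$. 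If you rearrange your estimate in this symmetric form (which your smallness condition already matches), your argument coincides with the paper's; as written, the key inequality does not bound the operator you invoke it for.
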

\vspace{1mm}

\begin{proof}
  Lemma \ref{tech1} shows that   $[H_{\theta,0}(\vec{p})-z]_{| \mathrm{Ran}(\overline{\bm{\chi}}_{i_{0}})}$  is bounded invertible for all  $(\vec{p},z) \in \mathcal{U}_{\rho_0}[E_{i_{0}}]$. We prove that 
\begin{equation}
H_{\overline{\bm{\chi}}_{i_{0}}}(\vec{p},z):= H_{\theta,0}(\vec{p})-z   + \lambda_0 \overline{\bm{\chi}}_{i_{0}} H_{I,\theta} \overline{\bm{\chi}}_{i_{0}}
\end{equation} 
is bounded invertible on $ \mathrm{Ran}(\overline{\bm{\chi}}_{i_{0}})$. The proof  is standard and relies on Equation \eqref{HI1} in Lemma \ref{tech1}. By \eqref{HI1}, the Neumann series for $[H_{\overline{\bm{\chi}}_{i_{0}}}(\vec{p},z)]^{-1}_{| \mathrm{Ran}(\overline{\bm{\chi}}_{i_{0}})}$ is estimated as
\begin{eqnarray}  \label{neumanb}
\big \| [H_{\overline{\bm{\chi}}_{i_{0}}}(\vec{p},z)]^{-1}_{| \mathrm{Ran}(\overline{\bm{\chi}}_{i_{0}})} \big \|  
& \leq & \rho_{0}^{-1}  \sum_{n=0}^{\infty} \big [ C \sigma_{\Lambda}^{3/2} \lambda_0 \rho_{0}^{-1/2} \big ]^{n} \times \nonumber\\
&  & \Big \|  \left[ (H_{\theta,0} (\vec{p})-z)^{-\frac{1}{2}}(H_f +  \rho_0)(H_{\theta,0} (\vec{p})-z)^{-\frac{1}{2}}  \right]_{| \mathrm{Ran}(\overline{\bm{\chi}}_{i_{0}})}\Big \|^{n+1} ,
\end{eqnarray}
for some numerical constant $C > 0$. Using \eqref{Htheta1}, we see that the Neumann series converges uniformly in $(\vec{p},z) \in \mathcal{U}_{\rho_0}[E_{i_{0}}]$ provided that $\lambda_0 \ll \sigma_\Lambda^{-3/2} \rho_0^{1/2} \mu \sin \vartheta$. Moreover, 
\begin{equation}
\label{neumanb}
\big \| [H_{ \overline{\bm{\chi}}_{i_{0}}}(\vec{p},z)]^{-1}_{| \mathrm{Ran}(\overline{\bm{\chi}}_{i_{0}})} \big \| = \mathcal{O}\left(\frac{ 1}{\mu \rho_0 \sin(\vartheta)} \right).
\end{equation}
Since in addition $H_{I,\theta} \bm{\chi}_{i_0}$ and $\bm{\chi}_{i_0} H_{I,\theta}$ extend to bounded operators on $\mathcal{H}_{ \vec{p} }$, it follows that $(H_{\theta}(\vec{p})-z,H_{\theta,0}(\vec{p})-z)$ is a Feshbach-Schur pair associated to $\bm{\chi}_{i_{0}}$.
\end{proof}

\subsection{Wick-ordering  and analyticity of $H^{(0)}(\vec{p},z)$}
\label{par2}
We assume that the parameters $\rho_0$, $\lambda_0$ and $\theta$ satisfy the  hypotheses of Lemma \ref{fespair}, so that the smooth Feshbach-Schur map associated to $\bm{\chi}_{i_{0}}$ can be applied to the pair $(H_{\theta}(\vec{p})-z,H_{\theta,0}(\vec{p})-z)$ for all $(\vec{p},z) \in \mathcal{U}_{\rho_0}[E_{i_{0}}]$.  Let  $H^{(0)}(\vec{p},z)$ be defined as in \eqref{Wo}-\eqref{sim}. More precisely, $H^{(0)}(\vec{p},z)$ is the  bounded operator on $ \mathcal{H}^{(0)}=\mathds{1}_{H_f \leq \rho_0} \mathcal{H}_f$  associated with the bounded quadratic form defined by
\begin{equation}
\langle \psi \vert  H^{(0)}(\vec{p},z)  \phi \rangle  =  \langle \psi_{i_0}  \otimes \psi  \vert  F_{\bm{\chi}_{i_{0}}}(H_{\theta}(\vec{p})-z,H_{\theta,0}(\vec{p})-z)    \psi_{i_0} \otimes \phi \rangle,
\end{equation}
for all $\psi,\phi \in \mathcal{H}^{(0)}$, where $ \psi_{i_0} $ is a normalized eigenvector associated to the eigenvalue $E_{i_{0}}$ of $H_{is}$. Here we omit the argument $\theta$ to simplify notations. Lemma \ref{borne0} below shows that  $H^{(0)}(\vec{p},z)$ can be rewritten as a convergent series of Wick monomials on $\mathcal{H}^{(0)}$; see \eqref{sim}. The convergence is uniform   on the open set $\mathcal{U}_{\rho_0}[E_{i_{0}}]$.  The main  tool used in  the proof is the pull-though formula
\begin{equation} \label{pull}
a(\underline{k}) g(H_f,\vec{P}_f)=g(H_f + \vert \vec{k} \vert, \vec{P}_f + \vec{k}) a(\underline{k}),
\end{equation}
which  holds for any measurable function $g: \mathbb{R}^4 \rightarrow \mathbb{C}$, and which enables us to normal order the  creation and annihilation operators that appear in $H^{(0)}(\vec{p},z)$. Lemma \ref{borne0} also  shows that $H^{(0)}(\vec{p},z)$ can be made arbitrary close (in norm) to the  operator
\begin{equation}
 \left( e^{- \theta} H_f -  e^{- \theta}\vec{p} \cdot \vec{P}_f + E_{i_{0}} -z \right)_{\mid \mathcal{H}^{(0)}}
\end{equation}
by an appropriate tuning of the  coupling constant $\lambda_0$,   and  that the map $(\vec{p},z) \mapsto H^{(0)}(\vec{p},z) \in \mathcal{L}( \mathcal{H}^{(0)})$ is analytic on $\mathcal{U}_{\rho_0}[E_{i_{0}}]$. The proof of Lemma \ref{borne0} is postponed to Appendix \ref{AppendixC}.

\begin{lemma}
\label{borne0}
Let $\gamma > 0$. There exists $\lambda_c( \gamma ) > 0$ such that, for all $0 \leq \lambda_0  \le  \lambda_c ( \gamma ) $ and $\theta$ and $\rho_0$ as in Lemma \ref{fespair}, $H^{(0)} $ can be rewritten as a uniformly convergent series of Wick monomials on $\mathcal{U}_{\rho_0}[E_{i_0}]$,
\begin{equation}
H^{(0)}(\vec{p},z)=H[\underline{w}^{(0)}(\vec{p},z), \mathcal{E}^{(0)}(\vec{p},z)]= \sum_{m+n \geq 0} W_{m,n}^{(0)}(\vec{p},z)  + \mathcal{E}^{(0)}(\vec{p},z).
\end{equation}
The associated kernels  $$w_{m,n}^{(0)}: \mathcal{U}_{\rho_0}[E_{i_{0}}] \times \mathcal{B}_{\rho_0} \times \underline{B}_{\rho_0}^{(m,n)} \rightarrow \mathbb{C}$$
and the function $\mathcal{E}^{(0)}:  \mathcal{U}_{\rho_0}[E_{i_{0}}] \rightarrow \mathbb{C}$ satisfy the following properties:

\begin{list}{\labelitemi}{\leftmargin=1em}
\item $w^{(0)}_{0,0}(\vec{p},z,\cdot,\cdot)$ is $\mathcal{C}^1$ on $\mathcal{B}_{\rho_0}$ and  $w_{0,0}^{(0)}(\vec{p},z,0,\vec{0})=0$  for all  $ (\vec{p},z) \in \mathcal{U}_{\rho_0}[E_{i_{0}}]$,
\vspace{2mm}

\item  $w^{(0)}_{m,n}(\vec{p},z, \cdot, \cdot, \underline{K}^{(m,n)})$, $m+n \geq 1$,  are $\mathcal{C}^1$ on $\mathcal{B}_{\rho_0}$ for almost every $\underline{K}^{(m,n)} \in \underline{B}_{\rho_0}^{(m,n)}$ and  every $ (\vec{p},z) \in \mathcal{U}_{\rho_0}[E_{i_{0}}]$,
\vspace{2mm}

\item For all $(\vec{p},z) \in \mathcal{U}_{\rho_0}[E_{i_{0}}]$, 
\begin{align}
\label{Bb00}
\| w^{(0)}_{m,n} (\vec{p},z) \|_{\frac{1}{2}}& \le \gamma \mu   \sin(\vartheta)   \rho_{0}^{- \frac12 (m+n) + 1} ,\\[4pt]
\| \partial_{\#} w^{(0)}_{m,n}(\vec{p},z) \|_{\frac{1}{2} }& \le \gamma \rho_{0}^{- \frac12 (m+n)} ,
\end{align}
for all $m+n \ge 1$, where $\partial_\#$ stands for $\partial_r$ or $\partial_{l_j}$, and
\begin{align}
\vert (E_{i_{0}} -z) - \mathcal{E}^{(0)} (\vec{p},z) \vert &\le \gamma \mu \rho_0 \sin(\vartheta),\\[4pt]
\label{Bb000}
\| \partial_r w_{0,0}^{(0)}(\vec{p},z) -e^{- \theta} \|_{\infty} + \sum_{q=1}^{3}  \|  \partial_{l_q} w_{0,0}^{(0)}(\vec{p},z) +  p_q  e^{ - \theta} \|_{\infty} & \le \gamma + \sqrt{3 } \rho_0 . \qquad 
\end{align}
\end{list}
Moreover,  the bounded operator-valued function $(\vec{p},z) \mapsto H^{(0)}(\vec{p},z) \in \mathcal{L}(\mathcal{H}^{(0)})$ is analytic on $\mathcal{U}_{\rho_0}[E_{i_{0}}]$.
\end{lemma}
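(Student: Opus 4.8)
The plan is to apply the smooth Feshbach-Schur map $F_{\bm{\chi}_{i_0}}(\cdot, H_{\theta,0}(\vec p)-z)$ to the pair $(H_\theta(\vec p)-z, H_{\theta,0}(\vec p)-z)$ — which is a Feshbach-Schur pair by Lemma \ref{fespair} — and to expand the resulting operator into a normal-ordered series of Wick monomials, reading off the kernels $w^{(0)}_{m,n}$ and the vacuum energy $\mathcal E^{(0)}$ and estimating them. Concretely, I would first recall from \eqref{FB} that
\begin{equation*}
F_{\bm{\chi}_{i_0}}(H_\theta(\vec p)-z, H_{\theta,0}(\vec p)-z) = H_{\theta,0}(\vec p)-z + \lambda_0 \bm{\chi}_{i_0} H_{I,\theta} \bm{\chi}_{i_0} - \lambda_0^2 \bm{\chi}_{i_0} H_{I,\theta} \overline{\bm{\chi}}_{i_0} H_{\overline{\bm{\chi}}_{i_0}}^{-1} \overline{\bm{\chi}}_{i_0} H_{I,\theta} \bm{\chi}_{i_0},
\end{equation*}
and then expand $H_{\overline{\bm{\chi}}_{i_0}}^{-1}$ in the Neumann series $\sum_{\ell\ge 0} (-\lambda_0)^\ell (H_{\theta,0}(\vec p)-z)^{-1}(\overline{\bm{\chi}}_{i_0} H_{I,\theta}\overline{\bm{\chi}}_{i_0} (H_{\theta,0}(\vec p)-z)^{-1})^\ell$, which converges by the estimates \eqref{HI1}, \eqref{Htheta0}, \eqref{Htheta1} of Lemma \ref{tech1} under the stated smallness hypothesis on $\lambda_0$. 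Taking the partial trace over $P_{i_0}\mathbb C^N$ as in \eqref{Wo} gives an operator on $\mathcal H^{(0)}=\mathds 1_{H_f\le\rho_0}\mathcal H_f$.

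The second step is Wick ordering: each term in the expansion is a product of creation/annihilation operators coming from $H_{I,\theta}$ (one $a$ or $a^*$ each, since $H_{I,\theta}$ is linear in the field), interspersed with factors of $\chi_{\rho_0}(H_f)$, $\overline\chi_{\rho_0}(H_f)$ and resolvents $(H_{\theta,0}(\vec p)-z)^{-1}$, all of which are functions of $H_f$ and $\vec P_f$. Using the pull-through formula \eqref{pull} repeatedly, I would move all annihilation operators to the right and all creation operators to the left, at the cost of shifting the arguments of the $H_f,\vec P_f$-functions by the photon momenta being carried past; Wick's theorem produces, from a term with $m+n$ external legs and some number of contractions, a kernel $w_{m,n}$ that is an integral (over the contracted momenta) of products of the scalar functions $\Lambda(e^{-\theta}\vec k)|\vec k|^{1/2}\vec\epsilon(\underline k)\cdot(\vec d)_{\cdot\cdot}$ against shifted resolvent- and cutoff-functions evaluated on $\sigma(H_f)\times\sigma(\vec P_f)$. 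Collecting all contributions with the same $(m,n)$ yields $w^{(0)}_{m,n}(\vec p,z,\cdot,\cdot,\underline K^{(m,n)})$, and the constant term yields $\mathcal E^{(0)}(\vec p,z)$; the $C^1$-regularity in $(r,\vec l)$ follows from differentiability of $\chi$, of $\Lambda$, and of the resolvent kernels, while $w^{(0)}_{0,0}(\vec p,z,0,\vec 0)=0$ and $\mathcal E^{(0)}(\vec p,z)=\langle\Omega|H^{(0)}(\vec p,z)\Omega\rangle$ come from evaluating the zero-leg part at the vacuum.

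The third step is the bookkeeping of estimates. Using Lemma \ref{kerop} in reverse — i.e. bounding $\|w^{(0)}_{m,n}\|_{1/2}$ directly from the integral representation — together with the operator bounds $\|(H_f+\rho_0)^{-1/2}H_{I,\theta}(H_f+\rho_0)^{-1/2}\|=\mathcal O(\sigma_\Lambda^{3/2}\rho_0^{-1/2})$ and $\|(H_{\theta,0}-z)^{-1/2}(H_f+\rho_0)(H_{\theta,0}-z)^{-1/2}\|=\mathcal O((\mu\sin\vartheta)^{-1})$ from Lemma \ref{tech1}, one checks that each term with a total of $m+n$ external legs and $\ell$ internal resolvents contributes at most $\mathcal O((C\lambda_0\sigma_\Lambda^{3/2})^{m+n+\ell}(\mu\sin\vartheta)^{-(m+n+\ell)}\rho_0^{\cdots})$; summing the geometric series in $\ell$ and tracking the powers of $\rho_0$ (one factor of $\rho_0$ from $\mathds 1_{H_f\le\rho_0}$ per missing leg, as in \eqref{eq:a1}) gives exactly the bounds \eqref{Bb00}, with the prefactor $\gamma\mu\sin\vartheta$ achieved by choosing $\lambda_c(\gamma)$ small enough. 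The estimates on $\partial_\#w^{(0)}_{m,n}$ follow by differentiating the $H_f,\vec P_f$-dependence and noting the derivatives of $\chi_{\rho_0}$ produce at worst a factor $\rho_0^{-1}$ that is absorbed; the estimate \eqref{Bb000} on $\partial_r w^{(0)}_{0,0}-e^{-\theta}$ and $\partial_{l_q}w^{(0)}_{0,0}+p_q e^{-\theta}$ isolates the zeroth-order term $e^{-\theta}H_f-e^{-\theta}\vec p\cdot\vec P_f+E_{i_0}-z$ coming from $P_{i_0}(H_{\theta,0}(\vec p)-z)P_{i_0}$ and bounds the $O(\lambda_0^2)$ corrections together with the $\sqrt3\rho_0$ slack inherent in the cutoff. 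Finally, analyticity of $(\vec p,z)\mapsto H^{(0)}(\vec p,z)\in\mathcal L(\mathcal H^{(0)})$ on $\mathcal U_{\rho_0}[E_{i_0}]$ follows because the Neumann series converges in operator norm uniformly on $\mathcal U_{\rho_0}[E_{i_0}]$ and each summand is a product of the manifestly analytic operators $\bm{\chi}_{i_0}H_{I,\theta}\bm{\chi}_{i_0}$, $\overline{\bm{\chi}}_{i_0}H_{I,\theta}\overline{\bm{\chi}}_{i_0}$ and $(H_{\theta,0}(\vec p)-z)^{-1}$ (analytic in $(\vec p,z)$ because $H_{\theta,0}(\vec p)$ depends affinely on $\vec p$ and $z\mapsto(H_{\theta,0}(\vec p)-z)^{-1}$ is analytic on the resolvent set, which contains $\mathcal U_{\rho_0}[E_{i_0}]$ restricted to $\mathrm{Ran}(\overline{\bm{\chi}}_{i_0})$ by Lemma \ref{tech1}), so a locally uniform limit of analytic functions is analytic.

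I expect the main obstacle to be the combinatorial and notational control of the Wick-ordering expansion: carefully organizing which products of $H_{I,\theta}$-vertices and resolvents contribute to a given kernel $w^{(0)}_{m,n}$, correctly tracking the momentum shifts produced by the pull-through formula \eqref{pull} inside nested resolvents, and then extracting the uniform $\|\cdot\|_{1/2}$- and derivative-bounds with the precise powers of $\rho_0$, $\mu\sin\vartheta$ and $\lambda_0$ demanded by \eqref{Bb00}--\eqref{Bb000}. This is why, as the text indicates, the detailed proof is relegated to Appendix \ref{AppendixC}; the analyticity claim, by contrast, is comparatively soft once the uniform norm convergence of the expansion is in hand.
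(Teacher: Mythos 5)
Your proposal follows essentially the same route as the paper's proof in Appendix \ref{AppendixC}: Feshbach-Schur map plus Neumann expansion of $[H_{\overline{\bm\chi}_{i_0}}(\vec p,z)]^{-1}$, Wick ordering via the pull-through formula \eqref{pull}, kernel estimates from \eqref{HI1} and \eqref{Htheta1} yielding bounds of size $\mathcal O(\lambda_0^2\sigma_\Lambda^3(\mu\sin\vartheta)^{-1}\rho_0^{-\frac12(m+n)})$ (and $\mathcal O(\lambda_0)$ for the linear terms), absorbed into $\gamma$ by choosing $\lambda_c(\gamma)$ small, and analyticity from uniform convergence of the series. The only cosmetic difference is that the paper verifies analyticity of the weighted resolvent $(H_f+\rho_0)(H_{\theta,0}(\vec p)-z)^{-1}$ by hand (fiber representation plus Morera), whereas you invoke the standard resolvent-analyticity argument for the type-(A) family, which is an acceptable shortcut.
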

\vspace{3mm}

 Since $(\vec{p},z) \mapsto H^{(0)}(\vec{p},z) \in \mathcal{L}(\mathcal{H}^{(0)})$ is analytic on the open set $\mathcal{U}_{\rho_0}[E_{i_{0}}]$,  the map $(\vec{p},z) \mapsto \mathcal{E}^{(0)}(\vec{p},z) = \langle \Omega \vert  H^{(0)}(\vec{p},z)   \Omega \rangle \in \mathbb{C}$ is also analytic. Our next lemma establishes, for each $\vec{p} \in U_{\theta}[\vec{p}^*]$, the existence of a unique element $z^{(0)}(\vec{p}) \in   D(E_{i_{0}},r_0)$,  such that $\mathcal{E}^{(0)} (\vec{p},z^{(0)}(\vec{p}))=0$. Here we recall that $r_0 = \rho_0 \mu \sin ( \vartheta ) / 32$. 
\vspace{1mm}

\begin{lemma}
\label{zeros}
Let $0 < \gamma \ll 1 $ and suppose that $\lambda_0$, $\rho_0$, $\theta = i \vartheta$ are fixed as in Lemma \ref{fespair}. Let  $\vec{p} \in U_{\theta}[\vec{p}^* ]$. The holomorphic function $z \mapsto   \mathcal{E}^{(0)}(\vec{p},z) \in \mathbb{C}$ possesses a unique zero $z^{(0)}(\vec{p}) \in  D(E_{i_{0}},r_0)$. Furthermore, for any $\eta>0$ such that $r_0 \eta +\gamma \rho_0 \mu \sin(\vartheta) <r_0$,  $D(z^{(0)}(\vec{p}),r_0 \eta  ) \subset D(E_{i_{0}},r_0)$, and
\begin{align}
\vert \mathcal{E}^{(0)}(\vec{p},z)\vert &\le  2 \gamma \rho_0 \mu \sin ( \vartheta ) + r_0 \eta,\\
 \vert z^{(0)}(\vec{p}) - E_{i_0}\vert &\le   \gamma \rho_0 \mu \sin ( \vartheta ),
\end{align}
for all $z \in  D(z^{(0)}(\vec{p}),r_0  \eta )$.
\end{lemma}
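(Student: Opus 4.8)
The plan is to apply Rouché's theorem (or rather, a quantitative version of the argument principle / Hurwitz-type reasoning) to the holomorphic function $z \mapsto \mathcal{E}^{(0)}(\vec{p},z)$ on the disc $D(E_{i_0},r_0)$, comparing it with the affine function $g(z) := E_{i_0} - z$, which obviously has the single simple zero $z = E_{i_0}$ in that disc. By the estimate $\vert (E_{i_0}-z) - \mathcal{E}^{(0)}(\vec{p},z)\vert \le \gamma \mu \rho_0 \sin(\vartheta)$ from Lemma \ref{borne0}, the perturbation is small: on the boundary circle $\vert z - E_{i_0}\vert = r_0 = \rho_0 \mu \sin(\vartheta)/32$ we have $\vert g(z) \vert = r_0 = \rho_0\mu\sin(\vartheta)/32$, whereas $\vert \mathcal{E}^{(0)}(\vec{p},z) - g(z)\vert \le \gamma\rho_0\mu\sin(\vartheta) < r_0$ precisely when $\gamma < 1/32$, which is guaranteed by the hypothesis $\gamma \ll 1$. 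Hence Rouché's theorem yields exactly one zero $z^{(0)}(\vec{p})$ of $\mathcal{E}^{(0)}(\vec{p},\cdot)$ inside $D(E_{i_0},r_0)$, and it is simple.

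Next I would extract the quantitative bound on the location of the zero. Since $\mathcal{E}^{(0)}(\vec{p},z^{(0)}(\vec{p})) = 0$, we get $\vert z^{(0)}(\vec{p}) - E_{i_0}\vert = \vert (E_{i_0} - z^{(0)}(\vec{p})) - \mathcal{E}^{(0)}(\vec{p},z^{(0)}(\vec{p}))\vert \le \gamma\mu\rho_0\sin(\vartheta)$, which is the second displayed inequality in the statement. For the inclusion $D(z^{(0)}(\vec{p}),r_0\eta) \subset D(E_{i_0},r_0)$: for $z$ in the smaller disc, $\vert z - E_{i_0}\vert \le \vert z - z^{(0)}(\vec{p})\vert + \vert z^{(0)}(\vec{p}) - E_{i_0}\vert \le r_0\eta + \gamma\rho_0\mu\sin(\vartheta) < r_0$ by the hypothesis on $\eta$. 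Finally, for the bound on $\vert\mathcal{E}^{(0)}(\vec{p},z)\vert$ on the smaller disc, write $\mathcal{E}^{(0)}(\vec{p},z) = (\mathcal{E}^{(0)}(\vec{p},z) - (E_{i_0}-z)) + (E_{i_0} - z)$; the first term is bounded by $\gamma\rho_0\mu\sin(\vartheta)$, and for the second, $\vert E_{i_0} - z\vert \le \vert E_{i_0} - z^{(0)}(\vec{p})\vert + \vert z^{(0)}(\vec{p}) - z\vert \le \gamma\rho_0\mu\sin(\vartheta) + r_0\eta$, so altogether $\vert\mathcal{E}^{(0)}(\vec{p},z)\vert \le 2\gamma\rho_0\mu\sin(\vartheta) + r_0\eta$, as claimed.

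The only mild subtlety — and the place where one must be slightly careful rather than genuinely obstructed — is the verification that the Rouché hypothesis holds with strict inequality on the boundary, i.e. that one can indeed take $\gamma$ small enough (independently of all problem parameters) that $\gamma < 1/32$; this is immediate from the standing assumption $0 < \gamma \ll 1$ in the hypotheses, together with the fact that $r_0/(\rho_0\mu\sin\vartheta) = 1/32$ is a fixed numerical constant. One should also note that the holomorphy of $z \mapsto \mathcal{E}^{(0)}(\vec{p},z)$ on $\mathcal{U}_{\rho_0}[E_{i_0}] \supset \{\vec{p}\} \times D(E_{i_0},r_0)$ was already established in Lemma \ref{borne0} (it is the vacuum expectation value of the analytic operator-valued map $H^{(0)}$), so Rouché applies without further ado. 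No infinite-dimensional or operator-theoretic input is needed at this step; it is purely one-variable complex analysis applied to an already-constructed analytic function, and I expect the proof to be short.
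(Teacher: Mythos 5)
Your proposal is correct and follows essentially the same route as the paper: Rouch\'e's theorem comparing $z\mapsto\mathcal{E}^{(0)}(\vec{p},z)$ with $z\mapsto E_{i_0}-z$ via the bound $\vert (E_{i_0}-z)-\mathcal{E}^{(0)}(\vec{p},z)\vert\le\gamma\rho_0\mu\sin(\vartheta)$ from Lemma \ref{borne0}, followed by the same triangle-inequality estimates for the location of $z^{(0)}(\vec{p})$, the disc inclusion, and the bound on $\vert\mathcal{E}^{(0)}(\vec{p},z)\vert$. The only cosmetic difference is that, since $\mathcal{E}^{(0)}(\vec{p},\cdot)$ is only defined on the open disc $D(E_{i_0},r_0)$ (as $\mathcal{U}_{\rho_0}[E_{i_0}]=U_\theta[\vec{p}^*]\times D(E_{i_0},r_0)$), the paper applies Rouch\'e on circles of radius $r$ with $\gamma\rho_0\mu\sin(\vartheta)<r<r_0$ and lets $r$ tend to $r_0$, rather than on the boundary circle $\vert z-E_{i_0}\vert=r_0$ itself as you do; your argument should be phrased that way, which is a trivial repair.
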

\vspace{2mm}

\begin{proof}
Since $\gamma \ll 1$, we have that $r_0 > \gamma \rho_0 \mu \sin ( \vartheta )$. We apply Rouch\'e's theorem to the functions  $z \mapsto \mathcal{E}^{(0)}(\vec{p},z)$ and $z \mapsto E_{i_{0}}-z$ on $D(E_{i_{0}},r)$ with $\gamma \rho_0 \mu \sin ( \vartheta ) < r < r_0$. For any $z \in \partial D(E_{i_{0}},r)$, we have that $\vert E_{i_{0}}-z \vert = r$, and hence
\begin{equation*}
\vert \mathcal{E}^{(0)}(\vec{p},z) - (E_{i_{0}}-z) \vert \le \gamma \rho_0 \mu \sin ( \vartheta ) < \vert E_{i_{0}}-z \vert ,
\end{equation*}
for any  $r > \gamma \rho_0 \mu \sin( \vartheta )$. As $r$ can be chosen arbitrarily close to $r_0$, we deduce that $z \mapsto \mathcal{E}^{(0)}(\vec{p},z)$ possesses a unique zero, $z^{(0)}(\vec{p})$, in $D(E_{i_{0}},r_0)$. Let  $\eta>0$ such that $r_0 \eta +\gamma \rho_0 \mu \sin(\vartheta) <r_0$. The triangle inequality  implies that
\begin{equation*}
\vert \mathcal{E}^{(0)}(\vec{p},z) \vert  \le \gamma \rho_0 \mu \sin( \vartheta ) + \vert z - z^{(0)}(\vec{p}) \vert +  \vert  z^{(0)}(\vec{p}) -E_{i_{0}} \vert \le 2 \gamma \rho_0 \mu \sin( \vartheta ) + r_0 \eta ,
\end{equation*}
for all $z \in  D(z^{(0)}(\vec{p}),r_0 \eta  ) \subset D(E_{i_{0}},r_0)$.
\end{proof}

\vspace{0,5cm}

\section{The inductive construction}
\label{itere}
 As described in Subsection \ref{secic}, we propose to inductively construct a sequence of effective Hamiltonians, $H^{(j)}(\vec{p},z)$, $j=0,1,...$, with the property that  $H^{(j)}(\vec{p},z^{(\infty)}(\vec{p}))$  is not injective if and only if $z^{(\infty)}(\vec{p})$ is an eigenvalue of $H_{\theta}(\vec{p})$. We use the notations introduced in Section \ref{intro}, and we  now present the details of our inductive construction. In particular, one of our purposes in this section is to prove bounds on the perturbation $W_{\geq 1}^{(j)}(\vec{p},z)$; see \eqref{sej}.  We remind the reader that our  inductive construction can be summarized  by describing the induction step, from $j$ to $j+1$:
\vspace{1mm}

\begin{list}{\labelitemi}{\leftmargin=1em}
\item[(i)] In passing from $j$ to $j+1$, our starting point is the effective Hamiltonian  
$H^{(j)}(\vec{p},z)$ constructed in the previous induction step, which is an operator defined on the space 
$\mathcal{H}^{(j)}=\mathds{1}_{H_f \leq \rho_j} \mathcal{H}_f$, provided $(\vec{p},z)$ is constrained to belong to a certain open subset $\mathcal{U}_{\rho_j}[z^{(j-1)}]$ of $\mathbb{C}^{3}\times \mathbb{C}$. For each 
$\vec{p} \in U_{\theta}[\vec{p}^*]$, the admissible values of $z$ are then taken to lie inside a small disk centered at the zero,  $z^{(j)}(\vec{p})$, of the function $$z \mapsto \mathcal{E}^{(j)}(\vec{p},z)= \langle \Omega \vert H^{(j)}(\vec{p},z)  \Omega \rangle.$$ 
This will define an open set $\mathcal{U}_{\rho_{j+1}}[z^{(j)}] \subset  \mathcal{U}_{\rho_j}[z^{(j-1)}]$.

\item[(ii)] We apply the Feshbach-Schur map to the Feshbach pair $(H^{(j)}(\vec{p},z),W_{0,0}^{(j)}(\vec{p},z)+\mathcal{E}^{(j)}(\vec{p},z))$ associated to ${\chi}_{\rho_{j+1}}(H_f)$, for all $(\vec{p},z)$ in  $\mathcal{U}_{\rho_{j+1}}[z^{(j)}]$. We then re-Wick order the resulting operator (all creation operators moved to the left of all annihilation operators, using the pull-through formula). This yields a new effective Hamiltonian, 
$H^{(j+1)} (\vec{p},z)$, at step $j+1$, which will be shown to be well-defined on  $\mathcal{H}^{(j+1)}=\mathds{1}_{H_f \leq \rho_{j+1}} \mathcal{H}_f$, provided $(\vec{p},z) \in \mathcal{U}_{\rho_{j+1}}[z^{(j)}]$.  
\end{list}
\vspace{1mm}

\subsection{Inductive properties of the kernels -- from an energy scale $\rho$ to the energy scale $  \rho^{2-\varepsilon}$} \label{nton+1}

We first consider an effective Hamiltonian, given as a sum of Wick monomials, at an energy scale $\rho$, with $0 < \rho  <1$. By an application of the smooth Feshbach-Schur map, we obtain a new effective Hamiltonian at an energy scale $\tilde{\rho}$, with
\begin{equation} \label{eq:def_tilderho}
\tilde \rho := \rho^{2 - \varepsilon } ,  \qquad 0<\varepsilon<1,
\end{equation}
which has certain properties allowing us to iterate the construction. For a kernel $w_{m,n}(\vec{p},z)$ defined on  a subset $\mathcal{S}$ of $\mathcal{B}_{\rho }   \times \underline{B}_{\rho }^{(m,n)} $, $\|w_{m,n}(\vec{p},z)\|_{1/2}$ is the norm of $w_{m,n}(\vec{p},z)$ as  defined in \eqref{1/2}  with the supremum  taken over  the subset $\mathcal{S}$.

For $f: \mathbb{C}^3 \rightarrow \mathbb{C}$ a continuous function and $\rho>0$, we define 
 \begin{equation}
\label{UU}
\mathcal{U}_{\rho}[f] := \left \{ ( \vec{p} , z ) \in U_{\theta}[\vec{p}^*] \times \mathbb{C} \ | \ z \in  D\left(f(\vec{p}), r_\rho \right) \right \}, 
\end{equation}
where $D\left(f(\vec{p}), r_\rho \right)$ is the complex open disk centered at $f(\vec{p})$ and with radius 
$$
r_{\rho} := \frac{\rho \mu \sin(\vartheta)}{32}.
$$
For $( \vec{p} , z ) \in \mathcal{U}_\rho[f]$, we consider the operator
\begin{equation} \label{eq:defH_n1}
H(\vec{p},z)=H[\underline{w} (\vec{p},z), \mathcal{E}(\vec{p},z) ]= \sum_{m+n \geq 0} W_{m,n} (\vec{p},z)  + \mathcal{E}(\vec{p},z) ,
\end{equation}
on $\mathds{1}_{H_f \le \rho} \mathcal{H}_f$, associated with a sequence of kernels  $$w_{m,n} : \mathcal{U}_\rho[f] \times \mathcal{B}_{\rho} \times \underline{B}_{\rho}^{(m,n)} \rightarrow \mathbb{C}$$
and a function $\mathcal{E} :  \mathcal{U}_\rho[f] \rightarrow \mathbb{C}$. We assume that there exists a constant $\mathbf{D}>1$ such that 
\vspace{2mm}
\begin{enumerate}[(a)]
\item $w_{0,0}(\vec{p},z,\cdot,\cdot)$ is $\mathcal{C}^1$ on $\mathcal{B}_{\rho}$, and $w_{0,0}(\vec{p},z,0,\vec{0})=0$,  for all  $ (\vec{p},z) \in \mathcal{U}_\rho[f]$;
\vspace{2mm}

\item  the kernels $w_{m,n} (\vec{p},z, \cdot, \cdot, \underline{K}^{(m,n)})$, $m+n \geq 1$,  are $\mathcal{C}^1$ on $\mathcal{B}_{\rho}$, for almost every $\underline{K}^{(m,n)} \in \underline{B}_{\rho}^{(m,n)}$ and  every $ (\vec{p},z) \in \mathcal{U}_\rho[f]$. Moreover, $w_{m,n}(\vec{p},z, \cdot, \cdot, \underline{K}^{(m,n)})$ is symmetric in $\underline{k}^{(m)}$ and $\underline{\tilde{k}}^{(n)}$;
\vspace{2mm}

\item for all $(\vec{p},z) \in \mathcal{U}_{\rho}[f]$, 
\begin{align}
\label{Bb00_n1}
\| w_{m,n} (\vec{p},z) \|_{\frac{1}{2}}& \le \mathbf{D}^{m+n} \rho^{- (m+n) + 1} \mu \sin(\vartheta) ,\\[12pt]
\label{Bb00_n2}
\| \partial_{\#} w_{m,n}(\vec{p},z) \|_{\frac{1}{2}}& \le    \mathbf{D}^{m+n}    \rho^{- (m+n) },
\end{align}
for all $m+n \ge 1$, where $\partial_\#$ stands for $\partial_r$ or $\partial_{l_j}$;

\item the maps $\mathcal{E}  : \mathcal{U}_{\rho}[f]  \rightarrow \mathbb{C}$ and $(\vec{p},z) \mapsto H( \vec{p} , z )   \in \mathcal{L}(\mathds{1}_{H_f \leq \rho}\mathcal{H}_f)$ are analytic on  $\mathcal{U}_{\rho}[f]$;

\item for all $\vec{p} \in   U_{\theta}[\vec{p}^*] $, the holomorphic function $z \mapsto   \mathcal{E} (\vec{p},z) \in \mathbb{C}$ possesses a unique zero $\tilde f (\vec{p}) \in  D( f( \vec{p} ) , 2 r_\rho / 3 )$, where $\tilde f$ is analytic in $\vec{p} \in U_{\theta}[\vec{p}^*] $; with $\tilde \rho$ defined by \eqref{eq:def_tilderho}, we then have that $$ \mathcal{U}_{\tilde \rho}[ \tilde f ] \subset  \mathcal{U}_{\rho}[f],$$ 
\begin{equation} \vert \tilde{f}(\vec{p})- f(\vec{p})\vert <\frac{r_\rho}{2}, \label{4.47_n1} 
\end{equation}
 for all $\vec{p} \in U_{\theta}[\vec{p}^*]$, and 
\begin{equation}\label{eq:eps_n1}
\vert \mathcal{E}(\vec{p},z)\vert \le \frac{ \mu \tilde \rho }{ 16 } ,
\end{equation}
for all $( \vec{p} , z ) \in \mathcal{U}_{\tilde \rho}[ \tilde f ]$.
\end{enumerate}

 \begin{lemma}
\label{borne0_n1} 
Let $\rho, \varepsilon \in (0,1)$  and let $\mathbf{D}>1$ be as in \eqref{Bb00_n1} and  \eqref{Bb00_n2}   and  such that 
$\rho^{ \varepsilon} \ll  \mathbf{D}^{-1}$ and $\rho^{ 1-\varepsilon} \ll 1$. Let $f : \mathbb{C}^3 \to \mathbb{C}$ be a continuous function, and let $H( \vec{p} , z )$ be the operator given in \eqref{eq:defH_n1}. 
For $( \vec{p} , z ) \in \mathcal{U}_\rho[f]$, this operator is assumed to satisfy properties $(a)$--$(e)$, above. In addition, we assume that
\begin{align} \label{4.9}
\| \partial_r w_{0,0}(\vec{p},z) -e^{- \theta} \|_{\infty} + \sum_{q=1}^{3}  \|  \partial_{l_q} w_{0,0}(\vec{p},z) +  p_q  e^{ - \theta} \|_{\infty} & \le \frac{ \mu }{ 4 } , \qquad  \forall (\vec{p},z) \in \mathcal{U}_{\rho}[f] ,
\end{align}
and that
\begin{equation} \label{partialz_n1}
\vert \partial_z \mathcal{E}(\vec{p},z) +1 \vert \le \frac{1}{4}  , \qquad \forall z \in \overline{D} (f(\vec{p}),\frac{2}{3} r_{ \rho}) ,
\end{equation}
where $\overline{D} (f(\vec{p}),\frac{2}{3} r_{ \rho})$ denotes the closed disk with radius 
$2r_\rho / 3$ centered at $f(\vec{p})$. \\
Then, for $\tilde \rho = \rho^{2-\varepsilon}$, the effective Hamiltonian   \normalfont
\begin{equation*}
\tilde H(\vec{p},z):= F_{\chi_{ \tilde \rho }(H_f)} [H (\vec{p},z), W_{0,0}(\vec{p},z) + \mathcal{E}(\vec{p},z)]_{ \mathds{1}_{H_f \leq  \tilde \rho} [\mathcal{H}_f]} ,
\end{equation*}
\itshape is well-defined for all $(\vec{p},z) \in \mathcal{U}_{ \tilde \rho }[ \tilde f ]$, and there exists a sequence of kernels 
$$
\tilde w_{m,n} : \mathcal{U}_{ \tilde \rho }[ \tilde f ] \times \mathcal{B}_{ \tilde \rho } \times \underline{B}_{ \tilde \rho }^{(m,n)} \rightarrow \mathbb{C} ,
$$
and a function $\tilde{\mathcal{E}} : \mathcal{U}_{ \tilde \rho }[ \tilde f ] \rightarrow \mathbb{C}$, such that  
\begin{equation}\label{normalform}
\tilde H(\vec{p},z) = H[\underline{\tilde w}(\vec{p},z), \tilde{\mathcal{E}}(\vec{p},z) ]= \sum_{m+n \geq 0} \tilde{W}_{m,n} (\vec{p},z) + \tilde{\mathcal{E}}(\vec{p},z) ,
\end{equation}
for all $( \vec{p} , z ) \in U_{ \tilde \rho }[ \tilde f ]$. The maps $\tilde{w}_{m,n}$ and $\tilde{\mathcal{E}}$ have properties $(a)$--$(e)$, above, with $\rho$ replaced by $\tilde \rho$, $f $ by $\tilde{f}$, $\mathbf{D}$ replaced by $ \mathbf{DC}$, for some constant $\mathbf{C}$ independent of the 'problem parameters' and $\mathbf{D}$, and  $\tilde{f}$ replaced by $\tilde{\tilde{f}} $.  Here $\tilde{\tilde{f}}( \vec{p} )$ denotes the unique zero of the map $z \mapsto \tilde{\mathcal{E}} (\vec{p},z) \in \mathbb{C}$ in $D( \tilde f( \vec{p} ) , 2 r_{ \tilde \rho } / 3 )$.\\
Moreover, we have that
\begin{align}
& \| \tilde w_{0,0} (\vec{p},z) + \tilde{ \mathcal{E} } (\vec{p},z) - w_{0,0} (\vec{p},z) -\mathcal{E} (\vec{p},z) \|_{\infty} \leq \mathbf{D}^2 \mathbf{C} \rho^{2+\varepsilon} \mu  \sin^2(\vartheta) , \label{n1_n1} \\
& \| \partial_\# ( \tilde w_{0,0} (\vec{p},z) - w_{0,0} (\vec{p},z) ) \|_{\infty} \leq \mathbf{D}^2 \mathbf{C} \rho^{2\varepsilon} \sin(\vartheta) , \label{n1_n2}
\end{align}
for all $( \vec{p} , z ) \in U_{ \tilde \rho }[ \tilde f ]$.
\end{lemma}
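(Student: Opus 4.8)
The plan is to realize $\tilde H(\vec p,z)$ explicitly through the Neumann expansion of the smooth Feshbach-Schur map and then to estimate, term by term, the re-Wick-ordered kernels. First I would record the abstract formula
\[
F_{\chi_{\tilde\rho}}[H,T] = T + \chi_{\tilde\rho} W \chi_{\tilde\rho} - \chi_{\tilde\rho} W \overline{\chi}_{\tilde\rho}\, H_{\overline{\chi}_{\tilde\rho}}^{-1}\, \overline{\chi}_{\tilde\rho} W \chi_{\tilde\rho},
\]
with $T = W_{0,0}(\vec p,z) + \mathcal E(\vec p,z)$ and $W = W_{\ge 1}(\vec p,z)$, and expand the resolvent $H_{\overline{\chi}_{\tilde\rho}}^{-1} = (T + \overline{\chi}_{\tilde\rho} W \overline{\chi}_{\tilde\rho})^{-1}$ as a geometric series in $T^{-1}\overline{\chi}_{\tilde\rho}W\overline{\chi}_{\tilde\rho}$. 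To make this legitimate I need three inputs: (i) that $T$ restricted to $\mathrm{Ran}(\overline{\chi}_{\tilde\rho})$ is boundedly invertible with $\|T^{-1}\overline{\chi}_{\tilde\rho}\| = \mathcal{O}((\mu\tilde\rho\sin\vartheta)^{-1})$ — this follows from the lower bound \eqref{LowerBound}-type estimate $|w_{0,0}(\vec p,z,r,\vec l)|\ge \epsilon r$ on the support of $\overline{\chi}_{\tilde\rho}$ (where $r \gtrsim \tilde\rho$) together with \eqref{4.9} and \eqref{partialz_n1}, which pin down $w_{0,0}$ near $e^{-\theta}H_f$ and $\mathcal E$ near its zero; (ii) that $\|W_{\ge 1}(\vec p,z)\| \lesssim \mathbf{D}^2\rho^2\mu\sin\vartheta$, obtained by summing \eqref{Bb00_n1} against the Wick-monomial bound \eqref{eq:a1} of Lemma \ref{kerop}, using $\rho^\varepsilon \ll \mathbf{D}^{-1}$ to make the geometric series in $m+n$ converge; (iii) combining (i) and (ii), the smallness parameter $\|T^{-1}\overline{\chi}_{\tilde\rho}W\overline{\chi}_{\tilde\rho}\| \lesssim \mathbf{D}^2\rho^{2}/(\mu\tilde\rho\sin\vartheta) \cdot(\text{stuff})$; since $\tilde\rho = \rho^{2-\varepsilon}$ this is $\lesssim \mathbf{D}^2\rho^{\varepsilon}\cdot(\text{const})$, hence $\ll 1$, so the Neumann series converges and defines $\tilde H$ on $\mathrm{Ran}(\mathds 1_{H_f\le\tilde\rho})$.

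Next I would extract the kernels. Each term in the Neumann series is a product of the form $\chi_{\tilde\rho} W_{m_1,n_1}\overline{\chi}_{\tilde\rho}T^{-1}\overline{\chi}_{\tilde\rho}W_{m_2,n_2}\cdots W_{m_L,n_L}\chi_{\tilde\rho}$; using the pull-through formula \eqref{pull} one moves all creation operators to the left and all annihilation operators to the right, picking up the standard combinatorial sum over contractions. This is the familiar Wick-ordering bookkeeping from \cite{BaChFrSi03_01,GrHa08_01}: the resulting kernel $\tilde w_{M,N}$ with $M = \sum(m_i - \text{contracted}) $, $N = \sum(n_i - \text{contracted})$ is an integral over the contracted photon momenta of a product of the input kernels and of factors $\chi_{\tilde\rho},\overline{\chi}_{\tilde\rho}$, and of resolvent factors $(T - \text{shifted arguments})^{-1}$. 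The $\|\cdot\|_{1/2}$-norm of $\tilde w_{M,N}$ is then estimated by pulling out sup-norms of the resolvent factors (bounded by $\mathcal{O}((\mu\tilde\rho\sin\vartheta)^{-1})$), sup-norms of the $\chi$'s (bounded by $1$, and crucially the integration in each contracted variable over $\underline B_{\tilde\rho}$ produces a factor $\mathcal{O}(\tilde\rho^2)$ per contraction because of the support restriction), and the $\|\cdot\|_{1/2}$-norms of the $w_{m_i,n_i}$ via \eqref{Bb00_n1}. Careful bookkeeping of the powers of $\tilde\rho$ — one factor $\tilde\rho^{(-\frac12+1)}$-worth per uncontracted leg, $\tilde\rho^2$ per contraction, $\tilde\rho^{-1}$ per resolvent — shows the powers assemble precisely into $\mathbf{D}^{M+N}\tilde\rho^{-(M+N)+1}\mu\sin\vartheta$, with the extra $\mathbf{C}$ absorbing numerical constants and the geometric series sums; this is \eqref{Bb00_n1} with $\rho\to\tilde\rho$, $\mathbf{D}\to\mathbf{DC}$. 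For the derivative bounds \eqref{Bb00_n2}, I differentiate under the integral sign: $\partial_\#$ hits either an input kernel (use \eqref{Bb00_n2}), a $\chi$ (use $\|\chi'\|_\infty = \mathcal{O}(\tilde\rho^{-1})$), or a resolvent (giving $(T^{-1})^2$ times $\partial_\# T$, where $\partial_\# T$ is bounded by \eqref{4.9}); each case loses at most a factor $\tilde\rho^{-1}$ relative to the undifferentiated bound, which is exactly the discrepancy between \eqref{Bb00_n1} and \eqref{Bb00_n2}.

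For the $(0,0)$-component I would isolate the $M=N=0$ contributions: these come from fully-contracted diagrams, and the leading correction beyond $w_{0,0}+\mathcal E$ is the second-order term $-\langle\Omega|\chi_{\tilde\rho}W_{\ge1}\overline{\chi}_{\tilde\rho}T^{-1}\overline{\chi}_{\tilde\rho}W_{\ge1}\chi_{\tilde\rho}|\cdot\rangle$, whose norm is bounded by $\|W_{\ge 1}\|^2\|T^{-1}\overline{\chi}_{\tilde\rho}\| \lesssim (\mathbf{D}^2\rho^2\mu\sin\vartheta)^2/(\mu\tilde\rho\sin\vartheta) = \mathbf{D}^4\rho^{2+\varepsilon}\mu\sin^2\vartheta$ — matching \eqref{n1_n1} after relabeling the constant, noting $\rho^4/\tilde\rho = \rho^{2+\varepsilon}$; the derivative version \eqref{n1_n2} gains a $\tilde\rho^{-1}$ from differentiating, turning $\rho^{2+\varepsilon}$ into $\rho^{2+\varepsilon}\tilde\rho^{-1} = \rho^{2\varepsilon}$. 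Property (a) — that $\tilde w_{0,0}(\vec p,z,0,\vec 0)=0$ — follows because at $r=0,\vec l=\vec 0$ the operator acts on $\Omega$, all annihilation operators kill $\Omega$, so only $w_{0,0}(\vec p,z,0,\vec 0)=0$ survives in the fully-contracted terms that would contribute, and the shifted resolvent arguments are bounded away from zero; more carefully, one checks $\tilde{\mathcal E}(\vec p,z) = \langle\Omega|\tilde H(\vec p,z)|\Omega\rangle$ and $\tilde w_{0,0}$ is defined as the $(0,0)$-kernel with $\tilde w_{0,0}(\cdot,0,\vec 0)$ split off into $\tilde{\mathcal E}$. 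Property (d), analyticity, is inherited: each term in the (locally uniformly) convergent Neumann series is analytic in $(\vec p,z)$ since $H(\vec p,z)$ and $T(\vec p,z)^{-1}$ are, by assumption (d) and the resolvent being a convergent series of analytic operators; uniform convergence on $\mathcal U_{\tilde\rho}[\tilde f]$ upgrades this to analyticity of $\tilde H$, hence of $\tilde{\mathcal E}$, and then Rouch\'e's theorem (exactly as in Lemma \ref{zeros}, using \eqref{partialz_n1} to control $\partial_z\mathcal E$ and \eqref{eq:eps_n1} for the size of $\mathcal E$) yields the unique zero $\tilde{\tilde f}$ in $D(\tilde f(\vec p), 2r_{\tilde\rho}/3)$ and the inclusion $\mathcal U_{\tilde\rho^{2-\varepsilon}}[\tilde{\tilde f}]\subset\mathcal U_{\tilde\rho}[\tilde f]$ together with \eqref{4.47_n1} and \eqref{eq:eps_n1} at the new scale. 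The main obstacle is the combinatorial/analytic heart of step two: verifying that the powers of $\tilde\rho$ and the constants $\mathbf{D},\mathbf{C}$ bookkeep correctly through arbitrarily long chains of contractions so that the new bounds have \emph{exactly} the same form with $\mathbf{D}\to\mathbf{DC}$ and no deterioration in the $\rho$-exponents — this is where the choice $\tilde\rho=\rho^{2-\varepsilon}$ (rather than $\rho^2$) is essential, since the lost $\rho^{-\varepsilon}$ per step is what absorbs the proliferation of $\mathbf{C}$'s while keeping $\mathbf{DC}\tilde\rho^\varepsilon = \mathbf{DC}\rho^{\varepsilon(2-\varepsilon)}$ still small enough to iterate.
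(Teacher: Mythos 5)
Your overall route is the same as the paper's: verify the Feshbach--Schur pair via a lower bound on $W_{0,0}+\mathcal{E}$ on $\mathrm{Ran}(\overline{\chi}_{\tilde\rho}(H_f))$ plus a Neumann expansion, re-Wick order with the pull-through formula and contraction combinatorics, estimate the new kernels with Lemma \ref{kerop} and resolvent bounds, and get $\tilde{\tilde f}$ by a Rouch\'e argument. However, there are concrete slips in exactly the places where this lemma lives or dies, namely the bookkeeping of scales and of the constant $\mathbf{D}$. First, the contracted photon variables are \emph{not} integrated over $\underline{B}_{\tilde\rho}$: the truncated monomials are still sandwiched by the old cutoffs $\mathds{1}_{H_f\le\rho}$, so each contraction produces a factor of order $\rho^2$ (as in the paper's bound with $(\sqrt{8\pi}\rho^2)^{p_i+q_i}$), not $\tilde\rho^2$; the per-contraction ratio is then $\mathbf{D}\rho^2/\tilde\rho=\mathbf{D}\rho^{\varepsilon}$, and this is precisely where the hypothesis $\rho^{\varepsilon}\ll\mathbf{D}^{-1}$ is consumed. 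Your version both misstates the structure and hides the role of that hypothesis. Second, summing \eqref{Bb00_n1} against \eqref{eq:a1} gives $\|W_{\ge1}\|=\mathcal{O}(\mathbf{D}\rho^{2}\mu\sin\vartheta)$, not $\mathbf{D}^2\rho^2\mu\sin\vartheta$; with your $\mathbf{D}^2$ the Neumann smallness parameter becomes $\mathbf{D}^{2}\rho^{\varepsilon}$, which does \emph{not} follow to be $\ll1$ from $\rho^{\varepsilon}\ll\mathbf{D}^{-1}$, and your second-order estimate yields $\mathbf{D}^{4}\rho^{2+\varepsilon}\mu\sin^{2}\vartheta$ rather than \eqref{n1_n1}; since $\mathbf{C}$ must be independent of $\mathbf{D}$ for the induction of Lemma \ref{induc}, the extra $\mathbf{D}^{2}$ cannot be ``relabeled'' away. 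Both slips are repairable, but as written the key quantitative conclusions are not established.

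Second, property (e) at the new scale is under-argued. Rouch\'e comparing $\tilde{\mathcal{E}}(\vec{p},\cdot)$ with $\tilde f(\vec{p})-z$ is fine, but the bound \eqref{eq:eps_n1} at scale $\tilde{\tilde\rho}$, i.e.\ $\vert\tilde{\mathcal{E}}(\vec{p},z)\vert\le\mu\tilde{\tilde\rho}/16$ on $D(\tilde{\tilde f}(\vec{p}),r_{\tilde{\tilde\rho}})$, cannot be obtained by estimating $\tilde{\mathcal{E}}$ through $\vert z-\tilde f(\vec{p})\vert$ (that is only of size $r_{\tilde\rho}\gg\mu\tilde{\tilde\rho}$); you need a derivative bound $\vert\partial_z\tilde{\mathcal{E}}+1\vert\le 1/2$ near $\tilde{\tilde f}(\vec{p})$, which the paper gets from \eqref{partialz_n1} together with a Cauchy-formula estimate on $\partial_z(\tilde{\mathcal{E}}-\mathcal{E})$ using \eqref{n1_n1} (see \eqref{lmim}), and only then $\vert\tilde{\mathcal{E}}(\vec{p},z)\vert\le\tfrac32\vert z-\tilde{\tilde f}(\vec{p})\vert$ closes the estimate. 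Similarly, the bound $\vert\tilde{\tilde f}(\vec{p})-\tilde f(\vec{p})\vert\le r_{\tilde\rho}/2$, the analyticity of $\vec{p}\mapsto\tilde{\tilde f}(\vec{p})$ (the paper's inverse-function-theorem/biholomorphy argument) and the openness of $\mathcal{U}_{\tilde{\tilde\rho}}[\tilde{\tilde f}]$ are part of the statement and are not addressed in your proposal. These are standard but genuinely needed steps for the iteration, and they should be supplied.
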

\vspace{1mm}

\begin{remark}
  In our inductive construction, the constant $\mathbf{D}$ in Lemma \ref{borne0_n1} is replaced at  step $j$ by  $\mathbf{C}^{j}$, where $\mathbf{C}$ is the numerical constant appearing in \eqref{n1_n1} and \eqref{n1_n2}. Moreover, $\rho$ is replaced by $\rho_{0}^{(2-\varepsilon)^{j}}$.   The hypothesis $\rho_{0}^{\varepsilon(2-\varepsilon)^{j}} \ll \mathbf{C}^{-j}$ is  fulfilled at any step $j$ if   $\rho_{0}$ is sufficiently  small. Furthermore, Equations \eqref{n1_n1} and \eqref{n1_n2}   imply that  \eqref{4.9} and \eqref{partialz_n1}  hold  true  at any step $j$ for sufficiently small values of $\rho_0$; see Paragraph \ref{induu}.
\end{remark}
\vspace{1mm}

\noindent The proof of Lemma \ref{borne0_n1} occupies three subsections. In Subsection \ref{Fesh1} we show  that  the pair    $$(H (\vec{p},z), W_{0,0} (\vec{p},z)  + \mathcal{E} (\vec{p},z))$$  is a Feshbach-Schur pair associated to $\chi_{\tilde \rho}(H_f)$, for all $(\vec{p},z) \in \mathcal{U}_{ \tilde \rho}[ \tilde f ]$. In Subsection \ref{kiwi}, we apply the Feshbach-Schur map to this Feshbach-Schur pair and re-Wick order the resulting operator  $\tilde{H}(\vec{p},z)$, so as to bring it into the ``normal form'' shown in Eq. \eqref{normalform}. We then verify that the sequence of kernels $\tilde{\underline{w}}$ has properties (a), (b), (c) and satisfy the estimates in \eqref{n1_n1}--\eqref{n1_n2}. Finally, in Subsection \ref{d_and_e}, we prove that properties (d) and (e) hold, too.

\subsubsection{Proof of applicability of the Feshbach-Schur map.} \label{Fesh1}

Let $( \vec{p} , z ) \in \mathcal{U}_{ \tilde \rho }[ \tilde f ]$. We have that
\begin{align*}
\vert w_{0,0} (\vec{p},z,r,\vec{l})  + \mathcal{E} ( \vec{p},z) \vert &\geq  \vert w_{0,0}  (\vec{p},z,r,\vec{l}) \vert -  \vert  \mathcal{E} (\vec{p},z) \vert \\
& \geq \vert  r -  \vec{p} \cdot \vec{l}  \vert - \frac{ \mu }{ 4 }r   - \mu \tilde \rho / 16  \\
& \geq r \left( \mu - \frac{ \mu }{ 4 }  \right) - \mu \tilde \rho / 16 \\
&\ge \frac{ \mu}{2} \tilde {\rho},
\end{align*}
for all $(\vec{p},z) \in \mathcal{U}_{\tilde \rho}[ \tilde f ]$ and $r \ge 3 \tilde \rho /4$. Therefore,   the restriction of  $W_{0,0} (\vec{p},z)  + \mathcal{E}(\vec{p},z)$ to $\text{Ran}(\overline{\chi}_{ \tilde \rho }(H_f))$ is bounded-invertible and
\begin{align*}
&\|    [W_{0,0}(\vec{p},z) + \mathcal{E}(\vec{p},z)]^{-1}  \overline{\chi}_{ \tilde \rho}(H_f) \|  = \mathcal{O} \Big ( \frac{ 1 }{\mu \tilde \rho } \Big ).
\end{align*} 

Next, we prove that  the restriction of  $H_{\overline{\chi}_{\tilde{\rho}}(H_f)}(\vec{p},z)$   to $\text{Ran}(\overline{\chi}_{ \tilde \rho }(H_f))$ is bounded-invertible.  It follows from \eqref{Bb00_n1} and \eqref{eq:a1} that
\begin{equation}
\label{uneq_n1}
\|W_{m,n} (\vec{p},z) \| \leq \rho^{2(m+n)} \|w_{m,n} (\vec{p},z) \|_{\frac{1}{2}} (8 \pi)^{\frac{m+n}{2}} \le \rho^{2(m+n)} (8 \pi)^{\frac{m+n}{2}} \rho^{- (m+n) + 1}  \mathbf{D}^{m+n} \mu.
\end{equation}
Summing \eqref{uneq_n1} over $m+n \geq 1$, we find that
\begin{equation}\label{4.27}
\|W_{\geq 1} (\vec{p},z) \| = \mathcal{O} \left( \mathbf{D}  \rho^{2} \mu \right) ,
\end{equation}
which yields
\begin{align*}
&\|    [W_{0,0}(\vec{p},z) + \mathcal{E} (\vec{p},z)]^{-1}  \overline{\chi}_{\tilde \rho}(H_f) W_{\geq 1}(\vec{p},z) \|  = \mathcal{O} \left( \mathbf{D} \rho^\varepsilon \right ).
\end{align*} 
Since $\rho^{\varepsilon} \ll \mathbf{D}^{-1}$, we deduce that $(H(\vec{p},z), W_{0,0}(\vec{p},z)  + \mathcal{E}(\vec{p},z))$ is a Feshbach-Schur pair associated to $\chi_{ \tilde \rho }(H_f)$, for all  $(\vec{p},z) \in \mathcal{U}_{\tilde \rho} [z( \vec{p} )]$, and that  $\tilde H (\vec{p},z)$ is well-defined.

\subsubsection{Re-Wick ordering  and proof of the inductive properties (a), (b) and (c)} \label{kiwi}
We temporarily omit the argument $(\vec{p},z)$. To shorten our notation, we introduce the operators
\begin{equation}
H_{\overline{\chi}_{\tilde{\rho}}}:= W_{0,0}  + \mathcal{E} +  \overline{\chi}_{ \tilde \rho }(H_f) W_{\geq 1} \overline{\chi}_{ \tilde \rho }(H_f)  
\end{equation}
and 
\begin{equation}
 R( H_f , \vec{P}_f ) := \frac{ \overline{\chi}^{2}_{ \tilde \rho }(H_f)}{W_{0,0} (H_f,\vec{P}_f) + \mathcal{E} } .
\end{equation}
 We have  that 
\begin{equation}
\label{3.24'_n1}
\begin{split}
\tilde H &=(W_{0,0} + \mathcal{E} ) \mathds{1}_{H_f \leq \tilde \rho } + \chi_{ \tilde \rho }(H_f) W_{\geq 1} \chi_{ \tilde \rho }(H_f) + \tilde{V} ,
\end{split}
\end{equation}
where
\begin{equation*}
 \tilde{V} = - \chi_{ \tilde \rho }(H_f) W_{\geq 1} \overline{\chi}_{ \tilde \rho  }(H_f) \left[ H_{\overline{\chi}_{\tilde{\rho}}}\right]^{-1}_{\vert \text{Ran}( \overline{\chi}_{ \tilde \rho }(H_f) ) } \overline{\chi}_{ \tilde \rho }(H_f)  W_{\geq 1} \chi_{ \tilde \rho }(H_f). 
\end{equation*}
For any $L \in \mathbb{N}$ and any $M_1,...,M_L \in \mathbb{N}$, we define  $\underline{M} := (M_1,...,M_L)$. The Neumann expansion for $\tilde{V}$ reads
\begin{equation} 
 \tilde{V} = - \sum_{L=2}^{\infty} (-1)^{L}   \underset{ \underline{M},\underline{N} ; \text{ }M_i + N_i \geq 1 }{\sum} \tilde{V}_{\underline{M},\underline{N}},
\end{equation} 
where
\begin{equation}
\label{op1_n1}
\begin{split}
\tilde{V}_{\underline{M},\underline{N}}:=  \chi_{ \tilde \rho }(H_f)     \prod_{i=1}^{L-1} \left[   W_{M_i,N_i}   R(H_f,\vec{P}_f) \right]   W_{M_L,N_L}  \chi_{ \tilde \rho }(H_f).
\end{split}
\end{equation} 
To normal order $\tilde{V}_{\underline{M},\underline{N}}$, we pick out $m_1/n_1$ creation/annihilation operators from the $M_1/N_1$ creation/annihilation operators available in $ W_{M_1,N_1}$,..., $m_L/n_L$ creation/annihilation operators from the $M_L/N_L$ creation/annihilation operators available in $ W_{M_L,N_L}$, and contract all the remaining annihilation and creation operators. As the monomials $W_{M_i,N_i}$'s are symmetric in $\underline{k}_1,...,\underline{k}_{M_i}$, and in $\underline{\tilde{k}}_1,...,\underline{\tilde{k}}_{N_i}$, there are 
\begin{equation*}
C_{\underline{m},\underline{n} }^{\underline{M}, \underline{N}} :=\prod_{i=1}^{L}  \binom{M_i}{m_i} \binom{N_i}{n_i} 
\end{equation*}
contraction schemes giving rise to the same contribution. We then pull through all the remaining  $m_1 +....+ m_L$ uncontracted creation operators to the left, and the remaining $n_1 +....+ n_L$ uncontracted annihilation operators to the right. This causes a shift in the arguments of the operators $w_{M_i,N_i} (H_f,\vec{P}_f, \underline{K}_{i}^{(M_i,N_i)})$ and $R(H_f,\vec{P}_f)$ via the pull-through formula given in \eqref{pull}.
 The contracted part is expressed as a vacuum expectation value.  $\tilde{V}_{\underline{M},\underline{N}}$ can be rewritten in the form 
\vspace{1mm}
\begin{equation*}
\tilde{V}_{\underline{M},\underline{N}} = \underset{ \underline{m},\underline{n}; \text{ }  m_i \leq M_i, n_i \leq N_i}{\sum}  C_{\underline{m},\underline{n} }^{\underline{M}, \underline{N}} \int d \underline{K}^{(\underline{m},\underline{n})}  \text{ }a^{*}(\underline{k}^{(\underline{m})}) \langle \Omega \vert  V_{\underline{m},\underline{n} }^{\underline{M}, \underline{N}} (r,\vec{l}, \underline{K}^{(\underline{m},\underline{n})})   \Omega \rangle_{r=H_f, \vec{l}=\vec{P}_f} \text{ } a(\underline{\tilde{k}}^{(\underline{n})}), 
\end{equation*}
\normalsize
where  $a^{*}(\underline{k}^{(\underline{m})})=\prod_{i=1}^{L} a^*(\underline{k}_{i}^{(m_i)})$, see  Paragraph \ref{Par22}.  Furthermore,  if $m_i$ (or $n_i$) is equal to zero, $a^*(\underline{k}_{i}^{(m_i)})$ (or $a(\underline{\tilde{k}}_{i}^{(n_i)})$)  is  replaced by  $1$ in the above formula.   A precise expression for the operator $ V_{\underline{m},\underline{n} }^{\underline{M}, \underline{N}} (r,\vec{l}, \underline{K}^{(\underline{m},\underline{n})})$ is given in \cite{FFS}. The reader should notice that this precise expression is not necessary to pursue our analysis.  What matters is that  $ V_{\underline{m},\underline{n} }^{\underline{M}, \underline{N}} (r,\vec{l}, \underline{K}^{(\underline{m},\underline{n})})$  is a product of $L-1$ operators $R(H_f+r,\vec{P}_f+ \vec{l})$ with shifted arguments,  together with the truncated kernels $ {W}_{m_i,n_i}^{M_i,N_i}(r,\vec{l},\underline{k}^{(m_i)}_i,\underline{\tilde{k}}^{(n_i)}_i)$ with shifted arguments, where
\begin{equation}
\label{quadra}
\begin{split}
 {W}_{m_i,n_i}^{M_i,N_i}(r,\vec{l}, & \underline{k}^{(m_i)}_i,\underline{\tilde{k}}^{(n_i)}_i)= \mathds{1}_{H_f \leq \rho} \text{ } \int d \underline{X}^{(M_i-m_i,N_i-n_i)}_i   \text{ } a^{*} (\underline{x}^{(M_i-m_i)}_i)\\
& \qquad  w_{M_i,N_i}  \text{ }  (H_f + r,\vec{P}_f + \vec{l}, \underline{K}_{i}^{(m_i,n_i)}, \underline{X}^{(M_i-m_i,N_i-n_i)}_i)   \text{ } a (\underline{\tilde{x}}^{(N_i-n_i)}_i)  \text{ }  \mathds{1}_{H_f \leq \rho}
\end{split}
\end{equation} 
and
\begin{equation}
\label{forms_n1}
\begin{split}
& V_{\underline{m}, \underline{n} }^{\underline{M}, \underline{N}}  (r,\vec{l}, \underline{K}^{(\underline{m},\underline{n})}) = \chi_{ \tilde \rho }(r + \tilde{r}_0) \prod_{i=1}^{L-1} \Big[W_{m_i,n_i}^{M_i,N_i}(r+r_i,\vec{l}+\vec{l}_i, \underline{k}^{(m_i)}_i,\underline{\tilde{k}}^{(n_i)}_i) \\
& \text{ }R(H_f + r + \tilde{r}_i, \vec{P}_f+\vec{l} + \vec{\tilde{l}}_i) \Big]   W_{m_Ln_L}^{M_L,N_L}(r+r_L,\vec{l}+\vec{l}_L, \underline{k}^{(m_L)}_L,\underline{\tilde{k}}^{(n_L)}_L) \chi_{ \tilde \rho }(r + \tilde{r}_L).
\end{split}
\end{equation}
\vspace{1mm}

\noindent The terms $r_i$'s, $\tilde{r}_i$'s, $\vec{l}_i$'s, $\vec{\tilde{l}}_i$'s are the shifts that come from the pull-through formula, see \eqref{def_mu_L}. Therefore, we deduce that   there exists a sequence of kernels $\tilde w_{M,N} : \mathcal{U}_{\tilde \rho} [\tilde{f} ] \times \mathcal{B}_{ \tilde \rho } \times \underline{B}_{ \tilde \rho }^{(m,n)} \rightarrow \mathbb{C}$ and a function $\tilde{\mathcal{E}} : \mathcal{U}_{\tilde \rho} [\tilde{f}] \rightarrow \mathbb{C}$  such that 
\begin{equation}
\label{dent_n1}
\tilde H (\vec{p},z)= H[\tilde{\underline{w}} (\vec{p},z), \tilde{\mathcal{E}} (\vec{p},z)]=\sum_{M+N \geq 0} \tilde W_{M,N}(\vec{p},z) + \tilde{\mathcal{E}}(\vec{p},z),
\end{equation}
where we have introduced the arguments $(\vec{p},z)$ again in \eqref{dent_n1}.  The associated kernels are given by 
\begin{align}
\tilde w_{M,N} (\vec{p},z,r,\vec{l}, \underline{k}^{(M)},\underline{\tilde{k}}^{(N)})&=\sum_{L=1} (-1)^{L+1} \underset{ \tiny \begin{array}{c} m_1+...+m_L =M \\ n_1+...+n_L =N \end{array}}{\sum}  \underset{ \tiny \begin{array}{c} p_1,...,p_L \\q_1,...,q_L  \\  m _i+n_i+p_i+q_i \geq 1\end{array}}{\sum} C^{\underline{m+p},\underline{n+q}}_{\underline{m},\underline{n}} \notag \\
&  \qquad \langle \Omega \vert  V_{\underline{m},\underline{n} }^{\underline{m+p}, \underline{n+q}} (\vec{p},z,r,\vec{l}, \underline{K}^{(M,N)}) \Omega \rangle_{\text{sym}}, \label{noyau1_n1}
\end{align}
 for $M+N \geq 1$, where $\underline{m+p}=(m_1+p_1,...,m_L+p_L)$, and $f_{\text{sym}}$ denotes the symmetrization of $f$ with respect to the variables $\underline{k}^{(M)}$ and $\underline{\tilde{k}}^{(N)}$.  Notice that we have set $p_i:=M_i-m_i$, $q_i:=N_i-n_i$, $i=1,...,L$.  Also note that we started the sum in \eqref{noyau1_n1} at $L=1$, and not $L=2$, since we included the contribution of the term $\chi_{ \tilde \rho }(H_f) W_{\geq 1}(\vec{p},z) \chi_{ \tilde \rho }(H_f) $ in \eqref{3.24'_n1}. Similarly,  
\begin{equation}
\label{noyau0_n1}
\begin{split} 
\tilde w_{0,0}(\vec{p},z,r,\vec{l}) + & \tilde{\mathcal{E}} (\vec{p},z) = \Big [w_{0,0} (\vec{p},z,r,\vec{l}) + \mathcal{E} (\vec{p},z)  \\
& + \sum_{L=2}^{\infty} (-1)^{L-1} \underset{ \tiny \begin{array}{c} p_1,...,p_L \\q_1,...,q_L  \\  p_i+q_i \geq 1\end{array}}{\sum} \langle \Omega \vert V_{\underline{0},\underline{0}}^{\underline{p},\underline{q}}(r,\vec{l})   \Omega \rangle \Big] \mathds{1}_{r \leq  \tilde \rho }.
\end{split}
\end{equation}
We now bound \eqref{noyau1_n1} and \eqref{noyau0_n1}. Thanks to \eqref{eq:a1},
\begin{equation}
\label{BW_n1}
\|W^{M_i,N_i}_{m_i,n_i}(r,\vec{l},\underline{k}^{(m_i)}_{i}, \underline{\tilde{k}}^{(n_i)}_{i}) \| \leq \vert \underline{k}^{(m_i)}_{i} \vert^{1/2} \text{ }  \vert \underline{\tilde{k}}^{(n_i)}_{i} \vert^{1/2}  \|w_{M_i,N_i} \|_{\frac{1}{2}} \text{ } (\sqrt{8 \pi} \rho^2 )^{M_i+N_i -m_i-n_i}.
\end{equation}
The same bounds  are satisfied by  the partial derivatives with respect to $r$ and $l_q$ with $ \|w_{M_i,N_i} \|_{\frac{1}{2}} $ replaced by $ \| \partial_{ \# } w_{M_i,N_i} \|_{\frac{1}{2}} $. Since

\begin{equation*}
\vert w_{0,0} (\vec{p},z,r,\vec{l})  + \mathcal{E} (\vec{p},z) \vert  \geq  \frac{ \mu}{2} \tilde \rho , 
\end{equation*}
 there exists a numerical  constant $C >0$ such  that
\begin{align}
\label{BFF}
\underset{(r,\vec{l}), r \ge \frac{3}{4}  \tilde \rho ,\vert \vec{l} \vert \leq r  } {\sup} \vert R(\vec{p},z,r,\vec{l}) \vert &\leq \frac{C}{ \tilde \rho \mu} , \\
 \underset{(r,\vec{l}), r \ge \frac{3}{4} \tilde \rho ,\vert \vec{l} \vert \leq r } {\sup} \vert  \partial_{\sharp} R(\vec{p},z,r,\vec{l}) \vert  & \leq  \frac{C}{ \tilde \rho^{2}   \mu^2}.
\end{align}
Suppressing the argument $(\vec{p},z,r,\vec{l}, \underline{K}^{(\underline{m},\underline{n})})$ on the left side to shorten our  formulas, it follows immediately from  \eqref{BW_n1} and \eqref{forms_n1} that
\begin{equation*}
\|   V_{\underline{m},\underline{n} }^{\underline{m+p}, \underline{n+q}} \| \leq \frac{C^L}{( \mu \tilde \rho )^{L-1}}  \prod_{i=1}^{L} \left[   \vert \underline{k}^{(m_i)}_i \vert^{1/2} \vert \tilde{\underline{k}}^{(n_i)}_i \vert^{1/2}    \| w_{m_i+p_i,n_i+q_i} (\vec{p},z) \|_{\frac{1}{2}} (\sqrt{8 \pi} \rho^2 )^{p_i+q_i} \right],
\end{equation*} 
and,
\begin{equation*}
\begin{split}
\|   \partial_{r } V_{\underline{m},\underline{n} }^{\underline{m+p}, \underline{n+q}}  & \| \leq L  \frac{C^L}{(  \mu \tilde \rho )^{L-1}}  \sup_{j} \prod_{i=1}^{L}  \Big[   \vert \underline{k}^{(m_i)}_i \vert^{1/2} \vert \tilde{\underline{k}}^{(n_i)}_i \vert^{1/2}  \Big( (1-\delta_{ij}) \| w_{m_i+p_i,n_i+q_i} (\vec{p},z) \|_{\frac{1}{2} }  \\
& + \delta_{ij}   \| \partial_{r}  w_{m_i+p_i,n_i+q_i} (\vec{p},z) \|_{\frac{1}{2}} \Big) (\sqrt{8 \pi}\rho^2)^{p_i+q_i} \Big]\\
&+ (L+1) \frac{C^L}{( \mu \tilde \rho  )^{L}}  \prod_{i=1}^{L} \left[   \vert \underline{k}^{(m_i)}_i \vert^{1/2} \vert \tilde{\underline{k}}^{(n_i)}_i \vert^{1/2}  \| w_{m_i+p_i,n_i+q_i} (\vec{p},z) \|_{\frac{1}{2}} (\sqrt{8 \pi}\rho^2)^{p_i+q_i} \right]
\end{split}
\end{equation*}

\noindent  \noindent Using  the inequality $$ \binom{m_i+p_i}{m_i}  \leq 2^{m_i+p_i},$$ we deduce that
 \begin{equation*}
\begin{split}
\| \tilde w_{M,N} (\vec{p},z) \|_{\frac{1}{2} } \leq  \sum_{L=1}^{\infty}   & \frac{ 4^{M+N} C^L}{ (\mu \tilde \rho )^{L-1}}  \underset{ \underset{m_i+n_i+p_i +q_i \geq 1 }{\underline{m}, \underline{n}, \underline{p}, \underline{q}} }{\sum} \prod_{i=1}^{L} \Big( \|w_{m_i+p_i,n_i+q_i}(\vec{p},z) \|_{\frac{1}{2}}  \\
& (2 \sqrt{8 \pi} \rho^2 )^{p_i+q_i}  \left( \frac{1}{2} \right)^{m_i+n_i}\Big).
\end{split}
\end{equation*}
\normalsize
Inequality \eqref{Bb00_n1} for the norms of the kernels implies that
\begin{align*}
 \|w_{m_i+p_i,n_i+q_i}(\vec{p},z) \|_{\frac{1}{2}}  & \leq \rho^{- (m_i+n_i+p_i+q_i) + 1} \mathbf{D}^{m_i+n_i+p_i+q_i} \mu  \sin(\vartheta)  \\
 & \le  \big( \frac{1}{32 C} \tilde \rho \big)^{- ( m_i+n_i+p_i+q_i ) + 1 }    \mathbf{D}^{m_i+n_i+p_i+q_i}  \mu  \sin(\vartheta),
 \end{align*}
where we have used that $m_i+n_i+p_i+q_i \geq 1$. It then follows that 
 \begin{align*}
\| \tilde w_{M,N} (\vec{p},z) \|_{\frac{1}{2} } &\leq  \Big( \frac{128 C  \mathbf{D}}  {\tilde \rho } \Big)^{M+N}  \sum_{L=1}^{\infty}   \frac{  (   \mu   \sin(\vartheta)   \tilde \rho /32 )^{L}}{ ( \mu \tilde \rho )^{L-1}}  \left( \sum_{p+q \geq 0}  \Big( \frac{ 64 C  \sqrt{8 \pi} \rho^{2}  \mathbf{D}}{ \tilde \rho }\Big)^{p+q}  \sum_{m+n \geq 0} \frac{1}{2^{m+n}}\right)^L\\
&\leq   \Big( \frac{128 C  \mathbf{D}}  {\tilde \rho } \Big)^{M+N}   \mu \tilde \rho   \sin(\vartheta) ,
 \end{align*}
where we have used that $\rho^{ \varepsilon} <  ( 128 C \sqrt{8 \pi}  \mathbf{D})^{-1}$, in order to sum over $p+q$  on the right-hand side. A similar bound is satisfied by the partial derivatives with respect to $r$ and $l_q$, $q=1,...,3$, with $\mu \tilde \rho \sin(\vartheta)$ replaced by $1$.  This shows that $\tilde{w}_{M,N}$ satisfies property $(c)$, with $\mathbf{D}$ replaced by $\mathbf{CD}$ provided that $\mathbf{C}$ is chosen large enough.  It is important to remark that  the constant $\mathbf{C}>0$ can be chosen to be independent of the 'problem parameters' and  of  the constant $\mathbf{D}$. This property of $\mathbf{C}$ is needed to implement our inductive construction.
 
Likewise, 
\begin{align*}
\| \tilde w_{0,0} (\vec{p},z) + & \tilde{ \mathcal{E} } (\vec{p},z) - w_{0,0} (\vec{p},z) -\mathcal{E} (\vec{p},z) \|_{\infty}  \\[4pt]
&\leq \sum_{L=2}^{\infty}  \frac{ (  C   \sin(\vartheta)   \rho \mu  )^{L}}{ ( \tilde \rho \mu  )^{L-1}}  \left( \sum_{p+q \geq 1} \left[ 2   \mathbf{D} \sqrt{8 \pi} \rho  \right]^{p+q} \right)^{L}\\[4pt]
& \leq \tilde{C} \mathbf{D}^2 \rho^4 \tilde \rho^{-1} \mu  \sin^2(\vartheta)= \tilde{C} \mathbf{D}^2 \rho^{2+\varepsilon} \mu \sin^2(\vartheta),
\end{align*}
and a similar bound is satisfied for $ \partial_\# ( \tilde w_{0,0} (\vec{p},z) - w_{0,0} (\vec{p},z) )$, which proves \eqref{n1_n1}--\eqref{n1_n2}, provided that $\mathbf{C}$ is chosen large enough.

\subsubsection{Properties (d) and (e)}\label{d_and_e}

We first observe that the map $(\vec{p},z) \mapsto \tilde{H}(\vec{p},z)$ is analytic on $\mathcal{U}_{\tilde{\rho}}[\tilde{f}]$. This follows from the facts that  the smooth Feshbach-Schur map preserves analyticity in $(\vec{p},z)$ (because the Neumann series converges uniformly on $\mathcal{U}_{\tilde{\rho}}[\tilde{f}]$) and that the maps $(\vec{p},z) \mapsto W_{\geq 1 }(\vec{p},z)  \in \mathcal{L}(\mathcal{H}_f)$ and  $(\vec{p},z) \mapsto W_{0,0 }(\vec{p},z)  + \mathcal{E}(\vec{p},z)$ are analytic on $\mathcal{U}_{\tilde{\rho}}[\tilde{f}]$ (because, by assumption, $ (\vec{p},z) \mapsto H(\vec{p},z)$  is analytic on the open set  $ \mathcal{U}_{\rho}[f] \supset\mathcal{U}_{\tilde{\rho}}[\tilde{f}]$). We refer the reader to \cite{FFS} for more details. Therefore $(\vec{p},z) \mapsto \tilde{\mathcal{E}} (\vec{p},z)  = \langle \Omega \vert  \tilde{H}(\vec{p},z)  \Omega \rangle$ is analytic in $(\vec{p},z)$, too.
  
We now prove property (e). Let $\vec{p} \in U_{\theta}[\vec{p}^*]$ and $z = \tilde{f}(\vec{p}) + \beta r_{\tilde{\rho}}  \in D( \tilde{f}(\vec{p}),r_{\tilde{\rho}})$, with $0 \leq \vert  \beta  \vert \leq 2/3$. The triangle inequality, the inequality \eqref{4.47_n1}, and the hypothesis $\rho^{1-\varepsilon} \ll 1$ imply that $z \in  \overline{D}(f(\vec{p}), 2r _{\rho}/3)$.  We consider the  circular  contour $\mathcal{C}$ centered at $\tilde{f}(\vec{p})$ and with radius $r_{\mathcal{C}}= 3 r_{\tilde{\rho}}/4$. We have that  $ \{ \vec{p}\} \times \mathcal{C}  \subset \mathcal{U}_{\tilde{\rho}}[\tilde{f}] \subset   \mathcal{U}_{\rho}[f] $, and, by Cauchy's formula,
\begin{equation*}
\begin{split}
\vert \partial_z \tilde{\mathcal{E}}(\vec{p},z) - \partial_z \mathcal{E}(\vec{p},z) \vert& \leq \frac{1}{2\pi} \Big \vert  \int_{\mathcal{C}} dz' \frac{\tilde{\mathcal{E}}(\vec{p},z') -\mathcal{E}(\vec{p},z') }{ (z-z')^2} \Big \vert \\
& \leq \frac{3 }{ 4 (3/4-\vert\beta\vert)^2 } \frac{ \mathbf{C} \mathbf{D}^{2}  \mu \sin^2(\vartheta)  \rho^{2+\varepsilon}}{r_{\tilde{\rho}} } ,
\end{split}
\end{equation*}
where we have used \eqref{n1_n1} and the fact that $w_{0,0}(\vec{p},z,0,\vec{0}) = 0 = \tilde w_{0,0}(\vec{p},z,0,\vec{0})$, see property (a). It follows that 
\begin{equation} \label{lmim}
\vert \partial_z \tilde{\mathcal{E}}(\vec{p},z) - \partial_z \mathcal{E}(\vec{p},z) \vert  \leq    \frac{96 }{ 4 (3/4-\vert\beta\vert)^2 }    \mathbf{C}    \mathbf{D}^{2} \rho^{2\varepsilon}.
\end{equation}
Since
\begin{align*}
\vert \partial_z \mathcal{E} (\vec{p},z) +1 \vert & \le 1/4,
\end{align*}
 for any $z \in \overline{D}(\tilde{f}(\vec{p}),\frac{2}{3} r_{\tilde{\rho}})$,  we deduce that 
 \begin{equation} \label{4.44_n1}
\vert \partial_z \tilde{\mathcal{E}}(\vec{p},z) + 1 \vert \le \frac{1}{2} ,
\end{equation}
for any $z \in \overline{D}(\tilde{f}(\vec{p}),\frac{2}{3} r_{\tilde{\rho}})$, where we use the hypothesis that 
$\rho^{\varepsilon} \ll \mathbf{D}^{-1}$.
We estimate $\vert   \tilde{\mathcal{E}}(\vec{p},z) -\tilde{f}(\vec{p}) +z \vert$ on the circle $\partial D(\tilde{f}(\vec{p}),\frac{2}{3}r_{\tilde{\rho}})$. Let $z \in  \overline{D}(\tilde{f}(\vec{p}),\frac{2}{3} r_{\tilde{\rho}})$.  We obtain from \eqref{4.44_n1} that
\begin{align}
\vert \tilde{\mathcal{E}}(\vec{p},z) - \tilde{f}(\vec{p})+z \vert& \le \vert \tilde{\mathcal{E}}(\vec{p},z) - \tilde{\mathcal{E}}(\vec{p},\tilde{f}(\vec{p})) +z- \tilde{f}(\vec{p})  \vert + | \tilde{\mathcal{E}}(\vec{p},\tilde{f}(\vec{p})) | \notag \\
& \le \frac{1}{2} \vert z - \tilde{f}(\vec{p}) \vert + \vert \tilde{\mathcal{E}}(\vec{p},\tilde{f}(\vec{p})) - \mathcal{E} (\vec{p},\tilde{f}(\vec{p})) \vert \notag \\
& \le  \frac{1}{2} \vert z - \tilde{f}(\vec{p}) \vert +  \mathbf{C}   \mathbf{D}^{2} \mu \sin^2(\vartheta) \rho^{2+ \varepsilon} , \label{eq:estim_n1}
\end{align}
where we have used the fact that $\mathcal{E} (\vec{p},\tilde{f}(\vec{p})) = 0$, along with \eqref{n1_n1}. If $\rho$ is sufficiently small then
\begin{equation*}
\mathbf{C}  \mathbf{D}^{2}   \sin^2(\vartheta) \rho^{2 + \varepsilon} \mu  < \frac{r_{\tilde{\rho}}}{3} ,
 \end{equation*} 
and hence
\begin{equation}
\vert \tilde{\mathcal{E}}(\vec{p},z) - \tilde{f}(\vec{p})+z \vert <  \vert z-  \tilde{f}(\vec{p}) \vert,
\end{equation}
for any  $z \in \partial D(\tilde{f}(\vec{p}),\frac{2}{3} r_{\tilde{\rho}})$.  Rouch\'e's theorem then implies that 
$z \mapsto  \tilde{\mathcal{E}}(\vec{p},z)$ has a unique zero, $\tilde{\tilde{f}}(\vec{p})$, inside the disk of radius $2r_{\tilde{\rho}}/3$. 

To prove \eqref{4.47_n1}, we observe that \eqref{eq:estim_n1}, with $z = \tilde{\tilde{f}}( \vec{p} )$, yields
\begin{align*}
\frac{1}{2}  \vert \tilde{ \tilde{f} }(\vec{p})  -\tilde{f}(\vec{p}) \vert  & \le  \mathbf{C} \mathbf{D}^{2}  \mu \sin^2(\vartheta) \rho^{2+ \varepsilon} = 32   \mathbf{C} \mathbf{D}^{2}   \sin( \vartheta ) \rho^{2\varepsilon} r_{ \tilde \rho } .
\end{align*}
The hypothesis that $\rho^{\varepsilon} \ll \mathbf{D}^{-1}$ implies that
\begin{equation}
  \vert \tilde{\tilde{f}}(\vec{p})  -\tilde{f}(\vec{p}) \vert \le \frac{r_{\tilde{\rho}}}{2}.
\end{equation}

Next, we verify that $\tilde{\mathcal{E}}$  satisfies \eqref{eq:eps_n1} with $\tilde{\rho}$ replaced by  $\tilde{\tilde{\rho}} := \tilde{\rho}^{2-\varepsilon}$. The triangle inequality directly implies that  $\mathcal{U}_{\tilde{\tilde{\rho}}}[\tilde{\tilde{f}}] \subset \mathcal{U}_{\tilde{\rho}}[\tilde{f}]$, and we have that
\begin{align}
\vert  \tilde{\mathcal{E}}(\vec{p},z)  \vert  & \le \vert  \tilde{\mathcal{E}}(\vec{p},z)   + z -\tilde{\tilde f}(\vec{p}) \vert +   \vert z  -  \tilde{ \tilde f}(\vec{p})  \vert \notag \\
& \le \frac{3}{2} \vert z-\tilde{ \tilde f}(\vec{p}) \vert \leq 2 r_{\tilde{\tilde \rho}}   \le \frac{1}{16}  \tilde{ \tilde \rho} \mu , \label{4.47}
\end{align}
for any $z \in  D(\tilde{ \tilde f}(\vec{p}),r_{\tilde{ \tilde \rho}}) \subset \overline{D}(\tilde{f}(\vec{p}),2r_{\tilde{\rho}}/3)$, where \eqref{4.44_n1} has been used in the second inequality.
\vspace{2mm}

To complete our proof, we show that the map $\vec{p} \mapsto \tilde{ \tilde f}(\vec{p})$ is analytic on $U_{\theta}[\vec{p}^*]$ and that the set $\mathcal{U}_{\tilde{ \tilde \rho}}[\tilde{ \tilde f}]$ is open. Let  $ (\vec{p}_0,\tilde{ \tilde f}(\vec{p}_0)) \in \mathcal{U}_{\tilde{ \tilde \rho}}[\tilde{ \tilde f}] \subset \mathcal{U}_{ \tilde \rho}[\tilde{  f}]$. Since $\vert \partial_{z} \tilde{\mathcal{E}}(\vec{p}_0,z) + 1 \vert  <1/2$, for all $z \in  D(\tilde{f}(\vec{p}_0),2 r_{\tilde{\rho}}/3)$, the inverse function theorem implies that the map
\begin{equation}
(\vec{p},z) \mapsto (\vec{p},\tilde{\mathcal{E}}(\vec{p},z))
\end{equation}
is  biholomorphic in a small polydisk $\mathcal{D}_0 \subset  \mathcal{U}_{\tilde{\rho}}[\tilde{f}]$ around  $(\vec{p}_0,\tilde{\tilde f}(\vec{p}_0))$.  We denote  its  inverse by $h$. The image of  $\mathcal{D}_0$, denoted $\mathcal{D}_1$, contains the point $(\vec{p}_0,0)$, because 
$\tilde{\mathcal{E}}(\vec{p}_0,\tilde{\tilde f}(\vec{p}_0))=0$.  As $\mathcal{D}_1$ is open, it  contains $(\vec{p},0)$, for any $\vec{p}$ sufficiently close to $\vec{p}_0$. Therefore, $h  (\vec{p},0)$   coincides  with $(\vec{p},\tilde{ \tilde f} ( \vec{p}))$, for $\vec{p}$ near $\vec{p}_0$, and we deduce that $\vec{p} \mapsto \tilde{\tilde f}(\vec{p})$ is holomorphic in a neighborhood of $\vec{p}_0$. By letting $\vec{p}_0$ vary one sees that this implies that $\vec{p} \mapsto \tilde{\tilde f}(\vec{p})$ is holomorphic on $U_{\theta}[\vec{p}^*]$.  In particular  the set 
\begin{equation}
\label{UU2}
\mathcal{U}_{\tilde{ \tilde \rho}}[\tilde{ \tilde f}]:= \left\{  (\vec{p},z) \in U_{\theta}[\vec{p}^*] \times  \mathbb{C} \mid  z \in D(\tilde{ \tilde f}(\vec{p}), r_{\tilde{ \tilde \rho}}) \right\}. 
\end{equation}
is an open subset of $\mathbb{C}^4$.

\subsection{Inductive construction of the operators $H^{(j)}(\vec{p},z)$}
\label{induu}
Let $0 < \varepsilon< 1$ and let $\rho_0$ be such that $0< \rho_0  \ll  \min(1, \delta_0)  \mu $. Let  $\gamma < \rho_0$, where  $\gamma$ is the constant that appears in the estimates \eqref{Bb00}--\eqref{Bb000}. For $\lambda_0$ and $\theta = i \vartheta$ as in Lemma \ref{borne0},  we deduce from Section \ref{firstde} that the sequence of kernels $ w^{(0)}_{m,n}$ and the function $\mathcal{E}^{(0)}$  corresponding to the operator $H^{(0)}(\vec{p},z)$  satisfy properties  (a)--(e) and \eqref{4.9}--\eqref{partialz_n1}, with $\mathbf{D}=1$, $f(\vec{p})=E_{i_0}$, $\tilde{f}(\vec{p})=z^{(0)}(\vec{p})$, and $\rho=\rho_0$. 
We consider the  sequences $(\rho_{j})_{j \in \mathbb{N}_0}$ and $(r_{j})_{j \in \mathbb{N}_0}$ introduced in \eqref{sequence}.
By repeated application of Lemma \ref{borne0_n1}, we construct a sequence of effective Hamiltonians $H^{(j)}( \vec{p} , z )$ at scale $\rho_j$ by setting
\begin{equation}\label{eq:stepj}
 H^{(j+1)}(\vec{p},z)= F_{\chi_{   \rho_{j+1} }(H_f)} [H^{(j)} (\vec{p},z), W^{(j)}_{0,0}(\vec{p},z) + \mathcal{E}^{(j)}(\vec{p},z)]_{ \mathds{1}_{H_f \leq \rho_{j+1}}[\mathcal{H}_f] } .
\end{equation}
 In order to be able to apply Lemma \ref{borne0_n1} in each induction step,  we only need to show that the properties \eqref{4.9}--\eqref{partialz_n1} hold in each step of the induction and that 
 $\rho_{j}^{\varepsilon} \mathbf{C}^{j} \ll1$, for all $j \in \mathbb{N}_0$. This is accomplished in the following lemma.

  \begin{lemma} \label{induc}
Suppose that  \begin{equation}\label{rhoC}
 \sum_{k=1}^{\infty} \mathbf{C}^{2k-1} \rho_{0}^{  \varepsilon (2-\varepsilon)^{k-1}}  + \sqrt{3} \rho_0 \ll \mu \qquad \text{and} \qquad \rho_{0}^{1-\varepsilon} \ll1.
  \end{equation} 
  Let   $\gamma< \rho_0$ and choose  $\lambda_0$   and $\theta=i \vartheta$ as in Lemma \ref{borne0}. For any $j \in \mathbb{N}_0$, there exists a function $z^{(j-1)}: U_{\theta}[\vec{p}^*] \rightarrow \mathbb{C}$, a sequence of kernels $w_{m,n}^{(j)} : \mathcal{U}_{\rho_j}[z^{(j-1)}] \times \mathcal{B}_{\rho_j} \times \underline{B}_{\rho_j}^{(m,n)} \rightarrow \mathbb{C}$ and  a  function $\mathcal{E}^{(j)}:  \mathcal{U}_{\rho_j}[z^{(j-1)}] \rightarrow \mathbb{C}$ such that \eqref{eq:stepj} holds, and
\vspace{2mm}
\begin{enumerate}[(a)]
\item $w_{0,0}^{(j)}(\vec{p},z,\cdot,\cdot)$ is $\mathcal{C}^1$ on $\mathcal{B}_{\rho_j}$, and  $w_{0,0}^{(j)}(\vec{p},z,0,\vec{0})=0$,  for all  $ (\vec{p},z) \in \mathcal{U}_{\rho_j}[z^{(j-1)}]$.
\vspace{1mm}

\item  $w_{m,n}^{(j)} (\vec{p},z, \cdot, \cdot, \underline{K}^{(m,n)})$, $m+n \geq 1$,  are $\mathcal{C}^1$ on $\mathcal{B}_{\rho_j}$ for almost every $\underline{K}^{(m,n)} \in \underline{B}_{\rho_j}^{(m,n)}$ and  every $ (\vec{p},z) \in \mathcal{U}_{\rho_j}[z^{(j-1)}]$. Moreover, $w_{m,n}^{(j)}(\vec{p},z, \cdot, \cdot, \underline{K}^{(m,n)})$ is symmetric in $\underline{k}^{(m)}$ and $\underline{\tilde{k}}^{(n)}$.
\vspace{1mm}

\item For all $(\vec{p},z) \in \mathcal{U}_{\rho_j}[z^{(j-1)}]$, 
\begin{align}
\label{Bb00_n11}
\| w_{m,n}^{(j)} (\vec{p},z) \|_{\frac{1}{2}}& \le \mathbf{C}^{j(m+n)} \rho_{j}^{- (m+n) + 1} \mu \sin(\vartheta) ,\\[12pt]
\| \partial_{\#} w^{(j)}_{m,n}(\vec{p},z) \|_{\frac{1}{2}}& \le  \mathbf{C}^{j(m+n)}    \rho_{j}^{- (m+n) } ,
\end{align}
for all $m+n \ge 1$, where $\partial_\#$ stands for $\partial_r$ or $\partial_{l_p}$ and $\mathbf{C}$ is the positive constant appearing in Lemma \ref{borne0_n1}.
\vspace{1mm}

\item The maps $\mathcal{E}  :\mathcal{U}_{\rho_j}[z^{(j-1)}] \rightarrow \mathbb{C}$ and $(\vec{p},z) \mapsto H[\underline{w}^{(j)}(\vec{p},z),\mathcal{E}^{(j)}(\vec{p},z)]   \in \mathcal{L}(\mathds{1}_{H_f \leq \rho_j}\mathcal{H}_f)$ are analytic on  $\mathcal{U}_{\rho_j}[z^{(j-1)}]$.
\vspace{1mm}

\item For all $\vec{p} \in   U_{\theta}[\vec{p}^*] $, the holomorphic function $z \mapsto   \mathcal{E}^{(j)} (\vec{p},z) \in \mathbb{C}$ has a unique zero $  z^{(j)}(\vec{p}) \in  D( z^{(j-1)}( \vec{p} ) , 2 r_{\rho_j} / 3 )$.  The function $z^{(j)}$ is analytic on $U_{\theta}[\vec{p}^*]$. Moreover, $ \mathcal{U}_{  \rho_{j+1}}[z^{(j)}] \subset  \mathcal{U}_{\rho_j}[z^{(j-1)}]$, and
\begin{align}
\label{EJJ}
\vert \mathcal{E}^{(j)}(\vec{p},z)\vert &\le \frac{ \mu  \rho_{j+1} }{ 16 } ,\\
 \label{distz}
 \vert z^{(j)}(\vec{p})  -z^{(j-1)}(\vec{p}) \vert   &< r_{ j}/2,
\end{align}
for all $( \vec{p} , z ) \in\mathcal{U}_{\rho_{j+1}}[z^{(j)}]$ and $\vec{p} \in U_{\theta}[\vec{p}^*]$, respectively.
\end{enumerate}
 \end{lemma}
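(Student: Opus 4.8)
The plan is to argue by induction on $j$. At step $j$ the inductive hypothesis records that $H^{(j)}(\vec{p},z)=H[\underline{w}^{(j)}(\vec{p},z),\mathcal{E}^{(j)}(\vec{p},z)]$ is a well defined operator on $\mathds{1}_{H_f\le\rho_j}[\mathcal{H}_f]$ for every $(\vec{p},z)\in\mathcal{U}_{\rho_j}[z^{(j-1)}]$, that its kernels satisfy properties $(a)$--$(e)$ of Subsection~\ref{nton+1} with $\rho=\rho_j$, $\mathbf{D}=\mathbf{C}^{j}$, $f=z^{(j-1)}$ and $\tilde{f}=z^{(j)}$, and that the two auxiliary hypotheses \eqref{4.9} and \eqref{partialz_n1} of Lemma~\ref{borne0_n1} hold at scale $\rho_j$; to make the induction close I would in fact carry the slightly stronger \emph{cumulative} bounds written below, from which \eqref{4.9} and \eqref{partialz_n1} follow. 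The base case $j=0$ is precisely the content of Section~\ref{firstde}: Lemma~\ref{fespair} provides the Feshbach--Schur pair, Lemma~\ref{borne0} provides the Wick expansion of $H^{(0)}(\vec{p},z)$ together with analyticity, symmetry of the kernels and the estimates \eqref{Bb00}--\eqref{Bb000}, and Lemma~\ref{zeros} provides the zero $z^{(0)}(\vec{p})$ with $|z^{(0)}(\vec{p})-E_{i_0}|\le\gamma\rho_0\mu\sin(\vartheta)$. Since $\gamma<\rho_0<1$ one has $\rho_0^{-\frac{1}{2}(m+n)+1}\le\rho_0^{-(m+n)+1}$, so \eqref{Bb00} implies property $(c)$ with $\mathbf{D}=1$; \eqref{Bb000} together with $\gamma+\sqrt{3}\rho_0\ll\mu$ (a consequence of \eqref{rhoC}) gives \eqref{4.9}; and a Cauchy estimate applied on a disk of radius $\tfrac{1}{3}r_0$ to $z\mapsto\mathcal{E}^{(0)}(\vec{p},z)-(E_{i_0}-z)$, whose modulus is $\le\gamma\mu\rho_0\sin(\vartheta)$ on $D(E_{i_0},r_0)$, yields $|\partial_z\mathcal{E}^{(0)}(\vec{p},z)+1|\le 96\gamma\le\tfrac{1}{4}$ on $\overline{D}(E_{i_0},\tfrac{2}{3}r_0)$, i.e.\ \eqref{partialz_n1}; finally \eqref{EJJ}, \eqref{distz} and $\mathcal{U}_{\rho_1}[z^{(0)}]\subset\mathcal{U}_{\rho_0}[E_{i_0}]$ at $j=0$ follow from $|z^{(0)}-E_{i_0}|\le\gamma\rho_0\mu\sin(\vartheta)$, the bound on $\mathcal{E}^{(0)}$, and the super-exponential gap $\rho_1=\rho_0^{2-\varepsilon}\ll\rho_0$, exactly as in Lemma~\ref{zeros}.

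For the induction step I would first verify the hypotheses of Lemma~\ref{borne0_n1} with $\rho=\rho_j$, $\mathbf{D}=\mathbf{C}^{j}$, the given $\varepsilon$, and $f=z^{(j-1)}$: the scale conditions $\rho_j^{\varepsilon}\ll\mathbf{C}^{-j}$ and $\rho_j^{1-\varepsilon}\ll1$ both follow from \eqref{rhoC} (indeed $\mathbf{C}^{j}\rho_j^{\varepsilon}=\mathbf{C}^{j}\rho_0^{\varepsilon(2-\varepsilon)^{j}}$ is dominated by the $(j{+}1)$-st term of the series there, and $\rho_j^{1-\varepsilon}\le\rho_0^{1-\varepsilon}$), properties $(a)$--$(e)$ are the inductive hypothesis, and \eqref{4.9}, \eqref{partialz_n1} are its auxiliary part. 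Lemma~\ref{borne0_n1} then produces $H^{(j+1)}(\vec{p},z)$ by formula \eqref{eq:stepj}, well defined on $\mathds{1}_{H_f\le\rho_{j+1}}[\mathcal{H}_f]$ for $(\vec{p},z)\in\mathcal{U}_{\rho_{j+1}}[z^{(j)}]$, together with kernels $w^{(j+1)}_{m,n}$ and a function $\mathcal{E}^{(j+1)}$ satisfying $(a)$--$(e)$ at scale $\rho_{j+1}$ with $\mathbf{D}$ replaced by $\mathbf{C}^{j}\cdot\mathbf{C}=\mathbf{C}^{j+1}$, $f$ replaced by $z^{(j)}$, and $\tilde{f}$ replaced by the unique zero $z^{(j+1)}$ of $z\mapsto\mathcal{E}^{(j+1)}(\vec{p},z)$ in $D(z^{(j)}(\vec{p}),2r_{\rho_{j+1}}/3)$; this already delivers $(a)$--$(e)$, hence \eqref{Bb00_n11}, \eqref{EJJ} and \eqref{distz}, at step $j+1$. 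What remains is to recover the auxiliary bounds \eqref{4.9} and \eqref{partialz_n1} at scale $\rho_{j+1}$, and this is the one place where some care is needed.

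This last point is where \eqref{rhoC} is genuinely used, and I expect it to be the main obstacle. Lemma~\ref{borne0_n1} does \emph{not} return \eqref{4.9} and \eqref{partialz_n1} among its conclusions --- its proof only yields the weaker $|\partial_z\tilde{\mathcal{E}}+1|\le\tfrac{1}{2}$ --- so one cannot simply iterate the lemma. Instead, its conclusions \eqref{n1_n2} and \eqref{n1_n1} control the \emph{increments}: for each $\partial_\#\in\{\partial_r,\partial_{l_1},\partial_{l_2},\partial_{l_3}\}$ one has $\|\partial_\#(w^{(j+1)}_{0,0}-w^{(j)}_{0,0})\|_{\infty}\le\mathbf{C}^{2j+1}\rho_j^{2\varepsilon}\sin(\vartheta)$, and --- using that $w^{(j)}_{0,0}$ and $w^{(j+1)}_{0,0}$ vanish at $(0,\vec{0})$, so that $|\mathcal{E}^{(j+1)}(\vec{p},z)-\mathcal{E}^{(j)}(\vec{p},z)|\le\mathbf{C}^{2j+1}\rho_j^{2+\varepsilon}\mu\sin^{2}(\vartheta)$ by \eqref{n1_n1} --- a Cauchy estimate on a circle of radius $\tfrac{1}{3}r_{\rho_{j+1}}$ (as in \eqref{lmim}) gives $|\partial_z\mathcal{E}^{(j+1)}(\vec{p},z)-\partial_z\mathcal{E}^{(j)}(\vec{p},z)|\le C_1\mathbf{C}^{2j+1}\rho_j^{2\varepsilon}$ for $z\in\overline{D}(z^{(j)}(\vec{p}),\tfrac{2}{3}r_{\rho_{j+1}})$, with $C_1$ numerical. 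The key geometric observation is that the disks $\overline{D}(z^{(m)}(\vec{p}),\tfrac{2}{3}r_{\rho_{m+1}})$ are \emph{nested}: by \eqref{distz}, $|z^{(m)}(\vec{p})-z^{(m-1)}(\vec{p})|<\tfrac{1}{2}r_{\rho_m}$, while $r_{\rho_{m+1}}\le\tfrac{1}{4}r_{\rho_m}$ because $r_{\rho_{m+1}}/r_{\rho_m}=\rho_m^{1-\varepsilon}\le\rho_0^{1-\varepsilon}\ll1$, whence $\overline{D}(z^{(m)}(\vec{p}),\tfrac{2}{3}r_{\rho_{m+1}})\subset\overline{D}(z^{(m-1)}(\vec{p}),\tfrac{2}{3}r_{\rho_m})$ (this chain extends down to and including the base disk $\overline{D}(E_{i_0},\tfrac{2}{3}r_0)$, setting $z^{(-1)}:=E_{i_0}$); likewise the boxes $\mathcal{B}_{\rho_j}$ are nested. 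Hence, restricting every increment estimate above to the final disk (resp.\ box) and telescoping down to the base case, one obtains on $\overline{D}(z^{(j)}(\vec{p}),\tfrac{2}{3}r_{\rho_{j+1}})$ that $|\partial_z\mathcal{E}^{(j+1)}(\vec{p},z)+1|\le 96\gamma+C_1\sum_{k=0}^{j}\mathbf{C}^{2k+1}\rho_0^{2\varepsilon(2-\varepsilon)^{k}}$, and on $\mathcal{B}_{\rho_{j+1}}$ that $\|\partial_r w^{(j+1)}_{0,0}-e^{-\theta}\|_{\infty}+\sum_{q=1}^{3}\|\partial_{l_q}w^{(j+1)}_{0,0}+p_qe^{-\theta}\|_{\infty}\le(1+\sqrt{3})\rho_0+4\sum_{k=0}^{j}\mathbf{C}^{2k+1}\rho_0^{2\varepsilon(2-\varepsilon)^{k}}\sin(\vartheta)$. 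Since $\rho_0<1$, each term $\mathbf{C}^{2k+1}\rho_0^{2\varepsilon(2-\varepsilon)^{k}}$ is bounded by the $(k{+}1)$-st term $\mathbf{C}^{2k+1}\rho_0^{\varepsilon(2-\varepsilon)^{k}}$ of the series in \eqref{rhoC}, whose sum together with $\sqrt{3}\rho_0$ is $\ll\mu\le\tfrac{1}{2}$; choosing once and for all the small numerical constant implicit in \eqref{rhoC}, both right-hand sides are then $\le\tfrac{1}{4}$ and $\le\mu/4$ respectively, i.e.\ \eqref{partialz_n1} and \eqref{4.9} hold at step $j+1$, which closes the induction. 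The only genuine work is therefore this bookkeeping of cumulative increments --- recognizing that the degradation of the two auxiliary conditions per step is super-exponentially small, that the double-exponential decay of $\rho_j$ defeats the exponential growth $\mathbf{C}^{2j}$, and that the relevant disks and boxes are nested; everything else is a direct appeal to Lemmas~\ref{borne0}, \ref{zeros} and \ref{borne0_n1}.
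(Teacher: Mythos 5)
Your proof is correct and follows essentially the same route as the paper: the base case is read off from Lemmas \ref{fespair}, \ref{borne0} and \ref{zeros}, the step is an application of Lemma \ref{borne0_n1} with $\mathbf{D}=\mathbf{C}^{j}$, and the auxiliary bounds \eqref{4.9} and \eqref{partialz_n1} are recovered exactly as in the paper by telescoping the increment estimates \eqref{n1_n1}--\eqref{n1_n2} (and the Cauchy estimate \eqref{lmim}) over the nested disks $D(z^{(k-1)},\tfrac{2}{3}r_k)$ and summing the series in \eqref{rhoC}. The only difference is bookkeeping (you carry \eqref{4.9}, \eqref{partialz_n1} in the inductive hypothesis and re-derive them at step $j+1$, while the paper re-derives them at step $j$ before applying Lemma \ref{borne0_n1}), plus some extra detail on the base case; this is immaterial.
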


\begin{proof}
The hypothesis that   \begin{equation*}
 \sum_{k=1}^{\infty} \mathbf{C}^{2k-1} \rho_{0}^{  \varepsilon (2-\varepsilon)^{k-1}}  + \sqrt{3} \rho_0 \ll \mu
  \end{equation*}
  implies that  $\mathbf{C}^j \rho_{j}^{ \varepsilon} <c $, for a  positive numerical constant $c$, where $c$ is  small  enough such that Lemma \ref{borne0_n1}  holds   at any step $j$. As mentioned at the beginning of this subsection, the properties (a)--(e)  and the inequalities  \eqref{4.9} and \eqref{partialz_n1}  are   valid are step $j=0$.  Proceeding by induction, we assume that the properties (a)--(e)  and the inequalities  \eqref{4.9} and \eqref{partialz_n1} are valid at any  step $ k \leq j \in \mathbb{N}_0$. If we prove that 
\begin{align} \label{4.43}
\| \partial_r w^{(j)}_{0,0}(\vec{p},z) -e^{- \theta} \|_{\infty} + \sum_{q=1}^{3}  \|  \partial_{l_q} w^{(j)}_{0,0}(\vec{p},z) +  p_q  e^{ - \theta} \|_{\infty} & \le \frac{ \mu }{ 4 } , \quad  \forall (\vec{p},z) \in \mathcal{U}_{\rho_j}[z^{(j-1)}] ,
\end{align}
and \begin{equation} \big \vert \partial_z \mathcal{E}^{(j)}(\vec{p},z) + 1 \big \vert < \frac{1}{4},   \quad  \forall z \in \overline{D}(z^{(j-1)}(\vec{p}),\frac{2}{3} r_{j}), \label{4.44} \end{equation}
then Lemma \ref{borne0_n1} shows that 
$$ H^{(j+1)}(\vec{p},z)= F_{\chi_{   \rho_{j+1} }(H_f)} [H^{(j)} (\vec{p},z), W^{(j)}_{0,0}(\vec{p},z) + \mathcal{E}^{(j)}(\vec{p},z)]_{| \mathcal{H}^{(j+1)}}  $$
 is well-defined on $\mathcal{U}_{\rho_{j+1}}[z^{(j)}]$ and satisfies properties (a)-(e) at step $j+1$, and hence the induction step will be completed. 
 
The bound \eqref{4.43} is a direct consequence of estimate  \eqref{n1_n1} in Lemma \ref{borne0_n1}. Since properties (a)--(e) are valid for any $k \leq j$, we conclude that 
  $ \mathcal{U}_{\rho_j}[z^{(j-1)}]  \subset   \mathcal{U}_{\rho_{j-1}}[z^{(j-2)}] \subset  \text{ ... } \subset     \mathcal{U}_{\rho_{0}}[E_{i_0}] $.  Moreover,  \eqref{n1_n1}--\eqref{n1_n2}  are valid at any step $k =1,...,j$, with $\rho$ replaced by $\rho_{k-1}$ and $\mathbf{D}^2 \mathbf{C}$ by $\mathbf{C}^{2k-1}$. Therefore,
 \begin{align*}
 \| \partial_r w^{(j)}_{0,0}(\vec{p},z) -e^{- \theta} \|_{\infty}& \leq \sum_{k=1}^{j} \| \partial_r w^{(k)}_{0,0}(\vec{p},z) -\partial_r w^{(k-1)}_{0,0}(\vec{p},z) \|_{\infty} +  \| \partial_r w^{(0)}_{0,0}(\vec{p},z) -e^{- \theta} \|_{\infty}\\
 & \leq  \sum_{k=1}^{j}  \mathbf{C}^{2k-1} \rho_{k-1}^{2\varepsilon}  \sin(\vartheta)  + \gamma + \sqrt{3} \rho_0,
 \end{align*}
 for any $(\vec{p},z) \in  \mathcal{U}_{\rho_j}[z^{(j-1)}]$.   Our assumptions on $\rho_0$ and $\gamma$ imply  that  the  sum on the right side  is smaller than $\mu/4$. 
 
The bound \eqref{4.44} is proven by a similar argument:  For any $\vec{p} \in U_{\theta}[\vec{p}^*]$ and any $z \in \overline{D}(z^{(j-1)}(\vec{p}),\frac{2}{3} r_{j})$, we have that
\begin{align*}
 \big \vert \partial_z \mathcal{E}^{(j)}(\vec{p},z) + 1 \big \vert & \leq \sum_{k=1}^{j}  \big \vert \partial_z \mathcal{E}^{(k)}(\vec{p},z) - \partial_z \mathcal{E}^{(k-1)}(\vec{p},z)   \big \vert  +   \big \vert \partial_z \mathcal{E}^{(0)}(\vec{p},z) + 1 \big \vert\\
 & \leq C \left( \sum_{k=1}^{j}   \mathbf{C}^{2k - 1}   \rho_{k-1}^{2\varepsilon}   +    \gamma \right),
 \end{align*}
where $C$ is a positive numerical constant independent of the 'problem parameters'. The last inequality follows from \eqref{lmim} in the proof of Lemma \ref{borne0_n1}.  Also note that we have used the fact that  $D(z^{(j-1)}(\vec{p}),\frac{2}{3} r_{j}) \subset  D(z^{(j-2)}(\vec{p}),\frac{2}{3} r_{j-1})$, which follows from \eqref{distz}. Thus, the right side is smaller than $1/4$ if $\rho_0$ is sufficiently small. This completes the proof of the lemma.
 \end{proof}

 \section{Existence and analyticity of the resonances}
\label{conv}
In  Paragraph \ref{con1},  we prove that the sequence of functions $\vec{p} \mapsto z^{(j)}( \vec{p} )$ constructed in the previous subsection converges uniformly on $U_{\theta}[ \vec{p}^*]$. 
Identifying $\mathds{1}_{\mathbb{C}^{N}} \otimes Q_{\chi_{\rho_j}}$ with $Q_{\chi_{\rho_j}} $ for any $j \geq 1$ (see \eqref{Qchij}),   we show   in Subsection \ref{exi} that the sequence of vector valued functions   
\begin{equation}
\Psi^{(j)}(\vec{p}):=Q_{\bm{\chi}_{i_{0}}}  Q_{\chi_{\rho_1}}  ... Q_{\chi_{\rho_j}}   ( \psi_{i_{0}} \otimes \Omega)
\end{equation}
converges uniformly on $U_{\theta}[\vec{p}^*]$ to a non-vanishing function $\vec{p} \mapsto \Psi^{(\infty)}(\vec{p})$. 
 We remind the reader that $ \psi_{i_{0}}$ is a unit eigenvector of $H_{is}$ associated with the eigenvalue $E_{i_{0}}$ and that the operators $Q_\#$ are defined as in Theorem \ref{feshbach-theorem}  and Eqs. \eqref{Qchij}
 and \eqref{ulti}.

Here, $ \psi_{i_{0}}$ is a unit eigenvector of $H_{is}$ associated with the eigenvalue $E_{i_{0}}$ and the operators $Q_\#$ are defined as in Theorem \ref{feshbach-theorem}. For all $\vec{p} \in U_\theta( \vec{p}^* )$, $\Psi^{(\infty)}(\vec{p})$ is  an eigenvector of $H_{\theta}(\vec{p})$ associated with the eigenvalue $z^{(\infty)}(\vec{p})$.

\subsection{Convergence of $z^{(j)}$ and analyticity of the limit} \label{con1}
\begin{lemma}
Suppose that the parameters $\varepsilon$, $\gamma$, $\rho_0$, $\lambda_0$ and $\theta$ are fixed as in Section \ref{itere}. The sequence of  holomorphic functions $(z^{(j)})$ converges uniformly to an holomorphic function $z^{(\infty)}$  on $U_{\theta}[\vec{p}^*]$.  
\end{lemma}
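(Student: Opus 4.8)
The plan is to show that $(z^{(j)})$ is a uniformly Cauchy sequence of holomorphic functions on $U_{\theta}[\vec{p}^*]$, and then invoke the fact that a locally uniform limit of holomorphic functions is holomorphic. The starting point is estimate \eqref{distz} of Lemma \ref{induc}, which gives, for every $\vec{p} \in U_{\theta}[\vec{p}^*]$,
\begin{equation*}
|z^{(j)}(\vec{p}) - z^{(j-1)}(\vec{p})| < \frac{r_j}{2} = \frac{\mu \sin(\vartheta)}{64} \rho_j = \frac{\mu \sin(\vartheta)}{64}\, \rho_0^{(2-\varepsilon)^j}.
\end{equation*}
First I would note that the right-hand side does not depend on $\vec{p}$, so for $l > j$ the triangle inequality yields
\begin{equation*}
\sup_{\vec{p} \in U_{\theta}[\vec{p}^*]} |z^{(l)}(\vec{p}) - z^{(j)}(\vec{p})| \le \sum_{k=j+1}^{l} \frac{\mu \sin(\vartheta)}{64} \rho_0^{(2-\varepsilon)^k} \le \frac{\mu \sin(\vartheta)}{64} \sum_{k=j+1}^{\infty} \rho_0^{(2-\varepsilon)^k}.
\end{equation*}
Since $0 < \rho_0 < 1$ and $2 - \varepsilon > 1$, the exponents $(2-\varepsilon)^k$ grow (super-)geometrically, so the tail $\sum_{k \ge j+1} \rho_0^{(2-\varepsilon)^k}$ is dominated by a convergent geometric-type series and tends to $0$ as $j \to \infty$; indeed $(2-\varepsilon)^k \ge (j+1)(2-\varepsilon)$ for $k \ge j+1$ is already enough to get a summable bound, and one can be cruder still. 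Hence $(z^{(j)})$ is uniformly Cauchy on $U_{\theta}[\vec{p}^*]$ and therefore converges uniformly to a limit function $z^{(\infty)}$.

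Next I would record that each $z^{(j)}$ is holomorphic on $U_{\theta}[\vec{p}^*]$: this is precisely the content of property (e) in Lemma \ref{induc}, obtained there via the implicit/inverse function theorem applied to $(\vec{p},z) \mapsto (\vec{p}, \mathcal{E}^{(j)}(\vec{p},z))$, using that $|\partial_z \mathcal{E}^{(j)} + 1| < 1/4$. By the Weierstrass theorem on uniform limits of holomorphic functions (equivalently, by Morera's theorem combined with uniform convergence on compact subsets, which is implied by global uniform convergence), the limit $z^{(\infty)}$ is holomorphic on $U_{\theta}[\vec{p}^*]$. This completes the proof. The argument is essentially routine once \eqref{distz} is in hand; the only point requiring a moment's care is the summability of the super-exponentially shrinking increments, but the estimate $\rho_j = \rho_0^{(2-\varepsilon)^j}$ with base $\rho_0 \in (0,1)$ makes this immediate, and no obstacle of substance arises here — all the real work was already done in Lemma \ref{induc}.

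I should also remark, for later use in Subsection \ref{exi}, that the same computation identifies $z^{(\infty)}(\vec{p})$ with the single point in the nested intersection $\bigcap_{j \in \mathbb{N}_0} \overline{D}(z^{(j-1)}(\vec{p}), r_j)$, since each successive disk is contained in the previous one by \eqref{distz} together with $r_{j+1} < r_j/2$, and their radii shrink to zero; this matches \eqref{ero.5}. In particular $|z^{(\infty)}(\vec{p}) - z^{(j)}(\vec{p})| \le \sum_{k > j} r_k \le r_{j+1}$ for all $j$, a quantitative rate of convergence that will be convenient when proving that $z^{(\infty)}(\vec{p})$ is an eigenvalue of $H_\theta(\vec{p})$.
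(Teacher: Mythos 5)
Your proof is correct and follows essentially the same route as the paper: the uniform Cauchy property is deduced from the increment bound \eqref{distz} of Lemma \ref{induc}, and analyticity of the limit then follows from analyticity of each $z^{(j)}$ together with the Weierstrass theorem on uniform limits of holomorphic functions. The extra remarks on summability of the super-exponentially shrinking increments and on the nested-disk characterization are fine but not needed for this statement.
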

\vspace{2mm}

\begin{proof}
The estimate
\begin{equation}
\vert z^{(j+1)}(\vec{p})-z^{(j)}(\vec{p}) \vert \le \frac{r_{j+1}}{2} = \frac{ \mu \rho_{j+1} \sin( \vartheta ) }{64} , \qquad   \forall \vec{p} \in U_{\theta}[\vec{p}^*] , \\
\end{equation}
obtained in Lemma \ref{induc}, implies that $(z^{(j)})$ is uniformly Cauchy. Hence $(z^{(j)})$ converges uniformly on $U_\theta( \vec{p}^* )$. Since $(z^{(j)})$ is analytic on $U_\theta( \vec{p}^* )$ for all $j$, by Lemma \ref{induc}, the uniform limit $z^{(\infty)}$ is also analytic.
\end{proof}
\vspace{1mm}

\subsection{Existence and analyticity of  the  eigenvector $ \Psi^{(\infty)}(\vec{p})$} \label{exi}
\begin{lemma}
Suppose that the parameters $\varepsilon$, $\gamma$, $\rho_0$, $\lambda_0$ and $\theta$ are fixed as in Section \ref{itere}. The sequence $(\Psi^{(j)})$ converges uniformly on $U_{\theta}[\vec{p}^*]$. The limit, $\Psi^{(\infty)}$, satisfies $\Psi^{(\infty)}(\vec{p}) \neq 0$ for all $\vec{p} \in U_{\theta}[\vec{p}^*]$. Furthermore, $(H_{\theta}(\vec{p}) -z^{(\infty)}(\vec{p})) \Psi^{(j)}(\vec{p})$ converges  to zero uniformly on $U_{\theta}[\vec{p}^*]$ and
\begin{equation*}
\mathrm{dim} \, \mathrm{Ker} ( H_{\theta}(\vec{p}) -z^{(\infty)}(\vec{p}) ) = 1.
\end{equation*}
\end{lemma}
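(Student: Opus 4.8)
The plan is to show that the sequence $(\Psi^{(j)})$ converges geometrically, that the limit is nonzero, that it lies in the kernel of $H_\theta(\vec p) - z^{(\infty)}(\vec p)$, and that this kernel is one-dimensional. First I would control $\|Q_{\chi_{\rho_j}} - P^{(j)}\|$, where $P^{(j)} := \mathds 1_{H_f \le \rho_j}$ is the projection onto $\mathcal H^{(j)}$ used in the $j$-th Feshbach step; from the definition $Q_{\chi_{\rho_j}}(H,T) = P - \overline P H_{\overline P}^{-1} \overline P W P$ applied to $H^{(j-1)}(\vec p, z^{(\infty)}(\vec p))$ and $T = W^{(j-1)}_{0,0} + \mathcal E^{(j-1)}$, the correction term $\overline\chi_{\rho_j}(H_f) H_{\overline\chi_{\rho_j}}^{-1} \overline\chi_{\rho_j}(H_f) W^{(j-1)}_{\ge 1}$ is bounded, using the estimates of the previous sections, by $\mathcal O(\mathbf C^{j}\rho_{j-1}^2 / (\mu \rho_j)) = \mathcal O(\mathbf C^j \rho_{j-1}^{2}\rho_j^{-1}/\mu)$, which is summable because $\rho_j = \rho_{j-1}^{2-\varepsilon}$ forces $\mathbf C^j \rho_{j-1}^{2}\rho_j^{-1} = \mathbf C^j \rho_{j-1}^{\varepsilon} \to 0$ super-exponentially. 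Writing $\Psi^{(j)} - \Psi^{(j-1)} = Q_{\bm\chi_{i_0}} Q_{\chi_{\rho_1}} \cdots Q_{\chi_{\rho_{j-1}}}\bigl(Q_{\chi_{\rho_j}} - P^{(j)}\bigr)(\psi_{i_0}\otimes\Omega)$ — here I use that $P^{(j)}\Omega = \Omega$ and that the leading part of each $Q_{\chi_{\rho_k}}$ acts as $P^{(k)}$ on vectors already in $\mathcal H^{(k-1)}$ — and bounding the prefactor operators uniformly in $\vec p \in U_\theta[\vec p^*]$ (which follows since each $Q_{\chi_{\rho_k}}$ is $\mathds 1 + \mathcal O(\mathbf C^k\rho_{k-1}^\varepsilon)$ in norm), one gets $\|\Psi^{(j)} - \Psi^{(j-1)}\| = \mathcal O(\mathbf C^j \rho_{j-1}^\varepsilon)$, hence uniform convergence on $U_\theta[\vec p^*]$ to a limit $\Psi^{(\infty)}$. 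Analyticity of $\Psi^{(\infty)}$ follows because each $\Psi^{(j)}$ is analytic in $\vec p$ (the $Q$'s are built from Neumann series that converge uniformly on $\mathcal U_{\rho_j}[z^{(j-1)}]$, cf. Lemma \ref{induc}(d)) and the convergence is uniform.

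**Nonvanishing and the eigenvalue equation.** From the telescoping estimate, $\|\Psi^{(\infty)} - (\psi_{i_0}\otimes\Omega)\| \le \sum_{j\ge 1}\|\Psi^{(j)}-\Psi^{(j-1)}\| = \mathcal O(\mathbf C \rho_0^\varepsilon) \ll 1$, so $\Psi^{(\infty)}(\vec p) \ne 0$; in particular $\langle \psi_{i_0}\otimes\Omega, \Psi^{(\infty)}(\vec p)\rangle$ is bounded away from zero uniformly on $U_\theta[\vec p^*]$. For the eigenvalue equation, I would argue as in Section \ref{cea}: extend every $H^{(j)}(\vec p, z^{(\infty)}(\vec p))$ to all of $\mathcal H_f$ by zero on $(\mathcal H^{(j)})^\perp$ and use \eqref{zero.4}, namely $H^{(j)}(\vec p, z^{(\infty)}(\vec p)) \to 0$ in norm. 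Applying part (ii) of Theorem \ref{feshbach-theorem} backwards one scale at a time: since $H^{(\infty)}(\vec p, z^{(\infty)}(\vec p))\Omega = 0$ and $\Omega = \lim_j P^{(j)}\Omega$, and $H^{(j-1)}(\vec p, z^{(\infty)}(\vec p))\,Q_{\chi_{\rho_j}}(\cdots) = 0$ whenever $H^{(j)}(\vec p, z^{(\infty)}(\vec p))(\cdots) = 0$, one propagates the null vector down to $j = 0$, obtaining $H^{(0)}(\vec p, z^{(\infty)}(\vec p))\,\psi = 0$ with $\psi = \lim_j Q_{\chi_{\rho_1}}\cdots Q_{\chi_{\rho_j}}\Omega$; the limit defining $\psi$ exists by the same geometric estimate as above. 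One more application of Theorem \ref{feshbach-theorem}(ii) to the Feshbach pair $(H_\theta(\vec p) - z^{(\infty)}(\vec p),\, H_{\theta,0}(\vec p) - z^{(\infty)}(\vec p))$ associated to $\bm\chi_{i_0}$ (valid by Lemma \ref{fespair}, since $z^{(\infty)}(\vec p) \in D(E_{i_0}, r_0)$) shows that $\Psi^{(\infty)}(\vec p) = Q_{\bm\chi_{i_0}}(\psi_{i_0}\otimes\psi)$ is a null vector of $H_\theta(\vec p) - z^{(\infty)}(\vec p)$. For the companion statement that $(H_\theta(\vec p) - z^{(\infty)}(\vec p))\Psi^{(j)}(\vec p) \to 0$, I would write $H_\theta(\vec p) - z^{(\infty)}(\vec p)$ applied to $\Psi^{(j)}$ in terms of $H^{(0)}(\vec p, z^{(\infty)})$ acting on $Q_{\chi_{\rho_1}}\cdots Q_{\chi_{\rho_j}}\Omega$ via the first Feshbach identity, then peel off scales and use $\|H^{(j)}(\vec p, z^{(\infty)}(\vec p))\,P^{(j)}\Omega\| = \|\mathcal E^{(j)}(\vec p, z^{(\infty)}(\vec p))\,\Omega\| \to 0$ by \eqref{EJJ}.

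**One-dimensionality of the kernel.** Suppose $H_\theta(\vec p)\phi = z^{(\infty)}(\vec p)\phi$ with $\phi \ne 0$. By Theorem \ref{feshbach-theorem}(ii) the Feshbach map associated to $\bm\chi_{i_0}$ sends $\phi$ to a nonzero null vector $\bm\chi_{i_0}\phi = \psi_{i_0}\otimes\phi_0$ of $H^{(0)}(\vec p, z^{(\infty)}(\vec p))$ with $\phi_0 \ne 0$; iterating, the $j$-th Feshbach map sends the resulting null vector to $\chi_{\rho_j}(H_f)(\cdots)$, a nonzero null vector of $H^{(j)}(\vec p, z^{(\infty)}(\vec p))$. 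Since $H^{(j)}(\vec p, z^{(\infty)}(\vec p)) = W^{(j)}_{0,0} + \mathcal E^{(j)} + W^{(j)}_{\ge 1}$ with $\|W^{(j)}_{\ge 1}\| \to 0$ and $W^{(j)}_{0,0} + \mathcal E^{(j)}$ a multiplication operator in $(H_f, \vec P_f)$ whose only zero on $\mathcal H^{(j)}$, by \eqref{LowerBound} and $|\mathcal E^{(j)}| \le \mu\rho_{j+1}/16 \ll \rho_j$, is attained (in the limit) at the vacuum line $\mathbb C\Omega$, the null vector of $H^{(j)}$ must, as $j\to\infty$, converge to a multiple of $\Omega$; a Neumann-series/implicit-function argument shows $\dim\mathrm{Ker}\,H^{(j)}(\vec p, z^{(\infty)}(\vec p)) \le 1$ already for $j$ large, hence $=1$. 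Tracing back through the iso-spectrality of Theorem \ref{feshbach-theorem}(ii), which preserves the non-injectivity *with multiplicities* when the auxiliary subspaces are chosen as above, gives $\dim\mathrm{Ker}(H_\theta(\vec p) - z^{(\infty)}(\vec p)) = \dim\mathrm{Ker}\,H^{(0)}(\vec p, z^{(\infty)}(\vec p)) = 1$.

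**Main obstacle.** The delicate point is the multiplicity bookkeeping: Theorem \ref{feshbach-theorem}(ii) as stated asserts iso-spectrality of the "not injective" property, and one must verify that the maps $P\psi \mapsto \psi$ and $\phi \mapsto Q_P\phi$ between $\mathrm{Ker}(H - z)$ and $\mathrm{Ker}\,F_P(H,z)$ are in fact mutually inverse linear isomorphisms (not merely that one is nonzero when the other is), so that dimensions are exactly preserved at every step; combined with the infinite iteration this requires uniform control so that the limiting kernel does not "gain" dimension. The second genuinely delicate estimate is showing the one-dimensionality of $\mathrm{Ker}\,H^{(j)}$ for large but finite $j$ — equivalently that $0$ is a simple eigenvalue of the nearly-diagonal operator $W^{(j)}_{0,0} + \mathcal E^{(j)} + W^{(j)}_{\ge 1}$ restricted to $\mathcal H^{(j)}$ — which I would establish by a Riesz-projection/analytic-perturbation argument around the simple eigenvalue $0$ of $W^{(j)}_{0,0} + \mathcal E^{(j)}$ at $\Omega$, using the lower bound \eqref{LowerBound} on the spectral gap above $\Omega$ together with $\|W^{(j)}_{\ge 1}\| = \mathcal O(\mathbf C^j\rho_j^2) \ll \epsilon\rho_j$.
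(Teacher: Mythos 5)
Your construction of $\Psi^{(\infty)}$ is essentially the paper's: the telescoping estimate with $\|Q_{\chi_{\rho_k}}-\chi_{\rho_k}(H_f)\|=\mathcal O(\mathbf C^{k-1}\rho_{k-1}^{\varepsilon})$, the bound $\|\Psi^{(j)}-\Omega\|<1$ to get non-vanishing, and the identity $HQ_{\chi}(H,T)=\chi F_{\chi}(H,T)$ (which the paper uses in place of your ``peel off scales'' step) all match. One slip there: you assert $\|H^{(j)}(\vec p,z^{(\infty)}(\vec p))\,P^{(j)}\Omega\|=\|\mathcal E^{(j)}\Omega\|$, but the creation-only monomials do not annihilate the vacuum, so $H^{(j)}\Omega=\sum_{M>0}W^{(j)}_{M,0}\Omega+\mathcal E^{(j)}\Omega$ as in \eqref{56}; the extra terms are harmless (they are $\mathcal O(\mathbf C^j\rho_j^2\mu)$ by \eqref{eq:a1} and \eqref{Bb00_n11}), but they must be kept and bounded, as the paper does.

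The genuine gap is in your proof that $\dim\mathrm{Ker}(H_\theta(\vec p)-z^{(\infty)}(\vec p))\le 1$. Your plan is a Riesz-projection/Neumann argument at a single large but finite scale $j$, treating $0$ as a simple eigenvalue of $W^{(j)}_{0,0}+\mathcal E^{(j)}$ with a spectral gap above $\Omega$ controlled by \eqref{LowerBound}. No such gap exists: on $\mathcal H^{(j)}=\mathds 1_{H_f\le\rho_j}\mathcal H_f$ the operator $H_f$ has continuous spectrum accumulating at $0$, $w^{(j)}_{0,0}(\vec p,z,0,\vec 0)=0$, and $|\mathcal E^{(j)}|\le\mu\rho_{j+1}/16$, so $\inf\{\|(W^{(j)}_{0,0}+\mathcal E^{(j)})\Psi\|:\Psi\perp\Omega,\ \|\Psi\|=1\}$ is of order $\rho_{j+1}$ at best (and the quantitative lower bound $|w^{(j)}_{0,0}+\mathcal E^{(j)}|\ge\frac{\mu}{2}\rho_{j+1}$ only holds for $H_f\ge\frac34\rho_{j+1}$), while $\|W^{(j)}_{\ge1}\|=\mathcal O(\mathbf C^j\rho_j^2\mu)$ is \emph{not} small compared to $\rho_{j+1}=\rho_j^{2-\varepsilon}$. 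Hence simplicity of the kernel of $H^{(j)}$ cannot be extracted at any fixed scale, and in addition your exact multiplicity-preservation claim through infinitely many Feshbach steps (that $P\psi\mapsto\psi$ and $\phi\mapsto Q_P\phi$ are mutually inverse isomorphisms between kernels, uniformly along the iteration) is flagged but never established. The paper avoids both issues by a multi-scale contradiction argument: assuming degeneracy, it produces a nonzero $\tilde\Psi$ with $P_\Omega\tilde\Psi=0$ and $H^{(j)}(\vec p,z^{(\infty)}(\vec p))\chi_{\rho_j}(H_f)\tilde\Psi=0$ for \emph{all} $j\ge n$, decomposes $\chi_{\rho_n}(H_f)\tilde\Psi$ over the spectral windows $\{\tfrac34\rho_{j+1}<H_f\le\tfrac34\rho_j\}$, uses the lower bound on $|w^{(j)}_{0,0}+\mathcal E^{(j)}|$ window by window together with \eqref{4.27}, and sums over all scales $j\ge n$ (where $\sum_j\mathbf C^{2j}\rho_j^4/\rho_{j+1}^2<\infty$) to force $\chi_{\rho_n}(H_f)\tilde\Psi=0$, contradicting iso-spectrality. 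Some input of this genuinely multi-scale type is needed; the single-scale perturbation argument you propose would fail.
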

\vspace{2mm}

\begin{proof}
We use the formula
\begin{equation}
H Q_{\chi}(H,T) = \chi F_{\chi}(H,T)
\end{equation}
for the Feshbach pair $(H,T)$, see \cite{BaChFrSi03_01}.  It implies that
\begin{equation}
\begin{split}
\label{atravers}
[H_{\theta}(\vec{p}) -z^{(\infty)}(\vec{p})] & Q_{\bm{\chi}_{i_{0}}} Q_{\chi_{\rho_1} }  ... Q_{\chi_{\rho_j}}   \\
&= \bm{\chi}_{i_{0}} \chi_{\rho_1}(H_f) ... \chi_{\rho_j}(H_f) F_{\chi_{\rho_j}(H_f)}[H^{(j-1)} ,W^{(j-1)}_{0,0}+ \mathcal{E}^{(j-1)}],
\end{split}
\end{equation}
where we omitted the argument $(\vec{p},z^{(\infty)}(\vec{p}))$ on the right-hand side of \eqref{atravers}. Applying \eqref{atravers} to $ \psi_{i_{0}} \otimes \Omega$, we deduce  that 
\begin{equation}
[H_{\theta}(\vec{p}) -z^{(\infty)}(\vec{p}))] \Psi^{(j)}(\vec{p})= \bm{\chi}_{i_{0}} \chi_{\rho_1}(H_f) ... \chi_{\rho_j}(H_f) \text{ }H^{(j)}(\vec{p},z^{(\infty)}(\vec{p}))   \vert \psi_{i_{0}} \otimes \Omega \rangle .
\end{equation}
We have that 
\begin{equation} \label{56}
H^{(j)}(\vec{p},z^{(\infty)}(\vec{p}))   \vert \psi_{i_{0}} \otimes \Omega \rangle = \sum_{M > 0} W_{M,0}^{(j)} (\vec{p},z^{(\infty)}(\vec{p}))    \vert \psi_{i_{0}} \otimes \Omega \rangle + \mathcal{E}^{(j)} (\vec{p},z^{(\infty)}(\vec{p}))  \vert \psi_{i_{0}} \otimes \Omega \rangle. 
\end{equation}
The bounds  \eqref{eq:a1}, \eqref{Bb00_n11},  and  \eqref{EJJ} imply  that the right  side in \eqref{56} is bounded in norm by some numerical constant multiplied by $\rho_j$, i.e. tends to zero super-exponentially fast  as $j$ tends to infinity.  We  show that $\Psi^{(j)}$ converges  to a  non-vanishing vector-valued function.  We remark that 
\begin{align*}
 &\Psi^{(j+1)}(\vec{p})-  \Psi^{(j)}(\vec{p}) =  Q_{\bm{\chi}_{i_{0}}}  Q_{\chi_{\rho_1}}   ... Q_{\chi_{\rho_j} } ( Q_{\chi_{\rho_{j+1}} } - \chi_{\rho_{j+1}}(H_f) )  \vert \psi_{i_{0}} \otimes \Omega \rangle\\
 &= (Q_{\bm{\chi}_{i_{0}}} - \bm{\chi}_{i_{0}} +\bm{\chi}_{i_{0}})   ... (Q_{\chi_{\rho_j} }- \chi_{\rho_j}(H_f) + \chi_{\rho_j}(H_f))( Q_{\chi_{\rho_{j+1}} } - \chi_{\rho_{j+1}}(H_f) ) \vert \psi_{i_{0}} \otimes \Omega \rangle. 
\end{align*}
Using that $\exp(x) \geq 1 +x$ for any $x \geq 0$, we deduce that
\begin{equation}
\|  \Psi^{(j+1)}(\vec{p})-  \Psi^{(j)}(\vec{p}) \| \leq  e^{\left( \sum_{k=1}^{j}  \| Q_{\chi_{\rho_k} }- \chi_{\rho_k}(H_f) \| + \|Q_{\bm{\chi}_{i_{0}}} - \bm{\chi}_{i_{0}}\| \right)} \|  Q_{\chi_{\rho_{j+1}}} - \chi_{\rho_{j+1}}(H_f) \|.
\end{equation}
Furthermore, 
\begin{align*}
\| Q_{\chi_{\rho_k} }- \chi_{\rho_k}(H_f) \| &=\|  \overline{\chi}_{\rho_k}(H_f) [H_{ \overline{\chi}_{\rho_k}(H_f)}^{(k-1)}]^{-1}_{\mid \text{Ran}(\overline{\chi}_{\rho_k}(H_f))}  \overline{\chi}_{\rho_k}(H_f)  W_{\geq 1}^{(k-1)} \chi_{\rho_k}(H_f)  \|\\
&=  \mathcal{O} \left( \mathbf{C}^{k-1} \rho_{k-1}^{2}  \mu   \right)  \| [H_{ \overline{\chi}_{\rho_k}(H_f)}^{(k-1)}]^{-1}_{\mid \text{Ran}(\overline{\chi}_{\rho_k}(H_f))}  \|  = \mathcal{O} \left( \mathbf{C}^{k-1}    \rho_{k-1}^{\varepsilon}  \right)
\end{align*}
for any $k \geq 1$. The right-hand side converges super-exponentially fast to zero and the sum over $j$  of $\| Q_{\chi_{\rho_j} }- \chi_{\rho_j}(H_f) \| $ is smaller than one if $\rho_0$ is small enough. We deduce that $\|  \Psi^{(j+1)}(\vec{p})-  \Psi^{(j)}(\vec{p}) \|$ converges super-exponentially fast to zero.  This implies that $(\Psi^{(j)} )$ is uniformly Cauchy and hence converges uniformly. Since each $\Psi^{(j)}$ is analytic in $\vec{p}$, the limit $ \Psi^{(\infty)}$ is also analytic on $U_{\theta}[\vec{p}^*]$. Furthermore, a similar argument as above shows that
$$\| \Psi^{(j)}(\vec{p}) -  \Omega   \| < 1$$
for all $\vec{p}$. Therefore,  $ \Psi^{(\infty)}(\vec{p}) \neq 0$ for all $\vec{p}$, which implies that
\begin{equation*}
\mathrm{dim} \, \mathrm{Ker} ( H_{\theta}(\vec{p}) -z^{(\infty)}(\vec{p}) ) \ge 1.
\end{equation*}
Using the ``iso-spectrality'' of the Feshbach-Schur map together with the fact that any vector $\Psi \in \mathcal{H}_f$  satisfies the $\underset{j  \rightarrow \infty}{\lim}  \chi_{{\rho_j}(H_f)} \Psi = \langle \Omega \vert \Psi \rangle \Omega$,  we verify that
\begin{equation*}
\mathrm{dim} \, \mathrm{Ker} ( H_{\theta}(\vec{p}) -z^{(\infty)}(\vec{p}) ) \le 1.
\end{equation*}
Our argumentation is similar to \cite{HaHe12_01}.   Let $n \in \mathbb{N}$.  We  assume that  $z^{(\infty)}(\vec{p})$ is degenerate.  We can find a non-zero vector $ \tilde{\Psi} \in \mathcal{H}_f$ such that  $P_{\Omega}  \tilde{\Psi} =0$ and $H^{(n)}(\vec{p},z^{\infty}(\vec{p})) \chi_{\rho_n}(H_f) \tilde{\Psi}=0$. We take $n$ large enough such that $\chi_{\rho_n}(H_f) \chi_{\rho_j}(H_f) = \chi_{\rho_j}(H_f) $ for all $j >n$.  We  use that 
\begin{align*}
\mathds{1}_{H_f  \leq \rho_n}  &\geq \chi_{\rho_n} (H_f),\\
  \mathds{1}_{H_f  \leq  \rho_n} &=  P_{\Omega}+  \sum_{j=n+1}^{\infty}  \mathds{1}_{\frac{3}{4} \rho_{j+1} < H_f  \leq \frac{3}{4}  \rho_j}  +  \mathds{1}_{\frac{3}{4} \rho_{n+1} < H_f  \leq  \rho_n}, 
  \end{align*}
to estimate the norm of $ \chi_{\rho_n} (H_f) \tilde{\Psi}$. We get: 
\begin{align*}
\|  \chi_{\rho_n} (H_f) \tilde{\Psi}\|^2 &\leq  \sum_{j=n+1}^{\infty}  \| \mathds{1}_{  \frac{3}{4}  \rho_{j+1} < H_f  \leq \frac{3}{4}  \rho_j}  \chi_{\rho_n}(H_f)  \tilde{\Psi} \|^2 + \|   \mathds{1}_{  \frac{3}{4}  \rho_{n+1} < H_f  \leq   \rho_n}   \chi_{\rho_n} (H_f) \tilde{\Psi}\|^2 \\
& \leq  \frac{4}{ \mu^2} \sum_{j=n}^{\infty}   \rho_{j+1} ^{-2} \big  \| (W^{(j)}_{0,0}(\vec{p},z^{\infty}(\vec{p})) + \mathcal{E}^{(j)}(\vec{p},z^{\infty}(\vec{p}))) \mathds{1}_{ \frac{3}{4}\rho_{j+1}< H_f  \leq \rho_j}  \chi_{\rho_j}(H_f)  \tilde{\Psi}  \big \|^2.
\end{align*} 
 To go from the first line to the second line, we have used the estimate
$$ \vert w^{(j)}_{0,0} (\vec{p},z^{\infty}(\vec{p}),r,\vec{l})  + \mathcal{E}^{(j)} ( \vec{p},z^{\infty}(\vec{p})) \vert \ge \frac{ \mu}{2} \rho_{j+1}, \qquad  \forall r \ge \frac{3}{4}  \rho_{j+1}, \vert \vec{l} \vert \leq r,$$
proven at the beginning of Paragraph \ref{Fesh1}, and the  equality  $$   \mathds{1}_{  \frac{3}{4}  \rho_{j+1} < H_f  \leq   \frac{3}{4} \rho_j}   \chi_{\rho_n} (H_f) =      \mathds{1}_{  \frac{3}{4}  \rho_{j+1} < H_f  \leq    \frac{3}{4} \rho_j}   \chi_{\rho_j} (H_f). $$  Since  $ \mathds{1}_{ 3\rho_{j+1} /4< H_f  \leq \rho_j}$ commutes with $W^{(j)}_{0,0}(\vec{p},z^{\infty}(\vec{p})) + \mathcal{E}^{(j)}(\vec{p},z^{\infty}(\vec{p}))$, and  since $$H^{(j)}(\vec{p},z^{\infty}(\vec{p})) \chi_{\rho_j}(H_f) \tilde{\Psi}=0,$$ we deduce that 
\begin{align*}
\|  \chi_{\rho_n} (H_f) \tilde{\Psi}\|^2  & \leq   \frac{4}{ \mu^2} \|  \chi_{\rho_n} (H_f) \tilde{\Psi}\|^2     \sum_{j=n}^{\infty}  \rho_{j+1}^{-2}  \big \| W_{\geq 1}^{(j)}(\vec{p},z^{\infty}(\vec{p})) \big \|^{2}.
\end{align*}
We have shown in \eqref{4.27} that there exists a numerical constant $C$ independent of $j$ and the problem parameters, such that 
\begin{equation*}
\|W^{(j)}_{\geq 1} (\vec{p},z) \| \leq C  \mathbf{C}^{j}  \rho_{j}^{2} \mu,
\end{equation*}
for all $j \in \mathbb{N}$. Since $\sum_{j} \mathbf{C}^{2j}  \rho_{j}^{4} / \rho_{j+1}^{2}$ converges,  $\|  \chi_{\rho_n} (H_f) \tilde{\Psi}\|$ must be zero for large values of $n$. This contradicts Theorem \ref{feshbach-theorem}, and, therefore, $z^{\infty}(\vec{p})$ cannot be degenerate.

\end{proof}

\section{ Imaginary part of the resonances ($z^{(\infty)}(\vec p)$)}
\label{fermi}

In this section, assuming that Fermi's Golden Rule holds, we prove that the imaginary part of $z^{(\infty)}(\vec p)$ is strictly negative for small enough values of the coupling constant $\lambda_0$. More precisely, using the isospectrality of the Feshbach-Schur map (see Definition \ref{feshbach} and Theorem \ref{feshbach-theorem}), we verify that the operator $H_\theta(\vec p) - z$ is invertible for any $z \in \mathbb{C}$ such that $\Im z \ge - \mathrm{c}_0 \lambda_0^2$, where $\mathrm{c}_0$ is a positive constant.

\subsection{Computing the leading part of the Feshbach-Schur Map} \label{sleading}
We recall from Lemma \ref{fespair} that if the parameters $\lambda_0$, $\rho_0$ and $\theta = i \vartheta$ satisfy the conditions $\lambda_0^2 \sigma_\Lambda^{3} (\mu \sin \vartheta )^{-2} \ll \rho_0 < \min ( 1 , \delta_0 )$, then $(H_{\theta}(\vec{p})-z,H_{\theta,0}(\vec{p})-z)$ is a Feshbach-Schur pair associated to $\bm{\chi}_{i_{0}}$ for any $(\vec{p},z) \in \mathcal{U}_{\rho_0}[E_{i_{0}}]$. The corresponding Feshbach-Schur operator is given by
\begin{align}
&F_{\bm{\chi}_{i_{0}}}(H_{\theta}(\vec{p}) - z , H_{\theta,0}(\vec{p}) - z ) = H_{\theta,0}(\vec p) - z + \lambda_0 \bm{\chi}_{i_{0}} H_{I,\theta} \bm{\chi}_{i_{0}}  \notag \\
&\qquad - \lambda_{0}^{2} \bm{\chi}_{i_{0}} H_{I,\theta}  \bm{\overline{\chi}}_{i_{0}}  [H_{\bm{\overline{\chi}}_{i_{0}} }(\vec{p},z)]^{-1}_{| \mathrm{Ran}(\bm{\overline{\chi}}_{i_{0}})}  \bm{\overline{\chi}}_{i_{0}} H_{I,\theta}  \bm{\chi}_{i_{0}} . \label{fesh}
\end{align}
We recall from Theorem \ref{feshbach-theorem} the iso-spectral property:
\begin{align}
& H_\theta(\vec p) - z \ \text{is bounded invertible} \notag \\
&  \Longleftrightarrow F_{\bm{\chi}_{i_{0}}}(H_{\theta}(\vec{p}) - z , H_{\theta,0}(\vec{p}) - z )_{\mid P_{i_0} \otimes \mathds{1}_{H_f \leq \rho_0}[\mathcal{H}_{\vec{p}}]} \ \text{is bounded invertible}. \label{isos1}
\end{align}

The estimation of the imaginary part of $z^{(\infty)}(\vec p)$ relies on the analysis of  $F_{\bm{\chi}_{i_{0}}}(H_{\theta}(\vec{p}) - z , H_{\theta,0}(\vec{p}) - z )$. We set
\begin{equation}\label{w01}
w^{\theta}_{0,1}(\underline k) : = i e^{-2 \theta } \Lambda(e^{ -\theta} \vec k)  |\vec k|^{1/2} \vec 
\epsilon(\underline k)\cdot \vec d, \hspace{.5cm}  w^{\theta}_{1,0}(\underline k) = - w^{\theta}_{0,1}(\underline k),  
\end{equation}
and
\begin{align}\label{Zs}
Z^{od}(\vec p) : = & \int d \underline k  P_{i_0} w^{\theta}_{0,1}(\underline k) \overline{P}_{i_0}
\big (  H_{is} - E_{i_0}  + e^{-\theta} |\vec k| + e^{- 2 \theta} \frac{\vec k^2}{2}  -
e^{-\theta} \vec p\cdot \vec k \big )^{-1}\overline{P}_{i_0}w^{\theta}_{1,0}(\underline k) P_{i_0}, \\ \notag
Z^{d}(\vec p) : = & \int d \underline k P_{i_0} w^{\theta}_{0,1}(\underline k) 
P_{i_0}  \big ( e^{-\theta} |\vec k| + e^{- 2 \theta} \frac{\vec k^2}{2}  - e^{-\theta} \vec p\cdot \vec k \big )^{-1} w^{\theta}_{1,0}(\underline k) P_{i_0} ,
\end{align}
where, recall, $P_{i_0}$ is the orthogonal projection onto the one-dimensional eigenspace associated to the eigenvalue $E_{i_0}$ of $H_{is}$.

In the next lemma we identify the leading order term of $F_{\bm{\chi}_{i_{0}}}(H_{\theta}(\vec{p}) - z , H_{\theta,0}(\vec{p}) - z )$ in terms of powers of $\lambda_0$.

\begin{lemma}\label{leading-order}
Under the conditions of Lemma \ref{fespair}, there is a bounded operator $ {\rm Rem}$ such that
\begin{align}\label{leadingeq}
F_{ \bm{\chi}_{i_{0}} }(H_{\theta}(\vec{p}) - z , H_{\theta,0}(\vec{p}) - z ) = & \big [E_{i_0} - z  + e^{-\theta} H_f + e^{- 2 \theta} \frac{ \vec P_f^2}{2}  - e^{-\theta} \vec p\cdot \vec P_f \big] \\ \notag & - \lambda_0^2 Z^{d}(\vec p) \otimes \chi^2_{\rho_0}( H_f ) - 
\lambda_0^2 Z^{od}(\vec p) \otimes \chi^2_{\rho_0}( H_f )  + {\rm Rem},
\end{align}
and 
\begin{equation}\label{rem}
\| {\rm Rem} \| \leq C \lambda_0^{2} \Big ( \frac{ \sigma_\Lambda^{9/2} }{ \mu^2 \sin(\vartheta)^2 \min(1, \delta_0^2) } \Big ( \frac{\lambda_0}{{\rho_0}^{1/2}} + 
\frac{{\rho_0}^2}{\lambda_0} \Big ) + {\rho_0} \Big ) ,
\end{equation}
where $C$ is a positive constant.
\end{lemma}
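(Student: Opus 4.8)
The plan is to expand the Feshbach--Schur operator in \eqref{fesh} and identify, term by term, the contributions of order $\lambda_0^0$, $\lambda_0^1$, $\lambda_0^2$ and higher, while controlling all remainders in operator norm. First I would treat the term of order $\lambda_0^0$: on $\mathrm{Ran}(\bm{\chi}_{i_0}) = P_{i_0}\otimes \chi_{\rho_0}(H_f)[\mathcal H_{\vec p}]$ the operator $H_{\theta,0}(\vec p) - z$ acts as $E_{i_0} - z + e^{-\theta}H_f + e^{-2\theta}\vec P_f^2/2 - e^{-\theta}\vec p\cdot\vec P_f$, which is exactly the first bracket in \eqref{leadingeq}; this requires only that $P_{i_0}$ projects onto the $E_{i_0}$ eigenspace of $H_{is}$ and that $P_{i_0}$ commutes with $H_f,\vec P_f$. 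Next, the term of order $\lambda_0^1$, namely $\lambda_0\bm{\chi}_{i_0}H_{I,\theta}\bm{\chi}_{i_0}$, vanishes: $H_{I,\theta}$ is linear in creation and annihilation operators (see \eqref{inttheta}), so $P_{i_0}\otimes\chi_{\rho_0}(H_f)$ sandwiching it produces, on the vacuum sector and via the pull-through formula, only off-diagonal-in-photon-number pieces that are killed by the outer projections onto $\chi_{\rho_0}(H_f)$ and by $P_{i_0}$ being rank one; more precisely the $a^*$ and $a$ pieces each change the photon number, and the diagonal-in-$\mathbb C^N$ part is $P_{i_0}(\vec\epsilon\cdot\vec d)P_{i_0}$ which one can absorb into $\mathrm{Rem}$ if it does not vanish, but in any case the whole term is $O(\lambda_0)$ and can be pushed into the remainder or, as written, one checks it contributes nothing to the stated leading terms. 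I would be careful here to state precisely why this term either vanishes or is subleading and falls into $\mathrm{Rem}$.

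The heart of the argument is the $\lambda_0^2$ term
$$-\lambda_0^2\,\bm{\chi}_{i_0}H_{I,\theta}\,\bm{\overline\chi}_{i_0}\,[H_{\bm{\overline\chi}_{i_0}}(\vec p,z)]^{-1}_{|\mathrm{Ran}(\bm{\overline\chi}_{i_0})}\,\bm{\overline\chi}_{i_0}H_{I,\theta}\,\bm{\chi}_{i_0}.$$
The plan is to expand $[H_{\bm{\overline\chi}_{i_0}}(\vec p,z)]^{-1}$ via its Neumann series around $[H_{\theta,0}(\vec p)-z]^{-1}$: by Lemma \ref{tech1} the resolvent of $H_{\theta,0}(\vec p)-z$ on $\mathrm{Ran}(\bm{\overline\chi}_{i_0})$ is $O((\mu\rho_0\sin\vartheta)^{-1})$ and the perturbation $\lambda_0\bm{\overline\chi}_{i_0}H_{I,\theta}\bm{\overline\chi}_{i_0}$ is controlled by \eqref{HI1}, so the difference between the full resolvent and the free one is $O(\lambda_0\sigma_\Lambda^{3/2}\rho_0^{-1/2}\cdot(\mu\sin\vartheta)^{-1}\cdot(\mu\rho_0\sin\vartheta)^{-1})$ in the relevant weighted norm; this will produce the $\lambda_0/\rho_0^{1/2}$ piece inside the bracket of \eqref{rem}. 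Replacing the full resolvent by the free one, I then use that $H_{I,\theta}$ contributes a creation and an annihilation operator; the $a^*\cdots a$ ordering, together with $\bm{\chi}_{i_0}$ projecting onto the low-photon sector, means that to leading order only the term with one $a$ acting first and one $a^*$ acting last survives (the $a^*\cdots a^*$ and $a\cdots a$ pieces are either killed by the vacuum-adjacent projections or give higher-order-in-$\rho_0$ corrections). Using the pull-through formula \eqref{pull} to move the annihilation operator through $[H_{\theta,0}(\vec p)-z]^{-1}$ and through $\chi_{\rho_0}(H_f)$, and evaluating on the one-photon intermediate state, one obtains exactly $P_{i_0}w^\theta_{0,1}(\underline k)(\cdots)^{-1}w^\theta_{1,0}(\underline k)P_{i_0}$ integrated over $\underline k$, with the internal-energy denominator split into its $\overline P_{i_0}$ part (giving $Z^{od}$) and its $P_{i_0}$ part (giving $Z^{d}$), each tensored with $\chi^2_{\rho_0}(H_f)$ up to commutator corrections. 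Replacing $z$ by its value near $E_{i_0}$ and $H_f,\vec P_f$ in the denominator by $0$ introduces an error $O(\rho_0)$ relative to $\lambda_0^2$, which is the last term in \eqref{rem}; the $\rho_0^2/\lambda_0$ piece comes from bounding the genuinely higher-order Wick monomials $W^{(0)}_{m,n}$ with $m+n\ge 2$ produced in the normal-ordering, whose norms by Lemma \ref{borne0} are $O(\mu\sin\vartheta\,\rho_0)$ in $\|\cdot\|_{1/2}$ and hence $O(\rho_0^2\cdot\rho_0)=O(\rho_0^3)$ as operators — but one must be slightly more careful, since the relevant comparison is with $\lambda_0^2$, and the kernels themselves carry factors of $\lambda_0$; tracking the precise power of $\lambda_0$ in each kernel is what yields $\lambda_0^2\cdot\rho_0^2/\lambda_0 = \lambda_0\rho_0^2$.

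The main obstacle I anticipate is bookkeeping: organizing the Neumann/normal-ordering expansion so that every discarded term is provably of the claimed order in \emph{both} $\lambda_0$ and $\rho_0$ simultaneously, with the weighted resolvent norms from Lemma \ref{tech1} inserted at the right places, and with the $\sigma_\Lambda$-dependence and the $\min(1,\delta_0)$ from the off-diagonal internal denominator correctly propagated (the $\delta_0^{-2}$ in \eqref{rem} comes from bounding $(H_{is}-E_{i_0})^{-1}\overline P_{i_0}$ when estimating corrections to $Z^{od}$). A secondary technical point is justifying the pull-through manipulations and the passage from operator-valued integrals to the scalar kernels $w^\theta_{0,1}, w^\theta_{1,0}$ rigorously as identities of bounded operators on $\mathrm{Ran}(\bm{\chi}_{i_0})$, rather than merely formally; here I would invoke Lemma \ref{kerop} to convert kernel bounds into operator-norm bounds and cite \cite{FFS} for the detailed Wick-ordering combinatorics, exactly as the paper does elsewhere. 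Once the expansion is set up, collecting the surviving terms gives \eqref{leadingeq} and summing the error estimates gives \eqref{rem}.
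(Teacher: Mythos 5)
Your overall strategy is the one the paper intends: the paper gives no details for this lemma and simply refers to Lemma 3.16 of \cite{BFS-1999} (and \cite{BaFrSi98_01}), and your plan --- expand \eqref{fesh}, replace the full resolvent $[H_{\bm{\overline\chi}_{i_0}}(\vec p,z)]^{-1}$ by the free one at a cost $O(\lambda_0^3\rho_0^{-1/2})$ times the weighted resolvent constants of Lemma \ref{tech1}, contract the single creation/annihilation pair through the free resolvent via the pull-through formula to produce $Z^{od}$ and $Z^{d}$ tensored with $\chi^2_{\rho_0}(H_f)$, and freeze $z\approx E_{i_0}$ and $(H_f,\vec P_f)\approx(0,\vec 0)$ at relative cost $O(\rho_0)$ --- is exactly that computation.

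There is, however, a genuine error in your treatment of the first-order term $\lambda_0\bm{\chi}_{i_0}H_{I,\theta}\bm{\chi}_{i_0}$. This term does not vanish: $\chi_{\rho_0}(H_f)$ is a smooth energy cutoff, not a projection onto the vacuum, so $\chi_{\rho_0}(H_f)\,a^{\#}(\underline k)\,\chi_{\rho_0}(H_f)\neq 0$ for $|\vec k|\lesssim\rho_0$, and the internal factor $P_{i_0}(\vec\epsilon(\underline k)\cdot\vec d)\,P_{i_0}$ need not vanish either, since the paper imposes no condition (such as vanishing diagonal dipole matrix elements) on $\vec d$. Your fallback, that ``the whole term is $O(\lambda_0)$ and can be pushed into the remainder,'' is not sufficient: \eqref{rem} requires $\|\mathrm{Rem}\|$ to be of size $\lambda_0\rho_0^{2}+\lambda_0^{3}\rho_0^{-1/2}+\lambda_0^{2}\rho_0$ (up to the stated constants), so a bare $O(\lambda_0)$ bound does not place this term in $\mathrm{Rem}$. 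What is needed is the sharper bound $\lambda_0\|\bm{\chi}_{i_0}H_{I,\theta}\bm{\chi}_{i_0}\|=O(\lambda_0\rho_0^{2})$, which follows from Lemma \ref{kerop} with $m+n=1$ because the two cutoffs $\chi_{\rho_0}(H_f)$ confine the created/annihilated photon to $|\vec k|\le\rho_0$; this is precisely what the $\rho_0^{2}/\lambda_0$ piece inside the bracket of \eqref{rem} is there to absorb. Your attribution of that piece to the Wick monomials with $m+n\ge 2$ produced by normal-ordering the second-order term cannot be right: every such monomial already carries a factor $\lambda_0^{2}$ and therefore contributes at order $\lambda_0^{2}\rho_0^{\,a}$ for some $a>0$, never at order $\lambda_0\rho_0^{2}$. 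The same misconception (``killed by the vacuum-adjacent projections'') recurs in your discussion of the $a^*\cdots a^*$ and $a\cdots a$ pieces of the second-order term; there the conclusion survives, since those pieces are genuinely $O(\lambda_0^{2}\rho_0^{\,a})$ by Lemma \ref{kerop}, but the stated reason is wrong. Once the first-order term is bounded by $O(\lambda_0\rho_0^{2})$ and kept explicitly in $\mathrm{Rem}$, the rest of your bookkeeping (Neumann correction giving the $\lambda_0/\rho_0^{1/2}$ piece, with the $\mu^{-2}\sin^{-2}(\vartheta)\min(1,\delta_0^{2})^{-1}$ factors coming from \eqref{Htheta0}--\eqref{Htheta1}, and the $O(\lambda_0^{2}\rho_0)$ errors from freezing the arguments) closes the argument as claimed.
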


\noindent \emph{Proof.} 
The proof follows the 
lines of Lemma 3.16 of \cite{BFS-1999}, where similar results are shown for a different model (see also \cite{BaFrSi98_01}, where all details are included).    
\qed

\begin{remark} \label{r-leading}
Lemma \ref{leading-order} gives the leading order contribution of 
$F_{ \bm{\chi}_{i_{0}} }(H_{\theta}(\vec{p}) - z , H_{\theta,0}(\vec{p}) - z )$ provided that we choose, for instance,
\begin{equation} \label{67}
\rho_0 = \lambda_0^{4/5} .
\end{equation}
The condition $\lambda_0^2 \sigma_\Lambda^{3} (\mu \sin \vartheta )^{-2} \ll \rho_0$ is then satisfied if we require that $\lambda_0^{6/5} \ll \sigma_\Lambda^{-3} (\mu \sin \vartheta )^{2}$.  Equations \eqref{leadingeq} and \eqref{rem} give
\begin{align}
F_{\bm{\chi}_{i_{0}}}(H_{\theta}(\vec{p}) - z , & H_{\theta,0}(\vec{p}) - z ) =  \big [E_{i_0} - z  + e^{-\theta} H_f + e^{- 2 \theta} \frac{ \vec P_f^2}{2}  - e^{-\theta} \vec p\cdot \vec P_f \big] - \lambda_0^2 Z^{d}(\vec p)\otimes \chi^2_{\rho_0}( H_f )  \notag \\ & -  
\lambda_0^2 Z^{od}(\vec p) \otimes \chi^2_{\rho_0}( H_f ) 
+ \lambda_0^{2 + 3/5} \mathcal{O}\Big(  \frac{\sigma_\Lambda^{9/2} }{ \sin(\vartheta)^2 \mu^2
\min(1, \delta_0^2)}\Big).
\end{align}
The remainder term is small compared to $\lambda_0^2$ provided that 
$$
\lambda_0^{3/5} \ll \sigma_\Lambda^{-9/2} ( \mu \sin \vartheta \min(1, \delta_0))^{2}.
$$
\end{remark}

\subsection{The Imaginary Part of $z^{(\infty)}(\vec p)$} \label{imazinfty}
In this section we estimate the imaginary part of $z^{(\infty)}(\vec p)$, assuming here that $\vec p$ has real entries.  

We study the leading order term of  $F_{\bm{\chi}_{i_{0}}}(H_{\theta}(\vec{p}) - z , H_{\theta,0}(\vec{p}) - z )_{\mid P_{i_0} \otimes \mathds{1}_{H_f \leq \rho_0}[\mathcal{H}_{\vec{p}}]}$, that we denote by  $H_{L}(\vec p) - z$ (see \eqref{Hlead} below). It is a normal operator whose spectrum is explicitly computable: 
As $ Z^{od}(\vec p) $ and $Z^{d}(\vec p)$ are rank-one operators, we can write $Z^{od}(\vec p) =  z^{od}(\vec p)P_{i_0} $ and $Z^{d}(\vec p) =  z^{d}(\vec p)P_{i_0} $, for some complex numbers $ z^{od}(\vec p) $ and 
$ z^{d}(\vec p) $. Then we can write  $H_{L}(\vec p)$   as the sum of 
$ (E_{i_0} - \lambda_0^2 z^{d}(\vec p) - \lambda_0^2 z^{od}(\vec p))P_{i_0}  $ plus an operator that is a function of $H_f $ and $\vec P_f$. The spectrum of the latter operator lies in the lower half plane, which can be easily shown from geometrical considerations, using the spectral theorem. Using analyticity in $\theta$ we show that $z^{d}(\vec p)$ is real, which implies that the imaginary part of the spectrum of $H_{L}(\vec p)$ is below $ - \lambda_0^2 \Im z^{od}(\vec p) $. Proving that $ - \Im z^{od}(\vec p) < 0$ is, thus, essential to show that $    \Im z^{(\infty)}(\vec p) < 0 $. This is where the Fermi Golden Rule is used, see Subsection \ref{ana-Zod}. 

Once we have proven that the imaginary part of the spectrum of $H_{L}(\vec p)$ is below $ - \lambda_0^2 \Im z^{od}(\vec p) $, which is negative, 
we conclude by a perturbative argument, using a Neumann series expansion, that
$ H_{L}(\vec p) - z + {\rm Rem} $ is invertible if $\Im z$ is larger than a (strictly) negative number (for small $\lambda_0$). This assertion and the iso-spectrality of the Feshbach-Schur map then imply that $H_\theta(\vec p) - z$ is invertible for such $z$'s, from which we conclude in Subsection \ref{princ-estm} that $\Im z^{(\infty)}(\vec p)$ is (strictly) smaller than zero.

\subsubsection{Analysis of $Z^{od}(\vec p)$} \label{ana-Zod}
\begin{proposition}
Let $\vec p\in U_{\theta}[\vec{p}^*]\cap\mathbb{R}^3$.
We define (see \eqref{psi0})
\begin{equation}
z^{od}(\vec p) : = \langle \psi_{i_0}  | \; Z^{od}(\vec p)  \psi_{i_0}  \rangle.
\end{equation}
The following holds true:
\begin{align}\label{prop}
\Im z^{od}(\vec p) =   \pi \sum_{j < i_0} \int_{\underline{\mathbb{R}}^3} d \underline{k} & \big | \sum_{s \in \{1, 2, 3\}}(d_s)_{N-j + 1,N- i_0 +1} \epsilon_s(\underline{k})  \big |^2 \\ &  \notag |\vec k| | \Lambda (\vec k)|^2 \delta\big(E_j - E_{i_0} + |\vec k| 
-  \vec p \cdot \vec k + \frac{\vec k^2}{2} \big). 
\end{align}
\end{proposition}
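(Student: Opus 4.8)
The plan is to compute $z^{od}(\vec p)$ explicitly from its definition \eqref{Zs} and extract the imaginary part via the Sokhotski--Plemelj formula. First I would insert the rank-one projection structure: since $\overline{P}_{i_0} = \mathds{1}_{\mathbb{C}^N} - P_{i_0}$ projects onto $\mathrm{span}\{\psi_j : j \ne i_0\}$, and $H_{is}$ acts diagonally in this basis with eigenvalue $E_j$ on $\psi_j$, I resolve the resolvent $(H_{is} - E_{i_0} + e^{-\theta}|\vec k| + e^{-2\theta}\tfrac{\vec k^2}{2} - e^{-\theta}\vec p\cdot\vec k)^{-1}\overline{P}_{i_0}$ as $\sum_{j \ne i_0} |\psi_j\rangle\langle\psi_j|\,(E_j - E_{i_0} + e^{-\theta}|\vec k| + e^{-2\theta}\tfrac{\vec k^2}{2} - e^{-\theta}\vec p\cdot\vec k)^{-1}$. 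Pairing with $\psi_{i_0}$ on both sides and using $w^{\theta}_{1,0} = -w^{\theta}_{0,1}$ with $w^\theta_{0,1}(\underline k) = i e^{-2\theta}\Lambda(e^{-\theta}\vec k)|\vec k|^{1/2}\vec\epsilon(\underline k)\cdot\vec d$, this yields
\begin{align*}
z^{od}(\vec p) = \sum_{j \ne i_0} \int d\underline k\; \frac{ e^{-4\theta}|\Lambda(e^{-\theta}\vec k)|^2 |\vec k|\, \big|\langle\psi_j|\vec\epsilon(\underline k)\cdot\vec d\,\psi_{i_0}\rangle\big|^2 }{ E_j - E_{i_0} + e^{-\theta}|\vec k| + e^{-2\theta}\tfrac{\vec k^2}{2} - e^{-\theta}\vec p\cdot\vec k } ,
\end{align*}
where the matrix element is $\sum_s (d_s)_{N-j+1,N-i_0+1}\epsilon_s(\underline k)$ in the conventions of \eqref{dip}.

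Next I would take the limit $\theta = i\vartheta \to 0^+$ along the imaginary axis; analyticity of $z^{od}(\vec p)$ in $\theta$ (established in the course of the argument of Section~\ref{fermi}, or verifiable directly by a dominated-convergence estimate on the integral above since $\vec p\in\mathbb{R}^3$, $|\vec p|<1$ keeps the phase $|\vec k| - \vec p\cdot\vec k + \vec k^2/2$ bounded below by a positive multiple of $|\vec k|$ away from zero) allows me to evaluate $\Im z^{od}(\vec p)$ using the boundary value. Writing $e^{-\theta}|\vec k| = |\vec k|\cos\vartheta - i|\vec k|\sin\vartheta$ and similarly for the other terms, the denominator for $j < i_0$ acquires a strictly negative imaginary part of order $-\vartheta(E_{i_0}-E_j)$ plus higher order, so as $\vartheta\downarrow 0$ the Sokhotski--Plemelj formula gives
\begin{align*}
\lim_{\vartheta\downarrow 0} \Im \frac{1}{E_j - E_{i_0} + e^{-i\vartheta}|\vec k| + e^{-2i\vartheta}\tfrac{\vec k^2}{2} - e^{-i\vartheta}\vec p\cdot\vec k} = -\pi\,\delta\!\big(E_j - E_{i_0} + |\vec k| - \vec p\cdot\vec k + \tfrac{\vec k^2}{2}\big).
\end{align*}
For $j > i_0$ the quantity $E_j - E_{i_0} + |\vec k| - \vec p\cdot\vec k + \vec k^2/2$ is strictly positive for $|\vec p|<1$ (the photon kinetic part $|\vec k|-\vec p\cdot\vec k + \vec k^2/2 \geq (1-|\vec p|)|\vec k| \geq 0$ and $E_j > E_{i_0}$), so the $\delta$-function vanishes identically and those terms contribute nothing to the imaginary part. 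The prefactor $e^{-4\theta}|\Lambda(e^{-\theta}\vec k)|^2 \to |\Lambda(\vec k)|^2$ is real in the limit, so collecting the surviving $j < i_0$ terms gives precisely \eqref{prop}.

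The main obstacle I expect is justifying the interchange of the limit $\vartheta \to 0$ with the $\underline k$-integration and making the Sokhotski--Plemelj step rigorous — the integrand is singular on the energy shell, and one must control the integral uniformly in $\vartheta$. I would handle this by splitting the $\underline k$-integral into a neighborhood of the energy-shell hypersurface $\{E_j - E_{i_0} + |\vec k| - \vec p\cdot\vec k + \vec k^2/2 = 0\}$ and its complement; on the complement dominated convergence applies directly, while near the shell I introduce coordinates adapted to the level set, use that the gradient of the phase in $|\vec k|$ is $1 - \vec p\cdot\hat k + |\vec k| \geq 1 - |\vec p| > 0$ (so the shell is a smooth hypersurface and the phase is a nondegenerate coordinate transverse to it), and reduce to the standard one-dimensional identity $\lim_{\epsilon\downarrow 0}\Im (x + i\epsilon)^{-1} = -\pi\delta(x)$ tested against the smooth, compactly-supported-in-the-relevant-region density built from $|\Lambda|^2$, $|\vec k|$ and the (bounded, by \eqref{polariz}) matrix element. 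The Gaussian cutoff $\Lambda$ and the lower bound on the photon dispersion guarantee all integrability needed. Finally, I note the identification of the matrix element $\langle\psi_j | \vec\epsilon(\underline k)\cdot\vec d\,\psi_{i_0}\rangle = \sum_s (d_s)_{N-j+1,N-i_0+1}\epsilon_s(\underline k)$ follows from the indexing convention for the eigenbasis of $H_{is}$ fixed after \eqref{d00}, which accounts for the index reversal $j \mapsto N-j+1$.
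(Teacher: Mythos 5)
Your route (diagonalize the resolvent in the eigenbasis of $H_{is}$, pass to the boundary value $\vartheta\downarrow 0$, and use Sokhotski--Plemelj) differs from the paper's, which never takes a limit in $\vartheta$: writing the $\underline k$-integral in spherical coordinates, $z^{od}(\vec p)$ is represented as an integral of the fixed meromorphic function $f^{od}_{\widehat k}$ along the ray $e^{-\theta}\mathbb{R}_+$ (Eqs.\ \eqref{fod}--\eqref{zodfod}), and $\Im z^{od}(\vec p)$ is then evaluated exactly at the given $\theta=i\vartheta$ by closing the contour and summing residues at the poles $r^{od}_j$, $j<i_0$ (Eqs.\ \eqref{res-imzod}--\eqref{res-j}). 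Measured against the statement, your proposal has a genuine gap: the proposition concerns $\Im z^{od}(\vec p)$ at the \emph{fixed} $\theta=i\vartheta$, $0<\vartheta<\pi/4$, whereas you compute $\lim_{\vartheta\downarrow 0}\Im z^{od}(\vec p)$, and the bridge between the two is only the word ``analyticity''. Analyticity in $\theta$ does not imply that $z^{od}$ is constant in $\vartheta$, nor that an interior value equals the boundary value; and your suggested direct justification (the phase bounded below by a multiple of $|\vec k|$) fails precisely because for $j<i_0$ the full denominator $E_j-E_{i_0}+|\vec k|-\vec p\cdot\vec k+\vec k^2/2$ vanishes on the energy shell. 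What is needed is an actual constancy argument: e.g.\ the contour deformation underlying the paper's proof (the poles of $f^{od}_{\widehat k}$ lie on the positive real axis, and the Gaussian decay kills the arcs, so the ray at angle $-\vartheta$ can be rotated freely for $\vartheta\in(0,\pi/4)$), or equivalently invariance under the scaling $\vec k\mapsto e^{s}\vec k$ (real $s$) combined with the identity theorem. Without this step your computation does not prove \eqref{prop} for the $\theta$ actually in use.

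There is also a sign error in the Sokhotski--Plemelj step. You correctly observe that $\Im$ of the denominator is strictly negative for $\vartheta\in(0,\pi/4)$, i.e.\ the denominator is of the form $x-i\epsilon$ with $\epsilon>0$; but you then invoke the identity for $x+i\epsilon$. The correct limit is $\lim_{\epsilon\downarrow 0}\Im\,(x-i\epsilon)^{-1}=+\pi\delta(x)$, so the surviving $j<i_0$ terms carry $+\pi$, which is what \eqref{prop} asserts and what is needed later ($\Im z^{od}(\vec p)>0$ under the Fermi Golden Rule, hence $\Im z^{(\infty)}(\vec p)<0$); with your $-\pi\delta$ the final display would be the negative of \eqref{prop}, so the claimed match is internally inconsistent. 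A minor further point of bookkeeping: at complex $\theta$ the numerator produced by \eqref{w01} and \eqref{Zs} is $e^{-4\theta}\Lambda(e^{-\theta}\vec k)^2|\vec k|\,\vert\langle\psi_j|\vec\epsilon(\underline k)\cdot\vec d\,\psi_{i_0}\rangle\vert^2$ (the cutoff enters squared, not in modulus), which matters before the limit even though it is immaterial after it. Your program for making the limit rigorous (splitting near and away from the shell, using $1-\vec p\cdot\widehat k+|\vec k|>0$ as transversality) is sound but, once the constancy in $\vartheta$ is established by contour deformation, the residue computation of the paper gives the result directly and spares you this delicate limiting argument.
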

\noindent \emph{Proof.}
We denote by 
\begin{equation} \widehat{k} : = \frac{1}{|k|}\vec k, \hspace{.5 cm} 
r \underline k : =  (r \vec k, \lambda), \hspace{.5cm}  
\widehat{\underline k} : =   (\widehat{k}, \lambda), \hspace{.5cm} 
  \forall \underline k  \in \underline{\mathbb{R}}^3,  \: \underline k \ne (0,\lambda), \: \forall r \geq  0.
\end{equation} 
Let
\begin{align}\label{fod}
f_{\widehat k}^{od}(z) : = &  z^3 e^{-z^2/ \sigma_{\Lambda}^2}    \\ \notag 
& \cdot \langle \psi_{i_0}  |  \vec \epsilon(\widehat{\underline{k}})\cdot \vec d \:  \overline{P}_{i_0}
\big (  H_{is} - E_{i_0}  +z  +  \frac{z^2}{2}  -
z  \vec p\cdot \widehat{\underline{k}} \big )^{-1}   
\overline{P}_{i_0} \vec \epsilon(\widehat{\underline{k}})\cdot \vec d \:  \psi_{i_0}  \rangle .
\end{align}
Using spherical coordinates we obtain that (see \eqref{w01} and \eqref{Zs})  
\begin{equation}\label{zodfod}
z^{od}(\vec p) = \int d  \widehat{\underline k}\int_{0 }^{\infty}  dr e^{-\theta}  f_{\widehat k}^{od}(e^{-\theta}r),
\end{equation}
where $ \int   d  \widehat{\underline k} $ denotes the surface integral over the $2$ dimensional sphere in 
$\underline{\mathbb{R}}^3  $.  

Let $\gamma_\theta : [0, \infty) \to \mathbb{C}$ be the path defined by the formula
\begin{equation}\label{gammatheta}
\gamma_{\theta}(r) : =  e^{-\theta } r . 
\end{equation}
We denote furthermore, for every $R > 0$, by $\gamma_{\theta, R} $ the restriction of $\gamma_\theta$ to the interval 
$[0, R]$ and by $\tilde \gamma_{\theta, R}$ the straight (oriented) line segment with starting point 
$\gamma_{\theta}(R)$ and ending point $\gamma_{\overline{\theta}}(R)$.

Eq. \eqref{zodfod} implies that we can view the integral with respect to $r$ as a complex integral with respect to
$\gamma_\theta$. It follows furthermore that 
\begin{equation}\label{imzod}
\Im z^{od}(\vec p) = \frac{1}{2i} \int d  \widehat{\underline k}
\Big[ \int_{\gamma_{\theta}} f_{\widehat k}^{od}( z ) dz -   \int_{\gamma_{\overline{\theta}}} f_{\widehat k}^{od}( z ) dz \Big].
\end{equation}
The function $ f_{\widehat k}^{od}  $ is meromorphic in the region delimited by the curves $ \gamma_{\theta} $ and  
 $\gamma_{\overline{\theta}}$ containing the positive part of the real axis. The poles  of $ f_{\widehat k}^{od}  $
 in this region are the positive real numbers $r$ such that  $E_{i_0}  - r -  \frac{r^2}{2}  + 
 r  \vec p\cdot \widehat{\underline{k}} $ belongs $ \sigma (H_{is}) \setminus \{E_{i_0}\}$.
Since $  - r -  \frac{r^2}{2}  + 
 r  \vec p\cdot \widehat{\underline{k}}\, $ is strictly negative and strictly decreasing as a function of $r$ for $r > 0$, there are only $i_0 - 1$ poles and they correspond to the positive real numbers $r^{od}_j$ such that 
\begin{equation}\label{poles}
E_{i_0}  - r^{od}_j -  \frac{(r^{od}_j)^2}{2}  + 
 r^{od}_j  \vec p\cdot \widehat{\underline{k}}= E_j 
\end{equation}
for some $ j <i_0 $. In fact the explicit solutions of Eq. \eqref{poles} are given by the formula  
\begin{align}\label{poles1}
E_j   - E_{i_0} +    \frac{r^2}{2}  + 
 r(1 -  \vec p\cdot \widehat{\underline{k}}) = &
 \frac{1}{2}\Big [r - \Big ( -(1 -  \vec p\cdot \widehat{\underline{k}}) + \sqrt{2(E_{i_0} - E_j) + (1 -  \vec p\cdot \widehat{\underline{k}})^2}    \Big ) \Big ] \\ \notag & 
  \cdot \Big [r - \Big ( -(1 -  \vec p\cdot \widehat{\underline{k}}) - \sqrt{2(E_{i_0} - E_j) + (1 -  \vec p\cdot \widehat{\underline{k}})^2}    \Big ) \Big ].  
\end{align}

\noindent Let $R > 0$ such that the poles are contained in the interior of the (closed) curve $  \gamma_{\theta, R} + \tilde \gamma_{\theta, R} - \gamma_{\overline{\theta}, R} $ (the curve 
  $  \gamma_{\theta, R} $ followed by $\tilde \gamma_{\theta, R}  $ and this last one followed by 
  $ - \gamma_{\overline{\theta}, R} $, which is the curve $  \gamma_{\overline{\theta}, R} $ going in the contrary direction). It follows from the exponential decay of $ f_{\widehat k}^{od}  $  that 
\begin{equation}\label{morera}
\int_{\gamma_{\theta}} f_{\widehat k}^{od}( z ) dz -   \int_{\gamma_{\overline{\theta}}} f_{\widehat k}^{od}( z ) dz
= \int_{  \gamma_{\theta, R} + \tilde \gamma_{\theta, R} - \gamma_{\overline{\theta}, R} } f_{\widehat k}^{od}(z).
\end{equation}  
From the residue theorem and \eqref{imzod}-\eqref{morera} we conclude that   
\begin{equation}\label{res-imzod}
\Im z^{od}(\vec p) = \pi \int d \widehat{\underline{k}} \sum_{ j \in \{1, \cdots, i_0 -1 \}} {\rm Res}(f_{\widehat k}^{od}, r^{od}_j). 
\end{equation} 
We obtain finally \eqref{prop} from \eqref{res-imzod} and from the fact that
\begin{align}\label{res-j}
{\rm Res}(f_{\widehat k}^{od}, r^{od}_j) = & \lim_{z\to r_j^{od}} (z - r_j^{od}) f_{\widehat k}^{od}(z) \\ \notag = & \lim_{z\to r_j^{od}} (z - r_j^{od}) z^3e^{-  z^2/\sigma_{\Lambda}^2}    \\ \notag 
& \cdot \langle \psi_{i_0}  | \vec \epsilon(\widehat{\underline{k}})\cdot \vec d \:  P_{j}
\big (  H_{is} - E_{i_0}  + z +  \frac{z^2}{2}  -
z \vec p\cdot \widehat{\underline{k}} \big )^{-1}   
P_{j} \vec \epsilon(\widehat{\underline{k}})\cdot \vec d \: \psi_{i_0} \rangle \\ \notag 
 \notag = & (r_j^{od})^3 e^{-  (r_j^{od})^2/\sigma_{\Lambda}^2}    \\ \notag 
& \cdot \langle \psi_{i_0}  | \vec  \epsilon(\widehat{\underline{k}})\cdot \vec d \: P_{j}
\frac{1}{\sqrt{2(E_{i_0} - E_j) + (1 -  \vec p\cdot \widehat{\underline{k}})^2}}    
P_{j} \vec \epsilon(\widehat{\underline{k}})\cdot \vec d \: \psi_{i_0} \rangle, 
\end{align} 
where we used \eqref{poles}-\eqref{poles1}. 
\qed

\subsubsection{Estimations of $\Im z^{(\infty)}(\vec p)$}\label{princ-estm}
\begin{theorem}\label{princ}
Suppose that the parameters $\theta$, $\lambda_0$, $\rho_0$ satisfy the conditions of Lemma \ref{fespair}, Eq. \eqref{67},   and $\lambda_0^{3/5} \ll \sigma_\Lambda^{-9/2} ( \mu \sin \vartheta \min(1,\delta_0))^{2}$. There exists a positive constant $C$ independent of the problem parameters such that, for all $\vec p\in U_{\theta}[\vec{p}^*] \cap \mathbb{R}^3$ and $z \in \mathbb{C}$ such that $|z - E_{i_0}| < \rho_0 \mu \sin(\vartheta) / 32$ and
$$
\lambda_0^2  \Big[ C \frac{\sigma_\Lambda^{9/2} }{ \sin(\vartheta)^2 \mu^2
\min(1, \delta_0^2)} \lambda_0^{3/5}  - \Im z^{od}(\vec p) \Big] <  \Im z,
$$
the operator $H_{\theta}(\vec{p})- z$ is (bounded) invertible. In particular, if  $  \Im z^{od}(\vec p) > 0 $ (i.e. the Fermi Golden Rule is satisfied), the imaginary part of $ z^{(\infty)}(\vec p)$ is strictly negative.  

\end{theorem}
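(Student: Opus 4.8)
The plan is to use iso-spectrality of the smooth Feshbach-Schur map to reduce the invertibility of $H_\theta(\vec p)-z$ to that of the explicit normal operator $H_L(\vec p)$ of Lemma \ref{leading-order}, plus a controllably small perturbation. By the iso-spectral equivalence \eqref{isos1}, it suffices to show that $F_{\bm\chi_{i_0}}(H_\theta(\vec p)-z,H_{\theta,0}(\vec p)-z)_{\mid P_{i_0}\otimes\mathds{1}_{H_f\le\rho_0}[\mathcal H_{\vec p}]}$ is bounded invertible for the $z$'s allowed in the statement. With the choice $\rho_0=\lambda_0^{4/5}$ of \eqref{67}, Lemma \ref{leading-order} and Remark \ref{r-leading} let me write this restricted operator as $(H_L(\vec p)-z)+{\rm Rem}$, where, identifying the internal subspace $\mathbb{C}\psi_{i_0}$ with scalars, $H_L(\vec p)=E_{i_0}+e^{-\theta}H_f+e^{-2\theta}\vec P_f^2/2-e^{-\theta}\vec p\cdot\vec P_f-\lambda_0^2\big(z^d(\vec p)+z^{od}(\vec p)\big)\chi_{\rho_0}^2(H_f)$, and where $\|{\rm Rem}\|\le C_1\,\lambda_0^{2+3/5}\sigma_\Lambda^{9/2}\big(\sin(\vartheta)^2\mu^2\min(1,\delta_0^2)\big)^{-1}$ for some numerical constant $C_1$.

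The core of the argument is to locate the spectrum of $H_L(\vec p)$. Since $\vec p$ is real and $\theta=i\vartheta$, $H_L(\vec p)$ is a function of the commuting self-adjoint operators $H_f,\vec P_f$, hence normal, and by the spectral theorem $\sigma(H_L(\vec p))$ is the closure of the image of the joint spectrum $\{(h,\vec l):|\vec l|\le h,\ 0\le h\le\rho_0\}$ of $(H_f,\vec P_f)$ under $(h,\vec l)\mapsto E_{i_0}-\lambda_0^2\big(z^d(\vec p)+z^{od}(\vec p)\big)\chi_{\rho_0}^2(h)+e^{-i\vartheta}h+e^{-2i\vartheta}|\vec l|^2/2-e^{-i\vartheta}\vec p\cdot\vec l$. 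I would first check that $z^d(\vec p)\in\mathbb{R}$, arguing as in the analysis of $Z^{od}(\vec p)$: passing to spherical coordinates gives $z^d(\vec p)=\int d\widehat{\underline{k}}\int_{\gamma_\theta}f_{\widehat{k}}^{d}(w)\,dw$ with $f_{\widehat{k}}^{d}(w)=w^2e^{-w^2/\sigma_\Lambda^2}\langle\psi_{i_0}|\vec\epsilon(\widehat{\underline{k}})\cdot\vec d\,\psi_{i_0}\rangle^2\big(1-\vec p\cdot\widehat{\underline{k}}+w/2\big)^{-1}$; since $\vec p$ is real with $|\vec p|<1$, the only pole of $f_{\widehat{k}}^{d}$ sits at $w=-2(1-\vec p\cdot\widehat{\underline{k}})<0$, away from the sector between $\gamma_\theta$ and $\mathbb{R}_+$, so Morera's theorem and the Gaussian decay allow one to rotate $\gamma_\theta$ onto $\mathbb{R}_+$, and $\int_0^\infty f_{\widehat{k}}^{d}(s)\,ds$ is real because the integrand is non-negative. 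Using this, together with $\Im z^{od}(\vec p)\ge0$ (immediate from \eqref{prop}, whose integrand is non-negative) and $\sin(\vartheta)\vec p\cdot\vec l-\sin(\vartheta)h\le-\sin(\vartheta)(1-|\vec p|)h\le-\mu\sin(\vartheta)h$ (because $|\vec p|\le|\vec p^*|+\mu<1$ on $U_\theta[\vec p^*]$), one gets $\Im w\le-\mu\sin(\vartheta)h-\lambda_0^2\chi_{\rho_0}^2(h)\Im z^{od}(\vec p)$ for every $w\in\sigma(H_L(\vec p))$; since $\chi_{\rho_0}=1$ for $h\le 3\rho_0/4$ and $\mu\sin(\vartheta)\cdot 3\rho_0/4\ge\lambda_0^2\Im z^{od}(\vec p)$ (as $\rho_0=\lambda_0^{4/5}\gg\lambda_0^2$), the right-hand side is largest at the vacuum $h=|\vec l|=0$, so $\sup\{\Im w:w\in\sigma(H_L(\vec p))\}=-\lambda_0^2\Im z^{od}(\vec p)$.

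The conclusion is then routine. For $z$ with $\Im z+\lambda_0^2\Im z^{od}(\vec p)>0$, the normal operator $H_L(\vec p)-z$ keeps $0$ at distance at least $\Im z+\lambda_0^2\Im z^{od}(\vec p)$ from its spectrum, so $\|(H_L(\vec p)-z)^{-1}\|\le\big(\Im z+\lambda_0^2\Im z^{od}(\vec p)\big)^{-1}$. Setting $C:=C_1+1$, the hypothesis $\Im z>\lambda_0^2\big[C\sigma_\Lambda^{9/2}(\sin(\vartheta)^2\mu^2\min(1,\delta_0^2))^{-1}\lambda_0^{3/5}-\Im z^{od}(\vec p)\big]$ forces $\Im z+\lambda_0^2\Im z^{od}(\vec p)>C_1\lambda_0^{2+3/5}\sigma_\Lambda^{9/2}(\sin(\vartheta)^2\mu^2\min(1,\delta_0^2))^{-1}\ge\|{\rm Rem}\|$, hence $\|(H_L(\vec p)-z)^{-1}\|\,\|{\rm Rem}\|<1$ and $H_L(\vec p)-z+{\rm Rem}=(H_L(\vec p)-z)\big(1+(H_L(\vec p)-z)^{-1}{\rm Rem}\big)$ is bounded invertible; iso-spectrality gives the invertibility of $H_\theta(\vec p)-z$. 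For the last claim, if $\Im z^{od}(\vec p)>0$ then for $\lambda_0$ small the threshold $\lambda_0^2\big[C\sigma_\Lambda^{9/2}(\sin(\vartheta)^2\mu^2\min(1,\delta_0^2))^{-1}\lambda_0^{3/5}-\Im z^{od}(\vec p)\big]$ is strictly negative; since $z^{(\infty)}(\vec p)$ is an eigenvalue of $H_\theta(\vec p)$ by Section \ref{conv}, lying in $D(E_{i_0},\rho_0\mu\sin(\vartheta)/32)$ by Lemma \ref{zeros} and \eqref{distz}, the operator $H_\theta(\vec p)-z^{(\infty)}(\vec p)$ is \emph{not} invertible, so by contraposition $\Im z^{(\infty)}(\vec p)$ lies below that negative threshold. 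The one genuinely delicate point is the second paragraph: the contour rotation establishing $z^d(\vec p)\in\mathbb{R}$, and the geometric bookkeeping that simultaneously pins $\sup\Im\sigma(H_L(\vec p))$ to $-\lambda_0^2\Im z^{od}(\vec p)$ and keeps the $O(\lambda_0^{2+3/5})$ remainder ${\rm Rem}$ well inside the $O(\lambda_0^2)$ spectral gap opened up by Fermi's Golden Rule.
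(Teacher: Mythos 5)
Your proof is correct and follows essentially the same route as the paper: the decomposition of the Feshbach operator via Lemma \ref{leading-order} with the choice \eqref{67}, reality of $z^{d}(\vec p)$ by contour rotation (the paper phrases this as $\theta$-independence plus evaluation at $\theta=0$), the spectral theorem for the normal operator $H_{L}(\vec p)$ giving $\|(H_{L}(\vec p)-z)^{-1}\|\le |\Im z+\lambda_0^2 \Im z^{od}(\vec p)|^{-1}$, a Neumann series to absorb ${\rm Rem}$, and iso-spectrality to pass back to $H_\theta(\vec p)-z$ and to conclude $\Im z^{(\infty)}(\vec p)<0$. Your explicit bookkeeping on the region $3\rho_0/4<h\le\rho_0$ where $\chi_{\rho_0}<1$ fills in a step the paper leaves implicit; the only cosmetic slip is invoking ``Morera's theorem'' for the contour deformation, where Cauchy's theorem together with the Gaussian decay is what is actually used.
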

\begin{proof}
We define 
\begin{equation}
 z^{d}(\vec p) : = \langle \psi_{i_0}  | \; Z^{d}(\vec p)  \psi_{i_0} \rangle.  
\end{equation}
Applying the procedures of Paragraph \ref{ana-Zod}, it is easy to prove that $z^{d}(\vec p)  $ does not depend on $\theta$.  $z^{d}(\vec p)  $  is therefore real, since for $\theta = 0$,  $z^{d}(\vec p) \in \mathbb{R}$. Using \eqref{leadingeq} and \eqref{rem} we obtain that 
\begin{align}\label{leadingeqprima}
F_{\bm{\chi}_{i_{0}}}(H_{\theta}(\vec{p}) - z , H_{\theta,0}(\vec{p}) - z )_{\mid P_{i_0} \otimes \mathds{1}_{H_f \leq \rho_0}[\mathcal{H}_{\vec{p}}]} = &  (H_{L}(\vec p)- z)_{\mid P_{i_0} \otimes \mathds{1}_{H_f \leq \rho_0}[\mathcal{H}_{\vec{p}}]}    + {\rm Rem},
\end{align}
where
\begin{align}\label{Hlead}
H_{L}(\vec p) : = & E_{i_0} - \big [ \lambda_0^2 \Re z^{od}(\vec p) + \lambda_0^2z^{d} (\vec p) \big] \bm{\chi}_{i_{0}}^2
- i \lambda_0^2\Im z^{od}(\vec p) \bm{\chi}_{i_{0}}^2  \\ \notag & 
 + \big[ e^{-\theta} H_f + e^{- 2 \theta} \frac{ \vec P_f^2}{2}  - e^{-\theta} \vec p\cdot \vec P_f \big] 
\end{align}
and
\begin{equation}\label{remprima}
\| {\rm Rem} \| \leq C \lambda_0^{2 + 3/5} \frac{\sigma_\Lambda^{9/2} }{ \sin(\vartheta)^2 \mu^2 \min(1, \delta_0^2)} . 
\end{equation}
As $| \vec p | < 1$, it follows that for every $ r \geq 0 $ and every $\vec l \in \mathbb{R}^3$ with $|\vec l| \leq r$ ,
\begin{equation}\label{tempo}
r - \vec p \cdot \vec l \geq 0, 
\end{equation}
which implies that 
\begin{equation}\label{tempo.1}
\Im \Big[e^{-\theta} r + e^{- 2 \theta} \frac{\vec l^2}{2} - e^{-\theta} \vec p \cdot \vec l  \Big] \leq 0.  
\end{equation}
Eq.\eqref{tempo.1} and the Spectral Theorem applied to the normal operator $H_{L}(\vec p)$ imply that 
$H_{L}(\vec p) - z$ restricted to the range of $\bm{\chi}_{i_{0}}$ is invertible for $\Im z > - \lambda_0^2 \Im z^{od}(\vec p)  $ and that 
\begin{equation}\label{tempo.2}
\| (H_{L}(\vec p) - z)_{\mid P_{i_0} \otimes \mathds{1}_{H_f \leq \rho_0}[\mathcal{H}_{\vec{p}}]}^{-1}\| \leq \frac{1}{| \Im z + \lambda_0^2 \Im z^{od}(\vec p) |}. 
\end{equation}
A Neumann series expansion together with \eqref{remprima} and \eqref{tempo.2} imply that $F_{\bm{\chi}_{i_{0}}}(H_{\theta}(\vec{p}) - z , H_{\theta,0}(\vec{p}) - z )_{\mid P_{i_0} \otimes \mathds{1}_{H_f \leq \rho_0}[\mathcal{H}_{\vec{p}}]}$ is invertible, for $\Im z > - \lambda_0^2 \Im z^{od}(\vec p)$, whenever 
\begin{equation}\label{tempo.3}
C \lambda_0^{2 + 3/5} \frac{\sigma_\Lambda^{9/2} }{ \sin(\vartheta)^2 \mu^2 \min(1, \delta_0^2)} \frac{1}{| \Im z + \lambda_0^2 \Im z^{od} (\vec p)|} < 1 .
\end{equation}
The conclusions of Theorem \ref{princ} follow from this last assertion and the iso-spectrality of the
Feshbach-Schur map.      
\end{proof}

\appendix 
\vspace{4mm}

\section{Analyticity of Type (A)}
\label{typA}
We recall that a map $\eta \mapsto A_\eta$ from an open connected set $V \subset \mathbb{C}^n$ to the set of (unbounded) operators in a Hilbert space $\mathcal{H}$ is called analytic of type (A) if there is a dense domain $D \subset \mathcal{H}$ such that, for all $\eta \in V$, $A_\eta$ is closed on $D$, and for all $ \psi \in D$, the map $\eta \mapsto A_\eta \psi$ is analytic on $V$.

\begin{lemma} \label{typeA}
Let $U := \{ \vec{p} \in \mathbb{C}^3 \mid  \vert \Re(\vec{p}) \vert +\vert \Im(\vec{p}) \vert  < 1 \}$. The map $(\theta,\vec{p},\lambda_0) \mapsto H_\theta( \vec{p} )$ is analytic of type (A) on $D(0, \pi / 4) \times  U \times \mathbb{C}$.
\end{lemma}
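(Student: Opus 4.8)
The plan is to verify the two requirements in the definition of analyticity of type (A) with the common domain $D := D(H_f)\cap D(\vec P_f^2)$ (tensored with $\mathbb{C}^N$, as in the definition of $D(H(\vec p))$). This set is dense, since $\vec P_f^2$ and $H_f$ are commuting nonnegative self-adjoint operators with a common core (finite-particle vectors with Schwartz components), so $D = D(\vec P_f^2 + H_f)$ is dense. Thus it remains to show: (i) $H_\theta(\vec p)$ is closed on $D$ for every $(\theta,\vec p,\lambda_0)\in D(0,\pi/4)\times U\times\mathbb{C}$, and (ii) $(\theta,\vec p,\lambda_0)\mapsto H_\theta(\vec p)\psi$ is analytic for every fixed $\psi\in D$.

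For the closedness I would first treat the "free" part $S_\theta := \tfrac12 e^{-2\theta}\vec P_f^2 + e^{-\theta}H_f$. Since $\vec P_f$ and $H_f$ are simultaneously diagonalized (multiplication by $\sum_i\vec k_i$, resp.\ $\sum_i|\vec k_i|$, on each $n$-photon sector), $S_\theta$ is a maximal multiplication operator, and $|\theta|<\pi/4$ forces $|\Im\theta|<\pi/4$, hence $\Re(e^{-2\theta})>0$ and $\Re(e^{-\theta})>0$. Therefore, on every compact $K\subset D(0,\pi/4)$ there are constants $0<c_K\le C_K$ with $c_K(\vec P_f^2+H_f)\le\Re S_\theta$ and $|S_\theta|\le C_K(\vec P_f^2+H_f)$ pointwise, which yields $c_K\big(\|\vec P_f^2 u\|^2+\|H_f u\|^2\big)^{1/2}\le\|S_\theta u\|\le C_K\big(\|\vec P_f^2 u\|+\|H_f u\|\big)$ for $u\in D$; in particular $S_\theta$ is closed on $D$ with graph norm equivalent, locally uniformly in $\theta$, to that of $\vec P_f^2+H_f$. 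Next, $H_{is}$ is bounded, and $e^{-\theta}\vec p\cdot\vec P_f$ is $\vec P_f^2$-bounded with relative bound $0$ (using $t\le\varepsilon t^2+(4\varepsilon)^{-1}$ with $t=|\vec P_f|$, and $\sum_j|p_j|\le\sqrt3$ on $U$), hence $S_\theta$-bounded with relative bound $0$, locally uniformly; so $T_\theta(\vec p):=S_\theta+H_{is}-e^{-\theta}\vec p\cdot\vec P_f$ is closed on $D$ and $\|\vec P_f^2 u\|+\|H_f u\|\le C(\|T_\theta(\vec p)u\|+\|u\|)$, locally uniformly. Finally, writing $H_{I,\theta}=a(g_\theta)-a^*(g_\theta)$ with one-photon kernel $g_\theta(\underline k)\propto e^{-2\theta}\Lambda(e^{-\theta}\vec k)|\vec k|^{1/2}\,\vec\epsilon(\underline k)\cdot\vec d$ (valued in $N\times N$ matrices), the Gaussian factor obeys $|\Lambda(e^{-\theta}\vec k)|=e^{-\Re(e^{-2\theta})|\vec k|^2/(2\sigma_\Lambda^2)}$, which decays precisely because $\Re(e^{-2\theta})>0$; hence $g_\theta$ and $g_\theta/|\vec k|^{1/2}$ lie in $L^2$, with norms locally uniformly bounded in $\theta$. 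The standard estimates $\|a(g)u\|\le\|g/\sqrt\omega\|_2\,\|H_f^{1/2}u\|$ and $\|a^*(g)u\|\le\|g/\sqrt\omega\|_2\,\|H_f^{1/2}u\|+\|g\|_2\,\|u\|$ (with $\omega(\vec k)=|\vec k|$) then show $H_{I,\theta}$ is $H_f^{1/2}$-bounded, hence $H_f$-bounded with relative bound $0$, hence (by the previous estimate) $T_\theta(\vec p)$-bounded with relative bound $0$; multiplying by $\lambda_0$ preserves relative bound $0$ (for $\lambda_0\neq0$ rescale $\varepsilon\mapsto\varepsilon/|\lambda_0|$; $\lambda_0=0$ is trivial). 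By the Kato--Rellich stability theorem for closed operators, $H_\theta(\vec p)=T_\theta(\vec p)+\lambda_0 H_{I,\theta}$ is therefore closed on $D$ for all $(\theta,\vec p,\lambda_0)$.

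For the analyticity, fix $\psi\in D$ and write $H_\theta(\vec p)\psi=\tfrac12 e^{-2\theta}\vec P_f^2\psi-\sum_{j=1}^3 e^{-\theta}p_j P_f^{(j)}\psi+H_{is}\psi+e^{-\theta}H_f\psi+\lambda_0\big(a(g_\theta)\psi-a^*(g_\theta)\psi\big)$. The first four terms are $\mathcal{H}$-valued polynomials in $e^{-\theta}$ and in $\vec p$, with fixed coefficient vectors $\vec P_f^2\psi$, $P_f^{(j)}\psi$, $H_{is}\psi$, $H_f\psi\in\mathcal{H}$ (which exist since $\psi\in D\subset D(|\vec P_f|)$), hence entire in $(\theta,\vec p)$ and independent of $\lambda_0$. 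For the last term, $\theta\mapsto g_\theta$ is analytic as a map into $L^2$ and into the $\omega^{-1}$-weighted $L^2$ (Gaussian with holomorphic parameter; finiteness and analyticity of the relevant integrals by dominated convergence and Morera's/Cauchy's theorem), while $g\mapsto a(g)\psi$ and $g\mapsto a^*(g)\psi$ are bounded linear from the weighted $L^2$ into $\mathcal{H}$ for $\psi\in D(H_f^{1/2})\supset D$; composing with these and with the linear factor $\lambda_0$ gives joint analyticity. Since $D(0,\pi/4)\times U\times\mathbb{C}$ is open and connected in $\mathbb{C}^5$, this establishes (ii), and together with the closedness in (i) it is exactly the statement that $(\theta,\vec p,\lambda_0)\mapsto H_\theta(\vec p)$ is analytic of type (A).

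The main obstacle I anticipate is the closedness step: because $\lambda_0$ ranges over all of $\mathbb{C}$ (not just small values), one cannot use a naive small-perturbation argument, and one must genuinely extract relative bound $0$ of the interaction with respect to the free part — which in turn rests on the a priori equivalence $\|S_\theta u\|\sim\|\vec P_f^2 u\|+\|H_f u\|$ and on the Gaussian decay of $\Lambda(e^{-\theta}\cdot)$, both of which use the positivity of $\Re(e^{-2\theta})$ and $\Re(e^{-\theta})$ guaranteed by $|\theta|<\pi/4$. Everything else (density of $D$, the $L^2$ continuity of creation/annihilation operators, analyticity of Gaussian-type integrals) is routine.
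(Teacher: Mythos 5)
Your proof is correct, and its overall architecture coincides with the paper's: closedness follows from an a priori estimate controlling $\|H_f\psi\|$ and $\|\vec P_f^2\psi\|$ by the non-interacting fibre operator, combined with the fact that $H_{I,\theta}$ is $H_f^{1/2}$-bounded, hence infinitesimally $H_f$-bounded, so that stability of closedness under relatively bounded perturbations applies; the analyticity check is then routine (the paper, too, only notes that real analyticity of $\Lambda$ is needed). Where you genuinely differ is in how the a priori estimate is produced. The paper keeps the drift term $-e^{-\theta}\vec p\cdot\vec P_f$ inside the free part and expands $\|(e^{-2\theta}\vec P_f^2/2-e^{-\theta}\vec p\cdot\vec P_f+e^{-\theta}H_f)\psi\|^2$, showing that the cross term $\Re\big(e^{i\Im\theta}\langle(H_f-\vec p\cdot\vec P_f)\psi\,\vert\,\vec P_f^2\psi\rangle\big)$ is nonnegative; this is precisely where $|\sin\Im\theta|\le\cos\Im\theta$ and the condition $1-|\Re\vec p|-|\Im\vec p|>0$ defining $U$ enter, and it yields the explicit coercivity constants $e^{-2\Re\theta}(1-|\vec p|)^2$ and $\tfrac14 e^{-4\Re\theta}$. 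You instead treat only $S_\theta=\tfrac12 e^{-2\theta}\vec P_f^2+e^{-\theta}H_f$ via a two-sided symbol bound in the joint spectral representation (using $\Re e^{-\theta}>0$, $\Re e^{-2\theta}>0$, valid since $|\Im\theta|<\pi/4$) and then absorb $-e^{-\theta}\vec p\cdot\vec P_f$ as an infinitesimally $\vec P_f^2$-bounded perturbation. This is more modular and in fact gives closedness for every $\vec p\in\mathbb{C}^3$, locally uniformly, so the geometry of $U$ plays no role in your closedness step (it only matters for the paper's cross-term positivity and, later in the paper, for spectral estimates); the price is one extra application of the stability theorem and less explicit constants. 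One small point of care in your analyticity step: the smeared operators $a(g)$, $a^*(g)$ must be taken linear in the kernel $g$, as your formulas implicitly do and as matches \eqref{inttheta}; with the usual antilinear convention for $a(\cdot)$ the map $\theta\mapsto a(g_\theta)\psi$ would not be analytic, so it is essential that the un-conjugated kernel is the one analytic in $\theta$, which is the case here.
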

\begin{proof}
Let $\theta \in D( 0 , \pi / 4 )$, $\vec{p} \in \mathbb{C}^3$ with $| \Re \vec{p} | + | \Im \vec{p} | < 1$, and $\lambda_0 \in \mathbb{C}$. For all $\psi \in D( H_f ) \cap D( \vec{P}_f^2 )$, we have that
\begin{align*}
& \big \| \big (Êe^{- 2 \theta} \frac{\vec{P}_f^2}{2} - e^{-\theta}  \vec{p} \cdot \vec{P_f} + e^{-\theta} H_{f} \big ) \psi \big \|^2 \\
& = e^{- 2 \Re \theta} \big \|Ê( H_f - \vec{p} \cdot \vec{P}_f ) \psi \big \|^2 + \frac14 e^{-4 \Re \theta}  \big \| \vec{P}_f^2 \psi \big \|^2 + \Re \big ( e^{- \theta } e^{- 2 \bar \theta } \big \langle ( H_f - \vec{p} \cdot \vec{P}_f ) \psi \vert  \text{ }\vec{P}_f^2 \psi \big \rangle \big ) \\
& \ge e^{- 2 \Re \theta} ( 1 - | \vec{p} | )^2  \| H_f \psi \|^2 + \frac14 e^{-4 \Re \theta}  \big \| \vec{P}_f^2 \psi \big \|^2 + e^{-3 \Re \theta } \Re \big ( e^{ i \Im \theta } \big \langle ( H_f - \vec{p} \cdot \vec{P}_f ) \psi \vert  \text{ }\vec{P}_f^2 \psi \big \rangle \big ) .
\end{align*}
Using that $| \sin \Im \theta | \le \cos \Im \theta$, a direct computation gives
\begin{align*}
\Re \big ( e^{ i \Im \theta } \big \langle ( H_f - \vec{p} \cdot \vec{P}_f ) \psi \vert \text{ }\vec{P}_f^2 u \rangle \big ) &= \cos ( \Im \theta ) \big \langle ( H_f - \Re \vec{p} \cdot \vec{P}_f ) \psi \vert   \text{ } \vec{P}_f^2 u \big \rangle  + \sin ( \Im \theta ) \big \langle \Im \vec{p} \cdot \vec{P}_f \psi \vert  \text{ } \vec{P}_f^2 \psi \big \rangle \\
& \ge \cos ( \Im \theta ) ( 1 - | \Re \vec{p} | - | \Im \vec{p} | ) \big \langle H_f \psi \vert   \text{ } \vec{P}_f^2 \psi \big \rangle \ge 0 ,
\end{align*}
and hence
\begin{align*}
& \big \| \big (Êe^{- 2 \theta} \frac{\vec{P}_f^2}{2} - e^{-\theta}  \vec{p} \cdot \vec{P_f} + e^{-\theta} H_{f} \big ) \psi \big \|^2 \ge e^{- 2 \Re \theta} ( 1 - | \vec{p} | )^2  \| H_f \psi \|^2 + \frac14 e^{-4 \Re \theta}  \big \| \vec{P}_f^2 \psi \big \|^2 .
\end{align*}
This implies that $e^{- 2 \theta} \vec{P}_f^2 / 2 - e^{-\theta}  \vec{p} \cdot \vec{P_f} + e^{-\theta} H_{f} $ is closed on $D( H_f ) \cap D( \vec{P}_f^2 )$. Since $H_{I,\theta}$ is relatively $H_f^{1/2}$-bounded, it is infinitesimally small with respect to $H_f$, and therefore, since in addition $H_{is}$ is bounded, we easily deduce that $H_\theta( \vec{p} )$ is closed on $ D( H_f ) \cap D( \vec{P}_f^2 )$.

Verifying that $(\theta,\vec{p},\lambda_0) \mapsto H_\theta( \vec{p} )$ is analytic on $D( 0 , \pi / 4 ) \times U \times \mathbb{C}$ for all $\psi \in D( H_f ) \cap D( \vec{P}_f^2 )$ is straightforward (Here we need in particular that the ultraviolet cutoff function $\Lambda$ is real analytic).
\end{proof}

\section{Proof of Lemmas \ref{kerop} and \ref{tech1}} \label{AppendixB}
\subsection{ Proof of Lemma \ref{kerop}}
Let $\varphi,\psi \in \mathrm{Ran} ( \mathds{1}_{H_f \leq \rho} \mathcal{H}_f )$.  We have that
\begin{align*}
\vert \langle \psi \vert W_{m,n}  \varphi \rangle  \vert & \leq \|w_{m,n} \|_{\frac{1}{2} }  \int_ {\underline{B}_{\rho }^{(m,n)}}   d \underline{K}^{(m,n)}   \| a(\underline{k}^{(m)})  \psi \| \| a(\underline{\tilde{k}}^{(n)})  \varphi\|  \text{ } \vert  \underline{k}^{(m)} \vert ^{1/2} \vert \underline{\tilde{k}}^{(n)} \vert^{1/2}\\
& \leq  \|w_{m,n} \|_{\frac{1}{2}}  \mathcal{V}_{m}^{1/2} \mathcal{V}_{n}^{1/2} D_{n}(\varphi)^{1/2} D^{1/2}_{m}(\psi),   
\end{align*}
where
\begin{align*}
\mathcal{V}_{m}&:= \int_ {\underline{B}_{\rho}^{(m)}}  d \underline{k}^{(m )},\\
 D_{n}(\varphi) &:=   \int_ {\underline{B}_{\rho}^{(n)}}  d \underline{k}^{(n )}  \vert  \underline{k}^{(n)} \vert  \text{ }  \| a(\underline{k}^{(n)})  \varphi\|^{2}.
\end{align*}
A direct computation gives $\mathcal{V}_m \le (m!)^{-1} (8\pi)^m \rho ^{3m}$ and an easy argument by induction shows that  $ D_{n}(\varphi) \le \|H_{f}^{n/2} \varphi \|^{2} \leq \rho^{n} \|\varphi \|^{2}$. Putting all the bounds together, we find that
\begin{equation}
\vert \langle \psi \vert W_{m,n} \varphi \rangle  \vert  \leq \rho^{2(m+n)} \|w_{m,n} \|_{\frac{1}{2}} \frac{ (8 \pi)^{\frac{m+n}{2}} }{\sqrt{m! n!}} \|\psi \| \|\varphi \| ,
\end{equation}
which implies \eqref{eq:a1}.

\subsection{Proof of Lemma \ref{tech1}}

\vspace{2mm}

\subsubsection{Proof of the estimate \eqref{HI1} }
Let  $j \in \{1,...,3\}$. We introduce
$$f_j(\underline{k})= -i \overline{\Lambda}( e^{-  \theta}  \vec{k}) \vert \vec{k} \vert^{1/2}  \epsilon_j(\underline{k}).$$
It is sufficient to bound  $ a(f_j) (H_{f}+\rho )^{-1/2} $.  We set $\omega(\vec{k})= \vert \vec{k} \vert$. Thanks to the pull-through formula, we have that for any $\psi \in \mathcal{H}_f$, 
\begin{align*}
 \| a(f_j)& (H_{f}+\rho )^{-1/2}  \psi \| \leq \int d\underline{k} \text{ } \|   (H_{f} +\rho + \vert \vec{k} \vert )^{-1/2} a (\underline{k}) \overline{f_j(\underline{k})} \psi \| \\
&\leq     \big \| \frac{ f_j  }{ \omega^{1/2}} \big \|_{L^2(\underline{\mathbb{R}}^3)}  \left(\int d\underline{k} \|   (H_{f} +\rho +\omega )^{-1/2} \omega^{1/2} a(.)\psi \|^2 \right)^{1/2}\\
&\leq  2   \| \Lambda( e^{-  i \vartheta}  \cdot) \|_{L^2(\mathbb{R}^3)}\|\psi \|
\end{align*}
where the last line comes from the equality
\begin{equation*}
\vert \vert (H_{f} +\rho +\omega(k))^{-1/2} \omega^{1/2} a(k)\psi \vert \vert^{2}  = \langle   (H_{f} +\rho )^{-1/2} \psi \vert \text{ } \omega(k) a^{*}(k) a(k)  (H_{f} +\rho)^{-1/2} \psi \rangle.  
\end{equation*}
This proves \eqref{HI1}. 
\vspace{2mm}

\subsubsection{Proof of the estimate \eqref{Htheta0} }
Let  $(\vec{p},z) \in \mathcal{U}_{\rho_0}[E_{i_{0}}]$. We have that
\begin{equation}
[H_{\theta,0}(\vec{p}) - z] _{ | \text{Ran}(\overline{\bm{\chi}}_{i_{0}})}= \sum_{j=1}^{N} P_j \otimes b_j(\vec{p},z,H_f,\vec{P}_f),
\end{equation}
where
\begin{align}
b_{i_{0}}(\vec{p},z,r,\vec{l})&= \left(e^{-2 \theta}  \frac{\vec{l}^2}{2} + e^{- \theta} r - e^{-\theta} \vec{p} \cdot \vec{l}   + E_{i_{0}} -z \right) \mathds{1}_{r  \geq 3 \rho_0/4 }, \label{eq:bj_n1} \\[4pt]
b_j(\vec{p},z,r,\vec{l})&=   e^{-2 \theta} \frac{\vec{l}^2}{2}  + e^{- \theta} r - e^{-\theta} \vec{p} \cdot \vec{l}   + E_j -z   \quad \text{}  (j \neq i_0).  \label{eq:bj_n2}
\end{align}

\noindent Any vector $\varphi \in \mathcal{H}_f$ can be represented as a sequence $(\varphi^{(n)})$ of completely symmetric functions of momenta, $\varphi^{(n)} \in L^{2}_{s}(\underline{\mathbb{R}}^{3n})$.  The operators $b_j(\vec{p},z,H_f,\vec{P}_f)$  are multiplication operators in this representation.  Therefore, we  only need to show that $\vert b_{i_{0}}(\vec{p},z,r,\vec{l}) \vert$ and $\vert b_j(\vec{p},z,r,\vec{l}) \vert$, $j \neq i_0$,  are bounded  below by  strictly positive constants.  This amounts to estimate the distance between $z$ and  the range of $b_{i_0}(\vec{p},0, \cdot , \cdot)$ and $b_{j}(\vec{p},0, \cdot , \cdot)$, $j \neq i_0$; see Figure \ref{fig12} below.  
\begin{figure}[H]
\begin{center}
\begin{tikzpicture}
 
       \draw[->] (0,0)--(14,0.0);

             \fill[ color=black] (3,0) circle (0.05) ;
               \fill[ color=black] (7,0) circle (0.05) ;
                    \fill[ color=black] (11,0) circle (0.05) ;

           \draw[ color=black,dotted] (7,0) circle (0.7) ;

          \fill[color=black!25]  (3,0)to[bend left=0] (5.5,-2)  to[bend left=0] (3.6,-2.)   to[bend left=-10] (3,0);              
         \fill[color=black!25]     (7.5,-1)to[bend left=-15] (8.1,-0.75) to[bend left=0] (9.5,-2)  to[bend left=0] (7.7,-2.)   to[bend left=-5] (7.5,-1);    
            \fill[color=black!25]     (11,0)to[bend left=0] (13.5,-2)  to[bend left=0] (11.6,-2.)   to[bend left=-10] (11,0);    
            
             \draw[-,dashed] (3,0)--(5,-2.0);
              \draw[-,dashed] (7,0)--(9,-2.0);
                \draw[-,dashed] (11,0)--(13,-2.0);
                
           \draw[-,dashed] (3,0)--(3.3,-2.0);
              \draw[-,dashed] (7,0)--(7.3,-2.0);
                \draw[-,dashed] (11,0)--(11.3,-2.0);

    \draw (1,-.06) node[above] {...};
     \draw (3,-.06) node[above] {\small$ E_{i_0-1}$};
         \draw (7,-.06) node[above] {\small$ E_{i_0}$};
                     \draw (11,-.06) node[above] {\small$ E_{i_0 + 1}$};
                       \draw (13,-.06) node[above] { ... };

       \draw[->,color=black]  (3.4,0)to[bend left=20] (3.2,-0.2);
       \draw[->,color=black]  (3.8,0)to[bend left=20] (3.1,-0.5);
    \draw (3.4,0.1) node[below] { \footnotesize $\vartheta$};     
     \draw (3.7,-0.2) node[below] { \footnotesize $ 2 \vartheta$};

       \draw[<->,color=black]  (6.95,-0.05)--(6.7,-0.6);
       \draw (6.7,-0.4) node[above] { \footnotesize $r_0$}; 
     
\end{tikzpicture}
\end{center}
\caption{ \small The spectral parameter $z$ is located inside the disk $D(E_{i_0},r_0)$ of center $E_{i_0}$ and radius $r_0$. $\vert E_{i_0}- E_j \vert \geq \delta_0$ for all $j \neq i_0$. The grey shaded regions contain the range of $b_{i_0}(\vec{p},0, \cdot , \cdot)$ and $b_{j}(\vec{p},0, \cdot , \cdot)$, $j \neq i_0$.} \label{fig12}
\end{figure}
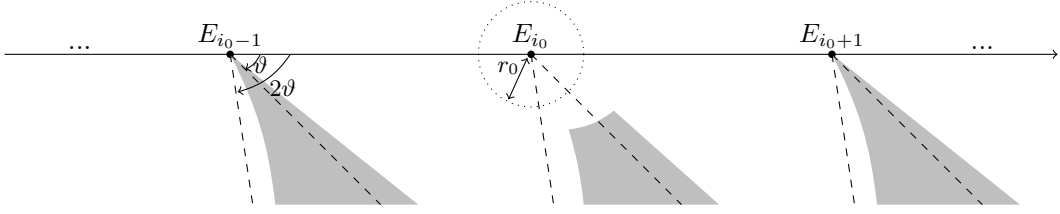
 Let $r \geq  \frac{3 \rho_0}{4}$ and $\vert \vec{l} \vert \leq r$.  Estimating $ \vert b_{i_{0}}(\vec{p},z,r,\vec{l})\vert $ from below by the absolute value of its real part, we obtain
\begin{align*}
\vert b_{i_{0}}(\vec{p},z,r,\vec{l})\vert & \geq \Big \vert \cos(2 \vartheta)  \frac{\vec{l}^2}{2}  + \cos(\vartheta) \left(r-  \Re(\vec{p}) \cdot \vec{l}\right) - \Im(\vec{p}) \cdot \vec{l}  \sin(\vartheta) + E_{i_{0}} - \Re(z) \Big \vert\\[4pt]
& \geq r \cos(\vartheta) \left( 1 -  \vert \vec{p} \vert    -  \vert \Im(\vec{p}) \vert  \tan(\vartheta) \right) - \vert E_{i_{0}} -z \vert , 
\end{align*}
as  $\theta=i \vartheta$, with $0<\vartheta  <\pi/4$.  Since $ \vert \vec{p} \vert   < 1- \mu$, and $\vert \Im \vec{p}  \vert  < \mu  \tan(  \vartheta)/2 < \mu/2$, we finally find that 
  \begin{align*}
\vert b_{i_{0}}(\vec{p},z,r,\vec{l}) \vert & \geq r \cos(\vartheta)  \frac{\mu}{2} -  \frac{\rho_0 \mu }{32}  >\frac{ \rho_0 \mu  }{8}.
\end{align*}

\noindent For $j<i_0$, the difference $E_j-z$  can  be cancelled by  $e^{-2 \theta}  \frac{\vec{l}^2}{2} + e^{- \theta} r - e^{-\theta} \vec{p} \cdot \vec{l}  $ if $\rho_0$ is not small enough, and we have to impose a constraint on $\rho_0$ that depends on the minimal separation between the eigenvalues, $\delta_0$.  Let  $\rho_0 < \delta_0$. We assume that $0<\delta_0<1$. For $\delta_0 \geq 1$, it suffices to replace $\delta_0$ by $1$ in the next estimates.   We split the interval $[0, \infty )$ into two disjoints subintervals $I_1$ and $I_2$, with $I_1=[0, \delta_0/8 ]$ and $I_2 = (\delta_0/8 , \infty )$. We give a lower bound for $b_j$ on both intervals. Let $r \in I_1$. We have that
\begin{equation}
\label{pardel1}
\begin{split}
\vert b_j(\vec{p},z,r,\vec{l}) \vert & \geq  \vert E_{i_{0}}- E_j \vert - \vert z- E_{i_{0}} \vert - r \cos(\vartheta) -  r \left( \vert \vec{p} \vert \cos(\vartheta) + \vert \Im(\vec{p}) \vert \sin(\vartheta) + \frac{r}{2} \right)\\
& > \vert E_{i_{0}}- E_j \vert  - \frac{\delta_0}{2}  \geq \frac{\delta_0}{2}.
\end{split}
\end{equation}

\noindent Let $r \in I_2$. We estimate $| b_j(\vec{p},z,r,\vec{l}) |$ from below by the absolute value of its imaginary part.  An easy calculation shows that
\begin{equation*}
\vert b_j(\vec{p},z,r,\vec{l}) \vert \geq r \sin(\vartheta) \left( 1- \vert \vec{p} \vert  -  \vert  \Im(\vec{p}) \vert \cot(\vartheta) \right)- \frac{\rho_0 \mu \sin(\vartheta)}{32}.
\end{equation*}
Since we have chosen $ \vert  \Im(\vec{p}) \vert  \le \mu \tan(\vartheta)/2$,  it follows that 
\begin{equation}
\label{pardel2}
\vert b_j(\vec{p},z,r,\vec{l}) \vert \ge \frac{\delta_0 \mu \sin(\vartheta)}{32}.
\end{equation}
We  deduce that $[H_{\theta,0}(\vec{p})- z]_{\mid \text{Ran}(\overline{\bm{\chi}}_{i_{0}})}$ is bounded invertible and that   its  inverse satisfies  \eqref{Htheta0}. 
The proof of  \eqref{Htheta1} is similar.
  \vspace{1mm}

\section{Proof of Lemma \ref{borne0}}\label{AppendixC}
\subsection{Proof of the estimates}
\label{AppA}
To prove Lemma \ref{borne0}, we re-Wick order the product of creation and annihilation operators that appear in 
\begin{equation}\label{eq:A1}
P_{i_0} \otimes H^{(0)}(\vec{p},z)=F_{\bm{\chi}_{i_{0}}}(H_{\theta}(\vec{p}) - z,  H_{\theta,0}(\vec{p}) - z)_{\mid P_{i_0} \otimes \mathds{1}_{H_f \leq \rho_0}[\mathcal{H}_{\vec{p}}]}.
\end{equation}
We recall that we want to find a sequence of  kernels $(w_{M,N}^{(0)})$, $M+N \geq 0$, with $w^{(0)}_{M,N}:  \mathcal{U}_{\rho_0}[E_{i_0}] \times \mathcal{B}_{\rho_0} \times \underline{B}_{\rho_0}^{(M,N)} \rightarrow \mathbb{C}$,  and a map $\mathcal{E}^{(0)}:  \mathcal{U}_{\rho_0}[E_{i_0}] \rightarrow \mathbb{C}$, such that
\begin{equation}
H^{(0)}(\vec{p},z)= \sum_{M+N \geq 0} W_{M,N}^{(0)}(\vec{p},z) + \mathcal{E}^{(0)}(\vec{p},z).
\end{equation}
We remind the reader  that the operators $W_{M,N}^{(0)}(\vec{p},z)$ are defined in the sense of quadratic forms by  
\begin{equation*}
W^{(0)}_{M,N}(\vec{p},z) =    \mathds{1}_{H_f \leq \rho_0}  \left( \int_{\underline{ B }_{\rho_0}^{(M,N)}} d \underline{K}^{(M,N)} a^{*}(\underline{k}^{(M)}) w^{(0)}_{M,N}\left(\vec{p},z,H_f,\vec{P_f},  \underline{K}^{(M,N)}\right) a(\underline{\tilde{k}}^{(N)}) 
 \right)    \mathds{1}_{H_f \leq \rho_0},
\end{equation*}
for all $M+N \geq 1$.

 For any bounded  operator $A$ on $\mathbb{C}^N \otimes \mathcal{H}_f$, we  denote by  $ \langle A \rangle_{i_{0}}$ the bounded operator on $\mathcal{H}_f$ associated to the bounded quadratic form
\begin{equation}
(\psi,\phi) \mapsto \langle  \psi_{i_0}  \otimes  \psi \vert  A (  \psi_{i_0}  \otimes  \phi ) \rangle \in \mathbb{C}.
\end{equation}
 The bounded operator $H^{(0)}(\vec{p},z)$ in \eqref{eq:A1} identifies with the following operator (denoted by the same symbol) on $\mathcal{H}_f$:
\begin{equation} \label{C4}
\begin{split} 
H^{(0)}(\vec{p},z)&=  \Big (  \left(H_f - \vec{p} \cdot \vec{P}_f   \right) e^{- \theta} + \frac{\vec{P}_{f}^2 }{2}e^{- 2 \theta}  + E_{i_{0}} - z \Big ) \mathds{1}_{H_f \leq \rho_0} + \lambda_0  \langle \bm{\chi}_{i_{0}} H_{I,\theta} \bm{\chi}_{i_{0}}\rangle_{i_{0}}\\[6pt]
&- \lambda_{0}^{2}   \langle   \bm{\chi}_{i_{0}} H_{I,\theta}  \bm{\overline{\chi}}_{i_{0}}  [H_{\bm{\overline{\chi}}_{i_{0}} }(\vec{p},z)]^{-1}_{| \text{Ran}(\bm{\overline{\chi}}_{i_{0}})}  \bm{\overline{\chi}}_{i_{0}} H_{I,\theta}  \bm{\chi}_{i_{0}} \rangle_{i_{0}} .
\end{split}
\end{equation}
The operators on the first line are already  Wick-ordered. The first operator contributes to $W^{(0)}_{0,0}(\vec{p},z) + \mathcal{E}^{(0)}(\vec{p},z)$,  and  the second to $W^{(1)}_{1,0}(\vec{p},z)$ and $W^{(1)}_{0,1}(\vec{p},z)$. To normal order the operator on the second line, we use the Neumann expansion for $ [H_{\bm{\overline{\chi}}_{i_{0}} }(\vec{p},z)]^{-1}_{|\mathrm{Ran}(\bm{\overline{\chi}}_{i_{0}})}$.  It is easy to show that
\begin{equation}
\langle  \bm{\chi}_{i_{0}} H_{I,\theta}  \bm{\overline{\chi}}_{i_{0}}  [H_{\bm{\overline{\chi}}_{i_{0}} }(\vec{p},z)]^{-1}_{|\mathrm{Ran}(\bm{\overline{\chi}}_{i_{0}})}  \bm{\overline{\chi}}_{i_{0}} H_{I,\theta}  \bm{\chi}_{i_{0}}  \rangle_{i_{0}} = \sum_{L=2}^{\infty} (- \lambda_0)^{L-2}  \tilde{V}_L(\vec{p},z),
  \end{equation}
 where $$  \tilde{V}_{L}(\vec{p},z):=  \langle  \bm{\chi}_{i_{0}}  \left(   H_{I,\theta} \bm{\overline{\chi}}_{i_0}^{2}  [H_{\theta,0}(\vec{p})-z]_{|\mathrm{Ran}(\bm{\overline{\chi}}_{i_{0}})}^{-1}  \right)^{L-1}   H_{I,\theta}     \bm{\chi}_{i_{0}} \rangle_{i_{0}}.$$ We first split $ H_{I,\theta}$ into the sum of two operators $ (H_{I,\theta})^{0,1}$ and $ (H_{I,\theta})^{1,0}$, where
\begin{align}
(H_{I,\theta})^{0,1}&=  i  \int_{\underline{\mathbb{R}}^3} d \underline{k} \, \vert \vec{k} \vert^{1/2}  e^{- 2 \theta}  \Lambda(e^{-\theta} \vec{k})  \vec{\epsilon}(\vec{k}) \cdot \vec{d} \text{ } a(\underline{k}),\\ 
(H_{I,\theta})^{1,0}&= -i  \int_{\underline{\mathbb{R}}^3} d \underline{k} \, \vert \vec{k} \vert^{1/2}  e^{- 2 \theta}  \Lambda(e^{-\theta} \vec{k})  \vec{\epsilon}(\vec{k}) \cdot \vec{d}   \text{ } a^{*}(\underline{k}).
\end{align}
This yields
\begin{equation}\label{avantpool}
  \tilde{V}_{L}(\vec{p},z) = \underset{ \tiny \begin{array}{c} p_1,...,p_{L} \\q_1,...,q_{L}  \\   p_i+q_i =1\end{array}}{\sum}   \langle  \bm{\chi}_{i_{0}}   \prod_{j=1}^{L-1} \left( (H_{I,\theta})^{p_j,q_j}  \bm{\overline{\chi}}_{i_0}^{2}     [H_{\theta,0}(\vec{p})-z]_{|\mathrm{Ran}(\bm{\overline{\chi}}_{i_{0}})}^{-1} \right)  (H_{I, \theta})^{p_{L},q_{L}}    \bm{\chi}_{i_{0}} \rangle_{i_{0}}. 
\end{equation}
Let $j \in \{1,...,L\}$.  The operator-valued distribution $a(\underline{k}_j)$/$a^*(\underline{k}_j)$ that appears in $ (H_{I,\theta})^{p_j,q_j} $ can  either be  contracted with another creation/annihilation operator appearing in  $(H_{I,\theta})^{p_{j'},q_{j'}} $, $j' \neq j$,  or left uncontracted.  In the latter case, we pull it  to the right  of \eqref{avantpool} if it is an annihilation operator, or to the left of  \eqref{avantpool}   if it is a  creation operator. This modifies the operators $\bm{\overline{\chi}}_{i_0}^{2}     [H_{\theta,0}(\vec{p})-z]_{|\mathrm{Ran}(\bm{\overline{\chi}}_{i_{0}})}^{-1}$ via  the pull-though formula given in \eqref{pull}. We introduce the notations
\begin{align*}
(H_{I,\theta})^{0 0}_{01}&= i \vert \vec{k} \vert^{1/2} e^{- 2 \theta} \Lambda(e^{-\theta } \vec{k}) \vec{\epsilon}(\vec{k}) \cdot \vec{d},\\
(H_{I,\theta})^{0 0}_{10}& = - i   \vert \vec{k} \vert^{1/2} e^{- 2 \theta} \Lambda(e^{-\theta } \vec{k}) \vec{\epsilon}(\vec{k}) \cdot \vec{d},\\
(H_{I,\theta})^{1 0}_{00}&= (H_{I,\theta})^{1,0}, \qquad  (H_{I,\theta})^{01}_{00} =(H_{I,\theta})^{0,1}.
\end{align*}
The contracted part  is  expressed as  a  vacuum expectation value,   and 
\begin{equation}
\tilde{V}_{L}(\vec{p},z) =  \underset{\underset{m_i+n_i+p_i+q_i=1}{\underline{m},\underline{n},\underline{p},\underline{q}}}{\sum} \int a^{*}(\underline{k}^{(\underline{m})})  \langle \psi_{i_{0}} \otimes \Omega  \vert  V^{\underline{m+p}, \underline{n+q} }_{\underline{m}, \underline{n}}  (\vec{p},z,H_f,\vec{P}_f, \underline{K}^{(\underline{m},\underline{n})}) ( \psi_{i_{0}} \otimes \Omega) \rangle a(\underline{\tilde{k}}^{(\underline{n})}) d \underline{K}^{(\underline{m}, \underline{n})}
\end{equation}
where $\underline{m} = (m_1,\dots,m_L)$,
 \begin{equation}
\label{forms2}
\begin{split}
& V^{\underline{m+p}, \underline{n+q} }_{\underline{m}, \underline{n}}  (\vec{p},z,r,\vec{l}, \underline{K}^{(\underline{m},\underline{n})}) =\chi_{\rho_{0}}(r+\tilde{r}_0) \prod_{j=1}^{L-1} \Big[ (H_{I,\theta})_{m_j n_j}^{ p_j  q_j}\\
& \qquad \qquad R(\vec{p},z,H_f + r + \tilde{r}_j, \vec{P}_f+\vec{l} + \vec{\tilde{l}}_j) \Big]   (H_{I,\theta})_{m_{L} n_{L}}^{ p_{L}   q_{L}}   \chi_{\rho_{0}}(r + \tilde{r}_{L}) ,
\end{split}
\end{equation}
and 
\begin{equation}
R(\vec{p},z,r, \vec{l}):= \sum_{j \neq i_0}  P_j \otimes [b_j(\vec{p},z,H_f,\vec{P}_f)]^{-1} + P_{i_0} \otimes \overline{\chi}_{\rho_0}^2(H_f) [b_{i_0}(\vec{p},z,H_f,\vec{P}_f)]^{-1};\end{equation}
(see Eqs. (B.2)-(B.4)).
The shifts  $r_i$'s, $\tilde{r}_i$'s, $\vec{l}_i$'s, $\vec{\tilde{l}}_i$'s come from the Pull-through formula and are given by
 \begin{align}
&r_{i} := \sum_{j=1}^{i-1} \Sigma [ \underline{\tilde{k}}_j^{(n_j)} ] +\sum_{j=i+1}^L \Sigma [ \underline{k}_j^{(m_j)} ] , \quad \vec{l}_{i} := \sum_{j=1}^{i-1} \vec{\Sigma} [ \underline{\tilde{k}}_j^{(n_j)} ] +\sum_{j=i+1}^L \vec{\Sigma} [ \underline{k}_j^{(m_j)} ], \notag \\
&\tilde{r}_{i} := \sum_{j=1}^{i} \Sigma [ \underline{\tilde{k}}_j^{(n_j)} ] +\sum_{j=i+1}^L \Sigma [ \underline{k}_j^{(m_j)} ] , \quad \vec{\tilde{l}}_{i} := \sum_{j=1}^{i} \vec{\Sigma} [ \underline{\tilde{k}}_j^{(n_j)} ] +\sum_{j=i+1}^L \vec{\Sigma} [ \underline{k}_j^{(m_j)} ]. \label{def_mu_L}
\end{align}
where $\Sigma [ \underline{k}^{(m)} ] := \sum_{i=1}^m | \vec{k}_i |$, $\Sigma[ \underline{\tilde{k}} \phantom{}^{(n)} ] := \sum_{j=1}^n | \vec{\tilde{k}}_j |$, $\vec{\Sigma} [ \underline{k}^{(m)} ] := \sum_{i=1}^m \vec{k}_i$, $\vec{\Sigma}[ \underline{\tilde{k}} \phantom{}^{(n)} ] := \sum_{j=1}^n \vec{\tilde{k}}_j$.  We  rewrite 
\begin{equation}
- \lambda_{0}^{2} \langle \bm{\chi}_{i_{0}} H_{I,\theta}  \bm{\overline{\chi}}_{i_{0}}  [H_{\bm{\overline{\chi}}_{i_{0}} }(\vec{p},z)]^{-1}_{|\mathrm{Ran}(\bm{\overline{\chi}}_{i_{0}})}  \bm{\overline{\chi}}_{i_{0}} H_{I,\theta}  \bm{\chi}_{i_{0}} \rangle_{i_0} = \sum_{M+N \geq 0} \tilde{W}_{M,N}^{(0)} (\vec{p},z) ,
\end{equation}
 where the kernels $\tilde{w}_{M,N}^{(0)}$ are given by  
\begin{equation}
\label{noynoy}
\begin{split}
\tilde{w}_{M,N}^{(0)} & (\vec{p},z,r,\vec{l}, \underline{k}^{(M)},\underline{\tilde{k}}^{(N)})=\\
&- \sum_{L=2} (-\lambda_0)^{L} \underset{ \tiny \begin{array}{c} \underline{m},  \underline{n}, \underline{p}, \underline{q} \\ m_1+...+m_L =M \\ n_1+...+n_L =N \\ m_i+n_i+p_i+q_i=1\end{array}}{\sum}  \langle \psi_{i_{0}} \otimes  \Omega \vert V^{\underline{m+p}, \underline{n+q} }_{\underline{m}, \underline{n}}    (\vec{p},z,r,\vec{l}, \underline{K}^{(M,N)})   \psi_{i_{0}} \otimes  \Omega \rangle_{\text{sym}},
\end{split}
\end{equation}
for $M + N \geq 1$, and 
\begin{equation}
\label{noynoy0}
\begin{split}
\tilde{w}_{0,0}^{(0)} ( r,\vec{l})&=-\sum_{L=2} (-\lambda_0)^{L} \underset{ \tiny \begin{array}{c} \underline{p},  \underline{q}\\  p_i+q_i=1\end{array}}{\sum}  \langle \psi_{i_{0}} \otimes \Omega \vert V^{\underline{p}, \underline{q} }_{\underline{0}, \underline{0}}    (\vec{p},z,r,\vec{l})   \psi_{i_{0}} \otimes \Omega \rangle
\end{split}
\end{equation}
for $M+N=0$. $f_{\text{sym}}$ denotes the symmetrization of $f$ with respect to the variables  $\underline{k}^{(M)}$ and $\underline{\tilde{k}}^{(N)}$. Using the estimate \eqref{HI1},
\begin{equation}
\|(H_f + \rho_0)^{-1/2}  H_{I,\theta} (H_f + \rho_0)^{-1/2}\|  = \mathcal{O}( \rho_{0}^{-1/2} \sigma_{\Lambda}^{3/2}),
\end{equation}
we obtain the bound 
\begin{equation}
\label{forms3}
\begin{split}
& \| V^{\underline{m+p}, \underline{n+q} }_{\underline{m}, \underline{n}}  (\vec{p},z,r,\vec{l}, \underline{K}^{(\underline{m},\underline{n})})\| \leq \rho_0  \prod_{j=1}^{L}  \Big \|  (H_f + \rho_0)^{-1/2}  (H_{I,\theta})_{m_j n_j}^{p_j q_j}   (H_f + \rho_0)^{-1/2}  \Big \|\\
& \qquad \qquad \sup_j \Big \|   (H_f + \rho_0) R(\vec{p},z,H_f + r + \tilde{r}_j, \vec{P}_f+\vec{l} + \vec{\tilde{l}}_j) \Big \|^{L-1}.  
\end{split}
\end{equation}
Distinguishing the cases $p_j+q_j=1$ and $p_j+q_j=0$ and using Eqs. \eqref{HI1} and  \eqref{Htheta1} of Lemma \ref{tech1}, we deduce that  there exists a positive constant $C$ such that 
\begin{equation}
\label{forms4}
\begin{split}
& \| V^{\underline{m+p}, \underline{n+q} }_{\underline{m}, \underline{n}}  (\vec{p},z,r,\vec{l}, \underline{K}^{(\underline{m},\underline{n})})\| \leq \rho_0   \frac{C^L}{ (\mu \sin(\vartheta))^{L-1}} \prod_{j=1}^{L}  ( \sigma_{\Lambda}^{3/2} \rho_{0}^{-1/2})^{p_j+q_j}  \rho_{0}^{-(m_j+n_j)} \vert \underline{k}_{j}^{(m_j)} \vert^{1/2}  \vert \underline{\tilde{k}}_{j}^{(n_j)} \vert^{1/2}. 
\end{split}
\end{equation}
Since $m_j+n_j+p_j+q_j = 1$, it follows  that  (up to a multiplication of $C$ by a numerical factor)
\begin{equation}
\| \tilde{w}_{M,N}^{(0)} (\vec{p},z) \|_{\frac{1}{2}} \leq \sum_{L=2}^{\infty} \frac{(    C  \sigma_{\Lambda}^{3/2} \lambda_0 \rho_{0}^{-1/2})^{L}} { (\mu \sin(\vartheta))^{L-1}}  \rho_{0}^{- \frac12(M+N)} \rho_0.
\end{equation}
Similarly,
\begin{equation}
\| \tilde{w}_{0,0}^{(0)} (\vec{p},z) \|_{\infty} \leq \sum_{L=2}^{\infty} \frac{(   C  \sigma_{\Lambda}^{3/2} \lambda_0 \rho_{0}^{-1/2})^{L}} { (\mu \sin(\vartheta))^{L-1}}  \rho_0 ,
\end{equation}
and proceeding in the same way for the derivatives $\partial_\# \tilde{w}_{M,N}^{(0)}$, where $\partial_\#$ stands for $\partial_r$ or $\partial_{l_j}$, we conclude that
\begin{align}
\| \tilde{w}_{M,N}^{(0)} (\vec{p},z) \|_{\frac{1}{2}} &= \mathcal{O} \left( \frac{   \lambda_{0}^2  \sigma_{\Lambda}^{3} } {  \mu \sin(\vartheta) \rho_{0}^{\frac12(M+N)}} \right)  , \\[4pt]
\| \partial_{\#} \tilde{w}_{M,N}^{(0)} (\vec{p},z) \|_{\frac{1}{2}} &=  \mathcal{O} \left( \frac{   \lambda_{0}^2   \sigma_{\Lambda}^{3} } {  \mu^2    \sin^2 (\vartheta) \rho_{0}^{ \frac12(M+N) + 1 } } \right)  , 
\end{align}
uniformly with respect to $M+N \ge 1$, and that
\begin{align}
\| \tilde{w}_{0,0}^{(0)} (\vec{p},z) \|_{\infty} &=  \mathcal{O} \left( \frac{   \lambda_{0}^2  \sigma_{\Lambda}^{3}  } { \mu \sin(\vartheta)}\right), \\[4pt]
\|\partial_{\#} \tilde{w}_{0,0}^{(0)} (\vec{p},z) \|_{\infty}&=   \mathcal{O} \left(  \frac{   \lambda_{0}^2  \sigma_{\Lambda}^{3}  } { \mu^2 \sin^2(\vartheta) \rho_0}\right).
\end{align}
Since $\lambda_0 \ge 0$ is chosen such that  $\lambda_0 \ll \mu \sin ( \vartheta ) \rho_0^{1/2} \sigma_{\Lambda}^{3/2}$, this concludes the proof of the estimates of the lemma.

\subsection{Proof of the analyticity of the map  $(\vec{p},z) \mapsto H^{(0)}(\vec{p},z) \in \mathcal{L}(\mathcal{H}^{(0)})$} \label{Analyfirst}
We start from the expression of $H^{(0)}(\vec{p},z)$ given in \eqref{C4},
\begin{equation}
\begin{split}
\label{2.25}
H^{(0)}(\vec{p},z)&=  \Big(  \left(H_f -  \vec{p} \cdot \vec{P}_f   \right) e^{- \theta} +  \frac{\vec{P}_{f}^2}{2} e^{- 2 \theta}  + E_{i_{0}} - z \Big) \mathds{1}_{H_f \leq \rho_0} + \lambda_0 \langle \bm{\chi}_{i_{0}} H_{I,\theta} \bm{\chi}_{i_{0}} \rangle_{i_0}\\[4pt]
&- \lambda_{0}^{2}    \langle  \bm{\chi}_{i_{0}} H_{I,\theta}  \bm{\overline{\chi}}_{i_{0}}  [H_{\bm{\overline{\chi}}_{i_{0}} }(\vec{p},z)]^{-1}_{|\mathrm{Ran}(\bm{\overline{\chi}}_{i_{0}})}  \bm{\overline{\chi}}_{i_{0}} H_{I,\theta}  \bm{\chi}_{i_{0}} \rangle_{i_0} .
\end{split}
\end{equation}
The first term  on the right-hand side of  \eqref{2.25} is analytic on $\mathcal{U}_{\rho_0}[E_{i_{0}}]$. We have seen in the proof of Lemma \ref{fespair} that the Neumann series for $[H_{\overline{\bm{\chi}}_{i_{0}}}(\vec{p},z)]^{-1}_{|\mathrm{Ran}(\overline{\bm{\chi}}_{i_{0}})}$ is uniformly convergent on $\mathcal{U}_{\rho_0}[E_{i_{0}}]$. It is therefore sufficient to  check that the map 
\begin{equation*}
(\vec{p},z) \mapsto \left[ \frac{H_f + \rho_0}{H_{\theta,0} (\vec{p},z)} \right]_{ \overline{\mathrm{Ran}(\overline{\bm{\chi}}_{i_{0}})}} \in \mathcal{L}( \overline{\mathrm{Ran}(\overline{\bm{\chi}}_{i_{0}})})
\end{equation*}
is analytic on $\mathcal{U}_{\rho_0}[E_{i_{0}}]$. Since weak and strong analyticity are equivalent, we only need to show that the maps
\begin{equation}
(\vec{p},z) \mapsto  \big \langle  \psi_j  \otimes \varphi \big \vert \left[ \frac{H_f + \rho_0}{H_{\theta,0} (\vec{p},z)} \right]_{|\overline{\mathrm{Ran}(\overline{\bm{\chi}}_{i_{0}})}}  \text{ }  \psi_j  \otimes   \varphi \big \rangle \in \mathbb{C}
\end{equation}
are analytic for any $\psi_j \otimes \varphi \in \overline{\text{Ran}(\overline{\bm{\chi}}_{i_{0}})}$, where $\psi_j$, $j=1,...,N$,  are unit eigenvectors of  $H_{is}$ associated to the eigenvalues $E_j$. Using, for any $\varphi$ in Fock space, the representation $\varphi= (\varphi^{(n)})$ with $\varphi^{(n)} \in L_{s}^{2}(\underline{\mathbb{R}}^{3n})$, we find that
\begin{align*}
 \langle \psi_j  \otimes  \varphi \big  \vert  &\left[ \frac{H_f + \rho_0}{H_{\theta,0} (\vec{p},z)} \right]_{| \overline{\text{Ran}(\overline{\bm{\chi}}_{i_{0}})}}   \psi_j  \otimes  \varphi \big \rangle   = (1- \delta_{i_0 j})  \frac{\rho_0}{E_j -z} | \varphi^{(0)}|^2 \qquad  \\
 & + \sum_{n \geq 1}  \int_{\underline{\mathbb{R}}^{3n}} d\underline{k}^{(n) } \frac{\vert \vec{k}_1 \vert + \cdots +\vert \vec{k}_n  \vert+ \rho_0  }{b_{j}(   \vec{p},z,\Sigma[ \underline{k}^{(n) }] , \vec{\Sigma}[ \underline{k}^{(n) }])} \vert \varphi^{(n)} (\underline{k}_1,..., \underline{k}_n) \vert^2,
\end{align*}
where $b_j$ is defined in \eqref{eq:bj_n1}--\eqref{eq:bj_n2}, and we have set $ \Sigma[ \underline{k}^{(n) }] =\vert \vec{k}_1 \vert + \cdots +\vert \vec{k}_n  \vert$ and  $\vec{\Sigma}[ \underline{k}^{(n) }] = \vec{k}_1+ \cdots +\vec{k}_n$. The functions $(\vec{p},z) \mapsto (r + \rho_0)/b_j(\vec{p},z,r,\vec{l})$ are analytic on $\mathcal{U}_{\rho_0}[E_{i_{0}}]$ for any fixed $(r,\vec{l}) \in \mathbb{R}_+ \times \mathbb{R}^3$ with $\vert \vec{l} \vert \leq r$, and $| (r + \rho_0)/b_j(\vec{p},z,r,\vec{l}) |$ is uniformly bounded as follows from the proof of Lemma \ref{fespair}. Using Morera's theorem for several complex variables (see \cite{Vlad}), we deduce that 
\begin{equation}
(\vec{p},z) \mapsto  \big \langle  \psi_j  \otimes \varphi \big \vert \left[ \frac{H_f + \rho_0}{H_{\theta,0} (\vec{p},z)} \right]_{|\overline{\text{Ran}(\overline{\bm{\chi}}_{i_{0}})}}     \psi_j  \otimes \varphi \big \rangle \in \mathbb{C}
\end{equation}
is analytic, and hence, that $H^{(0)}(\vec{p},z)$ is analytic in $(\vec{p},z) \in \mathcal{U}_{\rho_0}[E_{i_{0}}]$.

\nocite{*}
\bibliographystyle{plain}
\bibliography{main}

\begin{thebibliography}{10}

\bibitem{AbHa12_01}
A.~Abdesselam and D.~Hasler.
\newblock Analyticity of the ground state energy for massless {N}elson models.
\newblock {\em Comm. Math. Phys.}, 310(2):511--536, 2012.

\bibitem{AFFS}
W.K. Abou~Salem, J.~Faupin, J.~Fr{\"o}hlich, and I.M. Sigal.
\newblock On the theory of resonances in non-relativistic quantum
  electrodynamics and related models.
\newblock {\em Adv. Appl. Math.}, 43(3):201--230, 2009.

\bibitem{AmGrGu06_01}
L.~Amour, B.~Gr{\'e}bert, and J.-C. Guillot.
\newblock The dressed mobile atoms and ions.
\newblock {\em Journal Math. Pures et Appl.}, 86(3):177--200, 2006.

\bibitem{BaBaPi}
V.~Bach, M.~Ballesteros, and A.~Pizzo.
\newblock Existence and construction of resonances for atoms coupled to the
  quantized radiation field.
\newblock {\em arXiv preprint arXiv:1302.2829}, 2013.

\bibitem{faupin}
V.~Bach, T.~Chen, J.~Faupin, J.~Fr{\"o}hlich, and I.~M. Sigal.
\newblock Effective dynamics of an electron coupled to an external potential in
  non-relativistic {QED}.
\newblock {\em Ann. Henri Poincar{\'e}, to appear}.

\bibitem{BaChFrSi03_01}
V.~Bach, T.~Chen, J.~Fr{\"o}hlich, and I.~M. Sigal.
\newblock Smooth {F}eshbach map and operator-theoretic renormalization group
  methods.
\newblock {\em J. Func. Anal.}, 203(1):44--92, 2003.

\bibitem{BaChFrSi07_01}
V.~Bach, T.~Chen, J.~Fr{\"o}hlich, and I.~M. Sigal.
\newblock The renormalized electron mass in non-relativistic quantum
  electrodynamics.
\newblock {\em J. Func. Anal.}, 243(2):426--535, 2007.

\bibitem{BaFrSi98_02}
V.~Bach, J.~Fr{\"o}hlich, and I.~M. Sigal.
\newblock Quantum {E}lectrodynamics of {C}onfined {N}onrelativistic
  {P}articles.
\newblock {\em Adv. Math.}, 137(2):299--395, 1998.

\bibitem{BaFrSi98_01}
V.~Bach, J.~Fr{\"o}hlich, and I.~M. Sigal.
\newblock Renormalization {G}roup {A}nalysis of {S}pectral {P}roblems in
  {Q}uantum {F}ield {T}heory.
\newblock {\em Adv. Math.}, 137(2):205--298, 1998.

\bibitem{1995}
V.~Bach, J.~Fr{\"o}hlich, and I.M. Sigal.
\newblock Mathematical theory of nonrelativistic matter and radiation.
\newblock {\em Lett. Math. Phys.}, 34(3):183--201, 1995.

\bibitem{BFS-1999}
V.~Bach, J.~Fr{\"o}hlich, and I.M. Sigal.
\newblock Spectral analysis for systems of atoms and molecules coupled to the
  quantized radiation field.
\newblock {\em Commun. Math. Phys.}, 207(2):249--290, 1999.

\bibitem{BarlDrag}
E.~Barletta and S.~Dragomir.
\newblock Vector valued holomorphic functions.
\newblock {\em B. Math. Soc. Sci. Math.}, 52:211--226, 2009.

\bibitem{Bratteli}
O.~Bratteli and D.W. Robinson.
\newblock {\em Operator Algebras and Quantum Statistical Mechanics 2:
  Equilibrium States. Models in Quantum Statistical Mechanics}.
\newblock Springer, 1981.

\bibitem{Ch08_01}
T.~Chen.
\newblock Infrared renormalization in non-relativistic {QED} and scaling
  criticality.
\newblock {\em J. Func. Anal.}, 254(10):2555--2647, 2008.

\bibitem{ChFr07_01}
T.~Chen and J.~Fr{\"o}hlich.
\newblock Coherent infrared representations in non-relativistic {QED}, in
  spectral theory and mathematical physics: {A} festschrift in honor of {B}arry
  {S}imon's 60th {B}irthday.
\newblock {\em Proc. Sympos. Pure Math.}, 76:25--45, 2007.

\bibitem{ChFrPi09_01}
T.~Chen, J.~Fr{\"o}hlich, and A.~Pizzo.
\newblock Infraparticle scattering states in non-relativistic {QED}. {II}.
  {M}ass shell properties.
\newblock {\em J. Math. Phys.}, 50(1), 2009.

\bibitem{ChFrPi10_01}
T.~Chen, J.~Fr{\"o}hlich, and A.~Pizzo.
\newblock Infraparticle scattering states in non-relativistic {QED}: {I}. {T}he
  {B}loch-{N}ordsieck paradigm.
\newblock {\em Comm. Math. Phys.}, 294(3):761--825, 2010.

\bibitem{photons_atoms}
C.~Cohen-Tannoudji, J.~Dupont-Roc, and G.~Grynberg.
\newblock {\em Photons and Atoms-Introduction to Quantum Electrodynamics}.
\newblock Wiley, 1997.

\bibitem{Ceren}
W.~De~Roeck, J.~Fr{\"o}hlich, and A.~Pizzo.
\newblock Absence of embedded mass shells: Cerenkov radiation and quantum
  friction.
\newblock {\em Ann. Henri Poincar\'e}, 11(8):1545--1589, 2010.

\bibitem{DyPi13_01}
W.~Dybalski and A.~Pizzo.
\newblock Coulomb scattering in the massless nelson model ii. regularity of
  ground states.
\newblock {\em arXiv}, 2013.

\bibitem{faupin2008}
J.~Faupin.
\newblock Resonances of the confined hydrogen atom and the lamb--dicke effect
  in non-relativistic qed.
\newblock {\em Ann. Henri Poincar{\'e}}, 9(4):743--773, 2008.

\bibitem{FFS}
J.~Faupin, J.~Froehlich, and B.~Schubnel.
\newblock Analyticity of the self-energy in total momentum of an atom coupled
  to the quantized radiation field.
\newblock {\em Journal of Functional Analysis}, 267(11):4139 -- 4196, 2014.

\bibitem{faupin2014probabilistic}
J.~Faupin, J.~Fr{\"o}hlich, and B.~Schubnel.
\newblock On the probabilistic nature of quantum mechanics and the notion of
  closed systems.
\newblock {\em arXiv preprint arXiv:1407.2965}, 2014.

\bibitem{Froehlich}
J.~Fr{\"o}hlich.
\newblock On the infrared problem in a model of scalar electrons and massless,
  scalar bosons.
\newblock {\em Ann. Henri Poincar\'e}, 19(1):1--103, 1973.

\bibitem{FrGrSc04_01}
J.~Fr{\"o}hlich, M.~Griesemer, and B.~Schlein.
\newblock Asymptotic completeness for {C}ompton scattering.
\newblock {\em Comm. Math. Phys.}, 252(1-3):415--476, 2004.

\bibitem{FrGrSc07_01}
J.~Fr{\"o}hlich, M.~Griesemer, and B.~Schlein.
\newblock Rayleigh scattering at atoms with dynamical nuclei.
\newblock {\em Comm. Math. Phys.}, 271(2):387--430, 2007.

\bibitem{FrGrIr08}
J.~Fr{\"o}hlich, M.~Griesemer, and I.M. Sigal.
\newblock On spectral renormalization group.
\newblock {\em Rev. Math. Phys.}, 21(04):511--548, 2009.

\bibitem{FrPi10_01}
J.~Fr{\"o}hlich and A.~Pizzo.
\newblock Renormalized electron mass in nonrelativistic {QED}.
\newblock {\em Comm. Math. Phys.}, 294(2):439--470, 2010.

\bibitem{GrHa08_01}
M.~Griesemer and D.~Hasler.
\newblock On the smooth {F}eshbach-{S}chur map.
\newblock {\em J. Func. Anal.}, 254(9):2329--2335, 2008.

\bibitem{GrHa09_01}
M.~Griesemer and D.~Hasler.
\newblock Analytic perturbation theory and renormalization analysis of matter
  coupled to quantized radiation.
\newblock {\em Ann. Henri Poincar\'e}, 10(3):577--621, 2009.

\bibitem{GLL}
M.~Griesemer, E.H. Lieb, and M.~Loss.
\newblock Ground states in non-relativistic quantum electrodynamics.
\newblock {\em Invent. Math.}, 145:557--595, 2001.

\bibitem{HaHe08_01}
D.~Hasler and I.~Herbst.
\newblock Absence of ground states for a class of translation invariant models
  of non-relativistic {QED}.
\newblock {\em Comm. Math. Phys.}, 279(3):769--787, 2008.

\bibitem{HaHe10_01}
D.~Hasler and I.~Herbst.
\newblock Convergent expansions in non-relativistic {QED}: Analyticity of the
  ground state.
\newblock {\em J. Funct. Anal.}, 11:3119--3154, 2011.

\bibitem{HaHe11_01}
D.~Hasler and I.~Herbst.
\newblock Ground states in the spin boson model.
\newblock {\em Ann. Henri Poincar\'e}, 12(4):621--677, 2011.

\bibitem{HaHe12_01}
D.~Hasler and I.~Herbst.
\newblock Uniqueness of the ground state in the feshbach renormalization
  analysis.
\newblock {\em Lett. Math. Phys.}, 100(2):171--180, 2012.

\bibitem{Kato}
T.~Kato.
\newblock {\em Perturbation theory for linear operators}.
\newblock Springer Verlag, 1995.

\bibitem{LoMiSp07_01}
M.~Loss, T.~Miyao, and H.~Spohn.
\newblock Lowest energy states in nonrelativistic {QED}: {A}toms and ions in
  motion.
\newblock {\em J. Func. Anal.}, 243(2):353--393, 2007.

\bibitem{Mo05_01}
J.~S. M{\o}ller.
\newblock The translation invariant massive {N}elson model: {I}. {T}he bottom
  of the spectrum.
\newblock {\em Ann. Henri Poincar{\'e}}, 6(6):1091--1135, 2005.

\bibitem{Pi03_01}
A.~Pizzo.
\newblock One-particle (improper) states in {N}elson's massless model.
\newblock {\em Ann. Henri Poincar{\'e}}, 4(3):439--486, 2003.

\bibitem{Pi05_01}
A.~Pizzo.
\newblock Scattering of an {\it {i}nfraparticle}: the one particle sector in
  {N}elson's massless model.
\newblock {\em Ann. Henri Poincar\'e}, 6(3):553--606, 2005.

\bibitem{RS2}
M.~Reed and S.~Barry.
\newblock {\em Methods of Modern Mathematical Physics II}, volume~2.
\newblock Elsevier science, 1975.

\bibitem{Ru}
Walter Rudin.
\newblock Functional analysis. international series in pure and applied
  mathematics, 1991.

\bibitem{Si09_01}
I.M. Sigal.
\newblock Ground state and resonances in the standard model of the
  non-relativistic {QED}.
\newblock {\em J. Stat. Phys.}, 134(5-6):899--939, 2009.

\bibitem{Vlad}
V.~S. Vladimirov.
\newblock {\em Methods of the theory of functions of many complex variables}.
\newblock Dover, 2007.

\end{thebibliography}

\end{document}